\newcounter{mynotes}
\declaretheorem[within=section]{theorem}
\declaretheorem[sibling=theorem]{corollary}
\declaretheorem[sibling=theorem]{lemma}
\declaretheorem[sibling=theorem]{definition}
\declaretheorem[sibling=theorem]{Lemma+Definition}
\declaretheorem[sibling=theorem]{remark}
\declaretheorem[sibling=theorem]{conjecture}
\declaretheorem[sibling=theorem]{proposition}
\crefname{proposition}{Proposition}{Propositions}
\crefname{conjecture}{Conjecture}{Conjectures}
\crefname{claim}{Claim}{Claims}
\crefname{remark}{Remark}{Remarks}
\newcounter{termcounter}
\renewcommand{\thetermcounter}{\Alph{termcounter}}
\crefname{term}{term}{terms}
\def\term{\@ifnextchar[\term@optarg\term@noarg}
\def\term@optarg[#1]#2{%
  \textup{(#1)}%
  \def\@currentlabel{#1}%
  \def\cref@currentlabel{[][2147483647][]#1}%
  \cref@label[term]{#2}}
\def\term@noarg#1{%
  \refstepcounter{termcounter}%
  \textup{(\thetermcounter)}%
  \cref@label[term]{#1}}
\newcommand{\ignore}[1]{}
\newcommand{\defeq}{\stackrel{\mathrm{def}}=}
\newcommand{\supp}{\mathrm{supp}}
\definecolor{DSred}{rgb}{1,0,0}
\renewcommand{\leq}{\leqslant}
\renewcommand{\geq}{\geqslant}
\renewcommand{\ge}{\geqslant}
\renewcommand{\le}{\leqslant}
\renewcommand{\epsilon}{\varepsilon}
\newcommand{\eps}{\epsilon}
\newcommand{\cL}{\mathcal L}
\newcommand{\cX}{\mathcal X}
\newcommand{\cY}{\mathcal Y}
\newcommand{\cZ}{\mathcal Z}
\newcommand{\Psymb}{{\bf Pr}}
\DeclareMathOperator*{\ProbOp}{\Psymb}
\renewcommand{\Pr}{\ProbOp}
\newcommand{\Ex}[1]{\E\Brac{#1}}
\newcommand{\restate}[2]{\medskip
\noindent{\bf #1 (restated).}{\sl #2}}
\definecolor{Blue}{rgb}{0,0,1}
\let\Ex\relax
\newif\ifdraft
\providecommand{\yuvalcomment}[1]{{\color{red} #1---YF}}
\providecommand{\ydcomment}[1]{{\color{brown} #1 --- YD}}
\providecommand{\yl}[1]{{\color{blue} #1 --- Yaqiao}}
\providecommand{\hamed}[1]{{\color{purple} #1 --- HH}}
\providecommand{\yuvalcomment}[1]{}
\providecommand{\ydcomment}[1]{}
\providecommand{\yl}[1]{}
\providecommand{\hamed}[1]{}
\providecommand{\ext}{\mathit{ext}}
\DeclareMathOperator{\IW}{IW}
\DeclareMathOperator{\SIM}{SIM}
\DeclareMathOperator*{\Ex}{\mathbb{E}}
\DeclareMathOperator{\AND}{AND}
\DeclareMathOperator{\XOR}{XOR}
\DeclareMathOperator{\DISJ}{DISJ}
\DeclareMathOperator{\IC}{IC}
\DeclareMathOperator{\CI}{CI}
\providecommand{\ICD}{\IC^D}
\providecommand{\ICND}{\IC}        
\providecommand{\ICNDz}{\IC^{0}}   
\providecommand{\charf}[1]{1_{#1}}
\providecommand{\entf}{h}
\providecommand{\leafp}{\mathbf{p}}
\providecommand{\leafq}{\mathbf{q}}
\providecommand{\leafl}{\boldsymbol{\ell}}
\providecommand{\rmd}{\mathrm{d}}
\providecommand{\cX}{\mathcal{X}}
\providecommand{\cY}{\mathcal{Y}}
\providecommand{\ICD}{\IC^D}
\providecommand{\ICND}{\IC^{ND}}
\providecommand{\ICNDz}{\IC^{ND,0}}
\providecommand{\charf}[1]{1_{#1}}
\providecommand{\hc}{\overline{h}}
\providecommand{\omu}{{\overline{\mu}}}
\providecommand{\oy}{\overline{y}}
\providecommand{\proofpar}[1]{\textbf{#1} \quad}
\providecommand{\defeq}{\vcentcolon=}
\renewcommand{\defeq}{\vcentcolon=}
\begin{document}

\title{Trading information complexity for error}


\author{Yuval Dagan~\thanks{Technion --- Israel Institute of Technology. \texttt{yuval.dagan@cs.technion.ac.il}} \and Yuval Filmus \thanks{Technion --- Israel Institute of Technology. \texttt{yuvalfi@cs.technion.ac.il}} \and Hamed Hatami \thanks{McGill University. \texttt{hatami@cs.mcgill.ca}. Supported by an NSERC grant.} \and Yaqiao Li \thanks{McGill University. \texttt{yaqiao.li@mail.mcgill.ca}}}

\maketitle

\begin{abstract}

We consider the standard two-party communication model. The central problem studied in this article is how much one can save in  information complexity by allowing an error of $\epsilon$.

\begin{itemize}
\item For arbitrary functions, we obtain lower bounds and upper bounds indicating a gain that is of  order $\Omega(h(\epsilon))$ and $O(h(\sqrt{\epsilon}))$. Here $h$ denotes the binary entropy function.  
\item We analyze the case of the two-bit AND function in detail to show that for this function the gain is  $\Theta(h(\epsilon))$. This answers a question of Braverman et al.~\cite{MR3210776}.
\item We obtain sharp bounds for the set disjointness function of order $n$. For the case of the distributional error, we introduce a new protocol that achieves a gain of $\Theta(\sqrt{h(\epsilon)})$ provided that $n$ is sufficiently large.  We apply these results to answer another of question of Braverman et al. regarding the randomized communication complexity of the set disjointness function.
\item Answering a question of  Braverman~\cite{MR2961528}, we apply our analysis of the set disjointness function to establish a gap between the two different notions of the prior-free information cost. In the light of~\cite{MR2961528}, this implies that amortized randomized communication complexity is not necessarily equal to the amortized distributional communication complexity with respect to the hardest distribution.
\end{itemize}
\end{abstract}

\section{Introduction}

In recent years, a focus on the applications of the information theoretic methods to the area of communication complexity has resulted in a new and deeper understanding of some of the classical problems in this area. These developments have given rise to a new field of complexity theory called information complexity, which was first formally defined in \cite{MR1948715, MR2059642, MR2743255}.  While communication complexity is concerned with minimizing the amount of communication required for two players to evaluate a function, information complexity, on the other hand, is concerned with the amount of information that the communicated bits reveal about the inputs of the players.

The study of information complexity is  motivated by fundamental questions regarding compressing communication~\cite{MR2743255,MR3265014,MR2961528,MR3388235} that extend the seminal work of Shannon~\cite{MR0026286} to the setting where interaction is allowed. Moreover it has important applications to communication complexity, and in particular to the study of the direct-sum problem~\cite{MR2059642,MR1948715,MR3366999,MR3246278,MR3109074}, a problem that has been studied extensively in the past~\cite{MR1342989, MR1948715, MR2080709, MR2589281, MR2743255, MR2743256, MR3366999, MR3186603, MR3246278, MR3109074}. For example, the only known direct-sum result for general randomized communication complexity is proven via information theoretic techniques in~\cite{MR2743255}.

Arguably, the randomized communication complexity of a function $f$, often denoted by $R_\eps(f)$, is the most important object of study in the area of communication complexity. This quantity corresponds to the smallest number of bits that two players need to exchange so that they can compute the value of $f(x,y)$ correctly with probability at least $1-\eps$ on the worst input $(x,y)$. Determining the asymptotics of $R_\epsilon(f)$ can be very difficult, and as a result the  focus of the area has mainly been on determining the growth rate of this function in the Big-O sense. Note that for $\epsilon<1/2$, the players can decrease the probability of error by running the same protocol multiple times. Hence for constant values of $\eps \in (0,1/2)$, the quantities $R_\eps(f)$ are within constant multiples of each other, and as a result, as long as one is concerned only with the Big-O asymptotics, it is possible to fix the error parameter to a constant such as $1/3$.

Set disjointness is one of the most important functions in communication complexity, and as a result it has been studied extensively in the past four decades (see the surveys~\cite{chattopadhyay2010story,MR3253040} and the references therein).  In this communication problem, which is denoted  by $\DISJ_n$,  Alice and Bob each receive a subset of $\{1,\ldots,n\}$ and their goal is to determine whether their sets are disjoint or not. The correct asymptotic $R_\eps(\DISJ_n) = \Theta(n)$ was first proved by Kalyanasundaram and Schnitger~\cite{MR1186822}. Although later Razborov~\cite{MR1192778}  gave a shorter proof, still despite several decades of research in this area, all known proofs for this fact are intricate and sophisticated. It was thus a great breakthrough when  a recent article~\cite{MR3210776}  determined the exact constant in the asymptotics of $R_\eps(\DISJ_n)$ as $\eps \to 0$ by employing several recent results from the area of information complexity. They proved that as the error-parameter $\eps$ tends to $0$, the quantity $\lim_{n \rightarrow \infty} R_\epsilon(\DISJ_n)/n$ tends to a constant $\approx 0.4827$.

This and similar recent results show that the area of communication complexity has developed to a point where even for difficult functions such as set disjointness, analyzing  the asymptotic of $R_\eps(f)$ in a precision beyond the Big-O approximation has become possible. This might be an indication that some problems regarding the dependency of the  randomized communication complexity and the information complexity on the error parameter $\eps$ might now be within reach. The purpose of this article is to conduct a systematic study of such problems. In doing so we answer several open problems and conjectures that were raised previously in the literature.

\paragraph{Information complexity:} Consider finite sets $\cX,\cY, \cZ$,  a function $f\colon \cX \times \cY \to \cZ$, and a two-party communication protocol $\pi$ for the task of computing the value of $f$.  In order to define information complexity, one needs to assume that $\cX \times \cY$ is endowed with a probability distribution $\mu$, and  Alice and Bob's inputs $X \in \cX$ and $Y \in \cY$ are sampled according to this joint distribution. Now using Shannon's notion of information, we can consider the amount of information that the players learn  about each other's inputs from the exchanged bits in $\pi$. The amount of this leaked information is called the information complexity of $\pi$, and it is denoted by $\IC_\mu(\pi)$. The formal definition is given below in Definition~\ref{def:infocost}.  Let $\IC_\mu(f,\epsilon)$ denote the infimum of $\IC_\mu(\pi)$ among all randomized protocols $\pi$ that compute $f$ with probability of error at most $\epsilon$ on every input. Similarly define the  information complexity $\IC_\mu(f,\mu,\epsilon)$ under distributional error as the infimum of $\IC_\mu(\pi)$ among all randomized protocols $\pi$ that compute $f(X,Y)$ with probability of error at most $\epsilon$ when $(X,Y)$ is sampled according to $\mu$.

One can define the  prior-free information complexity of a function in two different ways. The first notion is $ \IC(f,\eps)= \max_{\mu} \IC_\mu(f, \eps)$
which has been proven very useful, as it captures the amortized communication complexity required for calculating multiple copies of $f$ with $\epsilon$ error on each copy. The second natural definition is to consider $\ICD(f,\eps)= \max_{\mu} \IC_\mu(f, \mu, \epsilon)$, which trivially satisfies $\ICD(f,\eps)  \le \IC(f,\eps)$. These two notions and their relation to each other are first studied by Braverman in~\cite{MR2961528}.

%

Finally let us mentioned it is  also natural to consider the amount of the information that an external observer learns about the players' inputs by observing the exchanged bits in a protocol $\pi$. This leads to the notion of external information complexity $\IC_\mu^\ext(\pi)$ that is formally defined below in Definition~\ref{def:infocost}.

\subsection{Our contributions}
In this section we briefly describe our main results. A more detailed description is given later in Section~\ref{sec:main_results} after we introduce some preliminary facts and definitions in Section~\ref{sec:Prelim}.

The central problem studied in this article is how much one can save in  information complexity by allowing an error of $\epsilon$. We start by considering the point-wise error case, and proving upper bounds and lower bounds for $\IC_\mu(f,\epsilon)$. Then we move to the case of the distributional error and study $\IC_\mu(f,\mu,\epsilon)$. Afterwards, we study these  parameters for two special cases, the AND function and the set disjointness function, in great detail. As we shall see, this will have important implications regarding the prior-free information complexity.

\paragraph{Information complexity with point-wise error:} 

In Theorems~\ref{thm:upper-bd-IC-mu-eps} and~\ref{thm:non-distri-error-lowerbd} we consider the point-wise error case and prove that for every distribution $\mu$  with $\IC_\mu(f, 0) > 0$,
\begin{equation}
\label{eq:pointwise-error}
\IC_\mu(f,0) - O(h(\sqrt{\epsilon})) \le \IC_\mu(f,\epsilon) \le \IC_\mu(f,0) - \Omega(h(\epsilon)).
\end{equation}
Note that the lower bound implies the continuity of  $\IC_\mu(f,\epsilon)$ with respect to $\eps$ at $\eps=0$. Theorem~\ref{thm:AND-gap} shows that the upper bound in (\ref{eq:distributional-error}) can be tight for the AND function, while for the lower bound, to the best of our knowledge, it might be possible that $\IC_\mu(f,\eps)= \IC_\mu(f, 0) - \Theta(h(\eps))$ for every $f,\mu$ with $\IC_\mu(f, 0)>0$.

It is worth noting that, maybe surprisingly, the upper bound does not hold for external information complexity. Indeed in Proposition~\ref{prop:ext_XOR}, we show that for certain distributions for the two-bit $\XOR$ function, the gain in the external information complexity is only of order $\Theta(\epsilon)$.

\paragraph{Information complexity with distributional error:}
It is shown in~\cite{MR3210776} that  $\IC_\mu(f,\mu,\epsilon)$ is continuous with respect to $\epsilon$.  The continuity at $\eps>0$ is  easy to show, and it was proven earlier in~\cite{MR2961528},  however the case $\eps=0$ is more subtle and it is  established in~\cite{MR3210776} through the inequality
\[ \IC_\mu(f,\mu,0) -O(h(\eps^{1/8})) \le \IC_\mu(f,\mu,\eps), \]
where $h(\cdot)$ denotes the binary entropy function (See Section~\ref{sec:estimates}). In Theorem~\ref{thm:bd-IC-mu--distributional} we improve this result by showing that for every distribution $\mu$ with $\IC_\mu(f, \mu, 0) > 0$, 
\begin{equation}
\label{eq:distributional-error}
\IC_\mu(f,\mu,0)-O(h(\sqrt{\eps})) \le \IC_\mu(f,\mu,\eps) \le \IC_\mu(f,\mu,0)- \Omega(h(\epsilon)).
\end{equation}
Later in Theorem~\ref{thm:AND-gap}, we prove that for the two-bit AND function, the upper bound in (\ref{eq:distributional-error}) is tight provided that $\mu$ is of full support. We do not know whether the lower bound is sharp. In fact we are not aware of any example that does not satisfy $\IC_\mu(f,\mu,\eps)= \IC_\mu(f, \mu, 0) - \Theta(h(\eps))$.

\paragraph{Prior-free information complexity:}

In~\cite{MR2961528} Braverman defined the two notions of the prior-free information complexity. Combining our aforementioned results regarding $\IC_\mu(f,\mu,\eps)$ and $\IC_\mu(f,\eps)$, we show that for every function $f$, we have
	\[
		\IC(f,0) - O(h(\sqrt{\epsilon})) \le \IC(f,\epsilon) \le \IC(f,0) - \Omega(h(\epsilon)),
	\]
and	
	\[
		\ICD(f,0) - O(h(\sqrt{\epsilon})) \le \ICD(f,\epsilon) \le \ICD(f,0) - \Omega(h(\epsilon)).
	\]
The upper bounds are both tight for $f=\AND$.

Braverman~\cite{MR2961528}  showed that for $\eps=0$, the two notions of the prior-free information complexity coincide: $\IC(f,0)=\ICD(f,0)$. Moreover he proved  that interestingly, for every $0<\alpha<1$,
\[ \IC(f,\eps/\alpha) \le \frac{\ICD(f,\eps)}{1-\alpha}. \]
In particular  setting $\alpha=1/2$ yields  $\IC(f,2\eps) \le2 \ICD(f,\eps)$.

In~\cite{MR2961528} he asks  whether  there is a gap between  $\IC(f,\eps)$ and $\ICD(f,\eps)$ for $\eps>0$. As we  explain below, our analysis of the set disjointness function answers this question in the affirmative, and shows that the inequality $\ICD(f,\eps)  \le \IC(f,\eps)$ can be strict. Indeed we show that Braverman's analysis is tight, and  for $\alpha=\sqrt{\eps \log(1/\eps)}$, we have
\begin{equation}
\label{eq:priorfree_separation_thightness}
\frac{\ICD(\DISJ_n,\eps)}{1-\Theta(\alpha)}=\IC(\DISJ_n,\eps/\alpha) < \IC(\DISJ_n,\eps),
\end{equation}
provided that $n$ is sufficiently large.

Let $R_\epsilon^n(f^n)$ denote the randomized communication complexity of computing $n$ copies of $f$ such that on each set of $n$ inputs the probability of failure on each of the inputs is at most $\epsilon$, and let $D_\epsilon^{\mu,n}(f^n)$ denote the corresponding distributional communication complexity where each of the $n$ independent pairs of inputs drawn from $\mu$. In~\cite{MR2961528} it was proven that $\IC(f,\epsilon) = \lim_{n\to \infty} R_\epsilon^n(f^n)/n$ and $\IC_\mu(f,\mu,\epsilon) = \lim_{n\to \infty} D_\epsilon^{\mu,n}(f^n)/n$; the latter implies $\ICD(f,\epsilon) = \lim_{n\to \infty} \max_\mu D_\epsilon^{\mu,n}(f^n)/n$. As \eqref{eq:priorfree_separation_thightness} shows that $\IC(f,\epsilon)$ and $\ICD(f,\epsilon)$ do not necessarily coincide, we conclude that $\max_\mu D_\epsilon^{\mu,n}(f^n) = R_\epsilon^n(f^n)$ does not always hold either.

\paragraph{Information complexity of the AND function with error:}

In \cite[Problem 1.1]{MR3210776} Braverman et al. posed the problem of determining the prior-free information cost of the two-bit $\AND$ function with error of at most $\epsilon$. In particular, they conjecture that $\IC(\AND,\epsilon) \le \IC(\AND,0)-  \Omega(h(\epsilon))$. Our general upper bounds for prior-free information complexity settles this conjecture in the affirmative. Furthermore, in Corollary~\ref{cor:AND-gap}, we will prove a matching lower bound to show that indeed
\begin{equation}
\label{eq:AND_ND_priorfree_eps}
 \IC(\AND,\epsilon) = \IC(\AND,0) - \Theta(h(\epsilon)).
\end{equation}
In order to achieve this, we first start by proving a lower bound for the point-wise error case. This proof turns out to be rather involved and it contains  new components and ideas. First we introduce a potential function, and use it to show that the optimal protocol for the AND function is stable in the sense that if a protocol for the AND function has almost optimal information cost, then it has to share certain similarities with the optimal protocol (i.e. the so called buzzer protocol). We use this to show that even an $\epsilon$-error protocol with small information cost has to look somewhat similar to the buzzer protocol, and from that we obtain a lower bound on its information cost. More precisely we show that
\[
 \IC_\mu(\AND,\eps) = \IC_\mu(\AND,0) - \Theta(h(\epsilon)),
\]
where the asymptotic notation holds as $\epsilon \to 0$, and the hidden constant depends on $\mu$. However under certain conditions on $\mu$ one can obtain a uniform constant that does not depend on $\mu$. We combine this with some earlier results of \cite{MR3210776} and \cite{MR2961528} to obtain \eqref{eq:AND_ND_priorfree_eps} and its distributional analogue 
\[
 \ICD(\AND,\epsilon) = \ICD(\AND,0) - \Theta(h(\epsilon)).
\]

\paragraph{The communication complexity of the disjointness:} As we mentioned earlier Braverman et al.~\cite{MR3210776} determined the asymptotics of the randomized communication complexity of the set disjointness problem as $n \to \infty$. They showed that
\[ \lim_{\eps \to 0} \lim_{n \to \infty} \frac{R_\eps(\DISJ_n)}{n}=C_{\DISJ}, \]
where $C_{\DISJ}=\ICNDz(\AND,0)\approx 0.4827$. Here $\ICNDz(\AND,0)$ is defined similar to $\IC(\AND,0)$  but allowing only distributions that put a $0$ mass on the point $(1,1)$. That is $\ICNDz(\AND,0)=\max \IC_\mu(\AND,0)$, where the maximum is over all $\mu$ with $\mu(1,1)=0$.

Regarding the asymptotics of $R_\eps(\DISJ_n)$  with respect to $\eps$, they conjectured that for every $\eps>0$, the limit
\[C_{\DISJ_\eps} \defeq \lim_{n \to \infty} \frac{R_\eps(\DISJ_n)}{n}\]
is strictly smaller than $C_{\DISJ}$. Moreover they posed determining the asymptotics of $C_{\DISJ}-C_{\DISJ_\eps}$ as an open problem.  We resolve these questions by proving that $ C_{\DISJ_\eps} = C_{\DISJ} - \Theta(h(\epsilon))$.
	
In Theorem~\ref{thm:setDisj_distrib}, we  introduce a protocol for the set disjointness function  that has  distributional error $\eps$, and has small information cost. This will show
\[
 \ICD(\DISJ_n,\epsilon) =n[\ICNDz(\AND,0) - \Theta(\sqrt {h( \epsilon)})]+O(\log n).
\]

 On the other hand in Corollary~\ref{cor:disjointness_IC_ND}, we will show 
\[
 \IC(\DISJ_n,\epsilon) \geq n [\ICNDz(\AND,0) - \Theta(h(\epsilon))].
\]
This will separate the two notions of prior-free information complexity, and yield \eqref{eq:priorfree_separation_thightness}.

\paragraph{Characterization of trivial measures:}
In order to prove \eqref{eq:pointwise-error}, we will first need to characterize all measures that satisfy $\IC_\mu(f,0)=0$, and analogously for the external case $\IC^\ext_\mu(f,0)=0$. We call such measures, respectively, internal-trivial and external-trivial for $f$. In Theorems~\ref{thm:internal-trivial}~and~\ref{thm:external-trivial}, we obtain a characterization of such measures.

\subsection{Preliminaries}\label{sec:Prelim}
In this section we introduce some basic notation and facts, and review the necessary background for the paper.

\subsubsection{Notation and basic estimates}  \label{sec:estimates}

We typically denote the random variables by capital letters (e.g $A,B,C,\Pi$). For the sake of brevity, we shall write $A_1\ldots A_n$ to denote the random variable $(A_1,\ldots,A_n)$ and \emph{not} the product of the $A_i$'s. We use $[n]$ to denote the set $\{1,\ldots,n\}$, and $\supp\mu$ to denote the support of a measure $\mu$.

For a finite set $\Omega$, we denote by  $\Delta(\Omega)$, the set of all  discrete probability distributions on $\Omega$.
For $\mu,\nu \in \Delta(\Omega)$, we denote their \emph{total variation distance} with
\[ |\mu-\nu| \defeq \frac{1}{2} \sum_{x \in \Omega} |\mu(x)-\nu(x)|. \]

For every $\epsilon \in [0,1]$, $h(\epsilon) = -\epsilon \log\epsilon - (1-\epsilon) \log(1-\epsilon)$ denotes the \emph{binary entropy}, where here and throughout the paper $\log(\cdot)$ is in base $2$, and $0 \log 0 = 0$.

\subsubsection{Communication complexity}
The notion of two-party communication complexity was introduced by Yao~\cite{Yao:1979} in 1979. In this model there are two players (with unlimited computational power), often called Alice and Bob, who wish to collaboratively perform a task such as computing a given function $f\colon \cX \times \cY \to \cZ$. Alice receives an input $x \in \cX$ and Bob receives $y \in \cY$. Neither of them knows the other player's input, and they wish to communicate in accordance with an agreed-upon protocol $\pi$ to  compute $f(x,y)$.  The protocol $\pi$ specifies as a function of (only) the transmitted bits  whether the communication is over, and if not, who sends the next bit. Furthermore $\pi$ specifies what the next bit must be as a function of the transmitted bits, and the input of the player who sends the bit.  We will assume that when the protocol terminates Alice and Bob agree on a value as the output of the protocol. We denote this value by $\pi(x,y)$. The \emph{communication cost} of $\pi$ is the total number of bits transmitted on the worst case input. The \emph{transcript} of an execution of $\pi$ is a string $\Pi$ consisting of a list of all the transmitted bits during the execution of the protocol.  As protocols are defined using protocol trees, transcripts are in one-to-one correspondence with the leaves of this tree.

In the randomized communication model, the players might have access to a shared random string (\emph{public randomness}), and their own private random strings (\emph{private randomness}).  These random strings are independent, but they can have any desired distributions individually. In the randomized model the \emph{transcript} also includes the public random string  in addition to the transmitted bits. Similar to the case of deterministic protocols, the \emph{communication cost} is the total number of bits transmitted on the worst case input and random strings. The \emph{average communication cost} of the protocol is the expected number of bits transmitted on the worst case input.

For a function $f\colon \cX \times \cY \to \cZ$ and a parameter $\epsilon>0$, we denote by $R_\epsilon(f)$ the communication cost of the best randomized protocol that computes the value of $f(x,y)$ correctly with probability at least $1-\eps$ for \emph{every} $(x,y)$.

\subsubsection{Information complexity}  \label{sec:IC-definition}

The setting is the same as  in communication complexity, where Alice and Bob (having infinite computational power) wish to mutually compute a function $f\colon \cX \times \cY \to \cZ$. To be able to measure information, we also need to assume that there is a prior distribution $\mu$ on $\cX \times \cY$.

 For the purpose of communication complexity, once we allow public randomness, it makes no difference whether we permit the players to have private random strings or not. This is because the private random strings can be simulated by parts of the public random string. On the other hand, for information complexity, it is crucial to  permit private randomness, and once we allow private randomness, public randomness becomes inessential. Indeed, one of the players can use her private randomness to generate the public random string, and then transmit it to the other player. Although this might have very large communication cost, it has no information cost, as it does not reveal any information about the players' inputs.

Probably the most natural way to define the information cost of a protocol is to consider the amount of information that is revealed about the inputs $X$ and $Y$ to an external observer who sees the transmitted bits and the public randomness. This is called  the \emph{external information cost} and is formally defined as the mutual information between $XY$ and the transcript of the protocol (recall that the transcript also contains the public random string). While this notion is interesting and useful, it turns out there is a different way of defining the  information cost that   enjoys  certain desirable properties that the external information cost lack. This is called the \emph{internal information cost} or just the \emph{information cost} for short, and is equal to  the amount of information that Alice and Bob learn about each other's inputs from the communication. Note that Bob knows $Y$, the public randomness $R$, and his own private randomness $R_B$, and thus what he learns about $X$ \emph{from the communication} can be measured by the conditional mutual information  $I(X;\Pi|YRR_B)$. Similarly, what Alice learns about $Y$ from the communication can be measured by  $I(Y;\Pi|XRR_A)$ where $R_A$ is Alice's private random string. It is not  difficult to see~\cite{MR2743255} that  conditioning on the public and private randomness does not affect these quantities. In other words $I(X;\Pi|YRR_B)=I(X;\Pi|Y)$ and $I(Y;\Pi|XRR_A)= I(Y;\Pi|X)$. We summarize these in the following definition.

\begin{definition}
\label{def:infocost}
The \emph{internal information cost} and the \emph{external information cost}  of a protocol $\pi$ with respect to a distribution $\mu$ on inputs from $\cX \times \cY$ are defined as
\[ \IC_\mu(\pi) = I(\Pi; X|Y)+I(\Pi; Y|X), \]
and
\[ \IC_\mu^\ext(\pi) = I(\Pi; XY), \]
respectively, where $\Pi=\Pi_{XY}$ is the transcript of the protocol when it is executed on  $XY$.
\end{definition}

We will be interested in certain \emph{communication tasks}. Let $[f,\eps]$ denote the task of computing  the value of $f(x,y)$ correctly with probability at least $1-\eps$ for \emph{every} $(x,y)$. Thus a protocol $\pi$ performs this task  if
\[
\Pr[\pi(x,y) \neq f(x,y)] \le \epsilon, \quad \forall\ (x,y) \in \cX \times \cY.
\]
Given another distribution $\nu$ on $\cX \times \cY$, let $[f, \nu, \epsilon]$ denote the task of computing  the value of $f(x,y)$ correctly with probability at least $1-\eps$ if the input $(x,y)$ is sampled from the distribution $\nu$. A protocol $\pi$ performs this task  if
\[
\Pr_{(x,y) \sim \nu}[\pi(x,y) \neq f(x,y)] \le \epsilon.
\]
Note that a protocol $\pi$ performs $[f, 0]$ if it  computes $f$ correctly on \emph{every} input while performing $[f, \nu,0]$  means computing $f$ correctly on the inputs that belong to support of $\nu$. 

We will also need a one-sided version of the task $[f,\eps]$. Let $[f,\eps,z_1\to z_0]$ denote the task of computing the value of $f(x,y)$ correctly with probability at least $1-\eps$ for \emph{every} $(x,y)$, allowing the protocol to err only if it outputs $z_0$ instead of $z_1$. Thus a protocol $\pi$ performs this task if it performs the task $[f,\eps]$, and additionally
\[
 \pi(x,y) \neq f(x,y) \Longrightarrow f(x,y) = z_1 \text{ and } \pi(x,y) = z_0.
\]

The \emph{information complexity}  of a communication task $T$ with respect to a measure $\mu$ is defined as
\[ \IC_\mu(T) = \inf_{\pi :\ \pi \text{\  performs\ } T} \IC_\mu(\pi). \]
It is essential here that we use infimum rather than minimum as there are tasks for which there is no protocol that achieves $\IC_\mu(T)$ while there is a sequence of protocols whose information cost converges to $\IC_\mu(T)$.  The \emph{external information complexity}  of a communication task $T$ is defined similarly. We will abbreviate $\IC_\mu(f,\eps)=\IC_\mu([f,\eps])$,  $\IC_\mu(f,\nu,\eps)=\IC_\mu([f,\nu,\eps])$, etc. 
It is important to note that when $\mu$ does not have full support, $\IC_\mu(f,\mu,0)$ can be strictly smaller than  $\IC_\mu(f,0)$.

\begin{remark}[A warning regarding our notation]
In the literature of information complexity it is common to use ``$\IC_\mu(f,\eps)$'' to denote the distributional error case, i.e. what we denote by $\IC_\mu(f,\mu,\eps)$. Unfortunately this has become the source of some confusions in the past, as sometimes  ``$\IC_\mu(f,\eps)$'' is used to denote both of the distributional error   and the point-wise error cases. To avoid ambiguity we distinguish the two cases by using the different notations $\IC_\mu(f,\mu,\eps)$ and $\IC_\mu(f,\eps)$.
\end{remark}

Similar to the fact that the maximal distributional communication complexity over all measures equals the public coin randomized communication complexity (see e.g., \cite[Section 3.4]{KS}), below we prove a lemma that establishes a similar relation between $\IC_\mu(f,\nu,\epsilon)$ and $\IC_\mu(f,\epsilon)$.

\begin{lemma}  \label{lem:sup-over-nu}
$\IC_\mu(f,\epsilon) = \max_{\nu} \IC_\mu(f,\nu,\epsilon)$ holds for all $\epsilon \ge 0$.
\end{lemma}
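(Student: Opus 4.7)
The plan is to adapt the classical proof of Yao's minimax principle to the information-cost setting. The easy direction $\IC_\mu(f,\epsilon) \ge \max_\nu \IC_\mu(f,\nu,\epsilon)$ is immediate: any protocol performing $[f,\epsilon]$ has pointwise error at most $\epsilon$, hence expected error at most $\epsilon$ under every distribution $\nu$, so it also performs $[f,\nu,\epsilon]$. For the reverse inequality, set $\beta = \sup_\nu \IC_\mu(f,\nu,\epsilon)$ and I aim to show $\IC_\mu(f,\epsilon) \le \beta$.

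For every $C > \beta$, I would study the set of realizable error profiles
\[
E_C \;=\; \bigl\{(e_\pi(x,y))_{(x,y) \in \cX \times \cY} : \IC_\mu(\pi) \le C\bigr\} \;\subseteq\; [0,1]^{\cX \times \cY},
\]
where $e_\pi(x,y) = \Pr[\pi(x,y) \ne f(x,y)]$. The first key step is to show that $E_C$ is convex. Given protocols $\pi_1,\pi_2$ with $\IC_\mu(\pi_i) \le C$, Alice and Bob can share a public coin $Z$ independent of $(X,Y)$ with $\Pr[Z=1]=\lambda$ and then run $\pi_Z$. Because $Z \perp (X,Y)$, one has $I(\Pi;X|Y) \le I(\Pi Z;X|Y) = \lambda I(\Pi_1;X|Y) + (1-\lambda) I(\Pi_2;X|Y)$, and symmetrically in the other direction, so $\IC_\mu(\pi) \le \lambda \IC_\mu(\pi_1) + (1-\lambda) \IC_\mu(\pi_2) \le C$, while the error profile is exactly $e_\pi = \lambda e_{\pi_1} + (1-\lambda) e_{\pi_2}$. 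Thus the closure $\overline{E_C}$ is a compact convex subset of $[0,1]^{\cX \times \cY}$.

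Next I would apply the minimax theorem to the bilinear pairing $(e,\nu) \mapsto \langle \nu, e \rangle := \sum_{(x,y)} \nu(x,y) e(x,y)$ over the compact convex sets $\overline{E_C}$ and $\Delta(\cX \times \cY)$. The definition of $\beta$ supplies, for each $\nu$, a protocol $\pi^\nu$ with $\IC_\mu(\pi^\nu) \le C$ and $\langle \nu, e_{\pi^\nu}\rangle \le \epsilon$, so $\min_{e \in \overline{E_C}} \langle \nu, e \rangle \le \epsilon$ for every $\nu$. Minimax then yields
\[
\min_{e \in \overline{E_C}} \max_{(x,y)} e(x,y) \;=\; \min_{e \in \overline{E_C}} \max_{\nu} \langle \nu, e \rangle \;=\; \max_{\nu} \min_{e \in \overline{E_C}} \langle \nu, e \rangle \;\le\; \epsilon,
\]
using that $\max_\nu \langle \nu, e\rangle = \max_{(x,y)} e(x,y)$ is attained at point masses. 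This produces some $e^* \in \overline{E_C}$ with $e^*(x,y) \le \epsilon$ for all $(x,y)$. Approximating $e^*$ by error profiles of actual protocols of information cost at most $C$, and then sending $C \downarrow \beta$, I would conclude $\IC_\mu(f,\epsilon) \le \beta$.

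The main obstacle I foresee is the closure step: the point $e^*$ lies in $\overline{E_C}$ rather than in $E_C$, so the minimax argument directly produces only a sequence of protocols with information cost at most $C$ and pointwise error at most $\epsilon + o(1)$. To upgrade this to protocols with pointwise error at most $\epsilon$ and information cost arbitrarily close to $\beta$, I would either verify right-continuity of $\IC_\mu(f,\cdot)$ in $\epsilon$ (mirroring the continuity argument alluded to above for the distributional case), or perform a diagonalization in the two parameters $C - \beta$ and the closure slack, extracting a single approximating protocol. A further minor point is that the supremum over $\nu$ is attained as a maximum; this follows from the compactness of $\Delta(\cX \times \cY)$ together with the semicontinuity of $\nu \mapsto \IC_\mu(f,\nu,\epsilon)$ implicit in the convex structure of $\overline{E_C}$.
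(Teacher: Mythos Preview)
Your proposal is correct and is essentially the same von Neumann minimax argument as the paper, only set up dually. The paper fixes the information-cost threshold at $\IC_\mu(f,\epsilon)-\delta$ and uses minimax to exhibit a hard distribution $\nu_\delta^*$ on which every protocol of that cost has error strictly larger than $\epsilon$; you instead fix the threshold at $C>\beta$ and use minimax to exhibit an error profile in $\overline{E_C}$ with all coordinates at most $\epsilon$. The paper's orientation sidesteps the closure-in-error issue you correctly flagged (no need to invoke right-continuity of $\IC_\mu(f,\cdot)$), at the price of a different bit of hand-waving: it must argue that the infinite set $C_\delta$ of protocols can be replaced by a finite one so that the minimax theorem applies. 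Your convexity argument for $E_C$ via public-coin mixing is exactly the ingredient that makes the protocol side of the game well-behaved in either setup.
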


Note that the maximum exists due to continuity of $\IC_\mu(f,\nu,\epsilon)$ with respect to $\nu$, a fact that is discussed later in Section~\ref{sec:IC-continuity} (For  $\epsilon = 0$ one can take any full-support $\nu$).

\begin{proof}
We only need to show $\IC_\mu(f,\epsilon) \le \max_{\nu} \IC_\mu(f,\nu,\epsilon)$ as the other direction is obvious. The proof is an application of von Neumann's minimax theorem. 

Pick a small $\delta >0$, let $C_{\delta} = \{\pi: \IC_\mu(\pi) \le \IC_\mu(f,\epsilon) - \delta\}$. Although $C_{\delta}$ is an infinite set, we can approximate it by a  finite set by considering only  the protocols with bounded communication cost that use only a bounded number of unbiased random bits.  This process does not affect the validity of the proof, and hence the minimax theorem is still applicable.

Consider a two-player zero-sum game in which Alice chooses a protocol $\pi \in C_{\delta}$ and Bob chooses an input $(x,y) \in \cX \times \cY$, and define the utility for Alice to be $\Pr[\pi(x,y)=f(x,y)]$. Note that a mixed strategy for Alice is still just a protocol, and a mixed strategy for Bob corresponds to a probability measure on $\cX \times \cY$. By our definition of $C_{\delta}$ and the minimax theorem, we have
\[  \nonumber
\min_{\nu} \max_{\pi} \Ex_{(x,y)\sim \nu} \Pr[\pi(x,y)=f(x,y)]
= \max_{\pi} \min_{\nu} \Ex_{(x,y)\sim \nu} \Pr[\pi(x,y)=f(x,y)]
=  1-\epsilon - t(\delta) < 1 - \epsilon,
\]
where $t(\delta) > 0$ is a positive quantity. This means that there exists a measure $\nu_\delta^*$ such that for all $\pi \in C_{\delta}$, $\Ex_{(x,y)\sim \nu_\delta^*} \Pr[\pi(x,y) \neq f(x,y)] > \epsilon$. Letting $\delta \to 0$ gives $\max_{\nu} \IC_\mu(f,\nu,\epsilon) \ge \IC_\mu(f,\epsilon)$ as desired.
\end{proof}

Finally let us recall the two definitions of the prior-free notions of information complexity introduced in \cite{MR2961528}. The \emph{max-distributional information complexity} of a function $f\colon \cX \times \cY \to \cZ$ is defined as
\[ \ICD(f,\eps) = \max_{\mu} \IC_\mu(f,\mu,\eps). \]
The information complexity of  $f$ with error $\eps$ is defined as
\[ \IC(f,\eps)= \inf_{\pi} \max_{\mu} \IC_\mu(\pi), \]
where the infimum is over all protocols $\pi$ that perform the task $[f,\eps]$.
It is possible~\cite{MR2961528} to use a minimax argument and the concavity of $\IC_\mu(\pi)$ with respect to $\mu$ to show that
\[ \IC(f,\eps)= \inf_{\pi} \max_{\mu} \IC_\mu(\pi) = \max_{\mu} \inf_{\pi} \IC_\mu(\pi) = \max_\mu \IC_\mu(f,\epsilon)= \max_{\mu,\nu} \IC_\mu(f,\nu,\eps), \]
where the last equality  follows from  Lemma~\ref{lem:sup-over-nu}.

\subsubsection{The continuity of information complexity}   \label{sec:IC-continuity}
It is shown in~\cite[Lemma 4.4]{SelfRed} that for every communication task $T$, $\IC_\mu(T)$ is uniformly continuous with respect to $\mu$. More precisely, for every two measures $\mu_1$ and $\mu_2$ with $|\mu_1-\mu_2| \le \delta$ (the distance is in total variation distance), we have
\begin{equation}
\label{eq:continMu}
|\IC_{\mu_1}(T)-\IC_{\mu_2}(T)| \le 2 \log(|\cX \times \cY|) \delta + 2 h(2\delta).
\end{equation}

The information complexity functions $\IC_\mu(f,\epsilon)$ and $\IC_\mu(f,\nu,\epsilon)$ are both continuous with respect to $\epsilon$. The following simple lemma from~\cite{MR2961528} proves continuity for $\epsilon \in (0,1]$. The continuity at $0$ is more complicated and is proven in~\cite{MR3210776} (See also Theorem~\ref{thm:non-distri-error-lowerbd} and Theorem~\ref{thm:bd-IC-mu--distributional}  below).

\begin{lemma} \label{lem:continuity} \cite{MR2961528}
For every $f\colon\mathcal{X} \times \mathcal{Y} \to \mathcal{Z}$, $\epsilon_2 >\epsilon_1>0$ and measures $\mu,\nu$ on $\mathcal{X} \times \mathcal{Y}$, we have
\begin{equation}
\label{eq:continuity_Dist_Eps}
\IC_\mu(f,\nu,\epsilon_1)- \IC_\mu(f,\nu,\epsilon_2) \le (1-\epsilon_1/\epsilon_2) \log |\mathcal{X} \times \mathcal{Y}|,
\end{equation}
and
\begin{equation}
\label{eq:continuity_Pw_Eps}
\IC_\mu(f,\epsilon_1)- \IC_\mu(f,\epsilon_2)  \le (1-\epsilon_1/\epsilon_2) \log |\mathcal{X} \times \mathcal{Y}|.
\end{equation}
\end{lemma}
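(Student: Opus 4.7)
The plan is to prove both inequalities by a single convex-combination construction: take a near-optimal protocol for the larger error parameter $\epsilon_2$ and mix it, via a public coin, with a trivial zero-error protocol in which Alice simply transmits $X$ to Bob. Choosing the mixing weight $\alpha=\epsilon_1/\epsilon_2$ will exactly bring the error down to $\epsilon_1$, while the resulting information cost will exceed $\IC_\mu(\pi_2)$ by at most the desired additive term.

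In more detail, fix $\delta>0$ and let $\pi_2$ be a protocol performing $[f,\nu,\epsilon_2]$ with $\IC_\mu(\pi_2)\le \IC_\mu(f,\nu,\epsilon_2)+\delta$. Let $\pi_0$ be the trivial protocol in which Alice sends $X$ to Bob and both output $f(X,Y)$; then $\pi_0$ performs $[f,0]$ and satisfies $\IC_\mu(\pi_0)\le H(X|Y)\le\log|\mathcal X|\le\log|\mathcal X\times\mathcal Y|$. Define $\pi_1$ as follows: using public randomness, draw a bit $B$ with $\Pr[B=1]=\alpha$ where $\alpha=\epsilon_1/\epsilon_2$; if $B=1$ run $\pi_2$, otherwise run $\pi_0$. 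The error under $\nu$ is at most $\alpha\epsilon_2+(1-\alpha)\cdot 0=\epsilon_1$, so $\pi_1$ performs $[f,\nu,\epsilon_1]$. The same construction applied with $\pi_0$ being any zero-error protocol on all inputs handles the point-wise task $[f,\epsilon_1]$ identically.

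Next I would compute the internal information cost of $\pi_1$. Writing $\Pi_1=(B,\Pi)$ with $\Pi$ the transcript of the chosen subprotocol, the chain rule gives
\[
I(\Pi_1;X\mid Y)=I(B;X\mid Y)+I(\Pi;X\mid YB)=\alpha\,I(\Pi_2;X\mid Y)+(1-\alpha)\,I(\Pi_0;X\mid Y),
\]
since $B$ is independent of $(X,Y)$ (it comes from public randomness). The symmetric identity holds with $X$ and $Y$ swapped, so
\[
\IC_\mu(\pi_1)\le \alpha\,\IC_\mu(\pi_2)+(1-\alpha)\,\IC_\mu(\pi_0)\le \alpha\,\IC_\mu(\pi_2)+(1-\alpha)\log|\mathcal X\times\mathcal Y|.
\]
Taking the infimum over $\pi_2$, using $\alpha\,\IC_\mu(f,\nu,\epsilon_2)\le \IC_\mu(f,\nu,\epsilon_2)$, and letting $\delta\to 0$ yields the distributional inequality \eqref{eq:continuity_Dist_Eps}. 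The point-wise inequality \eqref{eq:continuity_Pw_Eps} follows by the exact same argument applied to the task $[f,\epsilon]$, observing that $\pi_0$ has zero error on \emph{every} input.

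There is no real obstacle here; the only small subtlety is ensuring that the mixing bit $B$ is treated as public randomness rather than private randomness, so that conditioning on $B$ is free in the information-theoretic decomposition and the two branches of $\pi_1$ contribute linearly with weights $\alpha$ and $1-\alpha$.
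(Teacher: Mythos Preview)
Your proposal is correct and follows essentially the same approach as the paper: mix a near-optimal $\epsilon_2$-error protocol with a trivial zero-error protocol via a public coin with bias $1-\epsilon_1/\epsilon_2$, then bound the information cost of the mixture linearly. The paper's zero-error protocol has both players exchange their inputs (rather than just Alice sending $X$), which avoids the small wrinkle in your description that Alice cannot output $f(X,Y)$ after sending only $X$; with the full exchange one gets $\IC_\mu(\pi_0)\le H(X\mid Y)+H(Y\mid X)\le\log|\mathcal X\times\mathcal Y|$ directly, but this is a cosmetic difference.
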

\begin{proof}
Consider a protocol $\pi$ with information cost $I$, and error $\epsilon_2>0$. Here we can consider the distributional error as in \eqref{eq:continuity_Dist_Eps} or the point-wise error as in~\eqref{eq:continuity_Pw_Eps}. Set $\delta= 1-\epsilon_1/\epsilon_2$, and let $\tau$ be the protocol that with probability $1-\delta$ runs $\pi$, and with probability $\delta$ Alice and Bob exchange their inputs and compute $f(x,y)$ correctly. The theorem follows as the new protocol has error at most $(1-\delta)\epsilon_2=\epsilon_1$, and information cost at most $I + \delta \log |\mathcal{X}  \times \mathcal{Y}|$.
\end{proof}

Note that  $\IC_\mu(f,\mu,0)$ is not always continuous with respect to $\mu$. For example, let the matrices
\begin{equation}
  \label{eq:measure-delta}
\mu_\epsilon =
\begin{pmatrix}
\frac{1-\epsilon}{3} & \frac{1-\epsilon}{3} \\
\frac{1-\epsilon}{3} & \epsilon
\end{pmatrix},  \qquad
\mu = \lim_{\epsilon \to 0} \mu_\epsilon = \begin{pmatrix}
\frac{1}{3} & \frac{1}{3} \\
\frac{1}{3} & 0
\end{pmatrix}.
\end{equation}
represent distributions on $\{0,1\}^2$. Here the entry at the $i$-th row and $j$-th column corresponds to the measure of the point $(i-1,j-1) \in \{0,1\}^2$.
Now for the $2$-bit $\AND$ function, we have $\IC_{\mu}(\AND,\mu,0)=0$, while $\IC_{\mu_\epsilon}(\AND,\mu_\epsilon,0)=\IC_{\mu_\epsilon}(\AND,0)$ as $\mu_\epsilon$ has full support. Thus
\[ \lim_{\epsilon \to 0} \IC_{\mu_\epsilon}(\AND, \mu_\epsilon,0) =\lim_{\epsilon \to 0} \IC_{\mu_\epsilon}(\AND,0) = \IC_{\mu}(\AND,0), \]
which is known to be bounded away from $0$.

Finally, note that Lemma~\ref{lem:continuity} also implies the continuity of  $\IC_{\mu}(f,\nu,\epsilon)$ with respect to $\nu$ when $\epsilon>0$.  Indeed if $|\nu_1-\nu_2| \le \delta \le \epsilon$, then a protocol that has distributional error $\eps$ with respect to $\nu_2$, will have error at most $\eps+\delta$ and  at least $\eps-\delta$ with respect to $\nu_1$. Thus
\begin{equation}
\label{eq:continNu}
\IC_{\mu}(f,\nu_1,\epsilon+\delta) \le \IC_{\mu}(f,\nu_2,\epsilon) \le  \IC_{\mu}(f,\nu_1,\epsilon-\delta).
\end{equation}
which establishes the desired continuity. A similar example to \eqref{eq:measure-delta} shows that $\IC_{\mu}(f,\nu,0)$ is not necessarily continuous with respect to $\nu$.

\subsubsection{Communication protocols as random walks on $\Delta(\cX \times \cY)$}   \label{sec:randomwalk}

Recall that $\Delta(\cX \times \cY)$ denotes the set of probability distributions on $\cX \times \cY$.  Consider a protocol $\pi$ and a prior distribution $\mu$ on the set of inputs $\cX \times \cY$. Suppose that in the first round Alice  sends a random signal $B$ to Bob. We can interpret this as  a random update of  the prior distribution $\mu$ to a new distribution $\mu_0 = \mu|_{B=0}$ or $\mu_1 = \mu|_{B=1}$ depending on the value of $B$. It is not difficult to see that $\mu_b(x,y) =p_b(x) \mu(x,y)$ for $b=0,1$, where $p_b(x)=\frac{\Pr[B=b|x]}{\Pr[B=b]}$. In other words, $\mu_b$ is obtained by multiplying the rows of $\mu$ by non-negative numbers. 	From the law of total expectation,
	\begin{equation} \label{eq:no-drift}
		\mu = \Ex_B [ \mu | B] = \Pr[B=0] \mu_0 + \Pr[B=1] \mu_1.
	\end{equation}

Similarly if Bob is sending a message, then  $\mu_b$ is obtained by multiplying the columns of $\mu$ by the numbers $p_b(y)=\frac{\Pr[B=b|y]}{\Pr[B=b]}$. That is  $\mu_b(x,y) = \mu(x,y)p_b(y)$.

The opposite direction is also true: given a distribution $\mu$, distributions $\mu_0$, $\mu_1$, and $0 \leq p_0, p_1 \leq 1$ such that
\begin{itemize}
	\item
		$p_0 + p_1 = 1$,
	\item
		$\mu_0$ and $\mu_1$ are obtained from $\mu$ by scaling its rows,
	\item
		$\mu = p_0 \mu_0 + p_1 \mu_1$,
\end{itemize}
one can define a random bit $B$ that can be sent by Alice such that $\mu_b$ is $\mu$ conditioned on $B=b$ for $b\in \{0,1\}$, and $p_b = \Pr[B=b]$.
A similar statement holds for the case where $\mu_0$ and $\mu_1$ are obtained from $\mu$ by scaling its columns and $B$ is a signal that will be sent by Bob.

Therefore, we can think of a protocol as a random walk on $\Delta(\cX \times \cY)$ that starts at $\mu$, and every time that a player sends a message, it moves to a new distribution.  Equation~\eqref{eq:no-drift} implies that this random walk is without drift.

Let $\Pi$ denote the transcript of the protocol.  Note that when the protocol terminates, the random walk stops at $\mu_\Pi \defeq \mu|_\Pi$. Since $\Pi$ itself is a random variable, $\mu_\Pi$ is a random variable that takes values in  $\Delta(\cX \times \cY)$.  Interestingly, both the internal and external information costs of the protocol depend only on the distribution of $\mu_\Pi$ (this is a distribution on the set $\Delta(\cX \times \cY)$, which itself is a set of distributions)~\cite{MarkComputable}. It does not matter how different the steps of two protocol are, and as long as they both  yield the same distribution on $\Delta(\cX \times \cY)$, they have the same internal and external information cost. Consequently, one can directly work with this random walk, instead of working with the actual  protocols.

In order to study the relation between the information complexity and the distribution of $\mu_\Pi$, define the \emph{concealed information} and \emph{external concealed information} of a protocol $\pi$ with respect to $\mu$, respectively, as
\begin{equation} \label{eq:def-CI}
	\CI_\mu(\pi) = H(X|\Pi Y)+H(Y|\Pi X) = H(X|Y)+H(Y|X)-\IC_\mu(\pi),
\end{equation}
and
\[ \CI^\ext_\mu(\pi) = H(XY|\Pi) = H(XY)-\IC^\ext_\mu(\pi). \]

With this definition it is easy to see that the information cost of a protocol $\pi$ with transcript $\Pi$ only depends on the distribution of $\mu_\Pi$. Indeed
\[
	\CI_\mu(\pi)
	= H_{XY \sim \mu}(X|\Pi Y) + H_{XY \sim \mu}(Y | \Pi X)
	= \Ex_{\Pi} H_{XY \sim \mu_\Pi}(X | Y) + \Ex_\Pi H_{XY \sim \mu_\Pi}(Y | X).
\]
Another nice property of concealed information is that if $\pi_0$ and $\pi_1$ are the two branches of the protocol $\pi$ corresponding respectively to $B=0$ and $B=1$ where $B$ is the first bit sent, then
\[
	\CI_\mu(\pi)
	= \Pr[B=0] \CI_{\mu|B=0}(\pi_0) + \Pr[B=1] \CI_{\mu|B=1}(\pi_1).
\]
Thus, the expected value of $\CI$ is preserved throughout the execution of the protocol. Similar results hold for $\CI^\ext_\mu(\pi)$.

%

\section{Main Results}\label{sec:main_results}
In this section, we state and discuss our main results in full detail. Simpler proofs are presented in this section, but the proofs of the more involved results  are postponed to later sections.

We will use the following simple estimate:
\begin{equation} \label{eq:h-estimate}
 x  \in [0,1/2] \Longrightarrow x \log \frac{1}{x} \leq h(x) \leq 2x \log \frac{1}{x},
\end{equation}
which holds since in that range $-x \log x \geq -(1-x) \log (1-x)$. 

Denote
\begin{equation}\label{eq:h-convex}
\hc(x) = h(\min(x,1/2)).
\end{equation}
It satisfies $\hc(x) \geq h(x)$ and $x \le  \hc(x)$.
It is easy to see that $h$ is concave. Therefore, $\hc$ is also concave as a minimum of two concave functions. 
Additionally, $h(0) = \hc(0) = 0$. We will next show how to utilize these two properties of $h$ and $\hc$:
for any concave function $g\colon \mathbb{R}^+ \to \mathbb{R}$ for which $g(0)=0$, and for any $x>0$ and $0<q<1$, it holds
\begin{equation} \label{eq:conv-factor}
	g(qx) 
	\geq q g(x) + (1-q) g(0)
	= q g(x).
\end{equation}
This implies the subadditivity of $g$: for all $a_1,a_2>0$, $g(a_1+a_2) \leq g(a_1) + g(a_2)$, as
$g(a_i) \geq \frac{a_i}{a_1+a_2} g(a_1+a_2)$, for all $i=1,2$.

\subsection{Information complexity with point-wise error} \label{sec:pointwiseError}

Consider a communication problem $f\colon \cX \times \cY \to \cZ$,  and a distribution $\mu$. How close can $\IC_\mu(f,\epsilon)$ be to $\IC_\mu(f,0)$? A simple argument shows that $\IC_\mu(f,\epsilon) \leq \IC_\mu(f,0) - \Omega(\epsilon)$.

\begin{proposition} \label{prop:upper-bd-IC-mu-eps-trivial}
 Let $f\colon \cX \times \cY \to \cZ$, and let $\mu$ be a measure on $\cX \times \cY$. Denoting $c=\IC_\mu(f,0)$, we have
\[
 \IC_\mu(f,\epsilon) \leq (1-\epsilon) \IC_\mu(f,0)=\IC_\mu(f,0) -  c \eps.
\]
\end{proposition}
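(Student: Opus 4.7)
The plan is to build an $\epsilon$-error protocol by mixing an (almost) optimal zero-error protocol with a trivial "do-nothing" protocol, using a public coin to decide which one to run.

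First I would fix $\delta>0$ and pick a protocol $\pi$ that performs the task $[f,0]$ (i.e.\ computes $f$ with zero error on every input) and whose internal information cost satisfies $\IC_\mu(\pi) \le \IC_\mu(f,0) + \delta$. Such a $\pi$ exists by the definition of $\IC_\mu(f,0)$ as an infimum. I would then define a new protocol $\pi'$ as follows: using a public random bit $R$ with $\Pr[R=1]=1-\epsilon$ and $\Pr[R=0]=\epsilon$, if $R=1$ the players run $\pi$ and output its answer; if $R=0$ they exchange no bits and output an arbitrary fixed value $z_0 \in \cZ$. Clearly $\pi'$ errs on any input $(x,y)$ with probability at most $\epsilon$, so $\pi'$ performs the task $[f,\epsilon]$.

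Next I would compute the information cost of $\pi'$. Let $\Pi'$ denote its transcript, which by definition includes the public coin $R$ together with the bits transmitted afterwards. Conditioned on $R=1$, the transcript coincides with that of $\pi$; conditioned on $R=0$, the transcript contains no further bits. Since $R$ is independent of the inputs, $I(R;X|Y)=I(R;Y|X)=0$, so by the chain rule
\begin{align*}
I(\Pi';X|Y) &= I(R;X|Y) + I(\Pi';X|YR) \\
&= (1-\epsilon) I(\Pi;X|Y) + \epsilon \cdot 0 = (1-\epsilon) I(\Pi;X|Y),
\end{align*}
and symmetrically for $I(\Pi';Y|X)$. Adding these gives $\IC_\mu(\pi') = (1-\epsilon)\,\IC_\mu(\pi) \le (1-\epsilon)(\IC_\mu(f,0)+\delta)$.

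Finally, since $\pi'$ performs $[f,\epsilon]$, we get $\IC_\mu(f,\epsilon) \le (1-\epsilon)(\IC_\mu(f,0)+\delta)$, and letting $\delta \to 0$ yields $\IC_\mu(f,\epsilon) \le (1-\epsilon)\IC_\mu(f,0) = c - c\epsilon$, as claimed. There is no real obstacle here; the only thing to be careful about is the accounting of the public coin in the information cost, which works out cleanly because $R$ is independent of $(X,Y)$.
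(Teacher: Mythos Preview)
Your proof is correct and follows essentially the same approach as the paper: mix a zero-error protocol with a trivial do-nothing protocol via a public coin, and observe that the information cost scales by $(1-\epsilon)$. The only minor difference is that you make the passage to the infimum explicit via the $\delta$ argument, whereas the paper leaves this implicit.
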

\begin{proof}
 Let $\pi$ be a zero-error protocol for $f$. Consider a protocol $\pi'$ in which Alice and Bob use their public randomness to run with probability $1-\epsilon$  the protocol $\pi$, or to terminate with an arbitrary output with probability $\epsilon$.  Let $\Pi$ and $\Pi'$  be  respectively the transcripts of $\pi$ and $\pi'$ on the random input $(X,Y)$. We have
\[
 I(X;\Pi'|Y) = H(X|Y) - H(X|\Pi' Y) = H(X|Y) -\epsilon H(X|Y) - (1-\epsilon) H(X | \Pi Y) = (1-\epsilon) I(X;\Pi | Y).
\]
The same holds for $I(Y;\Pi'|X)$, and the statement follows.
\end{proof}

Our first major theorem shows that this trivial bound can be improved to $\IC_\mu(f,\epsilon) \leq \IC_\mu(f,0) - \Omega(\entf(\epsilon))$.

\begin{theorem} \label{thm:upper-bd-IC-mu-eps}
Consider a function $f\colon \cX \times \cY \to \cZ$ and a probability measure $\mu$ on $\cX \times \cY$ such that $\IC_\mu(f,0) > 0$. There exist positive constants $\tau,\epsilon_0$, depending on $f$ and $\mu$, such that for every $\epsilon \leq \epsilon_0$,
\[
 \IC_\mu(f,\epsilon) \le \IC_\mu(f,0) - \tau \entf(\epsilon).
\]
Moreover:
\begin{description}
 \item[Non-constant case:] Suppose that $f(a) \neq f(b)$ for two points $a,b$ in the support of $\mu$, and on the same row or column. Then one can take $\tau \geq \mu(a)^2 \mu(b)/64$, and $\epsilon_0$ depends only on $\min(\mu(a),\mu(b))$ and $|\cX \times \cY|$.
 \item[AND case:] Let $x_0,x_1 \in \cX$ and $y_0,y_1 \in \cY$. Suppose that $f(x_0y_0) = f(x_0y_1) = f(x_1y_0) = z_0$ and $f(x_1y_1) = z_1 \neq z_0$, and that $x_0y_0,x_0y_1,x_1y_0 \in \supp\mu$. Then one can take $\tau \geq \frac{\mu(x_0y_0)^2}{128} \min(\mu(x_0y_1),\mu(x_1y_0))$, and $\epsilon_0$ depends only on $|\cX \times \cY|$ and the minimum of $\mu(x_0y_0)$, $\mu(x_0y_1)$, $\mu(x_1y_0)$.
\end{description}
\end{theorem}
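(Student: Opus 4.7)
The plan is to construct $\pi'$ from an almost-optimal zero-error protocol $\pi$ by letting Alice randomly \emph{swap} her input between two values with probability $\epsilon$. The swap causes pointwise error at most $\epsilon$ because it occurs with probability at most $\epsilon$ at every fixed input, while injecting $\Theta(h(\epsilon))$ of conditional entropy into $H(\Pi' \mid X, Y)$. Since $I(\Pi'; X \mid Y) = H(\Pi' \mid Y) - H(\Pi' \mid X, Y)$, this reduces the information cost by essentially the same amount, provided $H(\Pi' \mid Y)$ and $H(\Pi' \mid X)$ are perturbed by only $O(\epsilon)$ relative to their $\pi$-values.

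\textbf{Non-constant case.} Fix $a, b \in \supp\mu$ in, say, the same column $y^*$, with $a = (x_a, y^*)$, $b = (x_b, y^*)$, and $f(a) \neq f(b)$. Let $\pi$ be a zero-error protocol with $\IC_\mu(\pi) \leq \IC_\mu(f,0) + o(1)$. Define $\pi'$ as follows: Alice draws a private coin $C \in \{0,1\}$ with $\Pr[C = 1 \mid X = x_b] = \epsilon$ and $\Pr[C = 1 \mid X \neq x_b] = 0$; she sets the virtual input $\tilde X = x_a$ if $C=1$ and $\tilde X = X$ otherwise, and the two parties execute $\pi$ on $(\tilde X, Y)$, outputting $\pi(\tilde X, Y)$. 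The only possible source of error is $C = 1$, which has probability at most $\epsilon$ at every fixed input, so the pointwise error is at most $\epsilon$.

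For the information analysis, observe that conditional on $(X, Y) = (x_b, y)$, the transcript $\Pi' = \pi(\tilde X, Y)$ is distributed as a $(1-\epsilon, \epsilon)$-mixture of the two transcripts $\pi(x_b, y)$ and $\pi(x_a, y)$. At $y = y^*$ these transcripts are distinct (they lead to different outputs), so the mixture contributes exactly $h(\epsilon)$ to the conditional entropy, yielding $H(\Pi' \mid X, Y) \geq \mu(b) h(\epsilon)$. On the other hand, $\Pi'$ has the same marginal distribution as the transcript of $\pi$ would have under the perturbed prior $\mu^\ast$ obtained from $\mu$ by transferring mass $\epsilon \mu(x_b, y)$ from each $(x_b, y)$ to $(x_a, y)$; since $|\mu - \mu^\ast| \leq \epsilon\mu(x_b)$, the uniform continuity of entropy~\eqref{eq:continMu} gives $|H(\Pi' \mid Y) - H(\Pi \mid Y)| = O(\epsilon)$ and similarly for $H(\Pi' \mid X)$. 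Summing the two mutual informations,
\[
  \IC_\mu(\pi') \leq \IC_\mu(\pi) + O(\epsilon) - \mu(b) h(\epsilon),
\]
and the $h(\epsilon)$ term dominates for $\epsilon$ small enough; the quantitative constant $\tau \geq \mu(a)^2\mu(b)/64$ comes from a careful accounting of the first-order correction, using $\mu(a) > 0$ to control the perturbation of $H(\Pi' \mid Y)$ restricted to the column $y^*$.

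\textbf{AND case and main obstacle.} In the AND case $(x_1, y_1)$ may lie outside $\supp\mu$ and $f$ may be constant on every row and column of $\supp\mu$, so no pair witnesses the non-constant case. Instead, I let Alice swap $x_1 \to x_0$ with probability $\epsilon$ (set $\tilde X = x_0$ when $X = x_1$ and $C = 1$, else $\tilde X = X$). The structure $f(x_0, y_0) = f(x_0, y_1) = z_0$ ensures that this swap yields a wrong answer \emph{only} at the phantom input $(x_1, y_1)$, so the pointwise error remains at most $\epsilon$. The $h(\epsilon)$ gain now comes from the mixture at $(x_1, y_0)$ between the distinct transcripts $\pi(x_0, y_0)$ and $\pi(x_1, y_0)$, which differ because a zero-error $\pi$ must separate $(x_0, y_1)$ from $(x_1, y_1)$ and hence $\pi(x_0, \cdot)$ from $\pi(x_1, \cdot)$; this gives savings of order $\mu(x_1 y_0) h(\epsilon)$, and balancing against the $O(\epsilon)$ perturbation (now controlled by the reference masses $\mu(x_0 y_0)$ and $\mu(x_0 y_1)$) yields $\tau \geq \mu(x_0 y_0)^2 \min(\mu(x_0 y_1), \mu(x_1 y_0))/128$. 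The main obstacle throughout is precisely that the first-order $O(\epsilon)$ perturbation of $H(\Pi' \mid Y)$ and $H(\Pi' \mid X)$ is of the same order as $h(\epsilon)$, so the claimed $\Omega(h(\epsilon))$ savings only survives after extracting the correct subleading structure; this is what forces the product form of $\tau$ and the restriction to sufficiently small $\epsilon \leq \epsilon_0$.
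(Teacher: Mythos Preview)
Your construction of $\pi'$ (Alice swaps $x_1\to x_0$ with probability~$\epsilon$) is the same as the paper's, but your analysis has a genuine gap. You decompose $I(\Pi';X\mid Y)=H(\Pi'\mid Y)-H(\Pi'\mid XY)$ and claim $|H(\Pi'\mid Y)-H(\Pi\mid Y)|=O(\epsilon)$ by ``uniform continuity of entropy,'' citing~\eqref{eq:continMu}. That equation bounds $|\IC_{\mu_1}(T)-\IC_{\mu_2}(T)|$ by $2\log|\cX\times\cY|\,\delta+2h(2\delta)=\Theta(h(\delta))$, not $O(\delta)$; and it says nothing about transcript entropies, which are unbounded as the communication cost of $\pi$ grows. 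Even for bounded-length $\pi$, the total-variation continuity of entropy only gives $|H(\Pi'\mid Y)-H(\Pi\mid Y)|\le O(\epsilon\cdot L)+h(O(\epsilon))$, where $L$ is the length of $\pi$. So the perturbation term is itself $\Theta(h(\epsilon))$, and your argument does not explain why it does not cancel the $\mu(b)h(\epsilon)$ gain in $H(\Pi'\mid XY)$. The paper avoids this by working with the bounded quantities $H(X\mid\Pi Y)$ and $H(Y\mid\Pi X)$: it isolates a set $\cL$ of transcripts with $\Pr[\Pi\in\cL]\ge C_1$ on which $\Pr[XY=x_0\oy\mid\Pi=t]\ge C_2$ and $\Pr[XY=x_1\oy\mid\Pi=t]\le\delta$, and then shows directly (via the indicator $C=\mathbf{1}_{X\ne x_1}$) that $H(X\mid\Pi'Y=t\oy)\ge H(X\mid\Pi Y=t\oy)+h(\Theta(\epsilon))-O(\epsilon)$ for each $t\in\cL$. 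This is Lemma~\ref{lem:Given-good-transcripts}, and it is the real work.

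There are two further gaps. In the AND case you assert that $\pi(x_0,y_0)$ and $\pi(x_1,y_0)$ are ``distinct'' transcripts, but both inputs yield output $z_0$, so their transcript distributions can overlap arbitrarily; the paper instead uses the rectangle property to show that every transcript must kill at least one of $x_0y_1,x_1y_0$, and then a pigeonhole over transcripts with $\Pr[x_0y_0\mid t]\ge\mu(x_0y_0)/2$ to find a large set $\cL$ on which a \emph{fixed} $b\in\{x_0y_1,x_1y_0\}$ has probability zero. Finally, the general statement (arbitrary $f,\mu$ with $\IC_\mu(f,0)>0$) is not covered by either of your two cases: when $f$ is constant on the support of $\mu$ but the measure is still nontrivial, the paper invokes the characterization of internal-trivial measures (Theorem~\ref{thm:internal-trivial}, Lemma~\ref{lem:structurally-internal-trivial}) to locate a connected component $C$ of $G_\mu$ with $f$ non-constant on $C_A\times C_B$, and then runs a binary search along a shortest path in $C$ to produce adjacent points satisfying the hypotheses of Lemma~\ref{lem:Given-good-transcripts}.
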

\begin{proof}
See Section~\ref{sec:proof:upper-bd-IC-mu-eps}.
\end{proof}

\begin{remark}
\label{rem:upper-bd-IC-mu-eps}
We prove  Theorem~\ref{thm:upper-bd-IC-mu-eps} by taking a zero-error protocol for $f$, and turning it into an  $\epsilon$-error protocol that has an $\Omega(\entf(\epsilon))$ gain in the information cost over the original protocol. The high-level idea is that one of the players checks  her/his input and if it is equal to a certain value $x_1$, then with probability $\eps$ changes to a different value $x_0$. This obviously creates an error of at most $\eps$. In the Non-constant case of Theorem~\ref{thm:upper-bd-IC-mu-eps}, the points $a$ and $b$ are used to determine $x_0$ and $x_1$, and in the AND case, the same $x_0$ and $x_1$ as they are described in the statement of the theorem can be used. Note that this modification can only create errors that erroneously output $f(x_0,y)$ instead of $f(x_1,y)$ for some values of $y$. This allows us to obtain a one-sided error for many functions. We shall use this later in Corollary~\ref{cor:ANDzero-gap} to obtain an upper bound on the information complexity of the AND function when only one-sided error is allowed.
\end{remark}

Despite the simplicity of the idea described in Remark~\ref{rem:upper-bd-IC-mu-eps}, the proof is rather involved, and uses some of our other results such as characterization of internal-trivial measures.  The heart of the proof is of course showing the existence of  appropriate values of $x_0$ and $x_1$ that can lead to the desired  gain of $\Omega(\entf(\epsilon))$.  

Let $\XOR$ denote the $2$-bit $\XOR$ function. The next result shows that the analogue of Theorem~\ref{thm:upper-bd-IC-mu-eps} does not hold for the external information complexity.

\begin{proposition}
\label{prop:ext_XOR}
Let $\mu$ be the distribution defined as
\[
 \mu = \begin{array}{|c|c|} \hline 1/2 & 0 \\\hline 0 & 1/2 \\\hline \end{array} \, .
\]
Then $\IC_\mu^\ext(\XOR,\epsilon) \geq \IC_\mu^\ext(\XOR,0) - 3\epsilon$.
\end{proposition}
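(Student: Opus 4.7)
The plan is to show that every protocol $\pi$ performing $[\XOR,\epsilon]$ with transcript $\Pi$ satisfies $H_\mu(XY\mid \Pi)\le 2\epsilon$. Since $\mu$ is uniform on $\{(0,0),(1,1)\}$ we have $H_\mu(XY)=1$, so this would give $\IC_\mu^\ext(\XOR,\epsilon)\ge 1-2\epsilon$. The $\epsilon=0$ specialization of the same bound, together with the trivial upper bound $\IC_\mu^\ext(\XOR,0)\le H_\mu(XY)=1$ (witnessed, say, by the protocol in which Alice sends $X$ and Bob sends $X\oplus Y$), gives $\IC_\mu^\ext(\XOR,0)=1$, so the proposition follows with $\epsilon$ of slack.

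Write $P_{xy}(\tau)=\Pr[\Pi=\tau\mid X=x,Y=y]$. Conditioned on the public random string the transmitted bits factorize as $A_\tau(x)B_\tau(y)$, producing the rectangle identity
\[
P_{00}(\tau)\,P_{11}(\tau)=P_{01}(\tau)\,P_{10}(\tau)\qquad\text{for every transcript } \tau.
\]
Under $\mu$, the posterior of $XY$ given $\Pi=\tau$ is concentrated on $\{(0,0),(1,1)\}$ with Bernoulli parameter $r(\tau)=P_{00}(\tau)/(P_{00}(\tau)+P_{11}(\tau))$, so $H_\mu(XY\mid\Pi=\tau)=h(r(\tau))$. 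Applying the elementary estimate $h(p)\le 2\sqrt{p(1-p)}$ and weighting by $\Pr_\mu[\Pi=\tau]=(P_{00}(\tau)+P_{11}(\tau))/2$ yields
\[
H_\mu(XY\mid\Pi)\;\le\;\sum_\tau\sqrt{P_{00}(\tau)\,P_{11}(\tau)}\;=\;\sum_\tau\sqrt{P_{01}(\tau)\,P_{10}(\tau)},
\]
the second equality being the rectangle identity.

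To bound the right-hand side by $2\epsilon$, I split the transcripts by the protocol's output $O(\tau)\in\{0,1\}$, which is a function of $\tau$ alone because the two players must agree. Set $T_b=\{\tau:O(\tau)=b\}$. The $\epsilon$-error guarantee on the inputs $(0,1)$ and $(1,0)$ (where the correct answer is $1$) forces $\sum_{\tau\in T_0}P_{01}(\tau)\le\epsilon$ and $\sum_{\tau\in T_0}P_{10}(\tau)\le\epsilon$, so Cauchy--Schwarz gives $\sum_{\tau\in T_0}\sqrt{P_{01}(\tau)P_{10}(\tau)}\le\epsilon$. Symmetrically, $\sum_{\tau\in T_1}P_{00}(\tau),\sum_{\tau\in T_1}P_{11}(\tau)\le\epsilon$, so $\sum_{\tau\in T_1}\sqrt{P_{00}(\tau)P_{11}(\tau)}\le\epsilon$; choosing the convenient integrand on each $T_b$ via the rectangle identity and adding yields $H_\mu(XY\mid\Pi)\le 2\epsilon$, as desired.

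The only non-routine step is the scalar inequality $h(p)\le 2\sqrt{p(1-p)}$, which is standard and can be verified by checking $2\sqrt{p(1-p)}-h(p)\ge 0$ directly. The conceptual content, and the reason the analogue of \cref{thm:upper-bd-IC-mu-eps} fails externally, is that the $2\times 2$ rectangle identity ties the conditional transcript distributions on the diagonal $\{(0,0),(1,1)\}$ to those on the anti-diagonal $\{(0,1),(1,0)\}$; shrinking the external information would require conflating $(0,0)$ with $(1,1)$, which through the identity necessarily forces error on the anti-diagonal.
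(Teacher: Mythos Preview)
Your proof is correct and in fact cleaner than the paper's; you obtain the stronger bound $\IC_\mu^\ext(\XOR,\epsilon)\ge 1-2\epsilon$, while the paper only gets $1-3\epsilon$ (and only for $\epsilon\le 1/3$).

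The two arguments share the same skeleton: both use the rectangle identity $P_{00}(\tau)P_{11}(\tau)=P_{01}(\tau)P_{10}(\tau)$ to transfer between the diagonal and anti-diagonal, and both ultimately bound quantities of the form $\sum_\tau \sqrt{P_{00}(\tau)P_{11}(\tau)}$. The difference is in how the entropy is controlled. The paper first bounds $\sum_{t\in\cL_0}\sqrt{p(t)(1-p(t))}\Pr[t]\le \epsilon/2$ via AM--GM, then sets up an auxiliary random variable $Q=\sqrt{P(1-P)}$ with $\Ex[Q]\le\epsilon$ and solves the extremal problem of maximizing $\Ex[h(P)]$ under that constraint; the answer is $2\epsilon$, but an extra $\epsilon$ is lost because $\Pr[\cL_1]$ is only crudely bounded by~$\epsilon$. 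You instead apply the pointwise estimate $h(p)\le 2\sqrt{p(1-p)}$ directly, which collapses the entropy into the square-root sum in one step, and then you exploit the rectangle identity \emph{locally} by choosing, on each output class $T_b$, whichever of the two equal expressions $\sqrt{P_{00}P_{11}}$ or $\sqrt{P_{01}P_{10}}$ is controlled by the $\epsilon$-error hypothesis; Cauchy--Schwarz finishes each half with a clean~$\epsilon$. This bypasses the paper's optimization entirely and avoids the loss on $\cL_1$, which is why you save one~$\epsilon$.
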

\begin{proof}
See Section~\ref{sec:proof:ext_XOR}.
\end{proof}

For the lower bound we prove the following theorem.
\begin{theorem}
\label{thm:non-distri-error-lowerbd}
For all $f, \mu, \epsilon$, we have
\[
\IC_\mu(f,\epsilon) \ge \IC_\mu(f,0) - 4|\cX| |\cY| \hc(\sqrt{\epsilon}).
\]
\end{theorem}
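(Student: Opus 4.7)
The plan is to reduce Theorem~\ref{thm:non-distri-error-lowerbd} to the distributional-error lower bound (Theorem~\ref{thm:bd-IC-mu--distributional}) by smoothing $\mu$ into a distribution of full support, exploiting the fact that on a full-support prior the distributional zero-error task coincides with the point-wise zero-error task.

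Concretely, fix a parameter $\delta\in(0,1]$ to be chosen later and let $\mu_\delta = (1-\delta)\mu + \delta\nu$, where $\nu$ is the uniform measure on $\cX\times\cY$; then $|\mu-\mu_\delta|\le \delta$ and $\mu_\delta$ has full support. Applying the prior-continuity bound \eqref{eq:continMu} to the tasks $[f,\epsilon]$ and $[f,0]$ gives
\[
\IC_\mu(f,\epsilon)\ge \IC_{\mu_\delta}(f,\epsilon) - 2\delta\log(|\cX||\cY|) - 2h(2\delta),
\]
\[
\IC_{\mu_\delta}(f,0)\ge \IC_\mu(f,0) - 2\delta\log(|\cX||\cY|) - 2h(2\delta).
\]
Since every pointwise-$\epsilon$ protocol also has distributional error $\le\epsilon$ under $\mu_\delta$, we have $\IC_{\mu_\delta}(f,\epsilon)\ge \IC_{\mu_\delta}(f,\mu_\delta,\epsilon)$. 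Next, Theorem~\ref{thm:bd-IC-mu--distributional} gives $\IC_{\mu_\delta}(f,\mu_\delta,\epsilon)\ge \IC_{\mu_\delta}(f,\mu_\delta,0) - C\hc(\sqrt\epsilon)$ for the constant $C$ absorbed in the $O(\cdot)$ of that theorem. Finally, because $\mu_\delta$ has full support, the task $[f,\mu_\delta,0]$ forces pointwise correctness on every $(x,y)\in\cX\times\cY$ and therefore coincides with $[f,0]$, yielding $\IC_{\mu_\delta}(f,\mu_\delta,0) = \IC_{\mu_\delta}(f,0)$.

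Chaining everything and choosing $\delta=\sqrt\epsilon$ produces
\[
\IC_\mu(f,\epsilon)\ge \IC_\mu(f,0) - 4\sqrt\epsilon\,\log(|\cX||\cY|) - 4h(2\sqrt\epsilon) - C\,\hc(\sqrt\epsilon).
\]
Using $\sqrt\epsilon\le\hc(\sqrt\epsilon)$ from \eqref{eq:h-estimate}, $\log n\le n$, and the subadditivity-type bound $h(2\sqrt\epsilon)\le 2\hc(\sqrt\epsilon)$ coming from \eqref{eq:conv-factor}, the total additive loss is bounded by $4|\cX||\cY|\hc(\sqrt\epsilon)$ once the absolute constants are absorbed.

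\textbf{Main obstacle.} The delicate point is pinning down the explicit constant $4|\cX||\cY|$ rather than a looser $O(|\cX||\cY|)\hc(\sqrt\epsilon)$: the naive chaining gives an additive loss of order $(4\log(|\cX||\cY|)+C+8)\hc(\sqrt\epsilon)$, and one must verify $4\log(|\cX||\cY|)+C+8\le 4|\cX||\cY|$ for the parameters at hand. This uses the fact that the constant $C$ hiding in the $O(\cdot)$ of Theorem~\ref{thm:bd-IC-mu--distributional} itself scales (at worst) linearly in $|\cX||\cY|$, together with the observation that the claim is anyway vacuous when $\epsilon$ is so large that $\hc(\sqrt\epsilon)=\Omega(1)$, in which case the trivial bound $\IC_\mu(f,0)\le \log|\cX\times\cY|$ already dominates $4|\cX||\cY|\hc(\sqrt\epsilon)$.
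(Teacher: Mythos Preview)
Your reduction has a genuine gap: you have misread the lower bound in Theorem~\ref{thm:bd-IC-mu--distributional}. That theorem does \emph{not} give $\IC_{\mu_\delta}(f,\mu_\delta,\epsilon)\ge \IC_{\mu_\delta}(f,\mu_\delta,0)-C\hc(\sqrt{\epsilon})$ with a constant $C$ depending only on $|\cX\times\cY|$. Its actual statement is
\[
\IC_{\mu_\delta}(f,\mu_\delta,\epsilon)\ \ge\ \IC_{\mu_\delta}(f,\mu_\delta,0)-4|\cX||\cY|\,\hc\!\left(\sqrt{\epsilon/\alpha_\delta}\right),
\qquad \alpha_\delta=\min_{xy\in\supp\mu_\delta}\mu_\delta(x,y).
\]
For your smoothed measure $\mu_\delta=(1-\delta)\mu+\delta\nu$ one only has $\alpha_\delta\ge \delta/(|\cX||\cY|)$, so the loss term becomes $4|\cX||\cY|\,\hc(\sqrt{\epsilon|\cX||\cY|/\delta})$. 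With your choice $\delta=\sqrt{\epsilon}$ this is of order $\hc(\epsilon^{1/4})$, not $\hc(\sqrt{\epsilon})$; and there is no choice of $\delta$ that avoids the tension, since making $\alpha_\delta=\Omega(1)$ forces $\delta=\Omega(1)$, at which point the continuity loss from~\eqref{eq:continMu} is already $\Omega(1)$. So the chaining cannot produce the claimed $4|\cX||\cY|\hc(\sqrt{\epsilon})$ bound, nor even the correct $\Theta(\hc(\sqrt{\epsilon}))$ order.

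There is also a structural problem: in the paper, the lower bound of Theorem~\ref{thm:bd-IC-mu--distributional} is obtained by repeating the protocol-completion argument that proves Theorem~\ref{thm:non-distri-error-lowerbd} (the proof opens with ``The proof is almost identical to the proof of Theorem~\ref{thm:non-distri-error-lowerbd}''). So invoking Theorem~\ref{thm:bd-IC-mu--distributional} here is circular.

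The paper instead proves Theorem~\ref{thm:non-distri-error-lowerbd} directly. Starting from a protocol $\pi$ for $[f,\epsilon]$, it \emph{completes} $\pi$ to a zero-error protocol $\pi'$: after $\pi$ reaches a leaf $\ell$ with output $z_\ell$, the players verify, one pair $(x,y)$ at a time, whether the true input lies in $\{(x,y):f(x,y)\neq z_\ell\}$, with the player whose marginal under $\mu_\ell$ is smaller speaking first. The extra information cost at leaf $\ell$ for a single pair $(x,y)$ is at most $2\hc(\mu_\ell(x,y))+2\hc\bigl(\sqrt{\Pr_{\mu_\ell}[X=x,Y\neq y]\Pr_{\mu_\ell}[X\neq x,Y=y]}\bigr)$. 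Averaging over leaves, the first term contributes $O(\hc(\epsilon))$ from the point-wise error bound, while the second is handled via the rectangle property (the cross-ratio is preserved along the protocol) together with Cauchy--Schwarz, yielding $O(\hc(\sqrt{\epsilon}))$. Summing over all $(x,y)\in\cX\times\cY$ gives the $4|\cX||\cY|\hc(\sqrt{\epsilon})$ bound with no dependence on $\min\mu(x,y)$ --- which is exactly the feature your reduction loses.
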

\begin{proof}
See Section~\ref{proof:non-distri-error-lowerbd}.
\end{proof}

Theorem~\ref{thm:non-distri-error-lowerbd} is obtained by taking an $\eps$-error protocol and completing it to a zero-error protocol. Here Alice and Bob first run the protocol that performs $[f,\epsilon]$,  but when this protocol terminates, instead of returning the output, they continue their interaction to verify that the value that they have obtained is correct. We will be able to show that these additional interactions can be performed at a small information cost, and thus the total information complexity of the new protocol is not going to be much larger than that of the original protocol. This method, that we call \emph{protocol completion}, is used in the proofs of other results such as Theorem~\ref{thm:AND-gap} as well.

Finally let us remark that we do not know whether the bound in Theorem~\ref{thm:non-distri-error-lowerbd} is tight. In fact we are not aware of any examples of $f$ and $\mu$ that refutes the possibility that $\IC_\mu(f,\epsilon) = \IC_\mu(f,0) - \Theta(h(\epsilon))$ for every $f$ and $\mu$ satisfying $\IC_\mu(f,0)>0$.

\subsection{Information complexity with distributional error}   \label{sec:distriError}
In Section~\ref{sec:pointwiseError} we considered the amount of gain one can obtain by allowing point-wise error. Next we turn  to distributional error. How much can one gain in information cost by allowing a distributional error of $\epsilon$? Small modifications in the proofs of Theorem~\ref{thm:upper-bd-IC-mu-eps} and Theorem~\ref{thm:non-distri-error-lowerbd} imply the following bounds.

\begin{theorem}
\label{thm:bd-IC-mu--distributional}
Let $\mu$ be a probability measure on $\cX \times \cY$, and let $f\colon \cX \times \cY \to \cZ$ satisfy $\IC_\mu(f,\mu,0)>0$.  We have 
\[
 \IC_\mu(f,\mu,0) - 4 |\cX| |\cY| \hc(\sqrt{\epsilon/\alpha}) \le \IC_\mu(f,\mu,\epsilon) \le  \IC_\mu(f,\mu,0) - \frac{\alpha^2}{4} \entf\left(\epsilon \alpha/4 \right) + 3\epsilon \log |\cX \times \cY|,
\]
where $\alpha = \min_{xy \in \supp \mu} \mu(x,y)$.
\end{theorem}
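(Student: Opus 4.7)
The plan is to reduce the distributional-error case to the point-wise case of \cref{thm:non-distri-error-lowerbd}. If a protocol $\pi$ performs $[f,\mu,\epsilon]$, then for every input $(x,y)\in\supp\mu$ its individual error probability is at most $\epsilon/\mu(x,y)\le\epsilon/\alpha$, since otherwise the contribution of $(x,y)$ alone to the distributional error would already exceed $\epsilon$. Hence $\pi$ computes $f$ with point-wise error at most $\epsilon/\alpha$ on every input in $\supp\mu$. The protocol-completion argument behind \cref{thm:non-distri-error-lowerbd} only requires correctness on those inputs that contribute mass to the prior used to measure information cost, so the same proof, restricted to $\supp\mu$, yields
\[
\IC_\mu(\pi)\;\ge\;\IC_\mu(f,\mu,0)\;-\;4|\cX||\cY|\,\hc\!\Paren{\sqrt{\epsilon/\alpha}},
\]
and taking the infimum over all admissible $\pi$ gives the desired lower bound.

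\textbf{Upper bound.} Here I would carry out the modification used in the proof of \cref{thm:upper-bd-IC-mu-eps}, adapted to exploit the extra freedom offered by the distributional-error model. Since $\IC_\mu(f,\mu,0)>0$, the characterization of internal-trivial measures (\cref{thm:internal-trivial}, applied to $f|_{\supp\mu}$) supplies two points $a,b\in\supp\mu$ lying on the same row or column with $f(a)\ne f(b)$; both have $\mu$-mass at least $\alpha$. Starting from an almost-optimal protocol for the task $[f,\mu,0]$, I would apply to it the ``pretend $x_1$ is $x_0$ with probability $p$'' transformation described in \cref{rem:upper-bd-IC-mu-eps} at the pair $a,b$. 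As observed in that remark, this transformation introduces errors only on a single row (or column); consequently its \emph{distributional} error is only $p$ times the $\mu$-mass of that row, which is in general much smaller than the corresponding point-wise error $p$. This extra slack allows $p$ to be chosen of order $\epsilon\alpha/4$ while keeping the distributional error below $\epsilon$, and the information saving scales as $(\alpha^2/4)\,h(\epsilon\alpha/4)$ --- the $\alpha^2$ in front of $h$ (rather than the $\alpha^3$ that a blind invocation of the point-wise estimate would give) arises precisely from this localization of the error to one row. Finally, \cref{lem:continuity} is invoked to smooth the constructed protocol to the exact error parameter $\epsilon$, which is where the residual additive term $3\epsilon\log|\cX\times\cY|$ comes from.

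\textbf{Main obstacle.} The bulk of the work, and the main obstacle, is the bookkeeping on the upper-bound side: one has to choose the pretending probability $p$ and the pair of points $(a,b)\subseteq\supp\mu$ delicately enough that the two $\alpha$ factors split as $\alpha^2/4$ in the prefactor and $\alpha/4$ inside the argument of $h$, rather than collapsing into a weaker $\alpha^3 h(\epsilon)$ bound obtained from a direct application of \cref{thm:upper-bd-IC-mu-eps}. In contrast, the lower bound is essentially immediate from \cref{thm:non-distri-error-lowerbd} once one notes the elementary implication ``distributional $\epsilon$-error $\Longrightarrow$ point-wise $\epsilon/\alpha$-error on $\supp\mu$''.
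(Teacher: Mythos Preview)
Your lower bound is exactly what the paper does: note that a protocol with distributional error $\epsilon$ under $\mu$ has point-wise error at most $\epsilon/\alpha$ on every point of $\supp\mu$, and then rerun the protocol-completion argument of \cref{thm:non-distri-error-lowerbd} with the verification restricted to $\supp\mu$.

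Your upper bound is the right overall idea---find two adjacent points in $\supp\mu$ with different $f$-values and apply the modification of Figure~\ref{fig:modified_protocol} / Lemma~\ref{lem:Given-good-transcripts}---but several of your explanations misidentify where the pieces come from.

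First, invoking \cref{thm:internal-trivial} is a detour and, as stated, not quite legitimate: that theorem characterises when $\IC_\mu(f,0)=0$, not when $\IC_\mu(f,\mu,0)=0$, and in its Case~II the two points produced need not both lie in $\supp\mu$. The paper instead argues directly: let $\cX_z=\{x:\exists\,xy\in\supp\mu,\,f(x,y)=z\}$ and similarly $\cY_z$; if these families were each pairwise disjoint, Alice and Bob would already know the answer from their own inputs and $\IC_\mu(f,\mu,0)=0$. Hence some $\cX_{z_1}\cap\cX_{z_2}\ne\emptyset$ (or the column analogue), which immediately yields $x_0\bar y,x_1\bar y\in\supp\mu$ with $f(x_0\bar y)\ne f(x_1\bar y)$.

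Second, your account of how the constants arise is off. In the paper the pretending probability in Figure~\ref{fig:modified_protocol} is taken equal to $\epsilon$, not to ``order $\epsilon\alpha/4$''; the resulting protocol already has distributional error at most $\epsilon$ with no extra slack needed. One then applies Lemma~\ref{lem:Given-good-transcripts} with $C_1=C_2=\alpha/2$ and $\delta=0$ (using $\mu(x_0\bar y),\mu(x_1\bar y)\ge\alpha$), and the lemma directly outputs the saving $\tfrac{\alpha^2}{4}\,h(\epsilon\alpha/4)$: the $\alpha^2/4$ is $C_1C_2$, and the $\alpha/4$ inside $h$ is the factor $\tfrac{1}{2}C_2\,\Pr[x_1\bar y]/\Pr[x_0\bar y]$ from the lemma, not a consequence of localising error to one row.

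Third, the $3\epsilon\log|\cX\times\cY|$ term is the $3\epsilon K$ already present in the conclusion of Lemma~\ref{lem:Given-good-transcripts}; \cref{lem:continuity} is not used anywhere in the proof.
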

\begin{proof}
See Section~\ref{sec:proof:distributional-error}.
\end{proof}

It is also possible to prove the upper bound of  Theorem~\ref{thm:bd-IC-mu--distributional}  using a different approach by ``truncating'' a zero-error protocol. Unfortunately this approach requires some assumptions on the support of $\mu$. Nevertheless we sketch this proof, as the idea seems to be new, and it might have other applications.

Let $\Delta_0 \subseteq \Delta(\cX \times \cY)$ be the set of all measures $\nu$ such that $\IC_\nu(f,\nu,\eps)=0$.  Consider a protocol $\pi$ that performs $[f,\mu,0]$. First we simulate $\pi$ with another protocol $\pi'$ such that no signal of $\pi'$ jumps from outside of $\Delta_0$ to the interior of $\Delta_0$. In other words if  some partial transcript $t$ satisfies $\mu_t \not\in \Delta_0$, then when the next signal $B$ is sent,  $\mu_{tB}$ is either still outside of $\Delta_0$ or it is on the boundary $\partial \Delta_0$. The simulation can be done in a prefect manner so that if $\Pi$ and $\Pi'$ denote, respectively, the transcripts of $\pi$ and $\pi'$, then $\mu_{\Pi'}$ has the same distribution as $\mu_\Pi$. The new protocol $\pi'$ might  not necessarily have bounded communication, but it will terminate with probability $1$. We refer the reader to \cite[Signal Simulation Lemma]{multipartyAND} and \cite[Claim 7.14]{MR3210776} for more details on such simulations. 

We will truncate $\pi'$ in the following manner to obtain a new protocol $\pi_0$ that performs $[f,\mu,\eps]$. Whenever the corresponding random walk of $\pi'$ reaches a distribution $\nu$ that is on the boundary $\partial\Delta_0$, the two players  stop the random walk, and use $\IC_\nu(f,\nu,\eps)=0$ to output a value that creates a distributional error of at most $\eps$ with respect to $\nu$ at no information cost. Obviously the distributional error of the protocol $\pi_0$ is at most $\eps$. To analyze its information cost, denote the transcript of $\pi_0$ by $P$, and note that $P$ is a partial transcript for $\pi'$. Let $\pi'_P$ be the continuation of $\pi'$ when one starts at this partial transcript. It is not difficult to see that
$$\IC_\mu(\pi) = \IC_\mu(\pi') = \IC_\mu(\pi_0) + \Ex_{P} [\IC_{\mu_P}(\pi'_P)].$$
Since $\pi'$ performs $[f,\mu,0]$, the tail protocol $\pi_P$ must perform $[f,\mu_P,0]$. Hence in order to finish the proof, it suffices to show that $\IC_{\nu}(f,\nu,0)=\Omega(h(\eps))$ for every $\nu \in \partial \Delta_0$, as this would imply the desired $\IC_\mu(\pi) \ge  \IC_\mu(\pi_0) + \Omega(h(\eps))$. This can be proven with some work when $\mu$ is of full support, however it is not true for general measures. For example, consider the AND function, and let $\mu$ be the distribution on $\{0,1\}^2$ defined as $\mu(0,0)=1-2\eps$ and  $\mu(1,0)=\mu(1,1)=\eps$. Note that although $\mu$ is on the boundary of $\Delta_0$, we have $\IC_\mu(\AND,\mu,0) \le 2\eps$. Indeed, since $\mu(0,1)=0$,  Bob with probability $1$ knows the correct output by looking at his own input $Y$, and so if he sends his bit to Alice,  they will both know the correct output.  This will have information cost at most  $H(Y|X) = \Pr[X=1] H(Y|X=1) =  2\epsilon$.

\subsection{Information complexity of the AND function with error}\label{sec:ANDfunction}

Building upon the previous works of Ma and Ishwar~\cite{MaIshwar2011,MaIshwar2013}, Braverman et al. \cite{MR3210776} developed a method for proving the optimality of information complexity and applied it to determine the internal and external information complexity of the two-bit AND function. They introduced a ``continuous-time'' protocol for this task, and proved that it has optimal internal and external information cost for any underlying distribution. Although this protocol is not a conventional communication protocol as it has access to a continuous clock, it can be approximated by conventional  communication protocols through dividing the time into
finitely many discrete units.   Then in~\cite[Problem 1.1]{MR3210776} they considered the case where error is allowed, and conjectured a gain of $\IC(\AND) - \IC(\AND,\eps)=\Theta(h(\eps))$.   In this section, we conduct a thorough analysis of the information complexity of the AND function when error is permitted, and  among other results, prove the aforementioned conjecture.

Applying our general bounds from in Section~\ref{sec:pointwiseError} and Section~\ref{sec:distriError} (i.e. Theorems~\ref{thm:upper-bd-IC-mu-eps},~\ref{thm:non-distri-error-lowerbd},~and~\ref{thm:bd-IC-mu--distributional}) we already obtain that for small enough $\epsilon \ge 0$,
\begin{itemize}
\item[(i).] For every distribution $\mu$ satisfying $\IC_\mu(\AND,0) > 0$, we have
\[
\IC_\mu(\AND,0) - O_\mu(h(\sqrt{\epsilon})) \le \IC_\mu(\AND,\epsilon) \le \IC_\mu(\AND,0) - \Omega_\mu(h(\epsilon));
\]
\item[(ii).] For every distribution $\mu$  satisfying $\IC_\mu(\AND, \mu, 0) > 0$, we have
\[
\IC_\mu(\AND,\mu,0) - O_\mu(h(\sqrt{\epsilon})) \le \IC_\mu(\AND,\mu,\epsilon) \le \IC_\mu(\AND, \mu, 0) - \Omega_\mu(h(\epsilon)).
\]
\end{itemize}

We show that under some conditions on the support of $\mu$, the above lower bounds can be improved to match the upper bounds. 

\begin{theorem}   \label{thm:AND-gap}
For small enough $\epsilon \ge 0$, the following hold,
\begin{itemize}
\item[(i).] For every distribution $\mu$ which is full support, except perhaps for $\mu(1,1)$, we have
\[
\IC_\mu(\AND,\epsilon) = \IC_\mu(\AND,0) -  \Theta(\hc(\epsilon)),
\]
where the hidden constants can be fixed if $\omu(0,0), \omu(0,1), \omu(1,0)$ are bounded away from  $0$.
\item[(ii).] In particular for every distribution $\mu$ of full support, we have
\[
\IC_\mu(\AND,\mu,\epsilon) = \IC_\mu(\AND, \mu, 0) - \Theta(\hc(\epsilon)).
\]
\end{itemize}
\end{theorem}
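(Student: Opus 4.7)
The upper bounds in both parts follow directly from our general results. For (i), apply Theorem~\ref{thm:upper-bd-IC-mu-eps} in its \emph{AND case} with $x_0 = y_0 = 0$ and $x_1 = y_1 = 1$: the hypothesis $\AND(0,0) = \AND(0,1) = \AND(1,0) = 0 \neq 1 = \AND(1,1)$ holds, and the support assumption on $\mu$ ensures $\mu(0,0), \mu(0,1), \mu(1,0) > 0$, so the theorem yields $\tau \geq \mu(0,0)^2 \min(\mu(0,1),\mu(1,0))/128$, which is uniformly bounded below when those three masses are bounded away from $0$. For small enough $\epsilon$ we have $\hc(\epsilon) = h(\epsilon)$, delivering the stated upper bound. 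The upper bound in (ii) follows analogously from Theorem~\ref{thm:bd-IC-mu--distributional}, whose constant depends only on $\alpha = \min_{xy \in \supp\mu} \mu(x,y)$.

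The substance of the proof is the lower bound in (i): upgrade the generic $O(\hc(\sqrt{\epsilon}))$ of Theorem~\ref{thm:non-distri-error-lowerbd} to $O(\hc(\epsilon))$. My plan rests on two pieces: (a) a stability result for the optimal zero-error ``buzzer'' protocol of Braverman et al.~\cite{MR3210776}, and (b) protocol completion. For (a), I introduce a potential function $\Phi$ on $\Delta(\cX \times \cY)$ whose expectation is preserved exactly by buzzer moves and whose deficit under any other move is quantitatively controlled; via the random-walk formulation of Section~\ref{sec:randomwalk}, this shows that any zero-error protocol whose information cost is within $\delta$ of $\IC_\mu(\AND,0)$ must produce a leaf distribution $\mu_\Pi$ close (in total variation) to the buzzer leaf distribution. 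For (b), given an $\epsilon$-error protocol $\pi$ that nearly attains $\IC_\mu(\AND,\epsilon)$, complete it to a zero-error protocol $\pi^\star$ by appending a verification phase as in Theorem~\ref{thm:non-distri-error-lowerbd}. Applying (a) to $\pi^\star$ forces the pre-completion posterior $\mu_\Pi$ to lie within total variation $O(\epsilon)$ of a trivial-measure boundary, so the verification only needs to resolve $O(\epsilon)$ mass of error, costing $O(\epsilon \log(1/\epsilon)) = O(h(\epsilon))$ in information by~\eqref{eq:h-estimate}.

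Part (ii) reduces to (i). Since $\mu$ has full support, any zero distributional-error protocol is pointwise zero-error, so $\IC_\mu(\AND,\mu,0) = \IC_\mu(\AND,0)$. Moreover, if $\pi$ has $\mu$-distributional error $\epsilon$ and $(x,y) \in \cX \times \cY$, then $\mu(x,y) \Pr[\pi(x,y) \neq \AND(x,y)] \leq \epsilon$, so $\pi$ has pointwise error at most $\epsilon/\alpha$ with $\alpha = \min_{xy} \mu(x,y) > 0$. Thus
\[
 \IC_\mu(\AND,\mu,\epsilon) \;\geq\; \IC_\mu(\AND,\epsilon/\alpha) \;\geq\; \IC_\mu(\AND,0) - O(h(\epsilon/\alpha)),
\]
and $h(\epsilon/\alpha) = O(h(\epsilon))$ for fixed $\alpha$ and small $\epsilon$ by~\eqref{eq:h-estimate}, matching the upper bound.

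The main obstacle is the construction of $\Phi$ in step~(a). It must be preserved on buzzer moves (so that optimal protocols saturate it) while having a deficit under deviations that is quantitatively strong enough to convert an $O(h(\epsilon))$-gap in information cost into $O(\epsilon)$ (rather than $O(\sqrt\epsilon)$) closeness of $\mu_\Pi$ to a trivial-measure boundary. Calibrating $\Phi$ so that the completion overhead in (b) comes out as $O(h(\epsilon))$ rather than the weaker $O(h(\sqrt{\epsilon}))$ of the generic argument is the delicate step, and is where the specific structure of the AND function (and the explicit form of the buzzer protocol) must be exploited.
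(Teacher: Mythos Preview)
Your high-level strategy matches the paper's: stability via a potential function plus protocol completion. The upper bounds and the reduction of (ii) to (i) are handled exactly as the paper does.

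There is, however, a genuine gap in step~(b). You write that applying stability to the completed protocol $\pi^\star$ ``forces the pre-completion posterior $\mu_\Pi$ to lie within total variation $O(\epsilon)$ of a trivial-measure boundary.'' But the stability result bounds the potential by the \emph{wastage} $\IW_\omu(\pi^\star) = \IC_\omu(\pi^\star) - \IC_\omu(\AND,0)$, and you do not know this wastage is $O(h(\epsilon))$ a priori --- that is precisely what you are trying to prove. All you know is $\IW_\omu(\pi^\star) \le (\text{completion cost})$, since $\IC_\omu(\pi) \le \IC_\omu(\AND,0)$. The completion cost in turn depends on the leaf distribution of $\pi$. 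So there is a circularity that your outline does not resolve.

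The paper breaks this circularity with a bootstrap. It first shows that the completion cost is $O\bigl(\kappa\,\hc(\sqrt{\epsilon/\kappa}) + (1-\kappa)\,\hc(\epsilon/(1-\kappa))\bigr)$, where $\kappa$ is the probability that the pretend leaf $(\leafp,\leafq)$ of $\pi$ has $\max(\leafp,\leafq)$ below a fixed threshold. Stability then gives $\kappa = O(\IW_\omu(\pi^\star))$ (a tail bound, not total-variation closeness to the buzzer's leaf law). Combining yields an implicit inequality $\kappa \le M\bigl(\kappa\,\hc(\sqrt{\epsilon/\kappa}) + \hc(\epsilon)\bigr)$, which one solves to get $\kappa = O(h(\epsilon))$; only then does a second application of the completion bound give the final $O(h(\epsilon))$.

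A related imprecision: the reason $\sqrt{\epsilon}$ appears in the generic lower bound is that when \emph{both} coordinates of the posterior $(\leafp,\leafq)$ are small, verifying $(X,Y)=(1,1)$ costs $\Theta(\hc(\sqrt{\delta}))$ rather than $\Theta(\hc(\delta))$, where $\delta$ is the posterior mass on $(1,1)$. So it is not that ``the verification only needs to resolve $O(\epsilon)$ mass'' --- the error mass is always $O(\epsilon)$ --- but rather that you must bound the probability $\kappa$ of landing in the bad regime where both $\leafp,\leafq$ are small. That is what the stability-plus-bootstrap machinery delivers, and it is the step your proposal is missing.
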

Note that for every distribution $\mu$ of full support, we have $\IC_\mu(\AND, \mu, 0)=\IC_\mu(\AND,0)>0$,  and $\IC_\mu(\AND,\epsilon/\alpha) \le \IC_\mu(\AND,\mu,\epsilon) \le \IC_\mu(\AND,\epsilon)$ where $\alpha=\min_{xy} \mu(xy)$. Thus Theorem~\ref{thm:AND-gap}~(ii) follows from (i).

From a technical point of view,  Theorem~\ref{thm:AND-gap} is perhaps our most  involved result in this article, and its proof occupies the bulk of Section~\ref{sec:AND}. The first idea that  facilitates the proof substantially is  developed by the first two authors in~\cite{DaganFilmus}. They showed that it is possible to parametrize the space of the distributions $\Delta(\cX \times \cY)$ so that the changes that occur in the prior distribution by the  players' interactions can be captured by product measures. This idea, that is discussed in details in Section~\ref{sec:parametrization}, allows us to first prove the lower bound of Theorem~\ref{thm:AND-gap} for the product measures, and then add minor adjustments to adopt it for non-product distributions. The second component of the proof is a stability result. Recall from Section~\ref{sec:randomwalk} that the information cost of every protocol $\pi$ depends only on its ``leaf distribution'', i.e. the distribution of $\mu_\Pi$, where $\Pi$ is the transcript of $\pi$ or equivalently $\mu_{\leafl}$ where $\leafl$ is a random leaf of the protocol tree. Our stability result, Theorem~\ref{thm:stability}, shows that the leaf distribution of  every almost optimal protocol $\pi$ for $[\AND,0]$  shares certain similarities with that of the buzzer protocol. Note that since $\pi$ does not make any errors, by the end of the protocol, either both players know that the input is $(1,1)$, or one of them has revealed that her input is $0$. Theorem~\ref{thm:stability} formalizes the intuition that in this latter case, the other player  must not have revealed that his input is very likely to be $0$. This is achieved through defining a potential function that depends only on the distribution of $\mu_\Pi$ and proving that it is bounded by the so called information wastage $\IC_\mu(\pi)-\IC_\mu(\AND,0)$. With these results in hand, in order to complete the lower bound of Theorem~\ref{thm:AND-gap}, we start with a protocol $\pi$ performing $[\AND,\epsilon]$ with almost optimal information complexity. First we show that $\pi$ can be completed to a protocol that performs $[\AND,0]$ at a small additional information cost, though possibly larger than the desired $O(h(\eps))$. Then we apply the stability result to deduce certain properties for the leaf distribution of $\pi$. This will imply that one indeed needs only an additional cost of $O(h(\eps))$ to extend $\pi$ to a protocol that solves $[\AND,0]$.

Braverman et al.~\cite{MR3210776} showed that $\IC(\AND,0) = \max_\mu \IC_\mu(\AND,0)$ is attained on a distribution having full support.  This enables us to derive the following corollary on prior-free information complexity.

\begin{corollary} \label{cor:AND-gap}
When $\epsilon \ge 0$ is sufficiently small, we have
\begin{itemize}
\item[(i).] $\IC(\AND,\epsilon) = \IC(\AND,0) - \Theta(\hc(\epsilon))$;
\item[(ii).] $\ICD(\AND,\epsilon) = \IC(\AND,0) - \Theta(\hc(\epsilon))$;
\end{itemize}
\end{corollary}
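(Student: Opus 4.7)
The plan is to combine Theorem~\ref{thm:AND-gap} with the minimax identity $\IC(\AND,\epsilon) = \max_\mu \IC_\mu(\AND,\epsilon)$, the trivial inequality $\ICD(\AND,\epsilon) \le \IC(\AND,\epsilon)$, and Braverman's identity $\IC(\AND,0) = \ICD(\AND,0)$. The key structural input, supplied by Braverman et al.\ just before the statement, is the existence of a full-support distribution $\mu^*$ with $\IC_{\mu^*}(\AND,0) = \IC(\AND,0)$; I will hold such a $\mu^*$ fixed throughout.

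For the lower bounds in both (i) and (ii), I would simply apply Theorem~\ref{thm:AND-gap} at the fixed $\mu^*$. Since $\mu^*$ has strictly positive entries that are absolute constants, the hidden $\Theta$-constants in the theorem reduce to fixed positive numbers, and one gets
\[
 \IC(\AND,\epsilon) \ge \IC_{\mu^*}(\AND,\epsilon) \ge \IC_{\mu^*}(\AND,0) - O(\hc(\epsilon)) = \IC(\AND,0) - O(\hc(\epsilon)),
\]
together with $\ICD(\AND,\epsilon) \ge \IC_{\mu^*}(\AND,\mu^*,\epsilon) \ge \IC(\AND,0) - O(\hc(\epsilon))$, using $\IC_{\mu^*}(\AND,\mu^*,0) = \IC_{\mu^*}(\AND,0)$ (valid at full support) and part (ii) of the theorem.

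For the upper bounds, since $\ICD \le \IC$ it suffices to bound $\IC(\AND,\epsilon)$, and by minimax this reduces to exhibiting a uniform constant $c > 0$ such that $\IC_\mu(\AND,\epsilon) \le \IC(\AND,0) - c \hc(\epsilon)$ holds for every $\mu$. I would fix a threshold $\delta > 0$ and split on whether $\min(\omu(0,0), \omu(0,1), \omu(1,0)) \ge \delta$. In the \emph{generic} case, Theorem~\ref{thm:AND-gap}(i) fires with a constant $c_\delta > 0$ depending only on $\delta$, giving $\IC_\mu(\AND,\epsilon) \le \IC_\mu(\AND,0) - c_\delta \hc(\epsilon) \le \IC(\AND,0) - c_\delta \hc(\epsilon)$. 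In the \emph{degenerate} complementary case I would retreat to $\IC_\mu(\AND,\epsilon) \le \IC_\mu(\AND,0)$ and try to show that $\IC_\mu(\AND,0) \le \IC(\AND,0) - \eta(\delta)$ for some $\eta(\delta) > 0$ independent of $\epsilon$; taking $\epsilon$ small enough that $c_\delta \hc(\epsilon) \le \eta(\delta)$ then yields the uniform upper bound.

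The hard step will be this degenerate-case inequality. Since the degenerate set is closed in the simplex and $\mu \mapsto \IC_\mu(\AND,0)$ is continuous by~\eqref{eq:continMu}, its maximum over that set is attained, and the existence of a positive gap $\eta(\delta)$ is equivalent to the structural claim that for $\delta$ small enough no maximizer of $\IC_\mu(\AND,0)$ lies in the degenerate set. I would derive this from Braverman et al.'s explicit characterization of the optimal distribution via the buzzer protocol, which forces any maximizer to sit strictly in the interior of the simplex with all four probabilities bounded away from $0$. With this input the upper bound on $\IC(\AND,\epsilon)$ follows, and the bound on $\ICD(\AND,\epsilon)$ is then automatic from $\ICD \le \IC$, closing both (i) and (ii).
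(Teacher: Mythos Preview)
Your proof is correct. The lower-bound half is identical to the paper's: fix the full-support maximizer $\mu^*$ and apply Theorem~\ref{thm:AND-gap}. For the upper bound the paper simply invokes the general Theorem~\ref{thm:non-distributional-ub} (whose proof is exactly the generic/degenerate split you describe, carried out for arbitrary $f$ via Theorem~\ref{thm:upper-bd-IC-mu-eps} and Lemma~\ref{lem:good-distributions}), whereas you re-derive that split in-line for $\AND$, using Theorem~\ref{thm:AND-gap}(i) in the generic case and the explicit Braverman et al.\ maximizer in the degenerate case. Both routes work; the paper's is shorter because the general theorem is already available, while yours avoids appealing to the machinery of Lemma~\ref{lem:good-distributions} at the cost of needing the slightly stronger fact (true, and established in~\cite{MR3210776}) that \emph{every} maximizer of $\IC_\mu(\AND,0)$ lies in the interior, not merely that one such maximizer exists.
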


\begin{proof}
The measure $\mu$ that maximizes $\IC_\mu(\AND,0)$ has full support~\cite{MR3210776}, and thus $\IC(\AND,0) = \IC_\mu(\AND,0) = \IC_\mu(\AND,\mu,0)$. By Theorem~\ref{thm:AND-gap}~(ii),
\[
\IC(\AND,\epsilon)  \ge \ICD(\AND,\epsilon) \ge \IC_\mu(\AND,\mu,\epsilon) \ge \IC_\mu(\AND,\mu,0) - O(\hc(\epsilon)) = \IC(\AND,0) - O(\hc(\epsilon)).
\]
Moreover by a general upper bound that we prove later in  Theorem~\ref{thm:non-distributional-ub}, we have
\[
\ICD(\AND,\epsilon) \le \IC(\AND,\epsilon) \le \IC(\AND,0) - \Omega(\hc(\epsilon)).
\]
Both items in the corollary follow. 
%
\end{proof}

Since the difficult distributions for the set disjointness function are the ones in which the inputs typically have small or no intersections at all, the distributions for the AND function that assign a very small or $0$ mass to the point $(1,1)$ are of particular importance. Let 
	\[ \IC^\delta(\AND, \epsilon, 1 \to 0) = \sup_{\mu \colon \mu(1,1) \leq \delta} \IC_\mu(\AND,\epsilon,1\to 0). \]
The following corollary is used in Section~\ref{sec:DISJ-result} to analyze the information complexity of the set disjointness problem. 
	
\begin{corollary}
\label{cor:ANDzero-gap}
When $\epsilon \ge 0$ is sufficiently small, we have
\begin{itemize}
\item[(i).] $\IC^0(\AND,\epsilon) = \IC^0(\AND,0) - \Theta(\hc(\epsilon))$;
\item[(ii).] $\IC^0(\AND,\epsilon, 1 \to 0) = \IC^0(\AND,0) - \Theta(\hc(\epsilon))$.
\item[(iii).] There exist universal constants $C_1$ and $C_2$ such that for every $\eps,\delta>0$, 
	\[
	\IC^\delta(\AND, \epsilon, 1 \to 0) \leq \IC^0(\AND,0) - C_1 \hc(\epsilon) + C_2 \hc(\delta).
	\]
\end{itemize}	
\end{corollary}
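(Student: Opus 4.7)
The plan is to mirror the proof of Corollary~\ref{cor:AND-gap}, restricting the supremum to measures with $\mu(1,1)\le\delta$. By \cite{MR3210776}, the supremum $\IC^0(\AND,0)$ is attained at a specific distribution $\mu^*$ with $\mu^*(1,1)=0$ and $\mu^*(0,0),\mu^*(0,1),\mu^*(1,0)>0$, so Theorem~\ref{thm:AND-gap}(i) applies directly at $\mu^*$ and gives
\[
\IC^0(\AND,\eps)\;\ge\;\IC_{\mu^*}(\AND,\eps)\;\ge\;\IC_{\mu^*}(\AND,0)-O(\hc(\eps))\;=\;\IC^0(\AND,0)-O(\hc(\eps)),
\]
which is the lower bound in (i). Since $\IC^0(\AND,\eps,1\to 0)\ge\IC^0(\AND,\eps)$ (a one-sided-error protocol is in particular an ordinary $\eps$-error protocol), the corresponding lower bound in (ii) follows at once.

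For the matching upper bounds in (i) and (ii), I adapt the general upper bound (Theorem~\ref{thm:non-distributional-ub}) to the class $\mu(1,1)=0$. The idea is to analyze $\IC_\mu(\AND,\eps)$ uniformly over such $\mu$ by a case split. Fix a threshold $\alpha>0$ slightly smaller than the masses of $\mu^*$. If $\mu(0,0),\mu(0,1),\mu(1,0)\ge\alpha$, then Theorem~\ref{thm:AND-gap}(i) gives $\IC_\mu(\AND,\eps)\le\IC_\mu(\AND,0)-\Omega(\hc(\eps))$ with constants depending only on $\alpha$; if instead some mass falls below $\alpha$, continuity of $\IC_\nu(\AND,0)$ in $\nu$ (inequality~\eqref{eq:continMu}) combined with the fact that the maximizer $\mu^*$ has all three masses at least $2\alpha$ forces $\IC_\mu(\AND,0)\le\IC^0(\AND,0)-c(\alpha)$ for some $c(\alpha)>0$, so that $\IC_\mu(\AND,\eps)\le\IC_\mu(\AND,0)\le\IC^0(\AND,0)-\Omega(\hc(\eps))$ once $\eps$ is small. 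Taking the supremum yields (i). For (ii), Remark~\ref{rem:upper-bd-IC-mu-eps} observes that the protocol underlying the AND case of Theorem~\ref{thm:upper-bd-IC-mu-eps} introduces only one-sided errors $z_1\to z_0$; hence the same protocols used above also perform the stricter task $[\AND,\eps,1\to 0]$, and the upper bound transfers verbatim.

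For (iii), I reduce to (ii) by a continuity argument. Given $\mu$ with $\mu(1,1)\le\delta$, define $\mu'$ by transferring the mass $\mu(1,1)$ from $(1,1)$ to $(0,0)$, so that $\mu'(1,1)=0$ and $|\mu-\mu'|\le\delta$. Applying inequality~\eqref{eq:continMu} to the task $[\AND,\eps,1\to 0]$, which is defined pointwise in $(x,y)$ and hence unchanged under the relabelling of the prior, yields
\[
\IC_\mu(\AND,\eps,1\to 0)\;\le\;\IC_{\mu'}(\AND,\eps,1\to 0)+2\log 4\cdot\delta+2h(2\delta)\;\le\;\IC^0(\AND,\eps,1\to 0)+C\hc(\delta),
\]
for a universal $C$, using concavity of $\hc$ with $\hc(0)=0$ to bound $h(2\delta)\le 2\hc(\delta)$. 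Combining with the upper bound in (ii) and taking the supremum over $\mu$ with $\mu(1,1)\le\delta$ delivers (iii) with universal constants $C_1,C_2$. The main obstacle I anticipate is the uniformity step in (i)'s upper bound: while the case split is the natural approach, quantifying the gap $c(\alpha)$ requires knowing how sharply the maximizer $\mu^*$ is isolated among measures with $\mu(1,1)=0$, which I expect to follow from the stability-type results used in the proof of Theorem~\ref{thm:AND-gap}.
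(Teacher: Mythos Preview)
Your proposal is essentially correct and mirrors the paper's proof closely: the lower bounds via Theorem~\ref{thm:AND-gap}(i) at the maximizer $\mu^*$, the reduction of (ii) to (i) for the lower bound, the one-sidedness via Remark~\ref{rem:upper-bd-IC-mu-eps}, and part~(iii) via continuity after moving the mass off $(1,1)$ are all exactly what the paper does.

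The one substantive difference is in the ``small mass'' branch of the upper-bound case split, which you flag as an obstacle. You propose to argue that if some $\mu(a)<\alpha$ then $\mu$ is far from the maximizer $\mu^*$ and hence $\IC_\mu(\AND,0)\le\IC^0(\AND,0)-c(\alpha)$; this requires either uniqueness of the maximizer or a compactness argument showing all maximizers have masses bounded away from zero. The paper avoids this entirely with a more elementary trick: if $\mu(a)<\alpha$ for some $a\neq(1,1)$, move that mass to another point $b\neq(1,1)$, producing $\mu'$ whose support has only two points and hence $\IC_{\mu'}(\AND,0)=0$. Continuity~\eqref{eq:continMu} then gives $\IC_\mu(\AND,0)\le 4h(2\alpha)$ absolutely, which for $\alpha=0.001$ is already below $0.1<\IC^0(\AND,0)\approx 0.4827$. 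This dispenses with any stability or uniqueness input and resolves your anticipated obstacle directly.
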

\begin{proof}
	Let $\mu$ be the distribution maximizing $\IC_\mu(\AND,0)$ under the constraint $\mu(1,1) = 0$; This measure, which is described in~\cite{MR3210776}, has full support except for $\mu(1,1) = 0$. Thus by Theorem~\ref{thm:AND-gap}~(i), 
	\[
	\IC^0(\AND,\epsilon)
	\geq\IC_\mu(\AND,\epsilon)
	\geq \IC_\mu(\AND,0) - O(\hc(\epsilon))
	= \IC^0(\AND,0) - O(h(\epsilon)). 
	\]	
 	Consequently, since $\IC^0(\AND,\epsilon) \le \IC^0(\AND,\epsilon, 1 \to 0)$, both (i) and (ii) will follow if we prove   $\IC^0(\AND,\epsilon, 1 \to 0) \le \IC^0(\AND,0) - \Omega(\hc(\epsilon))$.  To prove this, we would like to apply the AND case of  Theorem~\ref{thm:upper-bd-IC-mu-eps}, however to be able to obtain a uniform upper bound on  $\IC^0(\AND,\epsilon, 1 \to 0)$,  we need to have a uniform lower bound on the probabilities  $\mu(0,0),\mu(0,1),\mu(1,0)$.  Let $\alpha > 0$ to be determined later, and consider any distribution $\mu$ with $\mu(1,1)=0$ and $\mu(a)<\alpha$ for some input $a \neq (1,1)$. Pick $b \in \{0,1\}^2 \setminus \{a,(1,1)\}$, and obtain the distribution $\mu'$ from $\mu$ by transferring all the probability mass on $a$ to $b$. That is $\mu'(b) = \mu(a)+\mu(b)$ and $\mu'(a)=0$, and otherwise $\mu$ and $\mu'$ are identical. Obviously $| \mu - \mu' | = \alpha$.  Now  \eqref{eq:continMu} and \eqref{eq:conv-factor} imply
 	\begin{equation}
 	\label{eq:contin_applied}
 	\IC_\mu(\AND,\epsilon, 1 \to 0)  \le \IC_\mu(\AND,0)
 	\leq \IC_{\mu'}(\AND,0) + 4 \alpha + 2h(2\alpha) =4 \alpha + 2h(2\alpha) 
 	\leq  4h(2\alpha),
 	\end{equation}
 	where we used the fact that $\IC_{\mu'}(\AND,0)=0$ as $\supp\mu'$ contains only two points. Setting $\alpha=0.001$ for example yields $\IC_\mu(\AND,0) \le 4 h(2\alpha) < 0.1 < \IC^0(\AND,0) \approx 0.4827$. It remains to prove the statement for  the distributions $\mu$ with $\mu(0,0),\mu(0,1),\mu(1,0) \geq \alpha$.  In this case Theorem~\ref{thm:upper-bd-IC-mu-eps} (See Remark~\ref{rem:upper-bd-IC-mu-eps} regarding the one-sidedness) implies that  exists a constant $C>0$ such that   $\IC_\mu(\AND,\epsilon, 1 \to 0) \le \IC^0(\AND,0) - C \hc(\epsilon)$. This finishes the proof (i) and (ii). 
	
        To prove (iii), consider an arbitrary distribution $\mu$ with $\mu(1,1) \leq \delta$, and let $\mu'$ be the distribution that is obtained from $\mu$ by moving the probability mass on $(1,1)$ to a different point so that  $\mu'(1,1) = 0$ and $| \mu - \mu'| = \delta$. Similar to \eqref{eq:contin_applied}, we obtain
        $$\IC_\mu(\AND, \epsilon, 1 \to 0) \le \IC_{\mu'}(\AND, \epsilon, 1 \to 0)+ 4h(2\delta) \le \IC^0(\AND, \epsilon, 1 \to 0)+ 4h(2\delta),$$
and thus (iii) follows from (ii).
\end{proof}

\subsection{Set disjointness function with error}  \label{sec:DISJ-result}
In this section we focus on the set disjointness function. Firstly it is not hard to obtain the following result.

\begin{corollary} \label{cor:disjointness_IC_ND}
For $\epsilon \ge 0$ small enough,
\[
 \IC(\DISJ_n,\epsilon) \geq n [\ICNDz(\AND,0) - \Theta(h(\epsilon))],
\]
where the hidden constant is independent of $n$. 
\end{corollary}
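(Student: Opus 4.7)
The plan is to combine the standard direct-sum theorem for disjointness with the sharp lower bound $\IC^0(\AND,\epsilon)\ge \IC^0(\AND,0)-\Theta(\hc(\epsilon))$ supplied by Corollary~\ref{cor:ANDzero-gap}~(i).

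First, I invoke a direct-sum argument for $\DISJ_n$ under a product prior. Fix any distribution $\mu$ on $\{0,1\}^2$ with $\mu(1,1)=0$ and full support on $\{(0,0),(0,1),(1,0)\}$, and take $\mu^n$ as the prior on inputs to $\DISJ_n$. Given a protocol $\pi$ that performs $[\DISJ_n,\epsilon]$, the usual coordinate-embedding construction yields a single-coordinate $\AND$ protocol: Alice and Bob sample an index $i\in[n]$ by public randomness, embed their $\AND$ input $(x,y)$ at coordinate $i$, sample the other $n-1$ coordinates i.i.d.\ from $\mu$, run $\pi$, and negate its output. On $\supp\mu^n$ the global disjointness value equals $1$ (no coordinate can be $(1,1)$), while embedding $(1,1)$ at coordinate $i$ turns it into $0$; thus the extracted protocol computes $\AND$ with two-sided error at most $\epsilon$ on all four inputs of $\{0,1\}^2$. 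Additivity of information cost under a product distribution (the standard direct-sum step for disjointness, cf.~\cite{MR3210776}) then gives
\[
\IC_{\mu^n}(\DISJ_n,\epsilon)\;\ge\;n\cdot \IC_\mu(\AND,\epsilon).
\]

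Second, since $\IC(\DISJ_n,\epsilon)=\sup_\nu \IC_\nu(\DISJ_n,\epsilon)$, taking the supremum over such $\mu$ and using continuity of $\IC_\mu(\AND,\epsilon)$ in $\mu$ (cf.~\eqref{eq:continMu}) to pass from full-support (except $(1,1)$) priors to the full class $\{\mu:\mu(1,1)=0\}$ yields
\[
\IC(\DISJ_n,\epsilon)\;\ge\;n\cdot \sup_{\mu:\,\mu(1,1)=0}\IC_\mu(\AND,\epsilon)\;=\;n\cdot \IC^0(\AND,\epsilon).
\]
Applying Corollary~\ref{cor:ANDzero-gap}~(i), and using $\IC^0(\AND,0)=\ICNDz(\AND,0)$ together with $\hc(\epsilon)=\Theta(h(\epsilon))$ for small $\epsilon$, I conclude
\[
\IC(\DISJ_n,\epsilon)\;\ge\;n\bigl[\,\ICNDz(\AND,0)-\Theta(h(\epsilon))\,\bigr],
\]
with constants independent of $n$.

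The main obstacle is the direct-sum step. One must verify two things with care: that the information cost of the extracted single-coordinate $\AND$ protocol (measured against $\mu$) sums additively to the information cost of $\pi$ (measured against $\mu^n$), and that the extracted protocol's error bound actually covers every input in $\{0,1\}^2$, which forces $\mu$ to have full support on $\{(0,0),(0,1),(1,0)\}$ and necessitates the final continuity-based closure. Both pieces are standard from the direct-sum proofs of~\cite{MR3210776}; the new content in this corollary is entirely in coupling the direct sum with the refined $\AND$-with-error bound of Corollary~\ref{cor:ANDzero-gap}.
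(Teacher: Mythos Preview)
Your proposal is correct and follows essentially the same route as the paper: prove $\IC(\DISJ_n,\epsilon)\ge n\,\IC^0(\AND,\epsilon)$ by the standard coordinate-embedding direct-sum argument (the paper sketches exactly this reduction), and then invoke Corollary~\ref{cor:ANDzero-gap}. One small remark: your full-support restriction on $\mu$ and the ensuing continuity step are not actually needed---since $\pi$ performs $[\DISJ_n,\epsilon]$ with \emph{point-wise} error, the extracted $\AND$ protocol has error at most $\epsilon$ on every input regardless of the support of $\mu$---but this harmless overcaution does not affect correctness.
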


\begin{proof}
By the argument that proves the additivity of information complexity (see e.g. \cite{MR3265014}), one can prove that $\IC(\DISJ_n,\epsilon) \ge n \IC^0(\AND,\epsilon)$. Then apply Corollary~\ref{cor:AND-gap}. The essential idea is the following. Consider a distribution $\mu$ on $\{0,1\}^2$ with $\mu(1,1)=0$, and let $(a,b) \in \{0,1\}^2$ be an input for the AND function. Let $XY \in \{0,1\}^n \times \{0,1\}^n$ be such that for some randomly selected $J \in \{1,\ldots,n\}$ we have $(X_j,Y_j)=(a,b)$, and for $i \in \{1,\ldots,n\} \setminus \{J\}$, the pairs $(X_i,Y_i)$ are i.i.d. random variables, each with distribution $\mu$. Since $\mu(1,1)=0$,  we have  $\DISJ_n(X,Y)=1-\AND(a,b)$ with probability $1$. Thus one can take a protocol  $\pi$ for $\DISJ_n$ and use it to solve $\AND(a,b)$ correctly for every $(a,b)$. By  sampling $XY$ in a clever way, using both public and private randomness, one can guarantee that the information cost of the new protocol that solves $\AND(a,b)$ will be the information cost of $\pi$ divided by $n$. 
\end{proof}

As a result one also obtains that $R_\epsilon(\DISJ_n) \ge n [\ICNDz(\AND,0) - \Theta(h(\epsilon))]$.  It turns out that by using techniques from~\cite{MR3210776} and~\cite{MR2961528}, one can prove the following theorem.
\begin{theorem}
\label{thm:set_disj_cc}
For the set disjointness function $\DISJ_n$ on inputs of length $n$, we have
\[
 R_\epsilon(\DISJ_n) = n[\IC^0(\AND,0) - \Theta(h(\epsilon))].
\]
\end{theorem}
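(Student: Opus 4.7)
The plan is to prove matching upper and lower bounds. The lower bound will be essentially immediate from Corollary~\ref{cor:disjointness_IC_ND}, whereas the upper bound will require combining the one-sided-error $\AND$ protocols guaranteed by Corollary~\ref{cor:ANDzero-gap} with an amortization argument in the style of Braverman~\cite{MR2961528} and BGPW~\cite{MR3210776}.

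For the lower bound I would appeal to the standard information-theoretic lower bound $R_\epsilon(f) \geq \IC_\mu(f, \epsilon)$ valid for every prior $\mu$, which upon taking the maximum over $\mu$ yields $R_\epsilon(\DISJ_n) \geq \IC(\DISJ_n, \epsilon)$. Combining with Corollary~\ref{cor:disjointness_IC_ND} gives immediately
\[ R_\epsilon(\DISJ_n) \geq \IC(\DISJ_n, \epsilon) \geq n[\IC^0(\AND, 0) - \Theta(h(\epsilon))]. \]

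For the upper bound, my plan is to first construct a $\DISJ_n$ protocol with small \emph{information} cost by running one-sided-error $\AND$ protocols on each coordinate, and then compress it to small communication. By Corollary~\ref{cor:ANDzero-gap}(ii), for all small enough $\epsilon > 0$ there are a constant $c > 0$ and, for every distribution $\mu$ on $\{0,1\}^2$ with $\mu(1,1) = 0$, a protocol $\pi_\mu$ for $\AND$ with one-sided error (only $1 \to 0$) at most $\epsilon$ and information cost at most $\IC^0(\AND, 0) - c \cdot h(\epsilon)$ under $\mu$. Running $n$ independent copies of such protocols on the $n$ coordinates and outputting $\DISJ_n = 1$ iff every copy returns $0$ gives a protocol with worst-case error at most $\epsilon$: disjoint inputs never err (errors are one-sided), while an input with $k \ge 1$ intersections is misclassified only when all $k$ relevant subprotocols err, which has probability at most $\epsilon^k \leq \epsilon$. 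To turn the information bound into a communication bound I would then invoke the direct-sum / amortization theorem of Braverman~\cite{MR2961528}, which equates the amortized randomized communication of $n$ parallel copies of $f$ with $\IC(f, \epsilon)$. Restricted to one-sided error and to distributions with $\mu(1,1) = 0$ (as in BGPW~\cite{MR3210776}), this yields
\[ R_\epsilon(\DISJ_n) \leq n [\IC^0(\AND, 0) - c \cdot h(\epsilon)] + o(n), \]
which for $n$ large enough (so that the $o(n)$ is dominated by the $nh(\epsilon)$ savings) proves the upper bound.

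The hard part will be the amortization step: adapting Braverman's direct-sum theorem so as to simultaneously handle worst-case inputs for $\DISJ_n$, the one-sided error constraint inherited from the $\AND$ subprotocols, and the restriction to distributions with $\mu(1,1) = 0$. The guiding principle is that any hard worst-case input for $\DISJ_n$ has only a few intersections and can be embedded into an i.i.d.\ sample from some hard $\mu$ with $\mu(1,1) = 0$ (with possibly one planted coordinate), so that the worst-case communication of $\DISJ_n$ is controlled by the distributional information complexity of $\AND$ under such $\mu$, which by Corollary~\ref{cor:ANDzero-gap}(iii) is exactly $\IC^0(\AND, 0) - \Theta(h(\epsilon))$; the minimax argument making this precise is the technically delicate part of the proof.
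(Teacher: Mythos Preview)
Your proposal is correct and follows essentially the same route as the paper: the lower bound is identical, and the upper bound combines the one-sided-error $\AND$ protocols of Corollary~\ref{cor:ANDzero-gap} with the self-reducibility/compression machinery of~\cite{MR3210776,MR2961528,MR3265014}.

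The one concrete step you leave implicit is how to bound $\IC(\DISJ_n,\epsilon,1\to 0)$ by $n\,\IC^0(\AND,\epsilon,1\to 0)+o(n)$ when the input distribution $\nu$ on $\DISJ_n$-inputs induces coordinate marginals $\nu_i$ with $\nu_i(1,1)>0$; your coordinatewise argument only directly controls the cost for marginals supported off $(1,1)$. The paper (following Lemma~8.5 of~\cite{MR3210776}) handles this by first having the players publicly exchange $n^{2/3}$ randomly chosen coordinates; conditioned on not terminating at this stage, each remaining marginal has $\nu_i(1,1)=o(1)$, and then Corollary~\ref{cor:ANDzero-gap}(iii) (or continuity) applies. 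Your ``few intersections / one planted coordinate'' intuition is exactly this preprocessing step. After that, the paper's amortization argument is precisely the one you outline: Braverman--Rao compression together with the minimax argument of~\cite[Section~6.2]{MR2961528} to convert the small distributional compression error back into worst-case error.
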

\begin{proof}
See Section~\ref{sec:proof:set_disj_cc}.
\end{proof}

We conjecture that in fact the exact constant is given by $\IC^0(\AND,\epsilon,1\to 0)$. In other words:
\begin{conjecture}
\label{conj:SetDisjointExact}
For the set disjointness function $\DISJ_n$ on inputs of length $n$, we have
\[
 R_\epsilon(\DISJ_n) = n \IC^0(\AND,\epsilon,1\to 0) + o(n).
\]
\end{conjecture}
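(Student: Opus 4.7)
The goal is to sharpen Theorem~\ref{thm:set_disj_cc} to pin down the exact constant, establishing $R_\epsilon(\DISJ_n) = n\IC^0(\AND,\epsilon,1\to 0) + o(n)$. By Corollary~\ref{cor:ANDzero-gap}, $\IC^0(\AND,\epsilon,1\to 0)$ already sits inside the asymptotic window $\IC^0(\AND,0) - \Theta(h(\epsilon))$ described by Theorem~\ref{thm:set_disj_cc}, so the remaining task is to show that the precise coefficient of $h(\epsilon)$ on both sides is the same. The plan is to mirror the $\epsilon=0$ argument of~\cite{MR3210776} but with the one-sided $\epsilon$-error AND protocol as the atomic building block, and to combine this with a direct-sum lower bound.

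\textbf{Upper bound.} Fix $\delta>0$ and let $\mu^\star$ be a distribution on $\{0,1\}^2$ with $\mu^\star(1,1)=0$ nearly attaining the supremum defining $\IC^0(\AND,\epsilon,1\to 0)$, and let $\tau$ be a protocol performing $[\AND,\epsilon,1\to 0]$ with $\IC_{\mu^\star}(\tau)$ within $\delta$ of this supremum. Run $\tau$ independently, with fresh private randomness, on each coordinate pair $(X_i,Y_i)$ and output the NOR of the outcomes. The one-sidedness of $\tau$ guarantees that every coordinate with $X_iY_i=0$ is reported as $0$ with probability one, so on any worst-case input with $k\ge 1$ intersections the protocol errs only if all $k$ intersecting coordinates err simultaneously, an event of probability at most $\epsilon^k\le\epsilon$; disjoint inputs are resolved correctly with probability one. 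Hence the worst-case error is at most $\epsilon$, and the information cost under $\mu^{\star,n}$ equals $n\IC_{\mu^\star}(\tau)$ by additivity for product distributions. Translating this information cost into worst-case communication at $o(n)$ additive overhead is the content of the compression argument already invoked for the $\epsilon=0$ case in Theorem~\ref{thm:set_disj_cc}: one simulates $\tau$'s random walk on $\Delta(\{0,1\}^2)$ coordinate by coordinate, communicating only informative signals, and bills the overhead of handling atypical transcripts to an $o(n)$ verification phase exactly as in~\cite{MR3210776}. Sending $\delta\to 0$ yields $R_\epsilon(\DISJ_n) \le n\IC^0(\AND,\epsilon,1\to 0) + o(n)$.

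\textbf{Lower bound and main obstacle.} For the lower bound, take a protocol $\pi$ performing $[\DISJ_n,\epsilon]$ and consider the product distribution $\nu=\mu^{\star,n}$, so that $\IC_\nu(\pi) \le R_\epsilon(\DISJ_n) + O(\log n)$. Embedding an adversarial AND instance $(a,b)$ at a uniformly random coordinate $J$ while sampling the remaining $n-1$ coordinates IID from $\mu^\star$, and using the additivity of information complexity under product distributions, extracts an AND protocol $\sigma$ with $\IC_{\mu^\star}(\sigma) \le \IC_\nu(\pi)/n + o(1)$. Because $\mu^\star(1,1)=0$, the non-$J$ coordinates contribute no intersection almost surely and $\DISJ_n(X,Y) = \lnot\AND(a,b)$, so $\sigma$ inherits worst-case error at most $\epsilon$ from $\pi$ on every input $(a,b)$. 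The critical step, and the main technical obstacle, is to promote $\sigma$ to a $1\to 0$ one-sided protocol without increasing its information cost by more than $o(1)$; the naive extraction only yields a two-sided $\epsilon$-error AND protocol, which would lead to the weaker bound $\IC^0(\AND,\epsilon)$ in place of $\IC^0(\AND,\epsilon,1\to 0)$. We propose to exploit the structure of the embedding: a wrong-direction error of $\sigma$ on input $(a,b)\ne(1,1)$ corresponds to $\pi$ outputting ``intersect'' on an input that is genuinely disjoint under $\nu$, and since $\pi$ has worst-case error $\epsilon$ and the measure of wrong-direction error events is controlled by averaging over $J$, we expect to remove these errors by a deterministic truncation plus conditioning argument at cost $o(1)$ in information, analogously to the ``one-sidedness trick'' implicit in the AND case of Theorem~\ref{thm:upper-bd-IC-mu-eps}. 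Once one-sidedness is secured, the bound $\IC_{\mu^\star}(\sigma) \ge \IC^0(\AND,\epsilon,1\to 0) - o(1)$ closes the gap and completes the matching lower bound.
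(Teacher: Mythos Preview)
This statement is a \emph{conjecture} in the paper, not a theorem; the paper proves only the upper bound (Lemma~\ref{lem:amortize-DISJ-upperbd}) and explicitly lists the matching lower bound as open in the concluding section. Your upper-bound sketch is in the right spirit but is missing an ingredient: you fix a single distribution $\mu^\star$ and bound $\IC_{\mu^{\star,n}}(\pi)$, whereas compressing information into worst-case communication requires a bound on $\max_\mu \IC_\mu(\pi)$, i.e.\ on $\IC(\DISJ_n,\epsilon,1\to 0)$. The paper handles this via the self-reducibility argument of Lemma~\ref{lem:DISJ-ICn}, which controls the information cost uniformly over all input distributions (not just a fixed product one), followed by the minimax-plus-Braverman--Rao compression of Lemma~\ref{lem:DISJ-CCnN}.

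The genuine gap is in your lower bound, precisely at the ``one-sidedness trick''. Your extracted protocol $\sigma$ has two-sided error: when $\AND(a,b)=0$ the embedded input $(X,Y)$ is disjoint almost surely, and since $\pi$ has worst-case error $\epsilon$, the protocol $\sigma$ outputs $1$ with probability \emph{up to} $\epsilon$, not zero. So the embedding only yields $\IC_{\mu^\star}(\sigma) \geq \IC_{\mu^\star}(\AND,\epsilon)$, which is exactly Corollary~\ref{cor:disjointness_IC_ND} and is already in the paper. The step you describe as ``deterministic truncation plus conditioning at cost $o(1)$'' is unsupported: converting a protocol performing $[\AND,\epsilon]$ into one performing $[\AND,\epsilon,1\to 0]$ could a priori cost $\Theta(h(\epsilon))$ in information, which is precisely the scale of the constant you are trying to pin down. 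Your analogy to Theorem~\ref{thm:upper-bd-IC-mu-eps} runs in the wrong direction --- there one \emph{introduces} error of a controlled one-sided type into a zero-error protocol, whereas here you would need to \emph{eliminate} an existing direction of error, a different and, as far as is currently known, non-free operation. The paper's concluding remarks confirm that proving $\IC(\DISJ_n,\epsilon) \ge n\IC^0(\AND,\epsilon,1\to 0) - o(n)$ is exactly what remains open.
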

Braverman~\cite{MR2961528} proved that for all $0< \alpha < 1$ and for all functions $f$,
\[
 \ICD(f,\epsilon) \geq (1-\alpha) \ICND(f, \frac{\epsilon}{\alpha}).
\]
When $f = \DISJ_n$, Corollary~\ref{cor:disjointness_IC_ND} gives 
\[
 \frac{\ICD(\DISJ_n,\epsilon)}{n} \geq (1-\alpha)(\IC^0(\AND,0) - \Theta(h(\epsilon/\alpha))) \geq \IC^0(\AND,0) - \Theta(\alpha + h(\epsilon/\alpha)).
\]
Substituting $\alpha=\sqrt{\epsilon \log (1/\epsilon)}$ yields 
\begin{equation}
\label{eq:disj_distrib_lowerbnd}
 \frac{\ICD(\DISJ_n,\epsilon)}{n} \geq \IC^0(\AND,0) - \Theta(\sqrt{h(\epsilon)}).
\end{equation}
In Theorem~\ref{thm:setDisj_distrib} below, which is one of our main contributions, we show that this bound is sharp. The proof relies on introducing a new protocol for set disjointness problem, and analyzing its information cost. 
\begin{theorem}
\label{thm:setDisj_distrib}
For the set disjointness function $\DISJ_n$ on inputs of length $n$, we have
\[
\ICD(\DISJ_n,\epsilon) =n[\IC^0(\AND,0) - \Theta(\sqrt{h(\epsilon)})] + O(\log n).
\]
\end{theorem}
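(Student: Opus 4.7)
The lower bound is exactly~\eqref{eq:disj_distrib_lowerbnd}: combining Braverman's inequality $\IC(f,\epsilon/\alpha)\le\ICD(f,\epsilon)/(1-\alpha)$ from~\cite{MR2961528} with Corollary~\ref{cor:disjointness_IC_ND} and optimizing over $\alpha$ at $\alpha=\sqrt{\epsilon\log(1/\epsilon)}=\Theta(\sqrt{h(\epsilon)})$ yields $\ICD(\DISJ_n,\epsilon)\ge n[\IC^0(\AND,0)-\Theta(\sqrt{h(\epsilon)})]$, where at the optimum both contributions $\alpha\cdot\IC^0(\AND,0)$ and $h(\epsilon/\alpha)$ are of order $\sqrt{h(\epsilon)}$. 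The content of Theorem~\ref{thm:setDisj_distrib} therefore lies entirely in the matching upper bound, which requires introducing a new distributional-error protocol for $\DISJ_n$.

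A useful warm-up is that this gain is strictly larger than what is achievable by simpler constructions, so I need to identify the obstructions before describing the protocol. The trivial bound $\ICD(\DISJ_n,\epsilon)\le\IC(\DISJ_n,\epsilon)=n[\IC^0(\AND,0)-\Theta(h(\epsilon))]+O(\log n)$ from Theorem~\ref{thm:set_disj_cc} yields only savings $\Theta(n\,h(\epsilon))$, which is dominated by $\Theta(n\sqrt{h(\epsilon)})$ for small $\epsilon$; so the distributional setting must be genuinely easier than the point-wise one for DISJ. Mixing a point-wise $(\epsilon/\alpha)$-error DISJ protocol with a trivial ``always disjoint'' default at ratio $\alpha=\Theta(\sqrt{h(\epsilon)})$ violates the distributional-error budget, since the point-wise branch alone contributes an error of order $(1-\alpha)\epsilon/\alpha=\Theta(\sqrt{\epsilon/\log(1/\epsilon)})\gg\epsilon$. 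Running independent per-coordinate AND protocols with distributional errors $\delta_i$ summing to $\epsilon$ gives, by Theorem~\ref{thm:AND-gap} and concavity of $h$, at most $n\,h(\epsilon/n)=O(\epsilon\log(n/\epsilon))$ savings.

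My plan is to introduce a new protocol that genuinely exploits the freedom of distributional error. Building on the Braverman--Garg--Pankratov--Weinstein framework~\cite{MR3210776}, I would run the per-coordinate one-sided AND sub-protocols from Corollary~\ref{cor:ANDzero-gap}(iii) across all coordinates in parallel, but with their $(1\to 0)$ error events correlated via public randomness, and with a per-coordinate conditional error rate $r=r(\epsilon)$ at $(1,1)$ that is substantially larger than $\epsilon$. The rate $r$ is picked to satisfy a balance equation between the per-coordinate information saving (of order $h(r)$ by Corollary~\ref{cor:ANDzero-gap}(iii)) and the overall distributional error on $\DISJ_n$ (a Poisson-style function of $r$ and the expected number of coordinate intersections under $\mu$). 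The coupling of error events keeps the overall distributional error on $\DISJ_n$ at most $\epsilon$ while unlocking the large per-coordinate rate, so the balance selects $r$ such that the per-coordinate saving $\Theta(h(r))$ equals $\Theta(\sqrt{h(\epsilon)})$, giving a total information saving $\Theta(n\sqrt{h(\epsilon)})$ as claimed.

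The main obstacle will be to verify the claimed bounds uniformly over every input distribution $\mu$, not only on the extremal product distributions. My plan is to reduce the general case to the product case by invoking the additivity of information complexity~\cite{MR3265014}, together with the characterization from~\cite{MR3210776} of the hardest input distributions for $\DISJ_n$---essentially i.i.d.\ product distributions $\mu_0^n$ with $\mu_0(1,1)=\Theta(1/n)$. On such distributions the number of coordinate intersections is Poisson-concentrated, so the direct analysis of the coupled per-coordinate error events applies. The $h(\delta)$-correction term in Corollary~\ref{cor:ANDzero-gap}(iii), instantiated with $\delta=\Theta(1/n)$, contributes only the additive $O(\log n)$ term in the final information-cost bound.
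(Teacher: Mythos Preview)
Your lower bound and your diagnosis of why the naive constructions fail are correct, but the proposed upper-bound protocol has a genuine gap: the phrase ``error events correlated via public randomness'' is never made concrete, and no version of it that I can see delivers the $\sqrt{h(\epsilon)}$ saving. If you run all $n$ one-sided AND sub-protocols in parallel with per-coordinate rate $r$ (independent or correlated), the adversary simply takes a distribution $\mu$ with $p=\Pr_\mu[\DISJ_n=1]$ close to~$1$ --- say, exactly one intersection at a uniformly random coordinate. The one-sided distributional error is then at least~$r$, forcing $r\le\epsilon$ and limiting the total saving to $n\,\Theta(h(\epsilon))$. Perfectly correlating the error events is even worse: it degenerates into the trivial mixture you already ruled out. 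Nothing in your sketch supplies a \emph{second} source of saving that compensates when $p$ is large.

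The paper's protocol is structurally different. It processes coordinates \emph{sequentially} in a uniformly random order and \emph{terminates} as soon as one AND sub-protocol outputs~$1$; the per-coordinate one-sided rate is set to $\epsilon/(2p)$, using that $p$ is known in the distributional model. One-sidedness keeps the distributional error at most $p\cdot\epsilon/(2p)<\epsilon$. The information saving now has two independent sources: the per-coordinate gain $\Theta(h(\epsilon/p))$ from Corollary~\ref{cor:ANDzero-gap}(iii), and the early-termination gain $\Theta(p)$ from examining only $\Ex[T^\sigma]\le(1-\Theta(p))n$ coordinates. The adversary's best response minimizes $p+h(\epsilon/p)$, which is $\Theta(\sqrt{h(\epsilon)})$ at $p\approx\sqrt{h(\epsilon)}$. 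The $O(\log n)$ term arises from the $\hc(\nu_i(1,1))$ correction in Corollary~\ref{cor:ANDzero-gap}(iii), summed over the random stopping time. Finally, your last paragraph inverts the role of additivity: it is a lower-bound tool (as in Corollary~\ref{cor:disjointness_IC_ND}); the upper bound must be, and in the paper is, a single explicit protocol analyzed directly for every~$\mu$.
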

\begin{proof}
See Section~\ref{sec:proof:setDisj_distrib}.  
\end{proof}

\subsection{Prior-free Information Cost} \label{sec:statements:Prior-free}
Theorem~\ref{thm:setDisj_distrib} shows that for $\alpha=\sqrt{\epsilon \log (1/\epsilon)}=\Theta(\sqrt{h(\epsilon)})$, and sufficiently large $n$, we have
\[
\frac{\ICD(\DISJ_n,\eps)}{1-\Theta(\alpha)}=\IC(\DISJ_n,\eps/\alpha) < \IC(\DISJ_n,\eps),
\]
and thus proves the separation between distributional and non-distributional prior-free information complexity as it was promised in \eqref{eq:priorfree_separation_thightness}. As we discussed in the introduction this has the important implication that amortized randomized communication complexity is not necessarily equal to the amortized distributional communication complexity with respect to the hardest distribution. More precisely, there are examples for which  $\max_\mu D_\epsilon^{\mu,n}(f^n) \neq R_\epsilon^n(f^n)$.

Next we turn to proving general lower bounds and upper bounds for the prior-free information complexity. Theorem~\ref{thm:non-distri-error-lowerbd} immediately implies a lower bound for non-distributional prior-free information complexity.
\begin{corollary}[corollary of Theorem~\ref{thm:non-distri-error-lowerbd}]  \label{cor:non-distri-error-lowerbd-Prior-Free}
For every function $f$ and $0\le \epsilon \le 1$, we have
\[
\ICND(f,\epsilon) \ge \ICND(f,0) - 4 |\cX \times \cY| \hc(\sqrt{\epsilon}).
\]
\end{corollary}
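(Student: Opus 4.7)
The plan is to reduce the prior-free statement to the single-prior statement of Theorem~\ref{thm:non-distri-error-lowerbd} by choosing the ``right'' distribution to plug in. Recall from Section~\ref{sec:IC-definition} (the minimax identity just before Section~\ref{sec:IC-continuity}) that $\ICND(f,\epsilon) = \max_\mu \IC_\mu(f,\epsilon)$; this is the key identity that converts a statement about a quantity defined as a maximum into a statement about a particular prior.

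First I would let $\mu^\star$ be a distribution on $\cX \times \cY$ achieving $\IC_{\mu^\star}(f,0) = \ICND(f,0)$. The existence of such a maximizer follows from the uniform continuity of $\mu \mapsto \IC_\mu(f,0)$ established in \eqref{eq:continMu} together with compactness of $\Delta(\cX \times \cY)$; alternatively one can work with a $\delta$-near maximizer and let $\delta \to 0$ at the end. Applying Theorem~\ref{thm:non-distri-error-lowerbd} with this particular prior yields
\[
\IC_{\mu^\star}(f,\epsilon) \;\ge\; \IC_{\mu^\star}(f,0) - 4|\cX||\cY|\,\hc(\sqrt{\epsilon}) \;=\; \ICND(f,0) - 4|\cX \times \cY|\,\hc(\sqrt{\epsilon}).
\]

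To finish, I use $\ICND(f,\epsilon) = \max_\mu \IC_\mu(f,\epsilon) \ge \IC_{\mu^\star}(f,\epsilon)$ and chain this with the displayed inequality to conclude $\ICND(f,\epsilon) \ge \ICND(f,0) - 4|\cX \times \cY|\,\hc(\sqrt{\epsilon})$.

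There is essentially no obstacle here: all the work has already been done inside Theorem~\ref{thm:non-distri-error-lowerbd}, and the corollary is simply the observation that the per-prior lower bound is uniform in $\mu$ and therefore survives taking the maximum over $\mu$ on the appropriate side. The only pitfall to flag is making sure one maximizes $\mu$ to match the zero-error side (not the $\epsilon$-error side), so that the additive error term $4|\cX \times \cY|\hc(\sqrt{\epsilon})$ does not become a negative correction going the wrong way; this is exactly why we pick $\mu^\star$ based on $\IC_{\mu}(f,0)$ rather than on $\IC_\mu(f,\epsilon)$.
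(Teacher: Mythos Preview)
Your proof is correct and is exactly the immediate argument the paper has in mind: the paper states the corollary without proof, labeling it simply as a ``corollary of Theorem~\ref{thm:non-distri-error-lowerbd}'', and your choice of $\mu^\star$ maximizing $\IC_\mu(f,0)$ followed by an application of Theorem~\ref{thm:non-distri-error-lowerbd} and the inequality $\ICND(f,\epsilon)\ge \IC_{\mu^\star}(f,\epsilon)$ is the intended one-line derivation.
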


Since unless $\mu$ satisfies certain conditions, Theorem~\ref{thm:upper-bd-IC-mu-eps} does not provide an upper bound on $\IC_\mu(f,\eps)$ that is uniform on $\mu$, we cannot apply it directly to bound $\ICND(f,\epsilon)$. However, we will get around this problem by proving that the ``difficult distributions'' satisfy these conditions and hence we obtain the desired upper bound.

\begin{theorem} \label{thm:non-distributional-ub}
 If $f\colon \cX \times \cY \to \cZ$ is non-constant, then
\[ \IC(f,\epsilon) \leq \IC(f,0) - \Omega(h(\epsilon)), \]
 where the hidden constant depends on $f$.
\end{theorem}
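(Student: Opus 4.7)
The plan is to reduce Theorem~\ref{thm:non-distributional-ub} to the pointwise bound of Theorem~\ref{thm:upper-bd-IC-mu-eps}, by showing that any distribution $\mu$ realising a large value of $\IC_\mu(f,0)$ automatically fulfils the ``Non-constant case'' hypothesis with parameters bounded away from zero in terms of $f$ alone. Set $c = \IC(f,0) = \max_\mu \IC_\mu(f,0)$, which is strictly positive because $f$ is non-constant. Since $\IC(f,\epsilon) = \max_\mu \IC_\mu(f,\epsilon)$, it suffices to produce constants $\gamma=\gamma(f)>0$ and $\epsilon_0=\epsilon_0(f)>0$ such that $\IC_\mu(f,\epsilon) \leq c - \gamma\, h(\epsilon)$ simultaneously for every $\mu$ and every $\epsilon \leq \epsilon_0$.

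I would split on the size of $\IC_\mu(f,0)$. If $\IC_\mu(f,0) \leq c/2$, the trivial bound $\IC_\mu(f,\epsilon) \leq \IC_\mu(f,0)$ already gives $\IC_\mu(f,\epsilon) \leq c - h(\epsilon)$ once $\epsilon$ is small enough that $h(\epsilon) \leq c/2$, a threshold depending only on $f$. For the remaining regime $\IC_\mu(f,0) > c/2$, the key structural claim is that there is $\alpha_0 = \alpha_0(f) > 0$ such that $\supp\mu$ contains two points $a,b$ lying on a common row or column of $\cX \times \cY$ with $f(a) \neq f(b)$ and $\min(\mu(a),\mu(b)) \geq \alpha_0$. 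Granted this, the Non-constant case of Theorem~\ref{thm:upper-bd-IC-mu-eps} delivers $\tau \geq \alpha_0^3/64$ and an $\epsilon_0$ depending only on $\alpha_0$ and $|\cX \times \cY|$, so that $\IC_\mu(f,\epsilon) \leq \IC_\mu(f,0) - \tau\, h(\epsilon) \leq c - \tau\, h(\epsilon)$ uniformly on this regime.

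I would prove the structural claim by compactness. If it failed, there would exist $\mu_n$ with $\IC_{\mu_n}(f,0) > c/2$ while every pair $a_n,b_n \in \supp\mu_n$ on a common row or column with $f(a_n) \neq f(b_n)$ satisfied $\min(\mu_n(a_n),\mu_n(b_n)) \to 0$. Passing to a limit $\mu_n \to \mu_\infty$ in $\Delta(\cX \times \cY)$, no pair $a,b \in \supp\mu_\infty$ with $f(a) \neq f(b)$ could lie on a common row or column. By Theorem~\ref{thm:internal-trivial} this places $\mu_\infty$ among the internal-trivial measures, hence $\IC_{\mu_\infty}(f,0) = 0$. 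But the general continuity estimate~\eqref{eq:continMu}, applied to the task $[f,0]$ (which, unlike the distributional task, is independent of $\mu$ and so genuinely continuous in $\mu$), forces $\IC_{\mu_n}(f,0) \to \IC_{\mu_\infty}(f,0) = 0$, contradicting $\IC_{\mu_n}(f,0) > c/2$. The main technical obstacle is matching the combinatorial condition extracted above with the precise form of Theorem~\ref{thm:internal-trivial}; should that characterisation be phrased in terms of connected components of the bipartite support graph rather than individual rows and columns, I would instead extract two points $a,b$ in the same component of $\supp\mu$ with $f(a) \neq f(b)$ and both of mass at least $\alpha_0$, and then apply Theorem~\ref{thm:upper-bd-IC-mu-eps} to a bichromatic edge of a short alternating path between them inside $\supp\mu$.
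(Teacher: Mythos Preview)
The architecture matches the paper's: split on whether $\IC_\mu(f,0)$ is near the optimum, use a structural lemma to force the Non-constant case hypothesis of Theorem~\ref{thm:upper-bd-IC-mu-eps} in the hard regime, and obtain constants depending only on $f$. Your structural claim is precisely the paper's Lemma~\ref{lem:good-distributions}.

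The gap is in your compactness proof of that claim. The implication ``no pair $a,b\in\supp\mu_\infty$ on a common row or column has $f(a)\neq f(b)$ $\Longrightarrow$ $\mu_\infty$ is internal-trivial'' is false, and your fallback to connected components does not repair it. Take $f=\AND$ and $\mu_\infty$ uniform on $\{(0,0),(0,1),(1,0)\}$: here $f\equiv 0$ on the entire support, so there is no bichromatic pair on any row, column, or even in any connected component of $\supp\mu_\infty$; yet $\IC_{\mu_\infty}(\AND,0)>0$, because structural internal-triviality (Theorem~\ref{thm:internal-trivial}, Lemma~\ref{lem:structurally-internal-trivial}) requires $f$ to be constant on the rectangular hull $C_A\times C_B$ of each component, which here is all of $\{0,1\}^2$ and contains $(1,1)$. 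So your limit $\mu_\infty$ can satisfy both your condition and your fallback condition while still having $\IC_{\mu_\infty}(f,0)\geq c/2$, and no contradiction arises.

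What the paper supplies to close this is Lemma~\ref{lem:improvement}: if $f$ is constant on $\supp\mu$ but the rectangular hull contains a point with a different $f$-value, then placing an $\epsilon$-atom at that point strictly \emph{increases} the information complexity by an amount depending only on the minimum atom of $\mu$. Together with the block decomposition of Lemma~\ref{lem:blocks}, this shows that any $\nu$ with $\beta(\nu)=0$ is strictly sub-optimal, which is exactly what is needed so that a continuity/compactness argument gives $\beta(\nu)\geq c>0$ on a neighbourhood of the optimum. Your proof can be repaired by inserting this improvement step; compactness reducing to the internal-trivial characterisation alone does not suffice.
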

\begin{proof}
See Section~\ref{sec:proof:non-distributional-ub}. 
\end{proof}

The same upper bound and lower bound hold for $\ICD(f,\epsilon)$.  

\begin{theorem} \label{thm:ICD-eps-bound}
 If $f\colon \cX \times \cY \to \cZ$ is non-constant, then
\[ 
\ICD(f,0) - O(\hc(\sqrt{\epsilon}))
\le \ICD(f,\epsilon) \leq \ICD(f,0) - \Omega(h(\epsilon)),
\]
 where the hidden constants depend on $f$.
\end{theorem}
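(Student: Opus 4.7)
For the upper bound, I would observe that any protocol with pointwise error at most $\epsilon$ also has distributional error at most $\epsilon$ under every prior $\mu$; consequently $\IC_\mu(f,\mu,\epsilon) \le \IC_\mu(f,\epsilon)$ for all $\mu$, and taking the maximum over $\mu$ gives $\ICD(f,\epsilon) \le \IC(f,\epsilon)$. Applying Theorem~\ref{thm:non-distributional-ub} yields $\IC(f,\epsilon) \le \IC(f,0) - \Omega_f(h(\epsilon))$, and invoking Braverman's identity $\IC(f,0) = \ICD(f,0)$ (recalled in the introduction) completes this direction.

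For the lower bound, the plan is to apply the lower bound in Theorem~\ref{thm:bd-IC-mu--distributional} to a fixed prior $\mu^*$ (depending only on $f$) that attains $\ICD(f,0)$. The ingredients are: the function $\mu \mapsto \IC_\mu(f,0)$ is continuous in $\mu$ by~\eqref{eq:continMu}, so the maximum $\IC(f,0) = \max_\mu \IC_\mu(f,0)$ is attained at some $\mu^\#$; for any full-support prior $\mu$ the tasks $[f,\mu,0]$ and $[f,0]$ coincide, giving $\IC_\mu(f,\mu,0) = \IC_\mu(f,0)$; and $\ICD(f,0) = \IC(f,0)$ by Braverman. Taking $\mu^*$ to be a full-support argmax of $\IC_\mu(f,\mu,0)$ and writing $\alpha^* := \min_{xy \in \supp \mu^*} \mu^*(xy) > 0$, which is a positive constant depending only on $f$, Theorem~\ref{thm:bd-IC-mu--distributional} yields
\[
\ICD(f,\epsilon)
\ge \IC_{\mu^*}(f,\mu^*,\epsilon)
\ge \IC_{\mu^*}(f,\mu^*,0) - 4|\cX||\cY|\,\hc\bigl(\sqrt{\epsilon/\alpha^*}\bigr)
= \ICD(f,0) - O_f\bigl(\hc(\sqrt{\epsilon})\bigr),
\]
where the final step uses the asymptotic $h(c\sqrt{\epsilon}) = O_c(h(\sqrt{\epsilon}))$ for fixed $c$ and small $\epsilon$, a direct consequence of the estimate~\eqref{eq:h-estimate}. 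The non-constancy of $f$ guarantees $\ICD(f,0) > 0$, which is required to invoke Theorem~\ref{thm:bd-IC-mu--distributional}.

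The main obstacle is justifying the choice of a full-support maximizer $\mu^*$, or equivalently, that the argmax $\mu^\#$ of $\IC_\mu(f,0)$ can be taken to have full support. This is straightforward for $\AND$ by the explicit hard priors described in~\cite{MR3210776}, and in general it can be arranged by noting that $\sup_{\mu\text{ full support}} \IC_\mu(f,0) = \IC(f,0)$ (by density of full-support distributions and continuity of $\IC_\mu(f,0)$). However, $\IC_\mu(f,\mu,0)$ itself is not continuous in $\mu$ (cf.\ the AND example at~\eqref{eq:measure-delta}), so a naive perturbation $\mu^*_\eta = (1-\eta)\mu^\# + \eta u$ of a non-full-support $\mu^\#$ forces $\alpha^* \ge \eta/|\cX\times\cY|$ and introduces an error of order $\eta\log(1/\eta)$ in $\IC_\mu(f,\mu,0)$; balancing these against $\hc(\sqrt{\epsilon/\alpha^*})$ by itself yields only a weaker rate of the form $O_f(\hc(\epsilon^{1/3}))$. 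Recovering the sharper $O_f(\hc(\sqrt{\epsilon}))$ bound therefore hinges on the structural claim that the maximum defining $\ICD(f,0)$ is genuinely attained at a distribution whose support minimum is bounded below by a constant depending only on $f$; this is the technical crux that absorbs all the $f$-dependence into the hidden constant.
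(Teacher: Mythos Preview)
Your upper bound is correct and identical to the paper's argument.

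For the lower bound, your approach is essentially the same as the paper's, but you have introduced an unnecessary detour through full-support measures that creates a phantom obstacle. The paper's proof is three lines: pick any $\mu$ that maximizes $\IC_\mu(f,\mu,0)$, set $\alpha = \min_{xy \in \supp\mu}\mu(xy)$, and apply Theorem~\ref{thm:bd-IC-mu--distributional}. The point you are missing is that $\alpha$ in Theorem~\ref{thm:bd-IC-mu--distributional} is the minimum over the \emph{support} of $\mu$, not over all of $\cX\times\cY$; hence $\alpha>0$ automatically, with no need for $\mu$ to have full support. Your chain $\IC_{\mu^*}(f,\mu^*,0)=\IC_{\mu^*}(f,0)=\IC(f,0)=\ICD(f,0)$, which is what forces you to seek a full-support maximizer, is an avoidable detour: by definition $\ICD(f,0)=\max_\mu \IC_\mu(f,\mu,0)$, so once a maximizer $\mu^*$ is chosen you have $\IC_{\mu^*}(f,\mu^*,0)=\ICD(f,0)$ directly, and the perturbation argument you sketch (yielding only $O_f(\hc(\epsilon^{1/3}))$) is never needed.

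The only point one might still worry about is whether the maximum defining $\ICD(f,0)$ is actually attained; the paper simply takes this for granted (it writes $\max$ rather than $\sup$ in the definition, following~\cite{MR2961528}). This is the genuine technical point, not the full-support issue you identified.
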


\begin{proof}
It is shown in~\cite{MR2961528} that $\ICD(f,0)=\IC(f,0)$, and thus the upper bound follows from Theorem~\ref{thm:non-distributional-ub} as $\ICD(f,\epsilon) \le \IC(f,\epsilon)$.

To prove  the lower bound, choose a measure $\mu$ that maximizes $\IC_\mu(f,\mu,0)$, and let $\alpha = \min_{xy \in \supp\mu} \mu(x,y)$. Applying Theorem~\ref{thm:bd-IC-mu--distributional}, we get
\[
\ICD(f,\epsilon) 
\ge \IC_{\mu}(f,\mu, \epsilon) 
\ge \IC_{\mu}(f,\mu,0) - 4|\cX||\cY| \hc(\sqrt{\epsilon/\alpha})
= \ICD(f,0) - O(\hc(\sqrt{\epsilon})).   \qedhere
\]
\end{proof}

\subsection{A characterization of trivial measures}

We start with a few of definitions. Let $f\colon \cX \times \cY \to \cZ$ be an arbitrary function, and $\mu$ a distribution on $\cX \times \cY$. We say that $\mu$ is \emph{external-trivial} if $\IC^\ext_\mu(f,0) = 0$.  We say that $\mu$ is \emph{strongly external-trivial} if there exists a protocol $\pi$ computing $f$ correctly on all inputs satisfying $\IC^\ext_\mu(\pi) = 0$.  We say that $\mu$ is \emph{structurally external-trivial} if $f$ is constant on $S_A \times S_B$, where $S_A$ is the support of the marginal of $\mu$ on Alice's input and $S_B$ is the support of the marginal of $\mu$ on Bob's input.

Similarly  we say that $\mu$ is \emph{internal-trivial} if $\IC_\mu(f,0) = 0$.  We say that $\mu$ is \emph{strongly internal-trivial} if there exists a protocol $\pi$ computing $f$ correctly on all inputs satisfying $\IC_\mu(\pi) = 0$.  We say that $\mu$ is \emph{structurally internal-trivial} if the marginals of $\mu$ can be partitioned as $S_A = \bigcup_i \cX_i$ and $S_B = \bigcup_i \cY_i$ so that the support of $\mu$ is contained in $\bigcup_i \cX_i \times \cY_i$, and $f$ is constant on each $\cX_i \times \cY_i$.

Theorem~\ref{thm:internal-trivial} below shows that all our definitions of internal triviality are equivalent. In particular, if $\IC_\mu(f,0) = 0$, then the infimum in the definition of $\IC_\mu$ is achieved by a finite protocol.

\begin{theorem} \label{thm:internal-trivial}
 Let $f\colon \cX \times \cY \to \cZ$ be an arbitrary function, and $\mu$ a distribution on $\cX \times \cY$.

 The distribution $\mu$ is internal-trivial iff it is strongly internal-trivial iff it is structurally internal-trivial.	
\end{theorem}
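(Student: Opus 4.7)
The plan is to prove the cycle structurally internal-trivial $\Rightarrow$ strongly internal-trivial $\Rightarrow$ internal-trivial $\Rightarrow$ structurally internal-trivial; the middle implication is immediate from the infimum definition of $\IC_\mu(f,0)$.

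For structurally $\Rightarrow$ strongly, let $c_i$ denote the constant value of $f$ on $\cX_i \times \cY_i$ and I would consider the protocol $\pi$ in which Alice sends the unique index $i$ with $x \in \cX_i$, Bob sends the unique index $j$ with $y \in \cY_j$; if $i = j$ they output $c_i$, and otherwise (an event of $\mu$-measure $0$) they exchange inputs and compute $f(x,y)$ to maintain correctness on all inputs. Since $(X,Y) \in \supp\mu \subseteq \bigcup_i \cX_i \times \cY_i$ almost surely, the two indices agree a.s.\ and each is determined by either player's input alone, giving $I(X;\Pi|Y) = I(Y;\Pi|X) = 0$ and hence $\IC_\mu(\pi) = 0$.

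The real work lies in internal-trivial $\Rightarrow$ structurally. I would take $(\cX_i, \cY_i)$ to be the connected components of the bipartite graph on $S_A \cup S_B$ whose edges are the pairs in $\supp\mu$, and assume for contradiction that $f$ is not constant on some $\cX_i \times \cY_i$. If $f$ is already non-constant on $\supp\mu \cap (\cX_i \times \cY_i)$ (Case A), tracing one step of a connecting path yields $(x',y),(x'',y) \in \supp\mu$ with $f(x',y) \neq f(x'',y)$; zero-error correctness forces $\Pi \mid (x',y)$ and $\Pi \mid (x'',y)$ to have disjoint supports (total-variation distance $1$), whereas two applications of Pinsker's inequality through $\Pi \mid Y = y$, using $I(X;\Pi|Y=y) \leq \IC_\mu(\pi)/\mu(y)$, bound the same distance by $O\bigl(\sqrt{\IC_\mu(\pi)/\mu_{\min}}\bigr)$, where $\mu_{\min} = \min_{(a,b)\in\supp\mu} \mu(a,b)$. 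Selecting protocols with $\IC_\mu(\pi) \to 0$ then contradicts $\IC_\mu(f,0) = 0$.

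Case B, in which $f \equiv c_i$ on $\supp\mu \cap (\cX_i \times \cY_i)$ but some $(x^*,y^*) \in (\cX_i \times \cY_i) \setminus \supp\mu$ satisfies $f(x^*,y^*) \neq c_i$, is the main obstacle. I would use the rectangle factorization $\Pr[\Pi = t \mid X=x, Y=y] = A(x,t)\, B(y,t)$ valid for any private-randomness protocol, where $A(x,t)$ is the product of Alice's message probabilities along $t$ and $B(y,t)$ likewise for Bob. Choosing $y_1 \in \cY_i$ with $(x^*,y_1)\in\supp\mu$ and $x_1 \in \cX_i$ with $(x_1,y^*)\in\supp\mu$, zero-error correctness on $(x^*,y^*)$ kills $A(x^*,t)B(y^*,t)$ for every $t$ producing output $c_i$; the identity then forces $\Pr[t \mid x_1,y^*]\cdot\Pr[t \mid x^*,y_1] = 0$ for each such $t$, and since both $\Pi \mid (x_1,y^*)$ and $\Pi \mid (x^*,y_1)$ are supported entirely in the $c_i$-output set (both pairs lie in $\supp\mu \cap (\cX_i \times \cY_i)$), their supports are disjoint and their total-variation distance equals $1$. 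To derive the contradiction, I would bound the same distance from above by a hybrid argument along a path $(x_1,y^*) = (u_0,v_0),(u_1,v_1),\ldots,(u_N,v_N) = (x^*,y_1)$ lying inside $\supp\mu$ that changes one coordinate per step; such a path, with $N = O(|\cX|+|\cY|)$, exists by connectedness of the bipartite component, and each single-coordinate step admits a Pinsker bound of $O\bigl(\sqrt{\IC_\mu(\pi)/\mu_{\min}}\bigr)$ exactly as in Case A. Summing and sending $\IC_\mu(\pi_n) \to 0$ drives the total to $0$, contradicting the lower bound of $1$. The obstacle that the rectangle identity is designed to overcome is precisely that $(x^*,y^*)$ lies outside $\supp\mu$ and so no direct Pinsker bound applies to transcripts at $(x^*,y^*)$; the identity instead transfers the discrepancy onto the pair $(x_1,y^*), (x^*,y_1)$, both of which lie inside $\supp\mu$ where the $\IC_\mu$ bound bites.
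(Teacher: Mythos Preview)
Your proof is correct and follows essentially the same approach as the paper: the same cycle of implications, the same protocol for the structural $\Rightarrow$ strong direction, and the same two-case analysis (non-constant on the support vs.\ constant on the support but not on the enclosing rectangle) handled via Pinsker's inequality, the rectangle property, and a path/hybrid argument through the connected component. The only difference is cosmetic: the paper phrases the Pinsker step as an $L_1$ bound on the joint distribution $p_n(x,y,t)$ against $\mu(x,y)\pi_n(t\mid y)$ and then telescopes along the path, whereas you work directly with total-variation distance between conditional transcript laws $\Pi\mid(x,y)$ and apply the triangle inequality through $\Pi\mid Y=y$; these are equivalent formulations of the same estimate.
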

\begin{proof}
See Section~\ref{sec:proof-trivial-measure}.
\end{proof}

In order to prove Theorem~\ref{thm:internal-trivial}, we first obtain a characterization of measures that are not structurally internal-trivial, by defining a graph $G_\mu$ on the support of every distribution $\mu$ on $\cX \times \cY$.
\begin{definition}   \label{def:graph-on-inputs}
 Let $G$ be the graph whose vertex set is $\cX \times \cY$, and two vertices are connected if they agree on one of their coordinates. That is, $(x,y),(x,y')$ are connected for every $x \in \cX$ and $y \neq y' \in \cY$, and $(x,y),(x',y)$ are connected for every $x \neq x' \in \cX$ and $y \in \cY$. In short, $G$ is the Cartesian product of the complete graphs $K_\cX$ and $K_\cY$. Let $G_\mu$ be the subgraph of $G$ induced by the support of $\mu$. For every connected component $C$ of $G_\mu$, define
 \begin{align*}
 C_A &= \{ x \in \cX : xy \in C \text{ for some } y \in \cY \}, \\
 C_B &= \{ y \in \cY : xy \in C \text{ for some } x \in \cX \}.
 \end{align*}
\end{definition}

The following lemma shows that if $\mu$ is not structurally internal-trivial, then there exists a connected component $C$ of $G_\mu$ such that $f$ is not constant on $C_A \times C_B$. We will use this fact later  in Section~\ref{sec:proof:upper-bd-IC-mu-eps}  in the proof of Theorem~\ref{thm:upper-bd-IC-mu-eps}.

\begin{lemma} \label{lem:structurally-internal-trivial}
 Let $f\colon \cX \times \cY \to \cZ$ be an arbitrary function, and $\mu$ a distribution on $\cX \times \cY$. Then the distribution $\mu$ is not structurally internal-trivial iff there exists a connected component $C$ of $G_\mu$ such that $f$ is not constant on $C_A \times C_B$.
\end{lemma}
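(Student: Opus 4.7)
The plan is to prove both directions of the biconditional by explicitly relating the connected components of $G_\mu$ to the rectangles in a structural-internal-triviality partition.

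For the easy direction, suppose that for every connected component $C$ of $G_\mu$, $f$ is constant on $C_A \times C_B$. Enumerate the components as $C_1, C_2, \dots$ and set $\cX_i \defeq (C_i)_A$, $\cY_i \defeq (C_i)_B$. I would first check that $\{\cX_i\}$ partitions $S_A$: every $x \in S_A$ has some $y$ with $(x,y) \in \supp\mu$, so $(x,y)$ lies in some $C_i$, hence $x \in \cX_i$; and if $x \in \cX_i \cap \cX_j$ with $i \neq j$, there would be $y, y'$ with $(x,y) \in C_i$ and $(x,y') \in C_j$, but these two support points share the $x$-coordinate so they are adjacent in $G_\mu$ and must lie in the same component, contradiction. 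The same argument handles $\{\cY_i\}$. By construction $\supp\mu \subseteq \bigcup_i \cX_i \times \cY_i$, and $f$ is constant on each $\cX_i \times \cY_i = C_A \times C_B$ by hypothesis, so $\mu$ is structurally internal-trivial.

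For the converse, suppose $\mu$ is structurally internal-trivial, witnessed by partitions $S_A = \bigsqcup_i \cX_i$ and $S_B = \bigsqcup_i \cY_i$ with $\supp\mu \subseteq \bigcup_i \cX_i \times \cY_i$ and $f$ constant on each $\cX_i \times \cY_i$. I would fix an arbitrary connected component $C$ of $G_\mu$ and prove the key claim that $C \subseteq \cX_i \times \cY_i$ for some single index $i$. Pick any base point $(x_0, y_0) \in C$; it belongs to some $\cX_i \times \cY_i$. For any other $(x',y') \in C$, take a path $(x_0,y_0) = v_0, v_1, \ldots, v_k = (x',y')$ in $G_\mu$ and argue by induction on $k$ that every $v_\ell$ lies in $\cX_i \times \cY_i$: consecutive vertices share a coordinate, and since the $\cX_i$'s (resp.\ $\cY_i$'s) are pairwise disjoint, sharing a coordinate forces the containing index to be the same. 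Hence $C_A \subseteq \cX_i$ and $C_B \subseteq \cY_i$, so $f$ is constant on $C_A \times C_B \subseteq \cX_i \times \cY_i$.

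There is no real obstacle here; the only subtlety is ensuring that sharing a coordinate in $G_\mu$ forces two support points into the same block of the partition, which is exactly what disjointness of the $\cX_i$'s and $\cY_i$'s gives. Both directions together yield the stated equivalence.
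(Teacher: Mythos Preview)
Your proposal is correct and follows essentially the same approach as the paper's own proof: in one direction you use the connected components to build the witnessing partition (checking disjointness of the $C_A$'s via shared coordinates), and in the other you show each component is trapped in a single block $\cX_i \times \cY_i$. The only cosmetic difference is that you spell out the containment $C \subseteq \cX_i \times \cY_i$ via path induction, whereas the paper argues directly from a single adjacent pair; both are the same observation.
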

\begin{proof}
 Suppose first that $\mu$ is structurally internal-trivial. Thus there exist partitions $S_A = \bigcup_i \cX_i$ and $S_B = \bigcup_i \cY_i$ such that the support of $\mu$ is contained in $\bigcup_i \cX_i \times \cY_i$ and $f$ is constant on $\cX_i \times \cY_i$ on each $i$. Any connected component $C$ of $G_\mu$ must lie in some $\cX_i \times \cY_i$. Indeed, if (for example) $x_jy_j,x_jy_k \in C$ where $x_j \in \cX_j$, $y_j \in \cY_j$, $y_k \in \cY_k$, then $x_jy_k \notin \bigcup_i \cX_i \times \cY_i$.
 As $C \subseteq \cX_i \times \cY_i$, we must have $C_A \times C_B \subseteq \cX_i \times \cY_i$, hence $f$ is constant on $C_A \times C_B$ for every connected component $C$.

 Conversely, suppose that for every connected component $C$ of $G_\mu$, the function $f$ is constant on $C_A \times C_B$. If $C,C'$ are two different connected components then $C_A,C'_A$ are disjoint: otherwise, if (say) $(x,y) \in C$ and $(x,y') \in C'$ then $(x,y)$ is connected to $(x,y')$ and so $C = C'$. Thus $\{ C_A : C \text{ a connected component of } G_\mu \}$ partitions a subset $\cX'$ of $\cX$. Similarly, $\{ C_B : C \text{ a connected component of } G_\mu \}$ partitions a subset $\cY'$ of $\cY$. We can obtain partitions of $\cX$ and $\cY$ by adding the parts $\cX \setminus \cX'$ and $\cY \setminus \cY'$. These partitions serve as a witness that $\mu$ is structurally internal-trivial.
\end{proof}

Finally we note that the analogue of Theorem~\ref{thm:internal-trivial}  holds for the external case as well.

\begin{theorem} \label{thm:external-trivial}
 Let $f\colon \cX \times \cY \to \cZ$ be an arbitrary function, and $\mu$ a distribution on $\cX \times \cY$.

 The distribution $\mu$ is external-trivial iff it is strongly external-trivial iff it is structurally external-trivial.	
\end{theorem}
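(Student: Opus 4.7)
The plan is to close the cycle of implications \emph{structurally external-trivial} $\Rightarrow$ \emph{strongly external-trivial} $\Rightarrow$ \emph{external-trivial} $\Rightarrow$ \emph{structurally external-trivial}. The middle implication is immediate since the infimum defining $\IC^\ext_\mu(f,0)$ is bounded above by the external information cost of any protocol witnessing strong external-triviality.

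For structurally $\Rightarrow$ strongly external-trivial, I would construct an explicit protocol. Assuming $f \equiv c$ on $S_A \times S_B$, let $\pi$ first have Alice transmit the flag $\mathbf{1}[x \in S_A]$ and then Bob transmit $\mathbf{1}[y \in S_B]$. If both flags equal $1$ the players output $c$ and halt; otherwise they revert to any zero-error fallback (e.g., Alice transmits her entire input and Bob computes $f$). Under $\mu$ we always have $x\in S_A$ and $y\in S_B$, so the transcript is a fixed deterministic string and $\IC^\ext_\mu(\pi) = I(\Pi;XY) = 0$; on arbitrary inputs $\pi$ still computes $f$ correctly.

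The substantive direction is external-trivial $\Rightarrow$ structurally external-trivial, which I would prove in the contrapositive: if $f$ is not constant on $S_A\times S_B$, then every zero-error protocol $\pi$ satisfies $\IC^\ext_\mu(\pi) \geq \eta$ for some positive $\eta = \eta(f,\mu)$, forcing $\IC^\ext_\mu(f,0) > 0$. I split into two cases. If $f$ is not constant on $\supp\mu$, then $\pi(X,Y)=f(X,Y)$ is determined by the transcript $\Pi$, and data processing yields $\IC^\ext_\mu(\pi) = I(\Pi;XY) \geq H(f(X,Y)) > 0$. Otherwise $f$ is constant $c$ on $\supp\mu$ while some $(x_0,y_0) \in S_A\times S_B$ has $f(x_0,y_0) \ne c$; since $x_0 \in S_A$ and $y_0 \in S_B$ I fix witnesses $(x_0,y^\ast),(x^\ast,y_0) \in \supp\mu$, and $x^\ast \neq x_0$, $y^\ast \neq y_0$ because $(x_0,y_0) \notin \supp\mu$. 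The central tool is then the classical product-form (rectangle) property of private-coin protocols: once the public randomness is fixed to $R=r$, the leaf probabilities factor as $\Pr[T=t \mid X=x, Y=y, R=r] = A_{r,t}(x)B_{r,t}(y)$. If some transcript $t$ were simultaneously reachable from both $(x_0,y^\ast)$ and $(x^\ast,y_0)$, then $A_{r,t}(x_0) > 0$ and $B_{r,t}(y_0) > 0$, so $(x_0,y_0)$ would also reach $t$; zero-error then forces the output label $\phi(t)$ to equal both $c$ and $f(x_0,y_0)\ne c$, a contradiction. Hence, for every fixing of the public randomness, the transcript distributions from $(x_0,y^\ast)$ and from $(x^\ast,y_0)$ have disjoint supports.

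Setting $E = \{(X,Y) \in \{(x_0,y^\ast),(x^\ast,y_0)\}\}$ and $p = \Pr[E] = \mu(x_0,y^\ast)+\mu(x^\ast,y_0)$, the transcript $\Pi$ conditional on $E$ now pins down $(X,Y)$, and a routine conditional-entropy calculation (introducing the auxiliary variable $Z$ that equals $(X,Y)$ on $E$ and $\bot$ otherwise, for which $H(Z\mid\Pi,\mathbf{1}_E) = 0$) yields
\[
\IC^\ext_\mu(\pi) \;=\; I(\Pi;XY) \;\geq\; I(\Pi;Z) \;\geq\; p \cdot h\!\left(\mu(x_0,y^\ast)/p\right) \;>\; 0,
\]
a strictly positive lower bound that depends only on $\mu(x_0,y^\ast)$ and $\mu(x^\ast,y_0)$ and is independent of $\pi$. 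This contradicts $\IC^\ext_\mu(f,0)=0$ and closes the cycle. The main obstacle I anticipate is deploying the rectangle property cleanly in the presence of both public and private randomness and extracting the disjoint-support dichotomy from the zero-error hypothesis uniformly over $r$; once that structural obstruction is in place, the information lower bound and the constructive direction are standard.
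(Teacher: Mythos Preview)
Your proof is correct and follows essentially the same architecture as the paper: the same cycle of implications, the same explicit flag-sending protocol for the constructive direction, the same data-processing argument when $f$ is non-constant on $\supp\mu$, and the same rectangle-property observation that no transcript can be reached from both $(x_0,y^\ast)$ and $(x^\ast,y_0)$. The only difference is in how that disjoint-support fact is turned into a positive lower bound on $I(\Pi;XY)$: the paper passes through Pinsker's inequality and an $\ell_1$ estimate, whereas you extract the bound $p\cdot h(\mu(x_0,y^\ast)/p)$ directly via the auxiliary variable $Z$ and conditional entropy --- a slightly more elementary route that avoids Pinsker but yields the same conclusion.
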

\begin{proof}
See Section~\ref{sec:proof-trivial-measure}.
\end{proof}

\section{Proofs for general functions} \label{sec:Proofs}
In this section we present the proofs of the main results on general functions presented in Section~\ref{sec:main_results}.

\subsection{Information complexity with point-wise error}
\label{sec:pointwise-proofs}

\subsubsection{Proof of Theorem~\ref{thm:upper-bd-IC-mu-eps}}\label{sec:proof:upper-bd-IC-mu-eps}
We discuss some notation before the proof. Consider a protocol $\pi$. For an input $xy$, let $\Pi_{xy}$ denote the random variable corresponding to the transcript of $\pi$ when it is executed on the input $xy$.  Let $\Pi$ denote the random variable for transcripts of $\pi$, whose distribution is given as
\[
\Pr[\Pi = t] = \Ex_{xy} \Pr[\Pi_{xy} = t] = \sum_{xy} \Pr[xy] \Pr[\Pi_{xy} = t],
\]
where $\Pr[\Pi_{xy} = t] = \Pr[\Pi = t| XY=xy]$. As usual we abbreviate $\Pr[xy] = \Pr[XY=xy]$, and $\Pr[x|y] = \Pr[X=x | Y=y]$, and so on.

The next lemma shows that under some conditions, if we modify a protocol $\pi$ to a new protocol  $\pi'$  according to Figure~\ref{fig:modified_protocol}, then the information cost will have a significant drop. 

\begin{figure}[ht]
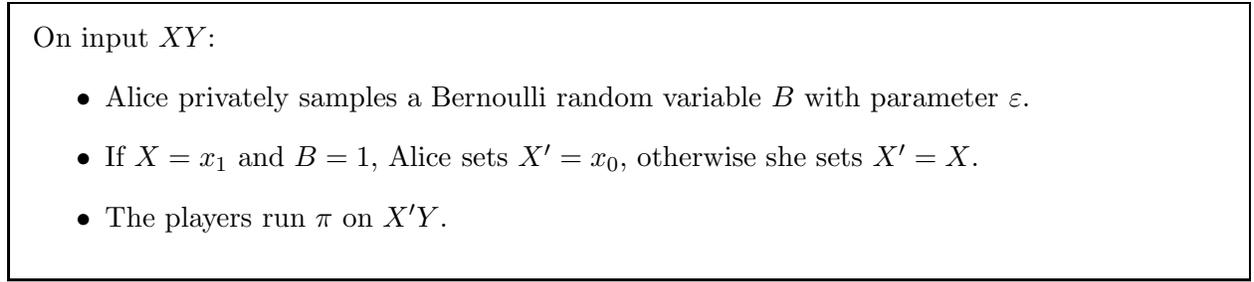

\begin{framed}
On input $XY$:
\begin{itemize}
\item Alice privately samples a Bernoulli random variable $B$ with parameter $\epsilon$.
\item If $X=x_1$ and $B=1$, Alice sets $X'=x_0$, otherwise she sets $X'=X$.
\item The players run $\pi$ on $X'Y$.
\end{itemize}
\end{framed}
\caption{The protocol $\pi'$ is obtained from a protocol $\pi$ using $x_0,x_1 \in \cX$. \label{fig:modified_protocol}}
\end{figure}

\begin{lemma} \label{lem:Given-good-transcripts}
Let  $\mu$ be a distribution on $\cX \times \cY$, and $\pi$ be a protocol with input set $\cX \times \cY$. Suppose there is a set $\cL$ of transcripts of $\pi$ that satisfies, for some $C_1 \in [0,1]$,
\begin{itemize}
 \item[(1)] $\Pr[\Pi \in \cL] \ge C_1$;
\end{itemize}
and there are $x_0 \oy, x_1\oy$, both in the support of $\mu$, and $C_2 \in (0,1], \delta \in [0,1]$ with $C_2 > 2\delta$, such that for every $t \in \cL$,
\begin{itemize}
 \item[(2)] $\Pr[XY=x_0 \oy|\Pi=t] \ge C_2$;
 \item[(3)] $\Pr[XY=x_1 \oy|\Pi=t] \le \delta$.
\end{itemize}
Let $K = \log |\cX \times \cY|$. Then for sufficiently small $\epsilon > 0$ (depending on $\mu, C_2, \delta$), the protocol $\pi'$ defined in Figure~\ref{fig:modified_protocol} satisfies
\[
\IC_\mu(\pi')
\le
\IC_\mu(\pi)
- C_1 C_2 \entf\left( \frac{\epsilon}{2}\min\left\{1, C_2 \frac{\Pr[x_1 \oy]}{\Pr[x_0 \oy]}\right\}  \right) + 3\epsilon K + \hc(\delta/C_2).
\]
Explicitly, the upper bound holds as long as $\frac{\Pr[x_1 \oy]}{\Pr[x_0 \oy]} \epsilon + (1-\epsilon) \delta/C_2 \le 1/2$.
\end{lemma}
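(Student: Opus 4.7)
The plan is to decompose
\[
\IC_\mu(\pi) - \IC_\mu(\pi') = \bigl[H(X\mid Y,\Pi') - H(X\mid Y,\Pi)\bigr] + \bigl[H(Y\mid X,\Pi') - H(Y\mid X,\Pi)\bigr]
\]
and handle the two brackets separately. The first bracket is where the main gain $C_1 C_2\,h\bigl(\tfrac{\epsilon}{2}\min\{1,C_2 \Pr[x_1\oy]/\Pr[x_0\oy]\}\bigr)$ comes from; the second will be shown to cost at most $\epsilon K$; and the remaining $2\epsilon K$ of slack is used for a total-variation continuity estimate handling transcripts outside $\cL$.

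For the $X$-side I would first write the joint law of $(X,Y,\Pi')$ down explicitly. Because Alice sets $X'=X$ whenever $X\neq x_1$, the joint mass on every triple $(x,y,t)$ with $x\neq x_1$ is unchanged, while the mass on $(x_1,y,t)$ becomes $\Pr[x_1 y]\bigl[(1-\epsilon)\Pr[\Pi_{x_1 y}=t]+\epsilon\Pr[\Pi_{x_0 y}=t]\bigr]$. Fix $t\in\cL$ and $y=\oy$, and set $a=\Pr[x_0\oy]$, $b=\Pr[x_1\oy]$, $p_0=\Pr[X=x_0\mid\oy,\Pi=t]$, $p_1=\Pr[X=x_1\mid\oy,\Pi=t]$; the hypotheses imply $p_0\ge C_2$ and $p_1/p_0\le \delta/C_2$. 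Partitioning the $X$-alphabet as $\{x_0\},\{x_1\},\cX\setminus\{x_0,x_1\}$ and applying the entropy chain rule, every ``other'' atom rescales by a common factor $\gamma=1/(1+\epsilon((b/a)p_0-p_1))=1+O(\epsilon)$, so those contributions move only by $O(\epsilon)$. The true gain lives in the binary-entropy piece on $\{x_0,x_1\}$: the new conditional ratio is $p_1'/p_0'=(1-\epsilon)p_1/p_0+\epsilon b/a$, and the explicit smallness assumption $(b/a)\epsilon+(1-\epsilon)\delta/C_2\le 1/2$ forces the induced probability $r'=p_1'/(p_0'+p_1')$ into the interval $[\tfrac{\epsilon}{2}\min\{1,C_2 b/a\},\,1/2]$. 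Monotonicity of $h$ on $[0,1/2]$, together with the upper bound $h(p_1/(p_0+p_1))\le \hc(\delta/C_2)$ and $p_0+p_1\ge C_2$, then yields a per-leaf entropy gain of at least $C_2\bigl[h(\tfrac{\epsilon}{2}\min\{1,C_2 b/a\})-\hc(\delta/C_2)\bigr]$; averaging over $t\in\cL$, whose $\Pi$-weight is at least $C_1$, produces the main gain.

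The $Y$-side is easier: for every $x\neq x_1$ the conditional law of $(\Pi',Y)\mid X=x$ equals that of $(\Pi,Y)\mid X=x$, so only the row $X=x_1$ contributes, and there $\Pi'\mid X=x_1$ is a $(1-\epsilon,\epsilon)$-mixture; one application of concavity of entropy together with $H(Y\mid X=x_1)\le \log|\cY|\le K$ gives $|H(Y\mid X=x_1,\Pi')-H(Y\mid X=x_1,\Pi)|\le \epsilon K$. For transcripts $t\notin\cL$ or values $y\neq\oy$ in the $X$-side sum, the laws of $(X,Y,\Pi)$ and $(X,Y,\Pi')$ differ by at most $\epsilon\Pr[X=x_1]$ in total variation, and the continuity estimate~\eqref{eq:continMu} contributes only $O(\epsilon K)$, absorbed by the $3\epsilon K$ budget. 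The main obstacle is the simultaneous tracking of the binary ratio $r'$ and the outer normalization $\gamma$ at each $t\in\cL$: the explicit smallness hypothesis is precisely the threshold keeping $r'\le 1/2$, so that the lower bound $h(r')\ge h(\tfrac{\epsilon}{2}\min\{1,C_2 b/a\})$ coming from monotonicity and~\eqref{eq:h-estimate}, and the upper bound $h(r)\le \hc(\delta/C_2)$, can be applied simultaneously on the monotone branch of $h$.
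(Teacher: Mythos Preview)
Your outline is the same route the paper takes: split $\IC_\mu(\pi)-\IC_\mu(\pi')$ into the $X$-side and $Y$-side conditional entropies, show the $Y$-side loses at most $\epsilon K$, and on the $X$-side isolate the good pairs $(t,\oy)$ with $t\in\cL$ and extract a binary-entropy gain by sandwiching the new conditional probability of $X=x_1$ between $\tfrac{\epsilon}{2}\min\{1,C_2\,\Pr[x_1\oy]/\Pr[x_0\oy]\}$ and $1/2$. The core mechanism is identical.

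Two places in your write-up need repair, however. First, your appeal to~\eqref{eq:continMu} for the off-$\cL$ (and $y\neq\oy$) terms is misplaced: that inequality controls $\IC_\mu(T)$ as a function of the \emph{prior} $\mu$, not a per-leaf conditional entropy, and a global total-variation bound of $\epsilon\Pr[X=x_1]$ on the joint law does not control individual conditionals $H(X\mid \Pi' Y=ty)$. The paper handles this uniformly by conditioning on Alice's private coin $B$: since $B$ is independent of $(X,Y)$ and $\Pi'|_{B=0}$ has the same joint law with $(X,Y)$ as $\Pi$, one gets $H(X\mid B,\Pi'Y)\le H(X\mid\Pi'Y)$ and, after averaging, $H(X\mid\Pi'Y)\ge(1-\epsilon)H(X\mid\Pi Y)$, which is exactly the $\epsilon K$ loss you want; the same device yields the $Y$-side bound. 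This replaces both your $\gamma=1+O(\epsilon)$ bookkeeping (whose hidden constant depends on $\Pr[x_1\oy]/\Pr[x_0\oy]$ and will not in general fit inside $3\epsilon K$) and the spurious reference to~\eqref{eq:continMu}.

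Second, your averaging step is slightly off. The relevant weight is $\Pr[\Pi\in\cL,\,Y=\oy]$, not $\Pr[\Pi\in\cL]$; one first observes $\Pr[\Pi Y=t\oy]\ge C_2\Pr[\Pi=t]$ for $t\in\cL$ (from Condition~(2)), which gives $\Pr[\Pi\in\cL,\,Y=\oy]\ge C_1C_2$. With the paper's two-way split via the indicator $C=\mathbf{1}_{X\neq x_1}$ (rather than your three-way $\{x_0\},\{x_1\},\text{rest}$), the per-leaf gain is $h(\cdot)-\hc(\delta/C_2)-\epsilon K$ with coefficient~$1$, and the $C_1C_2$ appears only from the weight; your version places a factor $C_2$ inside the bracket and then weights by $C_1$, which does not match the stated bound. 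Once you condition on $B$ and use the indicator $C$, both issues disappear and your argument becomes the paper's.
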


Intuitively, this condition  says that $\pi$ has a set of transcripts $\cL$ that happen with significant probability, and every transcript in $\cL$  probabilistically differentiates between $x_0\oy$ and $x_1\oy$. In other words, if we we see a transcript in $\cL$, then we know that the input was much more likely to be $x_0\oy$ than to be $x_1 \oy$. One point to note here is that we require the two points $x_0\oy$ and $x_1\oy$ to be in the same column. By symmetry, if there are two points in the same row satisfying the same properties, then the claim of Lemma~\ref{lem:Given-good-transcripts} also holds.

\begin{proof}
Consider the protocols $\pi$ and $\pi'$ as described in Figure~\ref{fig:modified_protocol}. Note that $\Pi_{X'Y}$ is the transcript of $\pi'$. We shorthand $\Pi'=\Pi_{X'Y}$. The information cost of $\pi'$ is given by
\[
\IC_\mu(\pi') = I(X,\Pi' | Y) + I(Y,\Pi' | X) =H(X|Y)+H(Y|X)-H(X|\Pi'Y)-H(Y|\Pi'X),
\]
while
\[
\IC_\mu(\pi)=I(X,\Pi  | Y) + I(Y,\Pi  | X) =H(X|Y)+H(Y|X)-H(X|\Pi Y)-H(Y|\Pi X).
\]
Hence
\[
\IC_\mu(\pi) - \IC_\mu(\pi') = H(X|\Pi'Y) - H(X|\Pi Y) + H(Y|\Pi'X) -H(Y|\Pi X).
\]
Note that
\begin{equation}    \label{eq:bound-X}
H(Y|\Pi' X) \ge H(Y|\Pi' XB) \ge \left(1-\epsilon \right) H(Y|\Pi' X, (B=0)) = \left(1-\epsilon \right) H(Y|\Pi  X)  \ge H(Y|\Pi  X) - \epsilon K.
\end{equation}
Similarly,  for every $y \in \cY$ and every possible transcript $t$, we have
\begin{equation}   \label{eq:bound-Y}
H(X|\Pi' Y=ty) \ge  H(X|\Pi  Y=ty) -  \epsilon K.
\end{equation}
We will show that for $Y=\oy$ and every transcript $t \in \cL$,
\begin{equation}  \label{eq:to-bound}
H(X|\Pi' Y=t\oy)
\ge H(X|\Pi Y=t\oy) + \entf\left( \frac{\epsilon}{2}\min\left\{1, C_2 \frac{\Pr[x_1 \oy]}{\Pr[x_0 \oy]}\right\}  \right) - \hc(\delta/C_2) - \epsilon K.
\end{equation}
Note that Condition (2) implies that for $t \in \cL$,
\[
\Pr[\Pi Y = t \oy] \ge \Pr[\Pi XY = t x_0 \oy] = \Pr[XY=x_0 \oy | \Pi=t] \Pr[\Pi=t] \ge C_2 \Pr[\Pi=t].
\]
Hence
\[
\Pr[\Pi \in \cL, Y = \oy] \ge C_2 \Pr[\Pi \in \cL] \ge C_1 C_2.
\]
This  together with \eqref{eq:bound-Y} and \eqref{eq:to-bound} would show that
\begin{align*}
H(X|\Pi'Y)
&=  \sum_{t} \sum_{y \in \cY} \Pr[\Pi'Y=ty] H(X|\Pi'Y=ty)
 \ge \sum_{t} \sum_{y \in \cY} (1-\epsilon)\Pr[\Pi Y =t y] H(X|\Pi'Y=ty) \\
&\ge \sum_{t} \sum_{y \in \cY} \Pr[\Pi Y=t y] H(X|\Pi Y=ty) \\
 &\qquad + \Pr[\Pi \in \mathcal{L},Y=\oy]  \left( \entf\left( \frac{\epsilon}{2}\min\left\{1, C_2 \frac{\Pr[x_1 \oy]}{\Pr[x_0 \oy]}\right\}  \right) - \hc(\delta/C_2) \right) - 2\epsilon K
 \\
&\ge  H(X|\Pi Y)+  C_1 C_2 \entf\left( \frac{\epsilon}{2}\min\left\{1, C_2 \frac{\Pr[x_1 \oy]}{\Pr[x_0 \oy]}\right\}  \right) - 2\epsilon K - \hc(\delta/C_2).
\end{align*}
Applying \eqref{eq:bound-X} would immediately give the claimed bound.

Our aim, then, is to show \eqref{eq:to-bound}. From now on we consider exclusively $t \in \mathcal{L}$.

The idea is to consider the indicator variable $C \defeq 1_{[X \neq x_1]}$. Since $C$ is a deterministic function of $X$, we have
\begin{equation}   \label{eq:bd-main-term}
H(X|\Pi' Y=t\oy) = H(XC|\Pi'Y=t\oy) = H(X|C, (\Pi'Y=t\oy)) + H(C| \Pi'Y=t\oy).
\end{equation}
Since $\Pr[XY = x_0 \oy | \Pi=t] = \Pr[Y=\oy | \Pi = t] \Pr[X = x_0| \Pi Y=t \oy]$, by Condition (2) we obtain
\begin{equation}    \label{eq:new-Condition-2}
\Pr[X = x_0| \Pi Y=t \oy] \ge \Pr[XY = x_0 \oy | \Pi=t] \ge C_2,
\end{equation}
and $\Pr[Y=\oy | \Pi = t] \ge C_2$. Similarly, as $\Pr[XY = x_1 \oy | \Pi=t] = \Pr[Y=\oy | \Pi = t] \Pr[X = x_1| \Pi Y=t \oy]$, we obtain by Condition (3) that
\begin{equation}    \label{eq:new-Condition-3}
\Pr[X = x_1| \Pi Y=t \oy] = \frac{\Pr[XY = x_1 \oy | \Pi=t]}{\Pr[Y=\oy | \Pi = t]} \le \frac{\delta}{C_2}.
\end{equation}
Hence using \eqref{eq:new-Condition-3}, the first term in \eqref{eq:bd-main-term} can be bounded as
\begin{align}
H(X|C, (\Pi'Y=t\oy))
&\ge  (1-\epsilon) H(X|C, (B\Pi'Y=0t\oy)) \notag \\
&\geq H(X|C, (\Pi Y=t\oy))- \epsilon K \notag \\
&= H(XC|\Pi Y=t\oy) - H(C|\Pi Y=t\oy) -  \epsilon K  \notag \\
&\ge H(X|\Pi Y=t\oy) - \hc(\delta/C_2) - \epsilon K. \label{eq:bd-main-term-1}
\end{align}

To bound the second term $H(C| \Pi'Y=t\oy )$ in \eqref{eq:bd-main-term}, we must study $\Pr[X=x_1|\Pi'Y=t\oy]$. We will use
\begin{equation}   \label{eq:fraction-expression}
\Pr[C=0|\Pi'Y=t\oy] = \Pr[X=x_1|\Pi'Y=t\oy] = \frac{\Pr[\Pi' XY=tx_1 \oy]}{\Pr[\Pi' Y=t\oy]}.
\end{equation}
Consider the numerator first. By the definition of $\pi'$,
\begin{align}
\Pr[\Pi'XY=tx_1 \oy]
&= \Pr[\Pi'=t | XY=x_1 \oy] \Pr[x_1 \oy] \notag \\
&= (\epsilon \Pr[\Pi=t | XY=x_0 \oy] + (1-\epsilon) \Pr[\Pi=t | XY=x_1 \oy]) \Pr[x_1 \oy] \notag \\
&= \epsilon \Pr[\Pi XY=t x_0 \oy] \frac{\Pr[x_1 \oy]}{\Pr[x_0 \oy]} + (1-\epsilon) \Pr[\Pi XY=t x_1 \oy]. \label{eq:numerator}
\end{align}
For the denominator of \eqref{eq:fraction-expression}, we have
\begin{equation}   \label{eq:ub-temp3}
\Pr[\Pi' Y=t\oy]
\ge \Pr[\Pi' XY=t x_0 \oy] = \Pr[\Pi XY=t x_0 \oy].
\end{equation}
By Conditions (2) and (3),
\begin{equation}   \label{eq:ub-temp4}
\frac{\Pr[\Pi XY=t x_1 \oy]}{\Pr[\Pi XY=t x_0 \oy]}
= \frac{\Pr[XY=x_1 \oy | \Pi = t]}{\Pr[XY=x_0 \oy | \Pi=t]} \le \delta/C_2.
\end{equation}
Combining \eqref{eq:fraction-expression}, \eqref{eq:numerator}, \eqref{eq:ub-temp3} and \eqref{eq:ub-temp4}, we obtain the following upper bound on \eqref{eq:fraction-expression}:
\begin{equation}   \label{eq:upperbd}
\Pr[X=x_1|\Pi'Y=t\oy]
\le   \frac{\Pr[x_1 \oy]}{\Pr[x_0 \oy]} \epsilon + (1-\epsilon) \delta/C_2.
\end{equation}
To obtain a lower bound for \eqref{eq:fraction-expression} note
\begin{align}
\Pr[\Pi' Y=t\oy]
&= \sum_{x} \Pr[\Pi' XY=tx\oy] = \sum_{x \neq x_1}\Pr[\Pi' XY=tx\oy] + \Pr[\Pi' XY=tx_1 \oy] \notag \\
&= \sum_{x \neq x_1}\Pr[\Pi XY=tx\oy] + \epsilon \Pr[\Pi XY=t x_0 \oy] \frac{\Pr[x_1 \oy]}{\Pr[x_0 \oy]} + (1-\epsilon) \Pr[\Pi XY=t x_1 \oy] \notag \\
&\le \sum_{x}\Pr[\Pi XY=tx\oy] + \epsilon \Pr[\Pi XY=t x_0 \oy] \frac{\Pr[x_1 \oy]}{\Pr[x_0 \oy]}  \notag \\
&=\Pr[\Pi Y=t\oy] + \epsilon \Pr[\Pi XY=t x_0 \oy] \frac{\Pr[x_1 \oy]}{\Pr[x_0 \oy]}   \notag \\
&\le 2 \max\left\{\Pr[\Pi Y=t\oy], \Pr[\Pi XY=t x_0 \oy] \frac{\Pr[x_1 \oy]}{\Pr[x_0 \oy]}\right\}.  \label{eq:denomenator-term}
\end{align}
Hence by \eqref{eq:fraction-expression}, \eqref{eq:numerator} and \eqref{eq:denomenator-term},
\begin{align}
\Pr[X=x_1|\Pi'Y=t\oy]
&\ge \frac{ \epsilon \Pr[\Pi XY=t x_0 \oy] \frac{\Pr[x_1 \oy]}{\Pr[x_0 \oy]} }{2 \max\{\Pr[\Pi Y=t\oy], \Pr[\Pi XY=t x_0 \oy] \frac{\Pr[x_1 \oy]}{\Pr[x_0 \oy]}\}}   \notag \\
&\ge \frac{\epsilon}{2} \min\left\{1, C_2 \frac{\Pr[x_1 \oy]}{\Pr[x_0 \oy]}\right\}.    \label{eq:lowerbd}
\end{align}
where we used $\Pr[\Pi XY=t x_0 \oy] / \Pr[\Pi Y=t\oy] = \Pr[X=x_0 | \Pi Y=t\oy] \ge C_2$ by \eqref{eq:new-Condition-2}. Thus we have shown that
\begin{equation}   \label{eq:both-bounds}
 \frac{\epsilon}{2}\min\left\{1, C_2 \frac{\Pr[x_1 \oy]}{\Pr[x_0 \oy]}\right\}
\le
\Pr[X=x_1|\Pi'Y=t\oy]
\le \frac{\Pr[x_1 \oy]}{\Pr[x_0 \oy]} \epsilon + \frac{(1-\epsilon) \delta}{C_2}.
\end{equation}
This together with \eqref{eq:bd-main-term} and \eqref{eq:bd-main-term-1} gives \eqref{eq:to-bound} as desired, as long as $\epsilon > 0$ is small enough such that the upper bound in \eqref{eq:both-bounds} is at most $1/2$.
\end{proof}

\restate{Theorem~\ref{thm:upper-bd-IC-mu-eps}}{
Consider a function $f\colon \cX \times \cY \to \cZ$ and a probability measure $\mu$ on $\cX \times \cY$ such that $\IC_\mu(f,0) > 0$. There exist positive constants $\tau,\epsilon_0$, depending on $f$ and $\mu$, such that for every $\epsilon \leq \epsilon_0$,
\[
 \IC_\mu(f,\epsilon) \le \IC_\mu(f,0) - \tau \entf(\epsilon).
\]
Moreover:
\begin{description}
 \item[Non-constant case:] Suppose that $f(a) \neq f(b)$ for two points $a,b$ in the support of $\mu$, and on the same row or column. Then one can take $\tau \geq \mu(a)^2 \mu(b)/64$, and $\epsilon_0$ depends only on $\min(\mu(a),\mu(b))$ and $|\cX \times \cY|$.
 \item[AND case:] Let $x_0,x_1 \in \cX$ and $y_0,y_1 \in \cY$. Suppose that $f(x_0y_0) = f(x_0y_1) = f(x_1y_0) = z_0$ and $f(x_1y_1) = z_1 \neq z_0$, and that $x_0y_0,x_0y_1,x_1y_0 \in \supp\mu$. Then one can take $\tau \geq \frac{\mu(x_0y_0)^2}{128} \min(\mu(x_0y_1),\mu(x_1y_0))$, and $\epsilon_0$ depends only on $|\cX \times \cY|$ and the minimum of $\mu(x_0y_0)$, $\mu(x_0y_1)$, $\mu(x_1y_0)$.
\end{description}}

\begin{proof}
In order to apply the assumption  $\IC_\mu(f,0) > 0$, we will need to use our characterization of internal-trivial measures. Consider the graph $G_\mu$ defined on $\cX \times \cY$ as given in Definition~\ref{def:graph-on-inputs}.  By Theorem~\ref{thm:internal-trivial} and Lemma~\ref{lem:structurally-internal-trivial}, the assumption $\IC_\mu(f,0) > 0$ implies the existence of  a connected component $C$ of $G_\mu$ such that $f$ is not constant on $C_A \times C_B$. Note that $C \subseteq \supp \mu$, and $C_A \times C_B$ is the corresponding rectangle given by $C$.

\paragraph{Case I:} $f$ is not constant on $C$.

As $C$ is connected, there must be two adjacent points $a, b\in C$ such that $f(a) \neq f(b)$. By our definition of adjacency in Definition~\ref{def:graph-on-inputs}, without loss of generality we can assume that $a, b$ are in the same column. Now consider any protocol $\pi$ that solves $[f, 0]$. Let $\cL_0$ be the set of the transcripts that can occur when $\pi$ runs with input $a$; formally,
\[
\cL_0 = \{t: \Pr[\Pi_a = t] > 0 \}.
\]
Clearly $\Pr[\Pi \in \cL_0] \ge \mu(a)$. As $f(a) \neq f(b)$ and $\pi$ has no error, for every $t \in \cL_0$,
\begin{equation}   \label{eq:Case-1-Condition3}
\Pr[XY = b | \Pi = t] = 0.
\end{equation}
Let
\begin{equation}   \label{eq:Case-1-Condition2}
\cL = \{t \in \cL_0 : \Pr[XY=a | \Pi =t] \ge \mu(a)/2 \}.
\end{equation}
We claim
\begin{equation}   \label{eq:Case-1-Condition1}
\Pr[\Pi \in \cL] \ge \mu(a)/2.
\end{equation}
Indeed, note
\[
 \sum_{t \in \cL_0} \Pr[\Pi=t] \Pr[XY=a | \Pi =t] = \sum_t \Pr[\Pi=t] \Pr[XY=a | \Pi =t] = \mu(a),
\]
use the trivial bound $\Pr[XY=a | \Pi =t] \le 1$, we have
\begin{align*}
\mu(a)
&=
\sum_{t \in \cL} \Pr[\Pi=t] \Pr[XY=a | \Pi =t] + \sum_{t \in \cL_0 \setminus \cL} \Pr[\Pi=t] \Pr[XY=a | \Pi =t]  \\
&\le \sum_{t \in \cL} \Pr[\Pi=t] +  \frac{\mu(a)}{2} \sum_{t \in \cL_0 \setminus \cL} \Pr[\Pi=t]
= \Pr[\Pi \in \cL] +  \frac{\mu(a)}{2} (1 - \Pr[\Pi \in \cL]),
\end{align*}
which gives $\Pr[\Pi \in \cL] \ge \mu(a)/2$, as claimed. For small enough $\epsilon$, the set $\cL$ and the points $a, b$ satisfy the three conditions in Lemma~\ref{lem:Given-good-transcripts} with $C_1 = C_2 = \mu(a)/2$ and $\delta = 0$, respectively from \eqref{eq:Case-1-Condition1}, \eqref{eq:Case-1-Condition2} and \eqref{eq:Case-1-Condition3}.  We conclude that
\[
 \IC_\mu(f,\epsilon)
 \leq
 \IC_\mu(f,0) - \frac{\mu(a)^2}{4} h\left(\frac{\mu(b)}{4} \epsilon\right) + 3\epsilon K
 \text{ whenever } \frac{\mu(b)}{\mu(a)} \epsilon \leq 1/2,
\]
where $K = \log |\cX \times \cY|$. Hence when $\epsilon \leq 1/2$, by \eqref{eq:conv-factor} we  have
\[
 \IC_\mu(f,\epsilon) \leq
 \IC_\mu(f,0) - \frac{\mu(a)^2 \mu(b)}{32} h(\epsilon) + 3\epsilon K.
\]
We can thus find $\epsilon_0 > 0$, depending only on $\mu(a),\mu(b),K$, such that for $\epsilon \leq \epsilon_0$,
\[
 \IC_\mu(f,\epsilon)
 \leq \IC_\mu(f,0) - \frac{\mu(a)^2 \mu(b)}{64} h(\epsilon).
\]

\paragraph{Case II:} $f$ is constant on $C$ but not on $C_A \times C_B$.

We first make a simple observation:
\begin{description}
\item[Property A:] For any protocol $\pi$ that performs $[f, 0]$, and for every transcript $t$ of $\pi$, there exists at least one point $b \in C$ (which can depend on $t$) such that $\Pr[XY=b | \Pi = t] = 0$.
\end{description}
Indeed, otherwise $f$ would be constant on $C_A \times C_B$ by the rectangle property of protocols (i.e. $\Pr[\Pi=t|x_1y_1]\Pr[\Pi=t|x_2y_2]=\Pr[\Pi=t|x_1y_2]\Pr[\Pi=t|x_2y_1]$ for all $x_1,x_2,y_1,y_2$).

Given a protocol $\pi$ that performs $[f,0]$ and a point $a \in C$, let the set $\cL(\pi,a)$ of transcripts be defined as
\[
\cL(\pi,a) = \{t : \Pr[XY=a | \Pi =t] \ge \mu(a)/2 \}.
\]
The same argument as in \textbf{Case I} shows that $\Pr[\Pi \in \cL(\pi,a)] \ge \mu(a)/2$. For any other point $b \in C$, define
\[
\cL(\pi,a, b) = \{t \in \cL(\pi,a): \Pr[XY=b| \Pi =t]= 0\}.
\]
Let $k \defeq |C|$; necessarily $k \ge 3$. By \textbf{Property A}, we have
\[
\cL(\pi,a) = \bigcup_{b \in C} \cL(\pi,a, b).
\]
This implies the existence of a point $b \in C$ with $\Pr[\Pi \in \cL(\pi,a,b)] \ge \Pr[\Pi \in \cL(\pi,a)] / k \ge \mu(a)/2k$. To sum up, we have shown that there exist two different points $a, b \in C \subseteq \supp \mu$ such that the set of transcripts $\cL(\pi,a,b)$ satisfies the following properties:
\begin{itemize}
\item[(1')] $\Pr[\Pi \in \cL(\pi,a,b)] \ge \mu(a)/2k$;
\item[(2')] $\Pr[XY=a | \Pi =t] \ge \mu(a)/2$ for every $t \in \cL(\pi,a,b)$;
\item[(3')] $\Pr[XY=b | \Pi =t] =0$ for every $t \in \cL(\pi,a,b)$.
\end{itemize}

Now consider a sequence of protocols $\pi_n$ that all perform $[f,0]$ and $\lim_{n\to \infty} \IC_\mu(\pi_n) = \IC_\mu(f,0)$. Fix (arbitrarily) a point $a \in C$. For every protocol $\pi_n$ we construct $\cL(\pi_n, a, b_{\pi_n})$ as above. Since there are only $k-1$ different values of $b$, by picking a subsequence of $\pi_n$ if necessary, without loss of generality, we may assume that for some point $b \in C$, $b_{\pi_n} = b$ for all $\pi_n$. Hence for every $\pi_n$ we have a set of transcripts $\cL(\pi_n, a,b)$ such that properties (1'), (2') and (3') are all satisfied.

If we compare these three conditions with the conditions in Lemma~\ref{lem:Given-good-transcripts}, we find that the only issue is that we do not know whether $a$ and $b$ are in the same row or column (in terms of the graph $G_\mu$, whether $a$ and $b$ are adjacent). 

\subparagraph{Case IIa:} $a,b$ are adjacent in $G_\mu$. As we expand on below, we can guarantee that this case happens in the \emph{AND case} (see theorem statement) by choosing $a = x_0y_0$.

For small enough $\epsilon$, the set $\cL(\pi,a,b)$ and the points $a,b$ satisfy the three conditions in Lemma~\ref{lem:Given-good-transcripts} with $C_1 = \mu(a)/2k$, $C_2 = \mu(a)/2$ and $\delta = 0$, respectively from (1'), (2') and (3'). We conclude that
\[
 \IC_\mu(f, \epsilon) \leq \IC_\mu(f,0) - \frac{\mu(a)^2}{4k} h\left(\frac{\mu(b)}{4} \epsilon\right) + 3\epsilon K \text{ whenever } \frac{\mu(b)}{\mu(a)} \epsilon \leq 1/2,
\]
where $K = \log |\cX \times \cY|$. Repeating the calculations of Case~I, we can find $\epsilon_0 > 0$, depending only on $\mu(a),\mu(b),K$, such that for $\epsilon \leq \epsilon_0$,
\[
 \IC_\mu(f,\epsilon)
 \le \IC_\mu(f,0) - \frac{\mu(a)^2 \mu(b)}{64k} h(\epsilon).
\]

Suppose now that we are in the AND case. Choosing $a = x_0y_0$, we see that Property~A must hold for some $b \in \{x_0y_1, x_1y_0\}$, since a transcript having positive probability on both $x_0y_1$ and $x_1y_0$ also has positive probability on $x_1y_1$, whereas $f(x_0y_1) \neq f(x_1y_1)$ by assumption. Property~(1') thus holds with $k = 2$, and we conclude that for $\epsilon \leq \epsilon_0$,
\[
 \IC_\mu(f,\epsilon)
 \le \IC_\mu(f,0) - \frac{\mu(a)^2 \mu(b)}{128} h(\epsilon).
\]

\subparagraph{Case IIb:} $a,b$ are not adjacent in $G_\mu$. To handle this case, we run a \emph{binary search} along a shortest path connecting $a$ and $b$ in $C$.

Pick an arbitrary point $c \in C$ in some shortest path connecting $a$ and $b$. For every $\pi_n$, sort the transcripts in $\cL(\pi_n, a, b)$ according to $p_{n,t,c} \defeq \Pr[XY=c | \Pi_n=t]$ in increasing order, where $\Pi_n$ is the random variable representing the transcript of $\pi_n$. Let $m_n$ be the median of the sequence $p_{n,t,c}$ according to the conditional probability measure $\nu_n (t) \defeq \Pr[\Pi_n=t | t \in \cL(\pi_n,a,b)]$, i.e.,
\begin{equation}   \label{eq:median}
\nu_n(\{t\in \cL(\pi_n, a,b): p_{n,t,c} \le m_n \}), \nu_n(\{t\in \cL(\pi_n, a,b): p_{n,t,c} \geq m_n \}) \geq 1/2.
\end{equation}
Such a median always exists: if $m_n$ is the smallest value such that $\nu_n(\{t\in \cL(\pi_n, a,b): p_{n,t,c} \le m_n \}) \geq 1/2$ then $\nu_n(\{t\in \cL(\pi_n, a,b): p_{n,t,c} \ge m_n \}) = 1 -\nu_n(\{t\in \cL(\pi_n, a,b): p_{n,t,c} < m_n \}) \geq 1/2$.

As trivially $m_n \in [0,1]$, the sequence $m_n$ must have a convergent subsequence. Again by picking a subsequence from $m_n$ if necessary, we may assume that the sequence $m_n$ itself is convergent, say $\lim_{n\to \infty} m_n = m$; moreover, if $m > 0$, by picking another subsequence we can assume that $m_n \geq m/2$ for all $n$. The \emph{binary search} algorithm is then given as:
\begin{itemize}
\item If $m =0$, update the set of transcripts to
\begin{equation}   \label{eq:new-L-0}
\cL(\pi_n, a, c) \defeq \{t \in \cL(\pi_n, a, b): p_{n,t,c} \le m_n\},
\end{equation}
and continue the algorithm with $b$ replaced by $c$;
\item If $m >0$, update the set of transcripts to
\begin{equation}  \label{eq:new-L-positive}
\cL(\pi_n, c, b) \defeq \{t \in \cL(\pi_n, a, b): p_{n,t,c} \ge m_n\},
\end{equation}
and continue the algorithm with $a$ replaced by $c$.
\end{itemize}
We argue that the three properties are roughly preserved. In the case $m=0$, Property~(2') is kept, while Properties~(1') and~(3') change to
\[
\Pr[\Pi_n \in \cL(\pi_n,a,c)] \ge \mu(a)/4k \quad \text{and\ }\quad
\Pr[XY=c | \Pi_n =t] \le m_n, \ \forall\ t \in \cL(\pi_n,a,c),
\]
respectively. In the case $m > 0$, Property~(3') is preserved while Properties~(1') and~(2') change to
\[
\Pr[\Pi_n \in \cL(\pi_n,c,b)] \ge \mu(a)/4k \quad \text{and\ }\quad
\Pr[XY=c | \Pi_n =t] > m/2, \ \forall\ t \in \cL(\pi_n,c,b).
\]
In either case, we have seen that the new set of transcripts $\cL(\pi_n, a, b)$ together with the new two points $a$ and $b$ satisfy Condition (1), (2) and (3) in Lemma~\ref{lem:Given-good-transcripts} with proper constants (e.g., $\delta_n$ in Condition (3) is at most $m_n$ for protocol $\pi_n$, and $m_n \to 0$). After finitely many steps, the binary search algorithm has to stop and return two adjacent points $a$ and $b$. Suppose that it stops after $s$ steps; note that $s \leq \lceil \log k \rceil$. Lemma~\ref{lem:Given-good-transcripts} then gives the upper bound
\begin{equation} \label{eq:caseII-ub}
 \IC_\mu(f,\epsilon) \leq \IC_\mu(\pi_n) - \frac{\mu(a)}{2^{s+1} k} C_2 h \left( \frac{\epsilon}{2}\min\{1, C_2 R\}  \right) + 3\epsilon K + \hc(\delta_n/C_2).
\end{equation}
for some $C_2,R,K > 0$ (where $C_2, R$ depend on $\mu$) and a sequence $\delta_n$ tending to zero, assuming that
\[
 R \epsilon + (1-\epsilon) \delta_n/C_2 \leq 1/2 \text{ and } \delta_n/C_2 \leq 1/2.
\]
By picking a subsequence, we can assume that $\delta_n \leq C_2/4$ for all $n$. Lemma~\ref{lem:Given-good-transcripts} then applies for all $\epsilon \leq 1/(4R)$. Taking the limit of the right-hand side of~\eqref{eq:caseII-ub} as $n\to\infty$, we obtain
\[
 \IC_\mu(f,\epsilon) \leq \IC_\mu(f,0) - \frac{\mu(a)}{2^{s+1}k} C_2  h \left( \frac{\epsilon}{2}\min\{1, C_2 R\}  \right) + 3\eps K = \IC_\mu(f,0) - \Omega(h(\eps)). \qedhere
\]
\end{proof}

\subsubsection{Proof of Theorem~\ref{thm:non-distri-error-lowerbd}}\label{proof:non-distri-error-lowerbd}

\restate{Theorem~\ref{thm:non-distri-error-lowerbd}}{
For all $f, \mu, \epsilon$, we have
\[
\IC_\mu(f,\epsilon) \ge \IC_\mu(f,0) - 4 |\cX| |\cY| \hc(\sqrt{\epsilon}).
\]
}
\begin{proof}[Proof of Theorem \ref{thm:non-distri-error-lowerbd}]
Without loss of generality assume that $\mu$ is  a full-support distribution as otherwise we can approximate it by a sequence of full-support distributions and appeal to the continuity of $\IC_\nu(f,\epsilon)$ with respect to $\nu$. Consider a protocol $\pi$ that performs $[f, \epsilon]$. For every leaf $\ell$ of $\pi$, let $z_\ell$ and $\mu_\ell$ respectively denote the output of the leaf, and the distribution of the inputs conditioned on  the leaf $\ell$.  We will complete it into a protocol $\pi'$ that performs $[f, 0]$, as follows.

\begin{framed}
On input $(X,Y)$:
\begin{itemize}
\item Alice and Bob run the protocol $\pi$  and reach a leaf $\ell$;
\item For every $(x,y) \in \Omega_\ell \defeq \{(x,y): f(x,y)\neq z_\ell\}$, Alice and Bob verify whether $XY = xy$, as follows:
\begin{itemize}
\item If $\mu_\ell(x) \le \mu_\ell(y)$,  Alice reveals whether $X = x$ to Bob, and if yes, Bob reveals whether $Y=y$ to Alice. If $XY=xy$, they terminate.

\item If $\mu_\ell(x) > \mu_\ell(y)$,  Bob initiates the verification process.
\end{itemize}
\end{itemize}
\end{framed}

Clearly, in the end, either both Alice and Bob already revealed their inputs to each other, or otherwise they know $XY \notin \Omega_\ell$, and hence $z_\ell$ is the correct output.  Therefore $\pi'$ performs the task $[f, 0]$.

Next we analyze $\IC_\mu(\pi')$. Let $\pi_{\ell, xy}$ denote the sub-protocol that starts with the distribution $\mu_\ell$ and verifies whether $XY = xy$.  In the case when Alice initiates the verification procedure, we have
\[
\IC_{\mu_\ell}(\pi_{\ell,xy})
= h(\mu_\ell(x)) + \mu_\ell(x)  h\left(\frac{\mu_\ell(x,y)}{\mu_\ell(x)}\right)
\le h(\mu_\ell(x)) + \mu_\ell(x) \le  2 \hc(\mu_\ell(x)),
\]
where by an abuse of notation we are denoting by $\mu_\ell(x)$ the marginal of $\mu_\ell$ on $x$. We can obtain a similar bound for the case where Bob initiates the process, and hence
\begin{align*}
\IC_{\mu_\ell}(\pi_{\ell,xy})
	&\le 2 \min\{\hc(\mu_\ell(x)), \hc(\mu_\ell(y)) \}\\
	&= 2\hc\Big(\mu_\ell(x, y) + \min\{\Pr_{\mu_\ell}[X \neq x, Y=y], \Pr_{\mu_\ell}[X =x, Y \neq y]\}\Big)   \\
	&\le  2\hc(\mu_\ell(x, y))  + 2 \hc\Big( \min\{\Pr_{\mu_\ell}[X \neq x, Y=y], \Pr_{\mu_\ell}[X =x, Y \neq y]\}\Big)
\end{align*}
by the subadditivity of $\hc$.  Using the monotonicity of $\hc$ together with  $\min\{a,b\} \leq \sqrt{ab}$, we obtain that
\begin{equation} \label{eq:NDLB-IC-leaf}
\IC_{\mu_\ell}(\pi_{\ell,xy})
\le 2 \hc(\mu_\ell(x, y))  +   2\hc\Big( \sqrt{\Pr_{\mu_\ell}[X=x, Y\neq y] \Pr_{\mu_\ell}[X \neq x, Y=y]} \Big)
\end{equation}
holds for every leaf $\ell$ and $(x,y) \in \Omega_\ell$. Let $\Pi_{\ell,xy}$ denote the transcript of  $\pi_{\ell,xy}$. Since $\pi_{\ell,xy}$  is a deterministic protocol, we have
$H_{\mu_\ell}(\Pi_{\ell,xy}|XY)=0$, and thus
\[
	\IC_{\mu_\ell}(\pi_{\ell,xy})
	= I(\Pi_{\ell,xy} ; Y | X) + I(\Pi_{\ell,xy} ; X | Y)
	= H_{\mu_\ell}(\Pi_{\ell,xy} | X) + H_{\mu_\ell}(\Pi_{\ell,xy} | Y).
\]
Thus the sub-additivity of entropy implies that the information cost of running all the protocols $\pi_{\ell,xy}$ (for all $x,y \in \Omega_\ell$) is bounded by the sum of their individual information cost. Let $\ell$ be a leaf of $\pi$ sampled by running $\pi$ on a random input.  By \eqref{eq:NDLB-IC-leaf},
\begin{align}
\IC_\mu(\pi') - \IC_\mu(\pi)
&\le	\Ex_\ell  \sum_{xy \in \Omega_\ell}   \IC_{\mu_\ell}(\pi_{\ell,xy})
=	\sum_{(x,y) \in \cX \times \cY} \Ex_\ell \charf{z_\ell \neq f(x,y)}  \IC_{\mu_\ell}(\pi_{\ell,xy}) \notag\\
&\le  \sum_{(x,y) \in \cX \times \cY} 2 \Ex_\ell \charf{z_\ell \neq f(x,y)} \hc(\mu_\ell(x, y)) \ +    \notag\\
&\phantom{=} \sum_{(x,y) \in \cX \times \cY} 2 \Ex_\ell \charf{z_\ell \neq f(x,y)} \hc\Big( \sqrt{\Pr_{\mu_\ell}[X=x, Y\neq y] \Pr_{\mu_\ell}[X \neq x, Y=y]} \Big)    \notag\\
&\le  \sum_{(x,y) \in \cX \times \cY}  2\hc\Big(\Ex_\ell \charf{z_\ell \neq f(x,y)} \mu_\ell(x, y) \Big)  \ +  \notag \\
&\phantom{=} \sum_{(x,y) \in \cX \times \cY}  2\hc\Big(\Ex_\ell \sqrt{\charf{z_\ell \neq f(x,y)} \Pr_{\mu_\ell}[X=x, Y\neq y] \Pr_{\mu_\ell}[X \neq x, Y=y]} \Big) \label{eq:NDLB-diff}
\end{align}
where we used the concavity of $\hc$ in the last step.

For the first summand, we have that for every $(x, y)$,
\begin{align}
\Ex_\ell \charf{z_\ell \neq f(x,y)} \mu_\ell(x, y) &= \sum_\ell \Pr[XY=xy, \pi \text{\ reaches\ }\ell] \charf{z_\ell \neq f(x,y)}  \notag \\ 
&= \sum_\ell \Pr[\pi \text{\ reaches\ }\ell\ |\ XY=xy] \mu(xy) \charf{z_\ell \neq f(x,y)}   \notag \\
&= \mu(xy) \sum_\ell \Pr[\pi_{x,y} \text{\ reaches\ }\ell] \charf{z_\ell \neq f(x,y)} = \mu(xy) \Pr[\pi(x,y) \neq f(x,y)] \notag \\
&\le \mu(xy) \epsilon \le \epsilon, \label{eq:NDLB-summand-1}
\end{align}
where we used that by definition $\mu_\ell(xy) = \Pr[(X,Y)=(x,y)\ |\ \pi \text{\ reaches\ }\ell]$, and the fact that the protocol $\pi$ performs the task $[f, \epsilon]$.

For the second summand in (\ref{eq:NDLB-diff}), since $\mu_\ell$ is obtained by scaling rows and columns of $\mu$, we have
\[
	\frac{\Pr_\mu[X=x,Y=y] \Pr_\mu[X\neq x,Y\neq y] }{\Pr_\mu[X=x,Y\neq y] \Pr_\mu[X\neq x,Y= y]}=\frac{\Pr_{\mu_\ell}[X=x,Y=y] \Pr_{\mu_\ell}[X\neq x,Y\neq y] }{\Pr_{\mu_\ell}[X=x,Y\neq y] \Pr_{\mu_\ell}[X\neq x,Y= y]}
\]
Define (recall that we assumed $\mu$ is of full support)
\[ a_\ell = \charf{z_\ell \neq f(x,y)} \frac{\Pr_{\mu_\ell}[X=x, Y=y]}{\Pr_{\mu}[X=x,Y=y]},
\quad
b_\ell = \frac{\Pr_{\mu_\ell}[X\neq x, Y\neq y]}{\Pr_{\mu}[X\neq x,Y\neq y]},
\]
and note that
\begin{align}   \label{eq:NDLB-summand-2-estimate}
\begin{split}
\charf{z_\ell \neq f(x,y)} \Pr_{\mu_\ell}[X=x, Y\neq y] \Pr_{\mu_\ell}[Y=y, X \neq x]
= a_\ell b_\ell \Pr_{\mu}[X=x,Y\neq y] \Pr_{\mu}[X\neq x,Y= y]  \le a_\ell b_\ell.
\end{split}
\end{align}
Since
\[
\Ex_\ell a_\ell = \frac{1}{\mu(xy)} \Ex_\ell \charf{z_\ell \neq f(x,y)} \mu_\ell(x, y) = \Pr[\pi(x,y) \neq f(x,y)] \le \epsilon
\]
by \eqref{eq:NDLB-summand-1}, and $\Ex_\ell b_\ell = 1$, we can bound the second summand in (\ref{eq:NDLB-diff}) using the Cauchy-Schwarz inequality by
\begin{equation}   \label{eq:NDLB-summand-2-bd}
\Ex_\ell \sqrt{a_\ell b_\ell}
\le \sqrt{\Ex_\ell a_\ell \Ex_\ell b_\ell}  \le \sqrt{\epsilon}.
\end{equation}

Using \eqref{eq:NDLB-diff}, \eqref{eq:NDLB-summand-1},   \eqref{eq:NDLB-summand-2-bd}, and the monotonicity of $\hc$,  we have
 \begin{align*}
 \IC_\mu(f,0) - \IC_\mu(\pi)
 &\le \IC_\mu(\pi') - \IC_\mu(\pi) \le  2 |\cX \times \cY| \hc(\epsilon) + 2 |\cX \times \cY| \hc(\sqrt{\epsilon}) \le 4|\cX \times \cY| \hc(\sqrt{\epsilon}).  \qedhere
 \end{align*}
\end{proof}

\subsubsection{Proof of Proposition~\ref{prop:ext_XOR}}\label{sec:proof:ext_XOR}
\restate{Proposition~\ref{prop:ext_XOR}}{
Let $\mu$ be the distribution defined as
\[
 \mu = \begin{array}{|c|c|} \hline 1/2 & 0 \\\hline 0 & 1/2 \\\hline \end{array} \, .
\]
Then $\IC_\mu^\ext(\XOR,\epsilon) \geq \IC_\mu^\ext(\XOR,0) - 3\epsilon$.
}
\begin{proof}[Proof of Proposition~\ref{prop:ext_XOR}]
 The distribution $\mu$ is supported on the inputs $(0,0),(1,1)$, on which the output is~$0$. It is easy to check (and follows from the analysis below) that $\IC_\mu^\ext(\XOR,0) = 1$, since at the end of any protocol that performs $[\XOR,0]$, we know whether the input is $(0,0)$ or $(1,1)$.

Consider a protocol $\pi$ having at most $\epsilon$~error on every input, where $\epsilon \leq 1/3$. Let $\cL_z$ be the set of transcripts  on which the output is~$z$; Every transcript is either in $\cL_0$ or $\cL_1$.

For each transcript $t$   achievable from the initial distribution, the distribution of $XY|t$ is of the form $\begin{array}{|c|c|} \hline p & 0 \\\hline 0 & 1-p \\\hline \end{array}$ for some $p = p(t)$. Bayes' law shows that
\[
 \Pr[t|00] = \frac{\Pr[00|t] \Pr[t]}{\Pr[00]} = 2p(t)\Pr[t], \quad
 \Pr[t|11] = \frac{\Pr[11|t] \Pr[t]}{\Pr[11]} = 2(1-p(t))\Pr[t].
\]
For each transcript $t$, the rectangle property says $\Pr[t|00] \Pr[t|11] = \Pr[t|10] \Pr[t|01]$. Hence
\[
 \frac{\Pr[t|01] + \Pr[t|10]}{2} \geq \sqrt{\Pr[t|01]\Pr[t|10]} = \sqrt{\Pr[t|00]\Pr[t|11]} = 2 \sqrt{p(t)(1-p(t))}\Pr[t].
\]

The protocol $\pi$ has distributional error at most~$\epsilon$, and so
\[
\Pr[\cL_1] =\sum_{t \in \cL_1} \Pr[t] \le\epsilon, \quad{and}\quad \Pr[\cL_0] = \sum_{t \in \cL_0} \Pr[t] \geq 1-\epsilon.
\]
On the other hand, since $\pi$ has point-wise error at most $\eps$, we have
\begin{equation}
\label{eq:xor_L0_AMGM}
 \sum_{t \in \cL_0} \sqrt{p(t)(1-p(t))} \Pr[t] \leq
 \frac{1}{2} \sum_{t \in \cL_0} \frac{\Pr[t|01] + \Pr[t|10]}{2} \leq \frac{\epsilon}{2}.
\end{equation}
Finally,
\[
 I(XY;\Pi) = H(XY) - H(XY|\Pi) = 1 - \sum_t \Pr[t] h(p(t)).
\]

Let $T$ be a random transcript conditioned on belonging to $\cL_0$, and consider the random variable $P \defeq p(T)$. On the one hand,
\[
 1 - I(XY;\Pi) = \sum_t \Pr[t] h(p(t)) \leq \Pr[\cL_0] \Ex[h(P)] + \Pr[\cL_1] \leq \Ex[h(P)] + \epsilon.
\]
On the other hand, by (\ref{eq:xor_L0_AMGM})
\[
 \Ex[\sqrt{P(1-P)}] \leq \frac{\epsilon}{2\Pr[\cL_0]} \leq \frac{\epsilon}{2(1-\epsilon)} \leq \epsilon,
\]
as we assumed $\epsilon \le 1/3$. Thus it suffices to verify that $\Ex[h(P)] \le 2 \eps$ for any random variable $P$ that takes values in $[0,1]$ and satisfies $\Ex[\sqrt{P(1-P)}]\le \eps$. Indeed this  would imply
\[
 1 - I(XY;\Pi) \leq \Ex[h(P)] + \epsilon \le 3\epsilon,
\]
alternatively, $\IC^\ext_\mu(\XOR,\epsilon) \geq 1 - 3\epsilon$ for all $\epsilon \le 1/3$, which in turn shows that $\IC^\ext_\mu(\XOR,0) = 1$.

Apply the change of variable $Q = \sqrt{P(1-P)}$, so that the assumption  simplifies to  $\Ex[Q] \leq \epsilon$; note that $0 \leq Q \leq 1/2$, and $P = (1 \pm \sqrt{1-4Q^2})/2$. Since $h(P) = h(1-P)$, we conclude that
\[
 \Ex[h(P)] = \Ex[\phi(Q)], \text{ where } \phi(Q) = h\left( \frac{1+\sqrt{1-4Q^2}}{2} \right).
\]
It is routine to check that the function $\phi$ is monotonously  increasing and strictly convex. Since $\phi$ is continuous and the domain of $Q$ is restricted to $[0,1/2]$, the maximum of $\Ex[\phi(Q)]$ under the constraint $\Ex[Q] \leq \epsilon$ is achieved\footnote{Prokhorov's theorem.}. Since $\phi$ is increasing, the maximum value of $\Ex[\phi(Q)]$ is achieved when $\Ex[Q] = \epsilon$. Since $\phi$ is strictly convex, the maximum value of $\Ex[\phi(Q)]$ is achieved on a measure supported on the endpoints $0,1/2$. Thus this measure must be $\Pr[Q=1/2] = 2\epsilon$ and $\Pr[Q=0] = 1 - 2\epsilon$. So
\[
 \Ex[h(P)]= \Ex[\phi(Q)] \leq (1-2\epsilon) \phi(0) + 2\epsilon \phi(1/2) = 2\epsilon.   \qedhere
\]
\end{proof}
%
%
%
%

\subsection{Information complexity with distributional error}\label{sec:proof:distributional-error}

\restate{Theorem~\ref{thm:bd-IC-mu--distributional}}{
Let $\mu$ be a probability measure on $\cX \times \cY$, and let $f\colon \cX \times \cY \to \cZ$ satisfy $\IC_\mu(f,\mu,0)>0$.  We have 
\[
 \IC_\mu(f,\mu,0) - 4 |\cX| |\cY| \hc(\sqrt{\epsilon/\alpha}) \le \IC_\mu(f,\mu,\epsilon) \le  \IC_\mu(f,\mu,0) - \frac{\alpha^2}{4} \entf\left(\epsilon \alpha/4 \right) + 3\epsilon \log |\cX \times \cY|,
\]
where $\alpha = \min_{xy \in \supp \mu} \mu(x,y)$.
}

\begin{proof}[Proof of Theorem \ref{thm:bd-IC-mu--distributional}] 

\textbf{Lower bound:} The proof is almost identical to the proof of Theorem~\ref{thm:non-distri-error-lowerbd}, however now we start from a distribution $\mu$ that possibly does not have  full support.  Consider a protocol $\pi$ that performs $[f,\mu, \epsilon]$, and define  $z_\ell$ and $\mu_\ell$ as in the proof of Theorem~\ref{thm:non-distri-error-lowerbd}. Now the new protocol  $\pi'$ that performs $[f, \mu, 0]$, is defined similar to the one in the proof of  Theorem \ref{thm:non-distri-error-lowerbd} with the only difference that the verification is only performed on the set
\[ \Omega'_\ell \defeq \{(x,y) : f(x,y)\neq z_\ell\} \cap \supp\mu. \]
Obviously $\pi'$ solves $[f,\mu,0]$. Note that $\pi$ has point-wise error at most $\eps/\alpha$ on every point in $\supp\mu$. Thus the same analysis of  Theorem~\ref{thm:non-distri-error-lowerbd} shows
\[
\IC_\mu(f,\mu,0) - \IC_\mu(\pi)  \le \IC_\mu(\pi') - \IC_\mu(\pi) \le  4|\cX \times \cY| \hc(\sqrt{\epsilon/\alpha}).   
\]

\textbf{Upper bound:}  For every $z \in \cZ$, let $\cX_z$ denote the set of all $x \in \cX$ such that for some $xy \in \supp\mu$, we have $f(x,y)=z$. Similarly let $\cY_z$ denote the set of all $y \in \cY$ such that for some $xy \in \supp\mu$, we have $f(x,y)=z$. The assumption $\IC_\mu(f,\mu,0)>0$ implies the existence of  distinct $z_1,z_2 \in \cZ$ such that either  $\cX_{z_1} \cap \cX_{z_2} \neq \emptyset$ or  $\cY_{z_1} \cap \cY_{z_2} \neq \emptyset$, otherwise,  Alice and Bob can exchange  the unique values of $z$ determined by their inputs, and since with probability $1$, these two values coincide, they can perform $[f,\mu,0]$ with zero information cost. Hence without loss of generality assume  there exists $x_0\oy, x_1\oy \in \supp\mu$  such that $f(x_0,\oy) \neq f(x_1,\oy)$ and $\mu(x_0\oy) \ge \mu(x_1\oy)$. We will apply Lemma~\ref{lem:Given-good-transcripts}. Consider a protocol $\pi$ with transcript $\Pi$ that performs $[f,\mu,0]$, and define the set of transcripts 
$$\cL \defeq \{t \ | \ \Pr[x_0\oy|t] \ge \Pr[x_0\oy]/2\},$$
and note that 
$$\Pr[x_0\oy]=\sum_t  \Pr[x_0\oy|t]\Pr[\Pi=t] \le \Pr[\Pi \in \cL] + \Pr[\Pi \not\in \cL] \frac{\Pr[x_0\oy]}{2},$$
which implies  $\Pr[\Pi \in \cL]  \ge \frac{\Pr[x_0\oy]}{2} \ge \frac{\alpha}{2}$. Note that the protocol $\pi'$ defined in Figure \ref{fig:modified_protocol} performs $[f,\mu,\eps]$. Furthermore we can set $C_1=C_2=\alpha/2$ and $\delta=0$, to obtain 
$$\IC_\mu(\pi') \le\IC_\mu(\pi) - \frac{\alpha^2}{4} \entf\left( \frac{\epsilon \alpha}{4} \right) + 3\epsilon \log |X \times Y|,$$
for $\eps \le 1/2$. As $-\frac{\alpha^2}{4} \entf\left(\epsilon \alpha/4 \right) + 3\epsilon \log |\cX \times \cY| \ge 0$ for $\epsilon \ge 1/2$, this finishes the proof for all $0\le \epsilon \le 1$.
\end{proof}

%
%

\subsection{Non-distributional prior-free information cost}\label{sec:proof:non-distributional-ub}

In this section we  prove Theorem~\ref{thm:non-distributional-ub}, that is
\[ \IC(f,\epsilon) \leq \IC(f,0) - \Omega(h(\epsilon)).\]
First we present some lemmas, and the proof of Theorem~\ref{thm:non-distributional-ub} will appear at the end of this section.

While Theorem~\ref{thm:upper-bd-IC-mu-eps} does not give a uniform bound on the parameters $C,\epsilon_0$ for every distribution $\mu$, it does for distributions in which there exist two elements with different outputs, that are in the same row or column and whose probabilities are $\Omega(1)$. We will show that for any non-constant function, the worst distribution is of this form; this might be of independent interest.

We start with the following simple lemma.

\begin{lemma} \label{lem:blocks}
 Let $f\colon \cX \times \cY \to \cZ$. Suppose that $\supp \mu \subseteq \bigcup_i \cX_i \times \cY_i$, where the $\cX_i$ and the $\cY_i$ are disjoint. Then
\[
 \IC_\mu(f,0) = \sum_i \mu(\cX_i \times \cY_i) \IC_{\mu|_{\cX_i \times \cY_i}}(f|_{\cX_i \times \cY_i}).
\]	
\end{lemma}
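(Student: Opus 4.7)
The plan is to exploit the fact that the block index $W$, defined by $(X,Y) \in \cX_W \times \cY_W$, is simultaneously a deterministic function of $X$ and a deterministic function of $Y$: since the $\cX_i$ are disjoint and $\supp\mu \subseteq \bigcup_i \cX_i \times \cY_i$, Alice can read off $W$ from $x$ alone (the unique $i$ with $x \in \cX_i$), and symmetrically Bob reads $W$ from $y$; on $\supp\mu$ these two computations agree. This observation will let me decompose both protocols and information cost along $W$.

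For the upper bound, I would fix $\delta > 0$, pick for each $i$ a protocol $\pi_i$ performing $[f|_{\cX_i \times \cY_i}, 0]$ with $\IC_{\mu|_{\cX_i \times \cY_i}}(\pi_i) \le \IC_{\mu|_{\cX_i \times \cY_i}}(f|_{\cX_i \times \cY_i}) + \delta$, and define $\pi$ by having each player silently compute their copy of $W$ from their own input and then execute $\pi_W$. Since both players agree on $W$ with probability $1$ and $\pi_W$ is zero-error on $\cX_W \times \cY_W$, $\pi$ performs $[f,0]$. Conditioning on $W$ (and using that no bits are exchanged to determine it) gives
\[
 \IC_\mu(\pi) = \sum_i \mu(\cX_i \times \cY_i)\, \IC_{\mu|_{\cX_i \times \cY_i}}(\pi_i) \le \sum_i \mu(\cX_i \times \cY_i)\, \IC_{\mu|_{\cX_i \times \cY_i}}(f|_{\cX_i \times \cY_i}) + \delta,
\]
and letting $\delta \to 0$ yields the $\le$ direction.

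For the lower bound, let $\pi$ be any protocol performing $[f,0]$, with transcript $\Pi$. Since $W$ is a function of $Y$, the chain rule gives
\[
 I(X;\Pi|Y) = I(XW;\Pi|Y) = I(W;\Pi|Y) + I(X;\Pi|Y,W) = \sum_i \Pr[W=i]\, I(X;\Pi|Y, W=i),
\]
and conditioned on $W=i$ the joint law of $(X,Y)$ is $\mu|_{\cX_i \times \cY_i}$, so the inner term equals the corresponding contribution to $\IC_{\mu|_{\cX_i \times \cY_i}}(\pi)$. A symmetric argument, using that $W$ is also a function of $X$, handles $I(Y;\Pi|X)$. Adding the two identities gives
\[
 \IC_\mu(\pi) = \sum_i \mu(\cX_i \times \cY_i)\, \IC_{\mu|_{\cX_i \times \cY_i}}(\pi).
\]
Each restriction of $\pi$ still performs $[f|_{\cX_i \times \cY_i}, 0]$, so $\IC_{\mu|_{\cX_i \times \cY_i}}(\pi) \ge \IC_{\mu|_{\cX_i \times \cY_i}}(f|_{\cX_i \times \cY_i})$, and taking the infimum over $\pi$ completes the proof. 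There is no real obstacle; the only mild subtlety is that the infimum in $\IC_\mu(f,0)$ need not be attained, which is why the upper bound construction goes through a $\delta \to 0$ limit, while the chain-rule identity holds verbatim for every protocol $\pi$.
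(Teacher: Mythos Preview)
Your lower bound is essentially identical to the paper's: both condition on the block index (the paper calls it $J$, you call it $W$), use that it is determined by each of $X$ and $Y$ separately, and apply the chain rule to decompose $\IC_\mu(\pi)$ as a convex combination over blocks.

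There is, however, a small but genuine gap in your upper bound. The task $[f,0]$ requires $\pi(x,y) = f(x,y)$ for \emph{every} $(x,y) \in \cX \times \cY$, not merely with probability~$1$ under $\mu$. Your protocol, in which each player silently computes $W$ from their own input and runs $\pi_W$, is not well-defined when the two copies of $W$ disagree (say $x \in \cX_i$, $y \in \cY_j$ with $i \neq j$) or when $x \notin \bigcup_i \cX_i$; and even if you patch the definition arbitrarily, there is no reason it outputs $f(x,y)$ on such inputs. So your sentence ``$\pi$ performs $[f,0]$'' is not justified --- what you have shown is only that $\pi$ performs $[f,\mu,0]$. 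The paper closes this gap by having the players \emph{exchange} their block indices and, in the probability-zero event of disagreement or an out-of-range input, exchange their full inputs and compute $f$ directly. This exchange costs no additional information under $\mu$ (each player already knows the other's block index), and it restores correctness on every input. With this one-line fix your argument goes through; your handling of the $\delta \to 0$ limit is fine.
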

\begin{proof}
 The upper bound is easy to see: the players exchange which block they are in, and assuming that they are in the same block, they run an almost optimal protocol for that block. If they are not in the same block, then they exchange inputs, but this happens with probability zero.

 In the other direction, let $J$ be the block in which Alice's input lies. Since the value of $J$ is determined by the value of $X$, for a protocol $\pi$ with transcript $\Pi$, we have
\[
I(Y;\Pi|X) =    I(Y;\Pi|XJ) = \sum_j \Pr[J=j] I(Y;\Pi|X,J=j) = \sum_j \mu(\cX_j \times \cY_j) I(Y;\Pi|X,J=j).
\]
With probability~$1$, $J$ is also the block in which Bob's input lies, and so
\[
 \IC_\mu(\pi) = \sum_j \mu(\cX_j \times \cY_j) [I(X;\Pi|Y,J=j) + I(Y;\Pi|X,J=j)] \geq \sum_j \mu(\cX_j \times \cY_j) \IC_{\mu|_{\cX_j \times \cY_j}}(f|_{\cX_j \times \cY_j}). \qedhere
\]
\end{proof}

We can therefore restrict our attention (for now) to distributions based on a single block. 
The crucial observation is the following.

\begin{lemma} \label{lem:improvement}
 Let $f\colon \cX \times \cY \to \cZ$, and let $\mu$ be a distribution such that $f$ is constant on its support, each atom in the support has probability at least $\alpha$, and the marginals of the support are $\cX,\cY$. If $f$ is not constant then there is a distribution $\nu$ such that $\IC_\nu(f,0) \geq \IC_\mu(f,0) + C(\alpha)$, where $C(\alpha) > 0$ depends only on $\alpha,|\cX|,|\cY|$.
\end{lemma}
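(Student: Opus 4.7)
The approach is to perturb $\mu$ by a small mass at a single point where $f$ deviates from its constant value $z_0$ on $\supp\mu$, and then to derive an exact decomposition of $\IC_\nu(\pi)$ that isolates the resulting information gain.

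First I would select the perturbation. By non-constancy of $f$ on $\cX \times \cY$, pick $(x^*, y^*)$ with $f(x^*, y^*) \neq z_0$. The marginal hypothesis guarantees some $(x^*, y_0), (x_0, y^*) \in \supp\mu$, so the marginals of $\mu$ satisfy $\mu(x^*), \mu(y^*) \geq \alpha$; on the other hand $\mu(x^*, y^*) = 0$, since $(x^*, y^*) \in \supp\mu$ would force $f(x^*, y^*) = z_0$. I would take $\beta = \alpha/(4|\cX \times \cY|)$ and set $\nu = (1-\beta)\mu + \beta\, \delta_{(x^*, y^*)}$.

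The heart of the argument is the identity
\begin{equation*}
\IC_\nu(\pi) \;=\; (1-\beta)\, \IC_\mu(\pi) \,+\, \nu(x^*)\, h(\tilde p) \,+\, \nu(y^*)\, h(\tilde q), \qquad \tilde p = \tfrac{\beta}{\nu(x^*)},\ \tilde q = \tfrac{\beta}{\nu(y^*)},
\end{equation*}
valid for every zero-error $\pi$. To establish it, condition on $X = x^*$ and introduce the binary indicator $A$ defined by $A = 1$ iff $Y \neq y^*$. Because $\mu(x^*, y^*) = 0$, the value of $A$ records which component of $\nu$ produced the sample ($A = 1$ for the $\mu$-component, $A = 0$ for the $\delta_{(x^*, y^*)}$-component). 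Because $\pi$ is zero-error and $f$ equals $z_0$ on $\supp\mu$ but differs from $z_0$ at $(x^*, y^*)$, the output of $\pi$ determines $A$ from $(X = x^*, \Pi)$, so $H_\nu(A \mid \Pi, X = x^*) = 0$. A chain-rule computation---using also that $A$ is a function of $Y$ given $X = x^*$, and that conditional on $A = 1$ and $X = x^*$ the joint $(Y, \Pi)$ has the same distribution as under $\mu$---yields $I_\nu(Y; \Pi \mid X = x^*) = h(\tilde p) + p\, I_\mu(Y; \Pi \mid X = x^*)$, where $p = 1 - \tilde p$. Multiplying by $\nu(x^*)$ and summing over $x$ (using $\nu(X = x) = (1-\beta)\mu(X = x)$ and $\nu(Y \mid X = x) = \mu(Y \mid X = x)$ for $x \neq x^*$) gives the first new term, and the symmetric argument applied to $Y = y^*$ produces the second.

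To conclude, I would apply $h(q) \geq q\log(1/q)$ from~\eqref{eq:h-estimate}, valid because $\tilde p, \tilde q \leq 1/2$ for our choice of $\beta$. This yields $\nu(x^*)\, h(\tilde p) \geq \beta \log(\nu(x^*)/\beta) \geq \beta \log(\alpha/(2\beta))$, and similarly for the $\tilde q$ term. Taking the infimum over zero-error $\pi$ and using $\IC_\mu(\pi) \geq \IC_\mu(f, 0)$ together with the trivial bound $\IC_\mu(f, 0) \leq \log|\cX \times \cY|$, the choice $\beta = \alpha/(4|\cX \times \cY|)$ delivers
\begin{equation*}
\IC_\nu(f, 0) - \IC_\mu(f, 0) \;\geq\; \beta\bigl(2 + \log|\cX \times \cY|\bigr) \;=:\; C(\alpha),
\end{equation*}
with $C(\alpha) = \Omega(\alpha/|\cX \times \cY|)$ depending only on $\alpha, |\cX|, |\cY|$. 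The main technical step is verifying that $A$ is a function of $(X, \Pi)$ given $X = x^*$; this is where the hypothesis that $f$ is constant on $\supp\mu$ combines with the zero-error property of $\pi$ to convert the added mass $\beta$ into a definite information gain.
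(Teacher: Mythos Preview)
Your proof is correct and follows essentially the same approach as the paper's: perturb $\mu$ by a Dirac mass at a point $(x^*,y^*)$ where $f$ differs from its constant value on $\supp\mu$, and exploit that any zero-error protocol must reveal whether the input equals $(x^*,y^*)$. The paper introduces a global Bernoulli variable $B$ indicating which mixture component produced the sample and computes $I(X';\Pi|Y') = H(B|Y') + (1-\epsilon)I(X;\Pi|Y)$; your indicator $A$ (conditioned on $X=x^*$) plays the same role, and your exact identity $\IC_\nu(\pi) = (1-\beta)\IC_\mu(\pi) + \nu(x^*)h(\tilde p) + \nu(y^*)h(\tilde q)$ is a slightly cleaner repackaging of the paper's lower bound $H(B|Y') \geq \alpha h(\epsilon)$, but the argument and the resulting bound are the same.
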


\begin{proof}
Let $(x_0,y_0)$ be any point not in the support of $\mu$ such that $f(x_0y_0)$ is different from the constant value of $f$ on $\supp\mu$. Since the marginals of the support are $\cX,\cY$ and every atom in the support has probability at least $\alpha$, we see that $\Pr[X=x_0],\Pr[Y=y_0] \geq \alpha$.

Let $\nu=\eps \delta_{x_0y_0}+(1-\eps)\mu$, where $\eps$ is a parameter to be determined later, and $\delta_{x_0y_0}$ denotes the Dirac measure concentrated on the point $(x_0,y_0)$. Note that $X'Y' \sim\nu$ can be sampled in the following manner. First we pick $XY \sim \mu$ and an independent Bernoulli random variable $B$ with  $\Pr[B=1] = \epsilon$. Then
\[
 X'Y' =
 \begin{cases}
 	XY & \text{if } B = 0, \\
 	x_0y_0 & \text{if } B = 1.
 \end{cases}
\]	
Let $\pi$ be a protocol that performs the task $[f,0]$, and let $\Pi_{xy}$ denote the transcript of this protocol when it is run on the input $xy$. Note that with probability $1$, the value of $B$ is determined by the value of $X'Y'$, and thus
\begin{align*}
  I(X';\Pi_{X'Y'} | Y') &=  I(X'B;\Pi_{X'Y'} | Y')=I(B;\Pi_{X'Y'} | Y')+I(X';\Pi_{X'Y'} | Y'B)\\ 
  &=I(B;\Pi_{X'Y'} | Y')+(1-\eps) I(X;\Pi_{XY} | Y).
\end{align*}
Moreover, since $f(x_0,y_0)$ is different from the constant value of $f$ on the support of $\mu$, the value of $B$ is determined by $\Pi_{X'Y'}$. Thus $I(B;\Pi_{X'Y'} | Y')=H(B|Y')$, and
\[ I(X';\Pi_{X'Y'} | Y')= H(B | Y')+(1-\eps) I(X;\Pi_{XY} | Y). \]

To lower-bound $H(B | Y')$, note that
\[
 \Pr[B=1|Y'=y_0] = \frac{\Pr[B=1,Y'=y_0]}{\Pr[Y'=y_0]} = \frac{\epsilon}{(1-\epsilon)\Pr[Y=y_0]+\epsilon} \geq \epsilon,
\]
 and on the other hand,
\[
 \Pr[B=1|Y'=y_0] \leq \frac{\epsilon}{(1-\epsilon)\alpha+\epsilon},
\]
 which for $\epsilon \le \sqrt{\alpha}/2$ will be at most $1-\epsilon$. Since $\Pr[Y'=y_0] = (1-\epsilon) \Pr[Y=y_0] + \epsilon \geq \alpha$, we conclude that $H(B|Y') \ge  \alpha h(\epsilon)$. We deduce that
\[ I(X';\Pi_{X'Y'} | Y') \ge \alpha h(\eps) +(1-\eps) I(X;\Pi_{XY} | Y) \ge I(X;\Pi_{XY} | Y) + \alpha h(\eps) - \eps \log |\cX \times \cY|. \]
 The gain is
\[
 I(X';\Pi_{X'Y'} | Y') - I(X;\Pi_{XY} | Y) \geq \alpha \eps \log \frac{1}{\eps} - \eps \log |\cX \times \cY|  = \left(\alpha \log \frac{1}{\eps} - \log |\cX \times \cY|\right) \eps,
\]
 and so when $\eps \leq \eps_0 \defeq |\cX \times \cY|^{-2/\alpha}$, the gain is at least $\eps \log |\cX \times \cY|$. Taking $\eps = \min(\eps_0,\sqrt{\alpha}/2)$, we obtain a constant $C(\alpha) > 0$, depending on $|\cX \times \cY|$, such that
\[
 I(X';\Pi_{X'Y'} | Y') \geq I(X;\Pi_{XY} | Y) + C(\alpha),
\]
 and similarly $I(Y';\Pi_{X'Y'} | X') \geq I(X;\Pi_{XY} | Y) + C(\alpha)$. This shows that
\[
 \IC_\nu(f,0) \geq \IC_\mu(f,0) + 2C(\alpha). \qedhere
\]
\end{proof}

We obtain the following important consequence.

\begin{lemma} \label{lem:good-distributions}
 Let $f\colon \cX \times \cY \to \cZ$ be a non-constant function. There exist constants $c, \delta > 0$, depending only on the function $f$ and $|\cX|,|\cY|$, such that if $\IC_\mu(f,0) \geq \IC(f,0) - \delta$ then there exist points $P,Q$, on the same row or column, such that $\mu(P),\mu(Q) \geq c$ and $f(P) \neq f(Q)$.
\end{lemma}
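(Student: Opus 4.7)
The strategy is a compactness argument. Let $\mathcal{M}^* \defeq \{\mu : \IC_\mu(f,0) = \IC(f,0)\}$, and define
\[
\Phi(\mu) \defeq \max\bigl\{\min(\mu(P),\mu(Q)) : P,Q \in \cX \times \cY \text{ on the same row or column}, f(P)\neq f(Q)\bigr\}.
\]
Both $\mu \mapsto \IC_\mu(f,0)$ (by~\eqref{eq:continMu}) and $\Phi$ are continuous on the compact simplex of distributions over the finite set $\cX \times \cY$, so $\mathcal{M}^*$ is non-empty and compact. It suffices to show $\Phi(\mu^*) > 0$ for every $\mu^* \in \mathcal{M}^*$: then $\Phi$ attains a positive minimum $c_0$ on $\mathcal{M}^*$, uniform continuity yields $\eta > 0$ with $|\mu-\mu^*| < \eta \Rightarrow \Phi(\mu) > c_0/2$, and a standard compactness argument --- if no $\delta$ sufficed, a sequence $\mu_n$ with $\IC_{\mu_n}(f,0) \to \IC(f,0)$ staying at TV distance $\geq \eta$ from $\mathcal{M}^*$ would admit, by continuity of $\IC_\mu$, a limit point in $\mathcal{M}^*$, a contradiction --- furnishes $\delta > 0$ such that $\IC_\mu(f,0) \geq \IC(f,0)-\delta$ forces $\mu$ within TV distance $\eta$ of $\mathcal{M}^*$. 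Taking $c = c_0/2$ then produces the desired heavy pair.

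To establish $\Phi(\mu^*) > 0$ for optimal $\mu^*$: since $f$ is non-constant, one can find two points in some common row or column with different $f$-values, and the uniform distribution on them has strictly positive information cost, so $\IC(f,0) > 0$ and $\mu^*$ is not internal-trivial. By Lemma~\ref{lem:structurally-internal-trivial} there is a connected component $C$ of $G_{\mu^*}$ with $f$ non-constant on $C_A \times C_B$. I claim $f$ is actually non-constant on $C$ itself. If not, the renormalized restriction $\widetilde\mu \defeq \mu^*|_C / \mu^*(C)$ satisfies the hypotheses of Lemma~\ref{lem:improvement} on the ambient set $C_A \times C_B$: its support $C$ has marginals $C_A, C_B$, $f$ is constant on this support but not on $C_A \times C_B$, and the minimum atom mass is positive since $\cX \times \cY$ is finite. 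Lemma~\ref{lem:improvement} then yields $\nu$ on $C_A \times C_B$ with $\IC_\nu(f|_{C_A\times C_B},0) > \IC_{\widetilde\mu}(f|_{C_A\times C_B},0)$; replacing $\mu^*|_C$ by $\mu^*(C)\cdot \nu$ inside $\mu^*$ and applying Lemma~\ref{lem:blocks} with the partition provided by the components of $G_{\mu^*}$ (valid before and after, since distinct components have pairwise disjoint row-sets and column-sets and the modified support remains in $\bigcup_i C_{i,A} \times C_{i,B}$) gives a distribution $\mu^{**}$ with $\IC_{\mu^{**}}(f,0) > \IC_{\mu^*}(f,0)$, contradicting optimality. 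Hence $f$ is non-constant on the connected set $C \subseteq \supp \mu^*$, and connectedness of $G_{\mu^*}|_C$ furnishes adjacent $P,Q \in C$ (in the same row or column) with $f(P) \neq f(Q)$, giving $\Phi(\mu^*) \geq \min(\mu^*(P),\mu^*(Q)) > 0$.

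The main obstacle is precisely the optimality step above --- ruling out the scenario where $f$ is constant on $C$ but not on $C_A \times C_B$ --- which crucially invokes Lemma~\ref{lem:improvement}. A purely quantitative approach via thresholding ($\mu \mapsto \mu'$ supported on atoms of mass $\geq c$, then Lemmas~\ref{lem:blocks} and~\ref{lem:improvement} applied block-wise) seems to fail because the strictly-positive improvement $C(\alpha)$ provided by Lemma~\ref{lem:improvement} decays much faster in $\alpha$ than the $\Theta(\alpha|\cX||\cY|\log|\cX||\cY|)$ continuity error incurred by the thresholding; the compactness route sidesteps this by not demanding explicit quantitative control over $c_0$ or $\delta$, which is consistent with the lemma statement allowing the constants to depend on $f$ and $|\cX|,|\cY|$.
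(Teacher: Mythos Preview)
Your proposal is correct and follows essentially the same approach as the paper: both argue by compactness that $\Phi$ (the paper calls it $\beta$) is bounded away from zero on the set of optimal distributions, and both rule out $\Phi(\mu^*)=0$ by invoking Lemma~\ref{lem:blocks} and Lemma~\ref{lem:improvement} to construct a strictly better distribution. The only cosmetic difference is that the paper decomposes the support of a hypothetical $\Phi=0$ measure into blocks indexed by the constant $f$-value $z$ (the sets $\cX_z\times\cY_z$), whereas you decompose by connected components of $G_{\mu^*}$ and appeal to Lemma~\ref{lem:structurally-internal-trivial}; both decompositions feed into Lemma~\ref{lem:blocks} and Lemma~\ref{lem:improvement} in the same way.
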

\begin{proof}

 Call a distribution $\nu$ on $\cX \times \cY$ \emph{optimal} if $\IC(f,0) = \IC_\nu(f,0)$.  Braverman et al.~\cite{SelfRed} showed that $\IC_\nu(f,0)$ is continuous in $\nu$, and this implies that optimal distributions exist, and moreover the set of optimal distributions is closed. It is also convex, due to the concavity of $\IC_\nu(f,0)$ (see~\cite{MR3210776}).

 For a distribution $\nu$, let $\beta(\nu)$ be the maximal value $\beta$ such that there exist two points $P,Q$, on the same row or column, such that $\nu(P),\nu(Q) \geq \beta$ and $f(P) \neq f(Q)$.
 Note that $\beta(\nu)$ is continuous in $\nu$.

 Suppose that $\beta(\nu) = 0$. For $z \in \cZ$, let $\cX_z$ be the set of rows on which some point $P \in \supp \nu$ satisfies $f(P) = z$, and define $\cY_z$ analogously. We claim that the sets $\cX_z$ for $z \in \cZ$ are disjoint, similarly $\cY_z$  are disjoint. Indeed, if $x \in \cX_{z_1} \cap \cX_{z_2}$, then the row $x$ contains two points $P,Q$ in the support such that $f(P) \neq f(Q)$, and so $\beta(\nu) > 0$. Next we show that  $\supp \nu \subseteq \bigcup_z \cX_z \times \cY_z$. Indeed  if $P \in \cX_{z_1} \times \cY_{z_2}$ is in the support of $\nu$, and $f(P) \neq z_1$, then there exists some point $Q$ on the same row as $P$ is in the support and satisfies $f(Q) = z_1$, showing that $\beta(\nu) > 0$; a similar conclusion is reached if $f(P) \neq z_2$.

 Consider now one of the blocks $\cX_z \times \cY_z$. Lemma~\ref{lem:improvement} shows that we can modify the component of $\nu$ on that block so as to increase the information complexity, and Lemma~\ref{lem:blocks} shows that this increases the information complexity over the entire domain. We conclude that $\nu$ is not optimal.

 For $\rho \geq 0$, let $O_\rho = \{ \nu : \IC_\nu(f,0) \geq \IC(f,0) - \rho \}$. Continuity of $\IC_\nu(f,0)$ shows that $O_\rho$ is closed. We define $b(\rho) = \inf \{ \beta(\nu) : \nu \in O_\rho \}$; since $\beta$ is continuous and $O_\rho$ is closed, the infimum is achieved. In view of the preceding paragraph, $b(0) > 0$. Continuity of $\beta(\nu)$ and $\IC_\nu(f,0)$ shows that $b(\rho)$ is continuous as well, and so $b(\delta) > 0$ for some $\delta > 0$. The proof is complete by taking $c = b(\delta)$.
\end{proof}

We can now apply Theorem~\ref{thm:upper-bd-IC-mu-eps} to deduce that $\IC(f,\epsilon) \leq \IC(f,0) - \Omega(h(\epsilon))$.

\restate{Theorem~\ref{thm:non-distributional-ub}}{
 If $f\colon \cX \times \cY \to \cZ$ is non-constant then
\[ \IC(f,\epsilon) \leq \IC(f,0) - \Omega(h(\epsilon)), \]
 where the hidden constant depends on $f$.
}
\begin{proof}
 Let $c,\delta$ be the parameters from Lemma~\ref{lem:good-distributions}. For a distribution $\mu$, either $\IC_\mu(f,0) \leq \IC(f,0) - \delta$ or Theorem~\ref{thm:upper-bd-IC-mu-eps} shows that $\IC_\mu(f,\epsilon) \leq \IC_\mu(f,0) - (c^3/64) h(\epsilon) \leq \IC(f,0) - (c^3/64) h(\epsilon)$ for all $\epsilon \le \epsilon_0$ where $\epsilon_0$ depends only on $c$ and $|\cX \times \cY|$. Choose $\epsilon$ sufficiently enough such that $(c^3/64) h(\epsilon) \le \delta$ and $\epsilon \le \epsilon_0$, we conclude in both cases that $\IC_\mu(f,\epsilon) \leq \IC(f,0) - \Omega(h(\epsilon))$.
\end{proof}

\subsection{A characterization of trivial measures}   \label{sec:proof-trivial-measure}
First we present the proof of the external case, i.e. Theorem~\ref{thm:external-trivial}, as it is simpler.

\restate{Theorem~\ref{thm:external-trivial}}{
 Let $f\colon \cX \times \cY \to \cZ$ be an arbitrary function, and $\mu$ a distribution on $\cX \times \cY$.
 The distribution $\mu$ is external-trivial iff it is strongly external-trivial iff it is structurally external-trivial.	
}
\begin{proof}[Proof of Theorem~\ref{thm:external-trivial}]
 \proofpar{If $\mu$ is external-trivial then $\mu$ is structurally external-trivial.} Suppose that $\mu$ is external-trivial but not structurally external-trivial. We will reach a contradiction.

 We start by showing that if $\mu$ is external-trivial then $f$ has to be constant on the support of $\mu$. Indeed, suppose that the protocol $\pi$ computes $f$ correctly, and denote by $\Pi$ the transcript of $\pi$. The data processing inequality shows that
\[
 I(\Pi;XY) \geq I(\Pi;f(XY)) = H(f(XY)) - H(f(XY)|\Pi) = H(f(XY)).
\]
 This shows that $\mu$ can only be external-trivial if $H(f(XY)) = 0$, that is, if $f$ is constant on the support of $\mu$. From now, we assume that this is indeed the case.

 Let $ab$ be an arbitrary point in the support of $\mu$, and let $c = f(ab)$. Since $\mu$ is not structurally external-trivial, there must be some input $x_0y_0 \in S_A \times S_B$ for which $f(x_0y_0) \neq c$. Note that $x_0y_0$ is not in the support of $\mu$. Since $x_0 \in S_A$, $x_0y_1$ is in the support of $\mu$ for some $y_1 \in S_B$. Similarly, $x_1y_0$ is in the support of $\mu$ for some $x_1 \in S_A$.

 Since $\mu$ is external-trivial, there is a sequence $\pi_n$ of protocols computing $f$ correctly on every input such that $I(XY;\Pi_n) \to 0$, where $XY \sim \mu$. We think of $\pi_n$ also as a distribution over transcripts $t$. Since $f(XY)=c$ with probability~$1$, if $\pi_n(t) > 0$ then the transcript~$t$ indicates that the output is~$c$.
 Let $p_n$ be the joint distribution of $X,Y,t$. Recall that $D(p_n(x,y,t)\|\mu(x,y) \pi_n(t)) = I(XY;\Pi_n)$, hence $D(p_n(x,y,t)\|\mu(x,y) \pi_n(t)) \to 0$.

 For two distributions $\mu$ and $\nu$ on a finite space, Pinsker's inequality states that $D(\mu || \nu) \ge \frac{1}{2} \|\mu - \nu\|_1^2$. This implies that $\|p_n(x,y,t)-\mu(x,y)\pi_n(t)\|_1 \to 0$. On the other hand, for every transcript $t$ appearing with positive probability, either $p_n(x_0,y_1,t) = 0$ or $p_n(x_1,y_0,t) = 0$: otherwise $p_n(x_0,y_0,t) > 0$ (due to the rectangular property of protocols), contradicting the correctness of $\pi_n$ (since $f(x_0y_0) \neq c$). Therefore
\begin{equation*}
 |\mu(x_0,y_1) \pi_n(t) - p_n(x_0,y_1,t)| + |\mu(x_1,y_0) \pi_n(t) - p_n(x_1,y_0,t)| \geq \pi_n(t) \min(\mu(x_0,y_1),\mu(x_1,y_0)).
\end{equation*}
 Summing over all transcripts having positive probability, we deduce that
\begin{equation*}
 \|p_n(x,y,t)-\mu(x,y)\pi_n(t)\|_1 \geq \sum_t \pi_n(t) \min(\mu(x_0,y_1),\mu(x_1,y_0)) = \min(\mu(x_0,y_1),\mu(x_1,y_0)),
\end{equation*}
 contradicting our assumption that $\|p_n(x,y,t)-\mu(x,y)\pi_n(t)\|_1 \to 0$.

 \proofpar{If $\mu$ is structurally external-trivial then $\mu$ is strongly external-trivial.} Consider the following protocol. Alice tells Bob whether her input is in $S_A$. Bob tells Alice whether his input is in $S_B$. If the input is in $S_A \times S_B$, then the output is known. Otherwise, the players reveal their inputs (but this happens with probability zero). It's not difficult to check that this protocol has zero external information cost.

 \proofpar{If $\mu$ is strongly external-trivial then $\mu$ is external-trivial.} This is obvious.
\end{proof}

We comment that our proof gives an explicit lower bound on $\IC^\ext_\mu(f,0)$ whenever $\mu$ is not external-trivial.

Next we present the proof of  Theorem~\ref{thm:internal-trivial}, showing that all our definitions of internal triviality are equivalent. As before, we can get an explicit lower bound on $\IC_\mu(f,0)$ whenever $\mu$ is not internal-trivial.

\restate{Theorem~\ref{thm:internal-trivial}}{
Let $f\colon \cX \times \cY \to \cZ$ be an arbitrary function, and $\mu$ a distribution on $\cX \times \cY$. The distribution $\mu$ is internal-trivial iff it is strongly internal-trivial iff it is structurally internal-trivial.	
}
\begin{proof}[Proof of Theorem~\ref{thm:internal-trivial}]
 \proofpar{If $\mu$ is internal-trivial then $\mu$ is structurally internal-trivial.} Suppose that $\mu$ is internal-trivial but not structurally internal-trivial. We will reach a contradiction.

 Since $\mu$ is internal-trivial, there is a sequence of protocols $\pi_n$ such that $I(X;\Pi_n|Y) + I(Y;\Pi_n|X) \to 0$. In particular, $I(X;\Pi_n|Y),I(Y;\Pi_n|X) \to 0$. Moreover, for every $x \in S_A$ and for every $y \in S_B$, $I(X;\Pi_n|Y=y),I(Y;\Pi_n|X=x) \to 0$.

 Let $p_n(x,y,t)$ be the joint probability of the input and of the transcript of $\pi_n$ being~$t$. We also think of $\pi_n$ as a distribution over transcripts.
 As in the proof of Theorem~\ref{thm:external-trivial}, using Pinsker's inequality we deduce that for all $y \in S_B$, $\|p_n(x,t|y) - \mu(x|y) \pi_n(t|y)\|_1 \to 0$, and so for all $y \in S_B$,
\begin{equation*}
 B_y \defeq \sum_{x,t} |p_n(x,y,t) - \mu(x,y) \pi_n(t|y)| \to 0.
\end{equation*}
 Similarly, for all $x \in S_A$ we have
\begin{equation*}
 A_x \defeq \sum_{y,t} |p_n(x,y,t) - \mu(x,y) \pi_n(t|x)| \to 0.
\end{equation*}

 According to Lemma~\ref{lem:structurally-internal-trivial}, there exists a connected component $C$ of $G_\mu$ such that $f$ is not constant on $C_A \times C_B$. Suppose first that there is an edge $(P,Q)$ on which $f$ is not constant. Without loss of generality, assume $P = (a,y_0)$ and $Q = (a,y_1)$. Thus
\begin{equation*}
 \sum_t |p_n(a,y_0,t) - \mu(a,y_0) \pi_n(t|a)| + |p_n(a,y_1,t) - \mu(a,y_1) \pi_n(t|a)| \to 0.
\end{equation*}
 On the other hand, for each transcript $t$ either $p_n(a,y_0,t) = 0$ or $p_n(a,y_1,t) = 0$, since $f(ay_0) \neq f(ay_1)$. Thus
\begin{multline*}
 \sum_t |p_n(a,y_0,t) - \mu(a,y_0) \pi_n(t|a)| + |p_n(a,y_1,t) - \mu(a,y_1) \pi_n(t|a)| \geq \\ \sum_t \pi_n(t|a) \min(\mu(a,y_0),\mu(a,y_1)) = \min(\mu(a,y_0),\mu(a,y_1)),
\end{multline*}
 contradicting the assumption that the left-hand side tends to zero.

 Suppose next that $f$ is constant across all edges (and so on the entire connected component), say $f(x,y) = c$ for all $(x,y) \in C$. Since $f$ is not monochromatic on $C_A \times C_B$, there must exist a point $P \in C_A \times C_B$ such that $f(P) \neq c$. There must be points $P_A,P_B \in \supp \mu$ with the same row and column (respectively) as $P$. Since $P_A,P_B$ are in the same connected component, there is some path $P_A = Q_0, Q_1, \ldots, Q_m = P_B$ connecting them: for every $i < m$, $Q_i,Q_{i+1}$ are either in the same row or in the same column. We can assume that $m \leq M \defeq |\cX| + |\cY|$. No transcript can have positive probability for both $Q_0$ and $Q_m$, since otherwise it would have positive probability for $P$ as well, and this cannot happen since $f(Q_0) = f(Q_m) = c$ while $f(P) \neq c$.

 Let $t$ be any transcript satisfying $p_n(Q_0,t) > 0$. Since $p_n(Q_m,t) = 0$, there must be an index $i$ such that $p_n(t|Q_i) - p_n(t|Q_{i+1}) \geq p_n(t|Q_0)/m \geq p_n(t|Q_0)/M$. Assume without loss of generality that $Q_i = (a,y_0)$ and $Q_{i+1} = (a,y_1)$. The contribution of $t$ to $A_a$ is
\begin{multline*}
 |\mu(a,y_0) \pi_n(t|a) - p_n(a,y_0,t)| + |\mu(a,y_1) \pi_n(t|a) - p_n(a,y_1,t)| = \\
 \mu(a,y_0) |\pi_n(t|a) - p_n(t|a,y_0)| + \mu(a,y_1) |\pi_n(t|a) - p_n(t|a,y_1)| \geq \\
 \frac{\min(\mu(a,y_0),\mu(a,y_1))}{M} p_n(t|Q_0) \geq
 \frac{\min(\mu(a,y_0),\mu(a,y_1))}{M} p_n(Q_0,t),
\end{multline*}
 using the triangle inequality in the form $|\alpha - \gamma| + |\gamma - \beta| \geq |\alpha - \beta|$.

 Denoting by $\delta$ the minimum of $\mu(x,y)$ over the support of $\mu$, we conclude that $\sum_x A_x + \sum_y B_y$ is at least
\begin{equation*}
 \sum_t \frac{\delta}{M} p_n(Q_0,t) = \frac{\delta}{M} \mu(Q_0) \geq \frac{\delta^2}{M},
\end{equation*}
 contradicting our assumption that $\sum_x A_x + \sum_y B_y \to 0$.

 \proofpar{If $\mu$ is structurally internal-trivial then $\mu$ is strongly internal-trivial.} Consider the following protocol. Alice tells Bob which block $\cX_i$ her input belongs to. Bob tells Alice which block $\cY_i$ his input belongs to. If the input is in $\cX_i \times \cY_i$, then the output is known. Otherwise, the players reveal their inputs (but this happens with probability zero). It's not difficult to check that this protocol has zero internal information cost.

 \proofpar{If $\mu$ is strongly internal-trivial then $\mu$ is internal-trivial.} This is obvious.	
\end{proof}

\section{Parametrization of all distributions as product distributions}\label{sec:parametrization}

In Section~\ref{sec:randomwalk} we discussed how a communication protocol can be interpreted as a random walk on the set of distributions on $\cX \times \cY$. Every time a player sends a signal, we update the underlying distribution based on the information provided by the sent signal. These updates are by scaling either the $\cX$ marginal or the $\cY$ marginal of the distribution. This restricted way in which the underling distribution can be updated will allow us to parametrize  the set of all reachable distributions from a specific distribution $\omu$ in such a way that  the changes are captured by product measures. First note that each reachable distribution $\omu'$ can be identified by the constants that multiplied $\omu$ to obtain $\omu'$.

To formalize this intuition, we have the following definition.
\begin{definition}
	For two  distributions $\mu, \nu \in \Delta(\cX, \cY)$, define
	\begin{equation}    \label{eq:def-HadamardProd}
	\mu \odot \nu := \frac{\mu \cdot \nu}{\langle \mu, \nu \rangle},
	\end{equation}
	where $\mu \cdot \nu$ is the usual point-wise product of the two measures.
\end{definition}
Clearly,  $\mu \odot \nu \in \Delta(\cX, \cY)$ unless $\langle \mu, \nu \rangle = 0$, in which case the product is undefined.
For our purposes, we will consider decompositions of the form $\omu= \nu \odot \mu$, where $\mu$ is a \emph{product measure}. The statement ``$\omu$ is a distribution obtained from $\nu$ by scaling its rows and columns'' is equivalent to ``there exists a product measure $\mu$ such that $\omu = \nu \odot \mu$''. Note that if $\mu$ is the uniform distribution, then $\nu = \mu \odot \nu$ for all distributions $\nu$.

Let $\omu$ be  the prior distribution on $\cX \times \cY$ in a communication protocol.
We fix a decomposition $\omu=\nu \odot \mu$, where $\mu$ is a product distribution.
For every distribution $\omu'$ reachable from $\omu$ there is a product distribution $\mu'$ such that $\omu'=\nu \odot \mu'$, for the same distribution $\nu$.
This follows from the fact that $\omu'$ is obtained from $\omu$ by scaling its rows and columns; therefore if we scale the rows and columns of $\mu$ by the same constants and then normalize it, we obtain the desired $\mu'$.
In such a $\omu=\nu\odot\mu$ decomposition,
$\omu$ is called the \emph{real} distribution, $\nu$ the \emph{reference distribution} and $\mu$ the \emph{pretend distribution}.

We would like to work with product distributions since they are simpler, and easier to analyze, as we will demonstrate in Section~\ref{sec:AND}. 
Therefore, we define a \emph{pretend random walk}, which is a random walk on pretend distributions, as opposed to the normal random walk presented in Section \ref{sec:randomwalk}, which we call the \emph{real random walk} to distinguish it from the pretend one.
It start from a product measure $\mu=(\mu^{\cX}, \mu^{\cY})$, where $\mu^{\cX}$ and $\mu^{\cY}$ are the $\cX$ and $\cY$ marginals of $\mu$.
At each step we either move by scaling  the $\Delta(\cX)$ marginal or the $\Delta(\cY)$ marginal. The transition in $\Delta(\cX)$ is performed by moving with probability $\lambda_0$ to $(\mu_0,\mu^{\cY})$ and with probability $\lambda_1$ to $(\mu_1, \mu^{\cY})$,
where $0 < \lambda_0, \lambda_1 < 1$, $\lambda_0 + \lambda_1 = 1$ and $\sum_{b=0,1} \lambda_i \mu_i = \mu^{\cX}$.
A step in the $\Delta(\cY)$ direction is performed similarly.

Every pretend random walk corresponds to a real random walk performed by some protocol. Given such a pretend random walk, and a reference distribution $\nu$, if we replace every distribution $\mu$ encountered in the random walk by $\nu \odot \mu$, and scale the transition probabilities, we obtain a real random walk performed by some protocol. Here $\nu$ can be any distribution such that $\nu \odot \mu$ is defined for every $\mu$ encountered in the protocol (e.g. if $\supp \nu$ includes the support of the initial distribution). The inverse transformation is also possible.

To formalize this idea, consider a pretend random walk step, from $\mu$ to $\mu_0$ and $\mu_1$ with transition probabilities $\lambda_0$ and $\lambda_1$, respectively. Fix a reference distribution $\nu$.
Then
\[
\nu \odot \mu =
\frac{\nu \cdot \mu}{\langle \nu, \mu \rangle} =
\sum_{b=0,1} \lambda_b \frac{\nu \cdot \mu_b}{\langle \nu, \mu \rangle} =
\sum_{b=0,1} \frac{\langle \nu, \mu_b \rangle}{\langle \nu, \mu \rangle} \lambda_b (\nu \odot \mu_b) =
\sum_{b=0,1} \overline{\lambda_b} (\nu \odot \mu_b)
\]
for the values 
\begin{equation} \label{eq:pretend-prob-conversion}
\overline{\lambda_b} = \frac{\langle \nu, \mu_b \rangle}{\langle \nu, \mu \rangle} \lambda_b.
\end{equation}
A calculation shows
\[
\sum_{b=0,1} \overline{\lambda_b} =
\sum_{b=0,1} \frac{\langle \nu, \mu_b \rangle}{\langle \nu, \mu \rangle} \lambda_b =
\frac{\langle \nu, \sum_{b=0,1} \lambda_b \mu_b \rangle}{\langle \nu, \mu \rangle} =
\frac{\langle \nu, \mu \rangle}{\langle \nu, \mu \rangle} =
1.
\]
Furthermore, if the pretend random walk step is performed in the $\Delta(\cX)$ direction, then $\nu \odot \mu_b$ is obtained by scaling the rows of $\mu$, and if in the $\Delta(\cY)$ direction, then by scaling the columns.
Therefore, there exists a real random walk step where we move from $\nu \odot \mu$ to $\nu \odot \mu_0$ and $\nu \odot \mu_1$ with probabilities $\overline{\lambda_0}$ and $\overline{\lambda_1}$ respectively.
The  conversion in the opposite direction, from the real world to the pretend world, is possible due to essentially the same calculations.

Let $\pi_0$ and $\pi_1$ be the two branches of the protocol $\pi$ corresponding to the value of the first bit that was sent.
Let $\omu$ be an input distribution that moves either to $\omu_0$ or to $\omu_1$ with probabilities $\overline{\lambda_0}$ and $\overline{\lambda_1}$, respectively.
The following equation regarding the concealed information,
\[
\CI_{\omu}(\pi) = \sum_{b=0,1} \overline{\lambda_b} \CI_{\omu_b}(\pi_b)
\]
translates to
\[
\CI_{\nu  \odot \mu}(\pi) 
= \sum_{b=0,1} \frac{\langle \nu, \mu_b \rangle}{\langle \nu, \mu \rangle} \lambda_b \CI_{\nu  \odot \mu_b}(\pi_b).
\]
Multiplying by ${\langle \nu, \mu \rangle}$ we get
\[
\CI_{\nu  \odot \mu}(\pi) {\langle \nu, \mu \rangle}
= \sum_{b=0,1} \lambda_b \langle \nu, \mu_b \rangle  \CI_{\nu  \odot \mu_b}(\pi_b),
\]

This motivates the following definition.

\begin{definition}   \label{def:SIM}
	Let $\nu$ be a fixed reference distribution. Define the \emph{scaled information} of a protocol $\pi$ with respect to a product distribution $\mu$ as
	\begin{equation}    \label{eq:def-SIM}
	\SIM_{\mu}(\pi) := \langle \nu, \mu \rangle \CI_{\nu \odot \mu}(\pi).
	\end{equation}
\end{definition}

Equation \eqref{eq:def-SIM} allows us to write
\begin{equation} \label{eq:pretend-dist-rand-step}
\SIM_{\mu}(\pi) = \lambda_0 \SIM_{\mu_0}(\pi_0) + \lambda_1 \SIM_{\mu_1}(\pi_1).
\end{equation}

Recall that $\CI$ is the expected amount of entropy that the players have concealed from each other by the end of the protocol.
To formally state this, let $\omu$ be a distribution over the inputs, $\pi$ some protocol and $\Pi$ the random variable representing the transcript of the protocol.
Let $\omu_{\Pi}$ be the random variable that represents the distribution over the inputs given the transcript $\Pi$, as defined in Section \ref{sec:randomwalk}. Then

\begin{equation} \label{eq:rand-walk-expectancy}
\CI_\omu(\pi) = \Ex_{\Pi}\left[H_{\omu_{\Pi}}(X|Y) + H_{\omu_{\Pi}}(Y|X)\right].
\end{equation}

We will translate \eqref{eq:rand-walk-expectancy} to a formula involving the pretend random walk.
Let $\omu = \nu \odot \mu$, and denote by $\mu_{\Pi}$ the pretend distribution where the pretend random walk ends if its associated protocol has the transcript $\Pi$. Or, in a more formal way, $\mu_{\Pi}$ is the distribution such that $\nu \odot \mu_{\Pi} = \omu_{\Pi}$.
Equation~\eqref{eq:def-SIM} implies
\begin{equation} \label{eq:SIM-Ex}
\SIM_\mu(\pi) 
=  \Ex_{\Pi} \langle \nu, \mu_{\Pi} \rangle \left[ H_{(\nu \odot \mu)_{\Pi}}(X|Y) + H_{(\nu \odot \mu)_{\Pi}}(Y|X) \right],
\end{equation}
where the probability for each transcript $\Pi$ is according to the pretend random walk and not to the real one.

One should ask: What is the probability of a transcript $t$ in the pretend random walk, given its probability $\overline{\lambda}$ in the real world?
The answer turns out to be very simple. Let $\omu^0,\dots, \omu^k$ be the real distributions encountered in the real random walk, where $\omu^0$ is the input distribution and $\omu^k=\omu_{t}$ is the last distribution encountered.
For all $1 \leq i \leq k$, let $\overline{\lambda^i}$ be the transition probability from $\omu^{i-1}$ to $\omu^i$ in the real random walk, so that $\overline{\lambda} = \overline{\lambda^1} \cdots \overline{\lambda^k}$.
Let $\mu^i$ be the pretend distribution associated with $\omu^i$ such that $\omu^i = \nu \odot \mu^i$ for all $i$ .
Then, the transition probability from $\mu^{i-1}$ to $\mu^i$ in the pretend world equals 
\[
\lambda^i = \frac{\langle \nu, \mu^{i-1} \rangle}{\langle \nu, \mu^i \rangle} \overline{\lambda^i},
\]
using the conversion in \eqref{eq:pretend-prob-conversion}.
Multiplying all together, we get that the probability of $t$ in the pretend world is
\[
\lambda 
= \prod_{i=1}^k \lambda^i
= \prod_{i=1}^k \frac{\langle \nu, \mu^{i-1} \rangle}{\langle \nu, \mu^i \rangle} \overline{\lambda^i}
= \frac{\langle \nu, \mu^0 \rangle}{\langle \nu, \mu^k \rangle} \overline{\lambda}.
\]
This equation also shows how one can derive \eqref{eq:SIM-Ex} from \eqref{eq:rand-walk-expectancy} by multiplying the equation by
$\langle \nu, \mu^0 \rangle$.

For more discussion on the parametrization by product distributions and its applications, see~\cite{DaganFilmus}.

\section{The analysis of the AND function} \label{sec:AND}

This section is mainly devoted to proving the only remaining case of Theorem~\ref{thm:AND-gap}, i.e. the lower bound on $\IC_\mu(\AND,\epsilon)$. This is presented below separately as Theorem~\ref{thm:LB-D}. Our general strategy for this proof was sketched in Section~\ref{sec:ANDfunction} following Theorem~\ref{thm:AND-gap}.

%

\paragraph{Preliminaries and notations.}

The section relies strongly on the parametrization of distributions as product distributions, as presented in Section~\ref{sec:parametrization}.
A real distribution is usually denoted as $\omu$, and it is usually decomposed as $\omu = \nu \odot \mu$, where $\nu$ is a symmetric reference distribution and $\mu$ a pretend distribution.
Pretend distributions are always product ones. We will use the shorthand notation $\mu = (p,q)$ for the product distribution in which $p = \mu(1,0) + \mu(1,1)$ and $q=\mu(0,1) + \mu(1,1)$.
The distribution $\omu$ will usually be assumed to be of full support, which in turn forces $\nu$ and $\mu$ to be so too.

We are usually going to be working in a pretend world, dealing with the pretend distributions, and keeping the reference distributions in the background. Furthermore, reference distributions are usually kept fixed. We regard protocols as pretend random walks, as presented in Section~\ref{sec:parametrization}.

Suppose that we run a protocol $\pi$ starting at a distribution $\omu = \nu \odot \mu$. As we explained in Section~\ref{sec:parametrization}, for each transcript $t$ of the protocol, there is a product distribution $\mu_t$ such that $\nu \odot \mu_t$ is the distribution of the players' inputs conditioned on the protocol terminating at the leaf $t$.
Let $\Pi$ be the random transcript of the pretend random walk associated with an execution of $\pi$ on input distribution $\omu$. Therefore, for any transcript $t$, $\Pr[\Pi=t]$ is the probability for the transcript $t$ in the pretend random walk, which might be different than the corresponding probability in the real random walk. Throughout this section our view of the protocol is only by the pretend random walk, therefore all random variable that correspond to $\Pi$ are assumed to be distributed according to the pretend random walk. Since $\mu_{\Pi}$, the pretend distribution on the random transcript $\Pi$, is a product distribution, it can be written as $\mu_{\Pi} = (\leafp, \leafq)$, where $\leafp,\leafq$ are random variables. We call $(\leafp,\leafq)$ the \emph{leaf distribution} of $\pi$. We define a crucial random variable, $\leafl = \max(\leafp,\leafq)$.

If $\pi$ is a zero-error protocol, then the leaf distribution is supported on product distributions of the form $(p,0)$, $(0,q)$ or $(1,1)$, since in order to know the AND of the two players' inputs we need to know that one of the players has input~$0$, or that both  inputs are~$1$.

Since we are concerned with almost-optimal protocol,  we would like to quantify optimality.
Given a protocol $\pi$, define its \emph{wastage} with respect to a distribution $\omu$ by
\[
\IW_\omu(\pi) = \IC_{\omu}(\pi) - \IC_{\omu}(\AND,0)
=  \CI_{\omu}(\AND,0)- \CI_{\omu}(\pi).
\]

\subsection{Stability results} \label{sec:AND-stability}

Braverman et al. \cite{MR3210776}, studying the complexity of the $\AND$ function, suggested a continuous protocol whose information complexity equals $\IC_\omu(\AND,0)$, called the \emph{buzzer protocol}. This protocol is defined differently for any input distribution $\omu$. Here we denote this protocol by $\pi^*$. The buzzer protocol is not a conventional communication protocol as it has access to a continuous clock, however, it can be viewed as a limit of a sequence of genuine protocols. The information complexity of the protocols in that sequence converges to that of the buzzer protocol, and their leaf distribution converges in distribution.

We start by presenting the leaf distribution of the buzzer protocol. We assume that the input reference distribution is symmetric; its importance will become apparent later on.

\begin{table}[!h]
	\caption{The leaf distribution of the buzzer protocol starting from $(p,q)$, where $p \geq q$.}
	\label{table:leaf-distri-offdiag-cont-protocol}
	\centering
	\begin{tabular}{c|c|c|c}
		\hline
		Distribution $\mu_\Pi$
		& $(p,0)$
		&
		\begin{tabular}{c}
			$(\ell,0)$, $(0,\ell)$ \\
			($p < \ell < 1$)
		\end{tabular}
		& $(1,1)$
		\\ \hline
		The probability to reach that distribution
		& $1-q/p$
		& $pq/\ell^3 \, \rmd\ell$
		& $pq$
		\\  \hline
	\end{tabular}
\end{table}

As it can be seen in Table~\ref{table:leaf-distri-offdiag-cont-protocol}, this is a mix of discrete probabilities and a continuous density. To verify that the above formulas are correct, we can convert the leaf distribution of the buzzer protocol as it is calculated in \cite{MR3210776} for the real random walk to its corresponding leaf distribution in the pretend random walk. The formulas that are discussed in  Section~\ref{sec:parametrization} can be used to calculate the appropriate scaling of the  probabilities as we convert the  real random walk to the pretend one.

There is also a second and more intuitive way to obtain these formulas.  This is done by considering a sequence of protocols that converges to the buzzer protocol. We describe the protocols in that sequence by their pretend random walk. The initial distribution in the pretend world of a protocol in that sequence is $(p,q)$, where $p,q \in \{ 0,\frac 1 n, \frac 2 n, \dots, 1 \}$. In each step, the pretend random walk moves to one of two adjacent grid points, each with probability half. If we are currently in a distribution $(\frac a n, \frac b n)$ where $a \geq b$, then the step moves to one of $(\frac a n, \frac{b+1}{n})$ and $(\frac a n, \frac{b-1}{n})$. Otherwise, the protocol moves to one of $(\frac{a+1}{n}, \frac{b}{n})$ and $(\frac{a-1}{n}, \frac{b}{n})$.

Therefore, starting at the point $(\frac a n, \frac b n)$ where $a \geq b$, the random walk moves in the $y$ axis, until it ends up either at $(\frac a n,0)$ or at $(\frac a n,\frac{a+1}{n})$. Since this walk is balanced, the probabilities to get to these points are $1-\frac{b}{a+1}$ and $\frac{b}{a+1}$, respectively.
Then, from that point the random walk moves in the $x$ axis, until it either gets to the point $(0, \frac{a+1}{n})$ or to $(\frac{a+1}{n},\frac{a+1}{n})$, with probabilities $\frac{1}{a+1}$ and $\frac{a}{a+1}$ respectively. Then again, it ends up either at $(\frac{a+1}{n},0)$ or at $(\frac{a+1}{n},\frac{a+2}{n})$, then at $(0,\frac{a+2}{n})$ or $(\frac{a+2}{n},\frac{a+2}{n})$ and continues this way, until it either gets to the point $(1,1)$, or to a point of the form $(0,\frac i n)$ or $(\frac i  n,0)$. Calculating the leaf distribution of each pretend random walk in that sequence, and taking the limit as $n \to \infty$, results in a leaf distribution, which equals that of the buzzer protocol, as will be explained below.

The buzzer protocol can also be defined similarly as a sequence of converging protocols, where for each protocol in the sequence, the real-world analogue of moving in the $y$ direction is performed whenever $\Pr[X=1] \geq \Pr[Y=1]$, while the analogue of moving in the $x$ direction is performed otherwise. In order for our limit protocol to behave identical to the buzzer protocol, we would like the region $\Pr[X=1] \geq \Pr[Y=1]$ to correspond to the region $p \geq q$. This is done by using a symmetric reference distribution. 

Next, we would like to show a stability result, proving that every protocol performing the task $[\AND,0]$ with nearly optimal information complexity is similar to the  buzzer protocol. We measure similarity in terms of the leaf distribution, and define the following potential function:
\begin{definition}\label{def:potential}
	Given a protocol $\pi$ for $[\AND,0]$, a constant $0 < c < 1$, and a pretend distribution $\mu$, let
	\[
	\Phi_{c,\mu}(\pi) = \Ex\left[((c-\leafl)_+)^2\right],
	\]
	where $(\cdot)_+ = \max \{ \cdot, 0 \}$.
	Denote $\Phi_{c,\mu} = \Phi_{c,\mu}(\pi^*)$,
	where $\pi^*$ is the buzzer protocol.   
\end{definition}

The following theorem shows that the value of the potential function is small for nearly optimal protocols. 

\begin{theorem} \label{thm:stability}
	Let $\omu$ be a full support distribution, and $\omu = \nu \odot \mu$ be its decomposition, where $\nu$ is a symmetric reference distribution and $\mu=(p,q)$ is the product pretend distribution.
	Assume that $c \le \max \{ p,q \}$.
	Let $\pi$ be a protocol performing $[\AND,0]$. Then
	\[
	\Phi_{c,\mu}(\pi) = O(\IC_\omu(\pi) - \IC_\omu(\AND,0)) = O(\IW_\omu(\pi)).
	\]
	The constant in the $O(\cdot)$ is uniform whenever $\nu(0,0),\nu(0,1),\nu(1,0),p,q$ are bounded away from $0$ and $1$. 
\end{theorem}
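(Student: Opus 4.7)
The plan is to work entirely in the pretend-world parametrization of Section~\ref{sec:parametrization}. Write $\omu=\nu\odot\mu$ with $\mu=(p,q)$, and assume by symmetry that $p\ge q$. Let $U(p',q')\defeq\max_\pi\SIM_{(p',q')}(\pi)$ denote the optimal scaled concealed information over zero-error protocols; by its definition as a maximum, $U$ is \emph{split-concave}, i.e.\ $U(p',q')\ge\lambda_0 U(\mu_0)+\lambda_1 U(\mu_1)$ for every valid pretend split $(p',q')\to(\mu_0,\mu_1)$ with weights $(\lambda_0,\lambda_1)$. For any zero-error protocol the pretend walk terminates only at states of the form $(p'',0)$, $(0,q'')$, or $(1,1)$, and $\SIM_\mu(\pi)=\Ex_\Pi F(\mu_\Pi)$ for functions $F_0,F_1$ derived explicitly from $\nu$. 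Since $\IW_\omu(\pi)=(U(\mu)-\SIM_\mu(\pi))/\langle\nu,\mu\rangle$ and $\langle\nu,\mu\rangle$ is bounded away from $0$ under our hypotheses, it suffices to produce a constant $K>0$ such that
\[
U(\mu)-\SIM_\mu(\pi)\;\ge\;K\,\Phi_{c,\mu}(\pi).
\]

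The core of the argument is the auxiliary potential
\[
W(p',q')\defeq U(p',q')+K\,\varphi(p',q'),\qquad\varphi(p',q')\defeq\bigl((c-\max(p',q'))_+\bigr)^2,
\]
for a sufficiently small $K>0$. The central claim is that $W$ is itself split-concave. Granting this, telescoping along the pretend tree of $\pi$ yields $W(\mu)\ge\Ex_\Pi W(\mu_\Pi)$. Because $U$ agrees with $F$ at every terminal state, the right-hand side equals $\SIM_\mu(\pi)+K\,\Phi_{c,\mu}(\pi)$, while the left-hand side equals $U(\mu)=\SIM_\mu(\pi^*)$ since $\varphi(\mu)=0$ by the hypothesis $c\le\max(p,q)$. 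The desired inequality rearranges immediately.

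Split-concavity of $W$ reduces to verifying, for every valid pretend split, that the local loss $L_U\defeq U(p',q')-\lambda_0 U(\mu_0)-\lambda_1 U(\mu_1)\ge 0$ dominates $K\,\Delta\varphi\defeq K\bigl(\lambda_0\varphi(\mu_0)+\lambda_1\varphi(\mu_1)-\varphi(p',q')\bigr)$. Since $\varphi$ vanishes on $\{\max\ge c\}$ and is a shifted quadratic on $\{\max<c\}$, a short case analysis restricts attention to splits with $\Delta\varphi>0$, which must send a child across the threshold $\max=c$. In the $\Delta(\cX)$ version, the martingale identity $\lambda_0 p_0+\lambda_1 p_1=p'$ together with $p'\ge c$ yields $\Delta\varphi\le\lambda_0\lambda_1(p_1-p_0)^2$. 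On the other hand, differentiating the closed-form expression for $U$ read off from Table~\ref{table:leaf-distri-offdiag-cont-protocol} gives the identity $U_{pp}(p,q)=(1-q/p)F_0''(p)$ for $p\ge q$, and a direct calculation from the explicit formula for $F_0$ implies $F_0''(p)\le-\alpha<0$ uniformly in the relevant range when $\nu(0,0),\nu(0,1),\nu(1,0)$ are bounded away from $0$ and $1$. A Jensen-gap argument then produces a matching quadratic lower bound on $L_U$, from which an admissible $K$ can be read off as the ratio. The symmetric $\Delta(\cY)$ case, and splits originating from states already in $\{\max<c\}$, are handled in the same spirit, using that $\varphi$ is convex in each coordinate apart from concave kinks on the diagonal $p'=q'$, which only help the inequality.

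The main obstacle is to make this estimate \emph{uniform} across the pretend state-space: the factor $(1-q/p)$ in $U_{pp}$ degenerates as $p\searrow q$, so a pointwise Hessian bound is not enough near the diagonal $p'=q'\approx c$. The plan for this regime is to replace the pointwise bound with the full Jensen gap of $U(\cdot,q')$ over the entire interval $[p_0,p_1]$, noting that a split with $\Delta\varphi>0$ must move a macroscopic distance past $c$ so that the accumulated concavity defect remains quadratically large even when the Hessian at $p'$ collapses. A compactness argument on the (parametrized) space of valid splits, combined with the equality cases ``$L_U=0$ only for buzzer splits'' and ``$\Delta\varphi=0$ along the buzzer path,'' then produces a single uniform constant $K$; tracking the dependencies through the estimates yields the stated uniformity in $\nu(0,0),\nu(0,1),\nu(1,0),p,q$.
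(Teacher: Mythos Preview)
Your high-level strategy---define an auxiliary potential $W$, show it is split-concave, and telescope over the protocol tree---is exactly the paper's approach, but your choice of potential is the wrong one, and the proposed workaround does not close the gap.

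The concrete failure is the split-concavity of $W=U+K\varphi$ near the diagonal. Fix $q'<c$ and consider a state $(p',q')$ with $q'<p'<c$ and $p'-q'$ small. For a $p$-split $(p',q')\to(p'\pm\delta,q')$ with $\delta<p'-q'$ (so both children stay in the region $q'<p<c$), your $\varphi(p,q')=(c-p)^2$ gives $\Delta\varphi=\delta^2$, while
\[
L_U \;\approx\; -\tfrac{\delta^2}{2}\,U_{pp}(p',q')
   \;=\; \delta^2\,(1-q'/p')\cdot\frac{xy}{2(1-p')p'^2((1-p')x+p'y)} .
\]
Hence $L_U/\Delta\varphi=\Theta(1-q'/p')$, which tends to $0$ as $p'\searrow q'$. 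Thus no fixed $K>0$ makes $W$ split-concave: for every $K$ there are small $p$-splits just above the diagonal violating $L_U\ge K\Delta\varphi$. Your claim that ``a split with $\Delta\varphi>0$ must move a macroscopic distance past $c$'' is false in this regime (the split lies entirely in $\{\max<c\}$), and the compactness argument cannot rescue it: the sequence of splits above has $\Delta\varphi>0$ throughout while $L_U/\Delta\varphi\to 0$, so the infimum is $0$. The concave kink of $\varphi$ at $p=q'$ does not help, because local convexity of $\varphi$ on the interval $(q',c)$ already kills concavity of $W(\cdot,q')$ there.

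The paper's fix is precisely to replace your raw $\varphi$ by its buzzer-smoothed version $\Phi_{c,(p,q)}=\Ex_{\pi^*}[\varphi(\mu_\Pi)]$ in the potential. The point is that the smoothing introduces exactly the missing factor: a direct computation from the leaf distribution in Table~\ref{table:leaf-distri-offdiag-cont-protocol} gives, for $q<p<c$,
\[
\frac{\partial^2 \Phi_{c,(p,q)}}{\partial p^2}=2\Bigl(1-\tfrac{q}{p}\Bigr),
\qquad
\frac{\partial^2 \Phi_{c,(p,q)}}{\partial q^2}=0,
\]
so the $(1-q/p)$ factors in $\Phi_c$ and in $U_{pp}$ cancel identically, and $F=C\,\SIM(\AND,0)+\Phi_c$ is concave in each coordinate for the single constant $C=\max_{0\le p\le 1}\tfrac{2(1-p)p^2((1-p)x+py)}{xy}$. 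The proof then telescopes not over your $W$ but over the protocols $\pi_i$ that run $i$ steps of $\pi$ and then the buzzer; each step is controlled by the coordinate-wise concavity of $F$, and one uses $\Phi_{c,\mu}(\pi^*)=0$ (from the hypothesis $c\le\max\{p,q\}$) at the base. In short, the missing idea is that the ``right'' potential is the one already averaged along the optimal protocol, which is what makes the degeneration at the diagonal disappear.
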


In order to prove this theorem, we measure how each performed step contributes both to the wastage and to the potential function. To measure the wastage, we work with $\SIM$ instead of $\IC$, as it is a more natural measure for this task.

\begin{lemma} \label{lem:stability-one-bit}
	Let $\omu$ be a full support distribution, and $\omu = \nu \odot \mu$ be its decomposition, where $\nu$ is a symmetric reference distribution and $\mu$ is the pretend distribution. Let $0 < c < 1$, and let $\pi$ be the protocol which behaves as follows:
	\begin{enumerate}
		\item
			One step of a pretend random walk is performed, which corresponds to one bit that is sent in the protocol.
		\item
			The pretend random walk that corresponds to the buzzer protocol is simulated from that point: assuming that after the first bit was sent the pretend distribution is $(p,q)$, let $\pi^*_{(p,q)}$ be the buzzer protocol for the input distribution $\nu \odot (p,q)$. Then, the pretend random walk that corresponds to $\pi^*_{(p,q)}$ is simulated (the value of $(p,q)$ is different for the case that the first bit equals 1, and when it equals 0).
	\end{enumerate}
	Then
	\[
	\Phi_{c,\mu}(\pi) - \Phi_{c,\mu} = O_{\nu}(\SIM_{\mu}(\AND,0) - \SIM_{\mu}(\pi)).
	\]
	The constant in the $O(\cdot)$ is uniform whenever $\nu(0,0),\nu(0,1),\nu(1,0),c$ are bounded away from 0 and 1.
\end{lemma}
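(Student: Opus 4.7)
Because each branch of $\pi$ continues as the buzzer protocol, \eqref{eq:pretend-dist-rand-step} and its exact analogue for the $\Phi$ functional give
\[
S(\mu)-\SIM_\mu(\pi)=S(\mu)-\lambda_0 S(\mu_0)-\lambda_1 S(\mu_1),
\qquad
\Phi_{c,\mu}(\pi)-\Phi_{c,\mu}=\lambda_0\Phi_{c,\mu_0}+\lambda_1\Phi_{c,\mu_1}-\Phi_{c,\mu},
\]
where $S(\mu)\defeq\SIM_\mu(\AND,0)$ and $\mu_0,\mu_1$ are the two post-step pretend distributions reached with probabilities $\lambda_0,\lambda_1$ satisfying $\lambda_0\mu_0+\lambda_1\mu_1=\mu$; concavity of $S$ makes the right-hand side of the first identity nonnegative. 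Writing $\mu=(p,q)$ and, by symmetry, assuming $p\ge q$, the goal is to bound the right-hand side of the second identity by $O_\nu(1)$ times the first.

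I would then run a case analysis on the step direction. In the column-scaling case $\mu_i=(p,q_i)$ with $\lambda_0q_0+\lambda_1q_1=q$, reading off Table~\ref{table:leaf-distri-offdiag-cont-protocol} shows that for fixed $p$ the map $q'\mapsto\Phi_{c,(p,q')}$ is \emph{linear} on the regime $q'\le p$, while on $q'>p$ the value $\max(p,q')$ jumps above $p$, so every sub-$c$ leaf contribution disappears once $\max(p,q')\ge c$; the $\Phi$-defect vanishes in both sub-regimes. In the row-scaling case $\mu_i=(p_i,q)$ with $p_0\le p\le p_1$, the branch $\mu_1$ satisfies $p_1\ge p$, which collapses $\Phi_{c,\mu_1}$ as soon as $p_1\ge c$. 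The only non-trivial configuration is therefore
\[
p_0<c\le p\le p_1,\qquad q<c,
\]
so that only the branch $\mu_0$ contributes to the $\Phi$-defect; Table~\ref{table:leaf-distri-offdiag-cont-protocol} gives
\[
\Phi_{c,\mu_0}=\Bigl(1-\tfrac{\min(p_0,q)}{\max(p_0,q)}\Bigr)\bigl(c-\max(p_0,q)\bigr)^2\ \le\ (c-p_0)^2,
\]
since $c-\max(p_0,q)\le c-p_0$ in either orientation.

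Finally I would bound the $S$-concavity defect from below in this remaining configuration. Integrating the buzzer leaf distribution in Table~\ref{table:leaf-distri-offdiag-cont-protocol} against the symmetric reference $\nu$ yields $S(p',q)$ in closed form, and direct differentiation shows $-\partial_{p'}^2 S(p',q)\ge M$ uniformly on the parameter region allowed by the hypothesis, where $M>0$ depends only on the entries of $\nu$ and on $c$. The Green's-function representation of the concavity defect on $[p_0,p_1]$ then yields $S(\mu)-\lambda_0 S(\mu_0)-\lambda_1 S(\mu_1)\ge(M/2)(p-p_0)(p_1-p)$. Combining $p\ge c$ and $p_1\ge c>p_0$ with the identity $p_1-p=\lambda_0(p_1-p_0)$ gives $(p-p_0)(p_1-p)\ge\lambda_0(c-p_0)^2$, and together with $\lambda_0\Phi_{c,\mu_0}\le\lambda_0(c-p_0)^2$ this yields the lemma with constant $2/M$.

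The main obstacle is establishing the uniform curvature bound $-\partial_{p'}^2 S(p',q)\ge M$: this quantitative input encodes strict information-theoretic curvature of AND along the row-scaling direction. It requires differentiating twice the explicit integral formula for $\IC_{\nu\odot(p',q)}(\AND,0)$ coming from Table~\ref{table:leaf-distri-offdiag-cont-protocol} while tracking how each of the three leaf contributions depends on $\nu$, and checking that the result is bounded away from zero uniformly in the parameters that the lemma allows to vary.
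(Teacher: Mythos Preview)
Your proposal has two genuine gaps.

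\textbf{Incomplete case analysis.} You reduce to the ``only non-trivial configuration'' $p_0<c\le p\le p_1$, i.e.\ you tacitly assume $\max\{p,q\}\ge c$ at the node where the step is taken. But the lemma must hold for \emph{every} pretend distribution $\mu$: in the proof of Theorem~\ref{thm:stability} it is applied at each internal node $v$ of the protocol tree, where the pretend distribution $\mu_v$ is arbitrary and need not satisfy $\max\{p_v,q_v\}\ge c$. When $p<c$, the baseline $\Phi_{c,\mu}$ is itself positive, and the defect is $\lambda_0\Phi_{c,\mu_0}+\lambda_1\Phi_{c,\mu_1}-\Phi_{c,\mu}$, not just $\lambda_0\Phi_{c,\mu_0}$. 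Your column-scaling claim also fails for $p<c$: on $q'\in(p,c)$ one computes $\partial^2_{q'}\Phi_{c,(p,q')}=2(1-p/q')>0$, so $\Phi$ is strictly convex there and the $\Phi$-defect need not vanish.

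\textbf{The uniform curvature bound is false.} Your ``main obstacle'' is not a technicality but a real obstruction: the paper's explicit calculation gives, for $p>q$,
\[
\frac{\partial^2}{\partial p^2}\SIM_{(p,q)}(\AND,0)\;=\;-\,(1-q/p)\,\frac{xy}{(1-p)p^2\bigl((1-p)x+py\bigr)},
\]
which vanishes as $p\downarrow q$. So no uniform lower bound $-\partial^2_p S\ge M>0$ exists, and your Green's-function inequality $(M/2)(p-p_0)(p_1-p)$ fails whenever the interval $[p_0,p_1]$ contains or approaches the diagonal $p=q$.

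The paper's route avoids both issues simultaneously. It observes that for $q<p<c$ one also has $\partial^2_p\Phi_{c,(p,q)}=2(1-q/p)$, i.e.\ the \emph{same} factor $(1-q/p)$ appears in both second derivatives. Hence the ratio is bounded, and for
\[
C=\max_{0\le p\le 1}\frac{2(1-p)p^2\bigl((1-p)x+py\bigr)}{xy}
\]
the function $F(p,q)=C\,\SIM_{(p,q)}(\AND,0)+\Phi_{c,(p,q)}$ is coordinatewise concave on all of $[0,1]^2$ (after checking $C^1$-matching at $p=q$ and $\max\{p,q\}=c$). Concavity then gives the one-step inequality for \emph{every} $(p,q)$ at once via Lemma~\ref{lem:concave}, with no case analysis. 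The key idea you are missing is precisely this matched degeneracy of $\partial^2\Phi$ and $\partial^2 S$ at the diagonal.
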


The potential function of Definition~\ref{def:potential} is defined in that manner so that Lemma~\ref{lem:stability-one-bit} holds. Let us elaborate on this: assume that a protocol $\pi$ is defined as in this lemma, with a pretend input distribution of $(p,q)$. Assume that the first step moves from $(p,q)$ either to $(p+\delta)$ or to $(p-\delta)$ with equal probability.
Then 
\begin{align*}
\SIM_{(p,q)}(\pi) - \SIM_{(p,q)}(\AND,0) 
&= \frac 1 2  \SIM_{(p+\delta,q)}(\AND,0) + \frac 1 2  \SIM_{(p+\delta,q)}(\AND,0) - \SIM_{(p,q)}(\AND,0) \\
&\approx \frac{\delta^2}{2} \frac{\partial^2}{\partial p^2} \SIM_{(p,q)}(\AND,0).
\end{align*}
Thus, this difference has the same order of magnitude as $\delta^2$.
We would like the change in the potential function to have the same order.
Looking at the function $x^2$, it holds that
\[
\frac 1 2 (x+\delta)^2 + \frac 1 2 (x-\delta)^2 - x^2 = \frac {\delta^2}{2}.
\]
If a protocol $\pi$ moves according to the direction of the buzzer protocol, then $\pi$ is the same as $\pi^*$ and both differences are zero. Therefore, 
assume that $p > q$, and $\pi$ moves in the $x$ direction, whereas the buzzer protocol would have moved in the $y$ direction. 
Roughly speaking, the leaf distribution of $\pi$ is obtained from the leaf distribution of $\pi^*$ by splitting some of the mass around $\leafl \approx p$ between $\leafl \approx p - \delta$ and $\leafl \approx p + \delta$. Thus, 
$\Phi_{c,\mu}(\pi) - \Phi_{c,\mu}$ approximately has the order of magnitude of 
\[
\frac 1 2 (c - p - \delta)^2 + \frac 1 2 (c - p+\delta)^2 - (c-p)^2 = \frac {\delta^2}{2}.
\]
We chose $(c-p)_+^2$ instead of $(c-p)^2$ since Lemma~\ref{lem:completion-distribution} requires the buzzer protocol to have a value of zero. Indeed, by choosing $c$ carefully we can achieve this. 

We will prove Lemma~\ref{lem:stability-one-bit} using the following criterion.

\begin{lemma} \label{lem:concave}
	Let $\nu$ be a symmetric reference distribution, 
	and $C > 0$ a constant.
	Define $F(p,q) = C \SIM_{(p,q)}(\AND,0) + \Phi_{c,(p,q)}$.
	If for every $q$, $F(p,q)$ is concave as a function of $p$, and for every $p$,
	$F(p,q)$ is concave as a function of $q$, then Lemma~\ref{lem:stability-one-bit}
	holds, and the constant in the $O(\cdot)$ can be taken to be equal to $C$.
\end{lemma}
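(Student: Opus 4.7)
The plan is to unpack what $\SIM_\mu(\pi)$ and $\Phi_{c,\mu}(\pi)$ are for the specific one-step-then-buzzer protocol $\pi$ described in Lemma~\ref{lem:stability-one-bit}, and then to apply the hypothesized coordinate-wise concavity of $F$ directly.

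First I will observe that $\pi$ performs a single pretend-random-walk step from $\mu$ to $\mu_b$ with pretend probabilities $\lambda_b$ ($b\in\{0,1\}$) and then simulates $\pi^*_{\mu_b}$, which by definition achieves $\SIM_{\mu_b}(\AND,0)$. The additivity of $\SIM$ across a single step (equation~\eqref{eq:pretend-dist-rand-step}) then gives
\[
\SIM_\mu(\pi) \;=\; \lambda_0\,\SIM_{\mu_0}(\AND,0) + \lambda_1\,\SIM_{\mu_1}(\AND,0).
\]
In parallel, the pretend leaf distribution of $\pi$ is the $(\lambda_0,\lambda_1)$-mixture of the pretend leaf distributions of $\pi^*_{\mu_0}$ and $\pi^*_{\mu_1}$, so the expectation of $((c-\leafl)_+)^2$ against this mixture yields
\[
\Phi_{c,\mu}(\pi) \;=\; \lambda_0\,\Phi_{c,\mu_0} + \lambda_1\,\Phi_{c,\mu_1}.
\]

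Next I will exploit the fact that a single pretend step alters only one of the two product coordinates of $\mu=(p,q)$. If Alice sends the bit then $\mu_b=(p_b,q)$ and the pretend martingale property recalled in Section~\ref{sec:parametrization} gives $\lambda_0 p_0+\lambda_1 p_1=p$; if Bob sends the bit, the analogous statement holds for the $q$-coordinate. In either case I can apply the hypothesized concavity of $F$ in the coordinate that is actually moving at the midpoint $\mu$ to obtain
\[
\lambda_0 F(\mu_0) + \lambda_1 F(\mu_1) \;\le\; F(\mu).
\]

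Expanding $F = C\,\SIM_{(\cdot,\cdot)}(\AND,0) + \Phi_{c,(\cdot,\cdot)}$ and substituting the two identities from the previous paragraph, the resulting inequality rearranges to
\[
\Phi_{c,\mu}(\pi) - \Phi_{c,\mu} \;\le\; C\bigl(\SIM_\mu(\AND,0) - \SIM_\mu(\pi)\bigr),
\]
which is exactly the conclusion of Lemma~\ref{lem:stability-one-bit} with the constant in the $O(\cdot)$ taken to be $C$. I do not expect any genuine obstacle: the whole content of Lemma~\ref{lem:concave} is a translation of coordinate-wise concavity of $F$ into the one-step-then-optimal setting, and the only ingredients beyond concavity are the martingale property of the pretend random walk and the additivity of $\SIM$ across one step, both of which are already established in Section~\ref{sec:parametrization}.
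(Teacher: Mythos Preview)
Your proposal is correct and follows essentially the same argument as the paper's own proof: both observe that $\SIM_\mu(\pi)=\sum_b\lambda_b\,\SIM_{\mu_b}(\AND,0)$ and $\Phi_{c,\mu}(\pi)=\sum_b\lambda_b\,\Phi_{c,\mu_b}$, then apply the assumed coordinate-wise concavity of $F$ at the pretend-martingale average to obtain the desired inequality with constant $C$.
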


\begin{proof}
	Let $\pi$ be the protocol defined in Lemma~\ref{lem:stability-one-bit}, and let
	$\mu$ be its pretend input distribution.
	Assume that the pretend random walk of $\pi$ first moves from $\mu$ either to $\mu_0$ or to $\mu_1$, with probabilities $\lambda_0$ and $\lambda_1$. We assume this step is on the $x$-direction, thus, the first step is from $(p,q)$ to $(p_0,q)$ or $(p_1,q)$. The analysis for the case that this step it in the $y$-direction is similar.
	Let $0<c<1$.
	Then
	$\SIM_{(p,q)}(\pi) = \sum_b \lambda_b \SIM_{(p_b,q)}(\AND,0)$, and 
	$\Phi_{c,{(p,q)}} = \sum_b \lambda_b \Phi_{c,(p_b,q)}$.
	From concavity,
	\begin{align*}
	C \SIM_{{(p,q)}}(\AND,0) + \Phi_{c,{(p,q)}} &= F({p,q}) \geq \sum_b \lambda_b F(p_b,q) = \sum_b \lambda_b (C \SIM_{(p_b,q)}(\AND,0) + \Phi_{c,(p_b,q)}) \\
	&= C \SIM_{{(p,q)}}(\pi) + \Phi_{c,{(p,q)}}(\pi). \qedhere
	\end{align*}
\end{proof}

Thus, our focus would be proving that these concavity conditions hold for some value $C$. 
We proceed by calculating $\Phi_{c,(p,q)}$, assuming without loss of generality that $p \geq q$. One can see that whenever $p \geq c$, with probability 1 the leaf distribution of the buzzer protocol satisfies $\leafl \geq p \geq c$, and thus the potential function evaluates to $0$.
Consider the case $p < c$.
Using the leaf distribution, we obtain the formula
\[
\Phi_{c,(p,q)} = (1-q/p)(c-p)^2 + 2 \int_{\ell = p}^c \frac{pq}{\ell^3} (c-\ell)^2 d \ell.
\]
Thus, the general definition is as follows:
\[
\Phi_{c,(p,q)} = \begin{cases}
0 & \text{if } \max \{ p, q \} \geq c, \\
(1-q/p)(c-p)^2 + 2 \int_{\ell = p}^c \frac{pq}{\ell^3} (c-\ell)^2 d \ell & \text{if } q \leq p \leq c, \\
(1-p/q)(c-q)^2 + 2 \int_{\ell = q}^c \frac{pq}{\ell^3} (c-\ell)^2 d \ell & \text{if } p \leq q \leq c.
\end{cases}
\]

In order to apply Lemma~\ref{lem:concave}, we start by showing that the function $\Phi_{c,(p,q)}$ is differentiable for all $p$ (in the direction of $p$) given a fixed value of $q$, and for all $q$ given a fixed value of $p$. This is done by calculating the two one-sided derivatives in the points suspected of non-differentiability: $p=q$ and $\max \{ p,q\} = c$. 
To state it into more detail, for any fixed $q$, we calculate both
\[
	\frac{\partial \Phi_{c,(p,q)}}{\partial p}_+ = \lim_{h \rightarrow 0^+} \frac{\Phi_{c,(p+h,q)} - \Phi_{c,(p,q)}}{h},
\]
and
\[
\frac{\partial \Phi_{c,(p,q)}}{\partial p}_- = \lim_{h \rightarrow 0^-} \frac{\Phi_{c,(p+h,q)} - \Phi_{c,(p,q)}}{h},
\]
and verify that both values are equal in all suspected points. We do the same switching the roles of $p$ and $q$. (though it is not required as this potential function is symmetric, since we assume the reference distribution to be symmetric)
Additionally, we calculate its second derivatives whenever they are defined. If $\max \{ p,q \} > c$, then they are trivially zero. For $q < p < c$, we get:
\[  \frac{\partial^2 \Phi_{c,(p,q)}}{\partial p^2} = 2(1 - q/p) \]
and
\[  \frac{\partial^2 \Phi_{c,(p,q)}}{\partial q^2} = 0. \]

Actually, there is a reason why this second derivative with respect to $q$ is zero. For any $0 < \delta \leq \min\{ p - q,q\}$, consider a protocol $\pi$ that first moves to $(p,q-\delta)$ or to $(p,q+\delta)$, each with probability $1/2$, and then simulates the buzzer protocol. It has the same leaf distribution as the buzzer protocol (in the pretend world). 
Both the buzzer protocol and $\pi$ either get to the point $(p,0)$ or to the point $(p,p)$, with probabilities $1-q/p$ and $q/p$, respectively. From that point on, both continue the same way, resulting in the same leaf distribution. 
This validates the equality
\[ \Phi_{c,(p,q)} = \frac 1 2 \Phi_{c, (p,q+\delta)} + \frac 1 2 \Phi_{c, (p,q-\delta)} \]
for all $q$ and $\delta$ sufficiently small, which implies linearity in the region $q \in [0,p]$ (given a fixed $p$).

Similar calculations will now be performed with regard to $\SIM_{p,q}(\AND,0)$.
Denote  $x = \nu(0,0), y=\nu(1,0)=\nu(0,1), z=\nu(1,1)$.
It is possible to extract the value of this function from the equations in \cite{MR3210776}, using the conversion from $\SIM$ to $\CI$ \eqref{eq:def-SIM} and from $\CI$ to $\IC$~\eqref{eq:def-CI}. Nevertheless, we calculate it using the formula \eqref{eq:SIM-Ex}, which is an expectation over a value obtained in the leafs of the protocol. Let $p \geq q$, and let $\Pi$ correspond to the buzzer protocol, which starts at distribution $(p,q)$. Then,

\begin{align*}
\SIM_{p,q}(\AND,0)
&= \Ex_{\Pi}[\langle \nu, \mu_{\Pi} \rangle(H_{\mu_{\Pi}}(X|Y) + H_{\mu_{\Pi}}(Y|X))]\\
&=
\left(1 - \frac{q}{p}\right) ((1-p)x + py) h \left(\frac{py}{(1-p)x+py}\right) + \\
&\quad \int_p^1 \frac{2pq}{\ell^3} ((1-\ell)x + \ell y) h\left(\frac{y\ell} {x(1-\ell)+y\ell}\right) \, \rmd \ell \\
&= -\left[
q(1-p)y + (1-p)(1-q)x \log \frac{(1-p)x}{(1-p)x+py} + \right. \\
&\qquad \left. \left(\frac{pqy^2}{x} + (p+q-2pq)y \right) \log \frac{py}{(1-p)x+py}
\right].
\end{align*}

Calculating the second derivative, we get for $p > q$,
\[ \frac{\partial^2 \SIM_{(p,q)}(\AND,0)}{\partial p^2} = -2(1 - q/p) \frac{xy}{2(1-p)p^2((1-p)x + py)}, \]
and 
\[ \frac{\partial^2 \SIM_{(p,q)}(\AND,0)}{\partial q^2} = 0. \]
The reason that the second derivative is zero is the same as explained for the potential function. 
For proving differentiability (on each direction separately), the only suspected point is $p=q$. Comparing the two one-sided derivatives implies the result.

Now we are almost ready to apply Lemma~\ref{lem:concave}.
Define
\[ C = \max_{0 \leq p \leq 1} \frac{2(1-p)p^2((1-p)x + py)}{xy}, \]
and $F(p,q) = C \SIM_\mu(\pi^*) + \Phi_{c,\mu}$.
For any fixed $q$, $\frac{\partial F(p,q)}{\partial p}$ is continuous, piecewise differentiable, and its derivative, $\frac{\partial}{\partial p} \frac{\partial F(p,q)}{\partial p}$ is non-positive wherever it is defined. 
Thus, $\frac{\partial F(p,q)}{\partial p}$ is non-increasing, and $F(p,q)$ is concave as a function of $p$. 
The same holds when switching the roles of $p$ and $q$, thus the conditions in Lemma~\ref{lem:concave} are satisfied,
which concludes the proof of Lemma~\ref{lem:stability-one-bit}.
Finally, we are able to prove Theorem~\ref{thm:stability}.

\begin{proof}[Proof of Theorem~\ref{thm:stability}]
	Let $T$ be the protocol tree of $\pi$. This is a directed binary tree with two children for each internal node. Each node corresponds to a state of the protocol when some communication has taken place, and its children are the two consecutive states, chosen according to the bit sent by the player owning the node.
	
	We can construct $T$ using a sequence of trees, $T_1, T_2, \dots, T_k = T$. 
	The tree $T_1$ contains only the root of $T$, and for all $i$, $T_i$ is obtained from $T_{i-1}$ by adding the children of a leaf of $T_{i-1}$ which is not a leaf of $T$.
	
	Given a tree $T_i$, construct a protocol $\pi_i$, that whenever it reaches  a state represented by node $v$ which is not a leaf of $T_i$, the protocol behaves as $\pi$ for the next bit sent, and if the state is represented by a leaf of $T_i$, then the buzzer protocol is simulated from that point on.
	Let $D$ be the constant in the $O(\cdot)$ guaranteed from Lemma~\ref{lem:stability-one-bit}. 
	The lemma implies that for all $i$,
	$\Phi_{c,\mu}(\pi_i) - \Phi_{c,\mu}(\pi_{i-1}) \leq D (\SIM_{\mu}(\pi_{i-1}) - \SIM_{\mu}(\pi_i))$.
	Summing over $i$, we get a telescopic summation that results in
	\[
	\Phi_{c,\mu}(\pi) 
	= \Phi_{c,\mu}(\pi_k) - \Phi_{c,\mu}(\pi_1) 
	\leq D(\SIM_{\mu}(\pi_1) - \SIM_{\mu}(\pi_k))
	= D(\SIM_{\mu}(\AND,0) - \SIM_{\mu}(\pi)).
	\]
	We used the fact that $\Phi_{c,\mu}(\pi_1) = \Phi_{c,\mu} = 0$, which hold since we assumed that $c \le \max \{ p,q \}$, and the leaf distribution of the buzzer protocol has zero mass on $\ell < \max \{ p,q \}$, therefore its potential cost is zero.
	This finishes the proof as
	\[
	\SIM_{\mu}(\AND,0) - \SIM_{\mu}(\pi) = \langle \nu, \mu \rangle (\CI_{\omu}(\AND,0) - \CI_{\omu}(\pi)) = \langle \nu, \mu \rangle \IW_\omu(\pi) \leq \IW_\omu(\pi). \qedhere
	\]
\end{proof}

\subsection{Lower bound on the information complexity of $\IC_\mu(\AND,\epsilon)$}

%

In this section, we prove Theorem~\ref{thm:AND-gap} by showing that  every distribution $\omu$ which is of full support, except perhaps for $\omu(1,1)$, satisfies  $\IC_\omu(\AND, \epsilon) \geq \IC_\omu(\AND,0) - O(\hc(\epsilon))$. Recall that  Theorem~\ref{thm:AND-gap}~(ii) follows from Part (i) and we have already established the upper bound of Theorem~\ref{thm:AND-gap}~(i) in Theorem~\ref{thm:upper-bd-IC-mu-eps}. Hence it remains to prove the following theorem.

\begin{theorem}[The remaining case of Theorem~\ref{thm:AND-gap}] \label{thm:LB-D}
	Let $\omu$ be a full-support distribution, except perhaps for $\omu(1,1)$. For all $\epsilon \ge 0$,
	\[
	\IC_{\omu}(\AND,\epsilon) \ge \IC_{\omu}(\AND,0) - O_\omu(\hc(\epsilon)).
	\]
	The hidden constant can be fixed if  $\omu(0,0), \omu(0,1), \omu(1,0)$ are bounded away from $0$.
\end{theorem}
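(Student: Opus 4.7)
The plan is to take a protocol $\pi$ performing $[\AND,\epsilon]$ whose information cost is within $\delta$ of $\IC_\omu(\AND,\epsilon)$, complete it into a zero-error protocol $\pi'$ for $\AND$, and show that the completion cost is $O(\hc(\epsilon))$. Since $\IC_\omu(\pi') \ge \IC_\omu(\AND,0)$, letting $\delta\to 0$ will then give $\IC_\omu(\AND,\epsilon) \ge \IC_\omu(\AND,0) - O(\hc(\epsilon))$ as required.

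The first step is to invoke the generic protocol-completion construction from the proof of Theorem~\ref{thm:non-distri-error-lowerbd} to produce a preliminary zero-error protocol $\pi'_0$ satisfying $\IW_\omu(\pi'_0) = O(\hc(\sqrt{\epsilon}))$. Because $\pi'_0$ performs $[\AND,0]$, the stability theorem (Theorem~\ref{thm:stability}), applied with a threshold $c$ just below $\max(p,q)$, yields $\Phi_{c,\mu}(\pi'_0) = O(\hc(\sqrt{\epsilon}))$. Unpacking the definition, most of $\pi'_0$'s leaves in the pretend parametrization are concentrated near the canonical zero-error leaf shapes $\{(p,0),(0,q),(1,1)\}$ that support the buzzer protocol.

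The second step is to transfer this concentration back to the leaves of $\pi$ itself and use it to design a sharper completion. The verification steps appended in the first stage refine each $\pi$-leaf $\mu_\ell$ into a random $\pi'_0$-leaf via a martingale continuation whose total-variation displacement can be controlled by the per-leaf verification cost through Pinsker's inequality. Combining this displacement bound with the $\Phi_{c,\mu}(\pi'_0)$-concentration should yield the structural claim that, up to an exceptional set of $\pi$-leaves of total pretend mass $O(\epsilon)$, each remaining $\pi$-leaf already has pretend coordinates within $O(\epsilon)$ of one of the three canonical shapes. Given this structure, I would build a sharper completion $\pi'$ in which each player only has to confirm a single near-zero or near-one coordinate; the leaf-by-leaf completion cost is then of order $\hc$ of an $O(\epsilon)$ quantity, and summing using subadditivity of $\hc$ and~\eqref{eq:conv-factor} produces the desired total bound $O(\hc(\epsilon))$.

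The main obstacle is the transfer step. Since $\pi'_0$ extends $\pi$ by additional pretend random-walk steps and $\leafl$ is a submartingale along such extensions, the potential $\Phi_{c,\mu}$ is non-increasing under extension, so a naive use of stability cannot deduce a small $\Phi_{c,\mu}(\pi)$ from a small $\Phi_{c,\mu}(\pi'_0)$. The technical crux is to exploit the specific form of the verification---its information cost splits leaf-by-leaf---so that the pretend distributions of $\pi$ and $\pi'_0$ remain comparable, combined with a careful application of Jensen's inequality on the convex map $x\mapsto((c-x)_+)^2$ to push the concentration in the needed direction. Additional care will be needed near the boundary of the probability simplex, which is why the theorem's hypothesis that $\omu(0,0)$, $\omu(0,1)$, $\omu(1,0)$ be bounded away from $0$ is needed for the constants in $O(\hc(\epsilon))$ to be uniform.
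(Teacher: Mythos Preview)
Your high-level plan---complete $\pi$ to a zero-error protocol, invoke the stability theorem, then re-complete more efficiently---is the right one, and it matches the paper's strategy. But the transfer step you flag as ``the main obstacle'' is a real gap, and your proposed route through Pinsker plus convexity is too vague to close it. In particular, the verification-based completion from Theorem~\ref{thm:non-distri-error-lowerbd} does not give you any usable control on where the pretend coordinates of the \emph{completed} leaves land relative to those of $\pi$; there is no reason the per-leaf verification cost should translate into a total-variation bound on pretend coordinates of the kind you need.

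The paper avoids this difficulty by two changes of viewpoint. First, it does \emph{not} use the generic verification completion; it completes each leaf of $\pi$ by running the buzzer protocol from the leaf's pretend distribution. The buzzer protocol has an explicit leaf distribution (Table~\ref{table:leaf-distri-offdiag-cont-protocol}), and a direct calculation shows that starting from any $(p,q)$ it satisfies $\Pr[\leafl \le 2\max\{p,q\}] \ge 3/4$. This is the transfer mechanism: if a $\pi$-leaf has $\max\{\leafp,\leafq\}\le c/4$, then the completed leaf has $\leafl_{\pi_0}\le c/2$ with probability at least $3/4$, so Markov on $\Phi_{c,\mu}(\pi_0)$ bounds $\kappa \defeq \Pr[\max\{\leafp,\leafq\}\le c/4]$ directly by the wastage of $\pi_0$ (Lemma~\ref{lem:completion-distribution}). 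Second, the paper parametrizes the completion cost itself by $\kappa$ (Lemma~\ref{lem:completion-cost-2}): leaves with $\max\{\leafp,\leafq\}>C$ contribute $O(\hc(\epsilon))$, while the bad leaves contribute $O(\kappa\,\hc(\sqrt{\epsilon/\kappa}))$. Combining the two gives a self-referential inequality
\[
\kappa = O\bigl(\kappa\,\hc(\sqrt{\epsilon/\kappa}) + \hc(\epsilon)\bigr),
\]
which forces $\kappa = O(h(\epsilon))$, and a second application of Lemma~\ref{lem:completion-cost-2} then gives completion cost $O(\hc(\epsilon))$. So the missing idea is this bootstrap: rather than trying to push concentration from $\pi'_0$ back to $\pi$, bound the completion cost as a function of $\kappa$, bound $\kappa$ by the completion's wastage via the buzzer's explicit leaf distribution, and solve the resulting inequality. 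The case $\omu(1,1)=0$ is then handled by continuity.
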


The proof uses the idea of \emph{protocol completion}: given a protocol $\pi$ performing $[\AND,\epsilon]$, we can create a protocol
$\pi_0$, which we call the zero-error \emph{completion} of $\pi$. Such a protocol $\pi_0$ takes the following steps:
\begin{itemize}
	\item
	First Alice and Bob simulate $\pi$ until it terminates.
	\item
	Afterwards they run a protocol that solves the $\AND$ function with zero error.
\end{itemize}

The \emph{cost of completion} is the amount of information revealed in the second step, and it is equal to $\IC_\omu(\pi_0) - \IC_\omu(\pi)$. We have shown in the proof of Theorem~\ref{thm:non-distri-error-lowerbd} that  for general functions, this cost is bounded by $O(\hc(\sqrt \epsilon))$, but here we would like to prove a stronger bound of  $O(\hc(\epsilon))$ for protocols that are almost optimal for the AND function. This obviously would yield the desired lower bound, and prove Theorem~\ref{thm:LB-D}.
This completion cost can be arbitrarily close to $\Ex_\Pi[\IC_{\omu_\Pi}(\AND, 0)]$. In order to bound this quantity, we first bound the information complexity of the $\AND$ function.

\begin{lemma} \label{lem:completion-cost-non-product}
	Consider a reference distribution $\nu$ with $\nu(0,0)=x, \nu(1,0)=\nu(0,1)=y,\nu(1,1)=z$, such that $x,y,z > 0$. Let $\mu = (p,q)$ be a pretend distribution.
	Let $\omu = \nu \odot \mu$, and $\omu(1,1)=\delta$.
	Let $0 < C < 1$ be an arbitrary constant.
	
	Firstly $\IC_\omu(\AND,0) \leq 2 \hc(1-\delta)$. Secondly
	\[
	\IC_\omu(\AND,0) \leq
	\begin{cases}
	O(\hc(\delta/z)) & \text{if $\max(p,q) \geq C$}, \\
	O(\hc(\sqrt{\delta/z})) & \text{if $p,q < C$}.
	\end{cases}
	\]
	The hidden constants can be fixed if  $x,y,C$ are bounded away from both $0$ and $1$.
\end{lemma}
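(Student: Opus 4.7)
The plan is to prove all three bounds by exhibiting explicit zero-error protocols for AND and estimating their internal information cost directly. Throughout I will work with the explicit formulas $\omu(0,0)=x(1-p)(1-q)/Z$, $\omu(1,0)=yp(1-q)/Z$, $\omu(0,1)=y(1-p)q/Z$, and $\omu(1,1)=zpq/Z=\delta$, where $Z=\langle\nu,\mu\rangle\in(0,1]$, together with the key identity $pq=\delta Z/z\leq \delta/z$.

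For the first bound, the full-reveal protocol (each player transmits their input) has internal information cost $H(X|Y)+H(Y|X)\leq h(\Pr[X=1])+h(\Pr[Y=1])$, and both marginals satisfy $\Pr[X=1],\Pr[Y=1]\geq\delta$. When $\delta\geq 1/2$, monotonicity of $h$ on $[1/2,1]$ gives $h(\Pr[X=1])\leq h(\delta)=h(1-\delta)=\hc(1-\delta)$, and symmetrically for $Y$; when $\delta<1/2$ one has $\hc(1-\delta)=1$ and the trivial bound $H(X)+H(Y)\leq 2$ suffices.

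For the remaining two bounds I would consider the asymmetric protocols $\pi_A$ (Alice announces whether $X=1$, and if so Bob announces whether $Y=1$) and its mirror $\pi_B$. A direct calculation gives $\IC_\omu(\pi_A)=H(X|Y)+\Pr[X=1]\,H(Y|X=1)$, and symmetrically for $\pi_B$. The core computation is to check that when $p\leq C$ the conditional probabilities $\Pr[X=1\mid Y=0]=yp/(x(1-p)+yp)$, $\Pr[X=1\mid Y=1]=zp/(y(1-p)+zp)$, and the marginal $\Pr[X=1]$ are all $O(p)$, with constants depending only on $x,y,z,C$, since each of these denominators is bounded below uniformly (using $1-p\geq 1-C$ and $x,y>0$). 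Combined with the subadditivity estimate $\hc(cx)=O(\hc(x))$ and the inequality $p\leq\hc(p)$, this yields $\IC_\omu(\pi_A)=O(\hc(p))$, and symmetrically $\IC_\omu(\pi_B)=O(\hc(q))$ when $q\leq C$. The second bound (case $\max(p,q)\geq C$, WLOG $p\geq C$) then follows from $q=\delta Z/(zp)\leq \delta/(zC)$ via $\pi_B$, with the trivial bound $\IC_\omu(\AND,0)\leq 2$ absorbing the regime where $\delta/z$ is bounded below by a positive constant (there $\hc(\delta/z)=\Omega(1)$). The third bound (case $p,q<C$) follows by taking the better of the two protocols, $\min(\IC_\omu(\pi_A),\IC_\omu(\pi_B))=O(\hc(\min(p,q)))$, combined with the AM--GM estimate $\min(p,q)\leq\sqrt{pq}\leq\sqrt{\delta/z}$ and monotonicity of $\hc$.

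The main technical obstacle, although essentially routine, is tracking uniformity of the constants across the entire parameter range: one must carefully verify that every denominator appearing in the conditional-probability calculations (such as $x(1-p)+yp$, $y(1-p)+zp$, and $Z$ itself) is bounded below by a positive constant depending only on $x,y,C$, which is precisely what the hypotheses that $x,y$ and $C$ are bounded away from $0$ and $1$ are designed to guarantee. A secondary subtlety is that several of the ``small $p$'' estimates only apply in the regime $p,q\leq C$, so the analysis must be split in a few places to absorb degenerate regimes (e.g.\ $\delta/z$ not small) into the trivial $\IC_\omu(\AND,0)\leq 2$ bound, which is permissible because $\hc(\delta/z)$ is bounded below there.
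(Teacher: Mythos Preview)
Your proposal is correct and follows essentially the same approach as the paper: both prove the first bound via the full-reveal protocol and the remaining two via the asymmetric ``one player sends first, the other replies only if needed'' protocol, with the key relation $pq \leq \delta/z$ driving both case analyses. Your organization is slightly cleaner---first isolating the intermediate bounds $\IC_\omu(\pi_A)=O(\hc(p))$ and $\IC_\omu(\pi_B)=O(\hc(q))$ and then invoking $\min(p,q)\leq\sqrt{pq}$---whereas the paper bounds $\Pr[X=1]$ directly in terms of $\delta/(zq)$ and then lower-bounds $q$; these are equivalent. One small remark: you list the constants as depending on $x,y,z,C$, but since $z\leq 1$ every occurrence of $z$ in a numerator can be absorbed, matching the lemma's claim that the constants depend only on $x,y,C$.
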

\begin{proof}
	First we prove that $\IC_\omu(\AND,0) \leq 2 \hc(1-\delta)$. Assume that $\delta \geq 1/2$, as otherwise the inequality trivially follows. The information complexity is achieved by a protocol where both Alice and Bob send their inputs. The cost of that protocol is at most $H(XY) \leq H(X)+H(Y) \leq 2h(\delta)$.
	
	For proving the other bounds, assume that $\delta < 1/2$, since otherwise the lemma trivially follows. If $p,q > 1/2$, then $\delta = \frac{\nu(1,1) pq}{\langle \nu, \mu \rangle} \geq \nu(1,1) = z$, as
	\[
	\langle \nu,\mu \rangle = (1-p)(1-q)x + [p(1-q)+(1-p)q]y + pqz \leq (x+2y+z)pq = pq.
	\]
	In this case, the lemma follows.
	
	Assume that either $p \leq 1/2$ or $q \leq 1/2$. Without loss of generality, $p \leq q$. We will analyze the protocol in which Alice first sends her input to Bob, and if $X=1$ then Bob sends his input to Alice. This protocol has a cost of
	\[ 
	H(X|Y) + \Pr[X=1] H(Y | X=1) 
	\leq H(X) + \Pr[X=1]
	\leq \hc(\Pr[X=1]) + \Pr[X=1].
	\]
	The obtained bound is monotonic in $\Pr[X=1]$, a fact that we will use.
	
	Now
	\[
	\Pr[X=1]
	= \frac{p(1-q)y+pqz}{\langle \nu, \mu \rangle}
	\leq \frac{p(y+z)}{\langle \nu, \mu \rangle}
	= \frac{\delta (y+z)}{zq}.
	\]
	
	Thus, if $q \geq C$, then the cost of completion is at most
	\begin{equation} \label{eq:bound-ic-small-delta}
	\hc\left( \frac{\delta (y+z)}{z C} \right) + \frac{\delta (y+z)}{z C}
	\leq \frac{(y+z)\delta}{Cz} +
	\begin{cases}
	\hc(\delta/z) & \text{if } \frac{y+z}{C} < 1, \\
	\frac{y+z}{C} 2\hc(\delta/z) & \text{otherwise,}
	\end{cases}
	\end{equation}
	using the bound $\hc(cx) \leq 2c\hc(x)$ for all $c > 1$, from~\eqref{eq:conv-factor}.
	
	If $q \leq C$, $\Pr[X=1]$ is maximized at $q=p$. Assume indeed that $p=q$.
	We will bound its value from below.
	The equation $\frac {q^2z}{\langle \nu, \mu \rangle} = \frac {q^2z}{\langle \nu, \mu \rangle} = \delta$ implies
	\[
	q = \sqrt{\frac{\delta \langle \nu, \mu \rangle}{z}}.
	\]
	Now since
	\[
	\langle \nu, \mu \rangle
	\geq \nu(0,0) \mu(0,0)
	= (1-p) (1-q) x
	\geq (1-C)^2 x,
	\]
	we have
	\[
	\Pr[X=1]
	\leq \frac{\delta (y+z)}{zq}
	\leq \sqrt{\frac{\delta}{z}} \frac{y+z}{(1-C)\sqrt{x}}.
	\]
	The proof concludes applying similar calculations as in \eqref{eq:bound-ic-small-delta}.
\end{proof}

Next, we use this bound to show that if the probability that $\max \{ \leafp, \leafq \}$ does not exceed some constant is very small,
then one can get an improvement over $\hc(\sqrt \epsilon)$ for the completion cost.

\begin{lemma} \label{lem:completion-cost-2}
	Let $\nu$ be a symmetric reference distribution with $\nu(0,0)=x$, $\nu(0,1)=\nu(1,0)=y$ and $\nu(1,1)=z  > 0$. Let $\mu=(p,q)$ be a pretend distribution, and let $\omu = \nu \odot \mu = \nu$.
	
	Let $\pi$ be a protocol performing $[\AND,\epsilon]$.
	Let $0 < C < 1$ be an arbitrary constant, $\kappa = \Pr[\max \{ \leafp,\leafq \} \leq C]$.
	
	The protocol $\pi$ can be completed to a zero-error protocol using an additional information cost of
	\[
	O\left( \kappa \hc(\sqrt{\epsilon/\kappa}) + (1-\kappa) \hc(\tfrac{\epsilon}{1-\kappa}) \right),
	\]
	where the cost is according to the distribution $\omu$, and the hidden constant in $O(\cdot)$ can be fixed if $x,y,p,q,C$ are all bounded away from both $0$ and $1$.
\end{lemma}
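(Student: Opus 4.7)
The plan is to bound the expected additional information revealed in the completion phase. First, letting $\Pi$ be the random transcript of $\pi$ under $\omu$, I would bound the completion cost by $\Ex_\Pi[\IC_{\omu_\Pi}(\AND,0)]$, attained by running the information-optimal zero-error protocol for the conditioned distribution $\omu_\Pi$ at each leaf. Set $\delta_t = \omu_t(1,1)$ and $\delta_t^* = \min(\delta_t, 1-\delta_t)$. Since the optimal output at leaf $t$ induces error $\delta_t^*$, the distributional error bound inherited from the point-wise bound on $\pi$ yields $\Ex_\Pi[\delta_\Pi^*] \leq \eps$; this is the only way the hypothesis on $\pi$ enters the argument.

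Next I would apply Lemma~\ref{lem:completion-cost-non-product}. When $\delta_t \leq 1/2$ the two regime bounds apply directly with $\delta_t^* = \delta_t$. When $\delta_t > 1/2$ I would instead use the universal bound $\IC_{\omu_t}(\AND,0) \leq 2\hc(1-\delta_t) = 2\hc(\delta_t^*)$ and observe that, because $z \leq 1$ and $\hc$ is monotone on $[0,1/2]$, $\hc(\delta_t^*) \leq \hc(\sqrt{\delta_t^*/z})$. The net outcome is the uniform bound
\[
\IC_{\omu_t}(\AND,0) \leq
\begin{cases}
O(\hc(\delta_t^*/z)) & \text{if } \max(\leafp_t,\leafq_t) \geq C, \\
O(\hc(\sqrt{\delta_t^*/z})) & \text{if } \max(\leafp_t,\leafq_t) < C.
\end{cases}
\]

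Then I would split the expectation according to the event $A = \{\max(\leafp_\Pi, \leafq_\Pi) \leq C\}$ of probability $\kappa$ and its complement. On $A$, applying Jensen's inequality in succession to the concave functions $\hc$ and $\sqrt{\cdot}$, together with monotonicity of $\hc$ and the bound $\Ex[\delta_\Pi^* \mid A] \leq \eps/\kappa$ (which follows from $\Ex[\delta_\Pi^* \Ind_A] \leq \eps$), yields
\[
\Ex\bigl[\hc(\sqrt{\delta_\Pi^*/z}) \bigm\vert A\bigr] \leq \hc\bigl(\sqrt{\eps/(\kappa z)}\bigr),
\]
and a single Jensen step on $A^c$ yields $\hc(\eps/((1-\kappa)z))$. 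Multiplying by $\kappa$ and $1-\kappa$, summing, and absorbing the $z$-factors into the hidden constant via $\hc(x/z) \leq \hc(x)/z$ (a consequence of~\eqref{eq:conv-factor} applied to the concave function $\hc$ vanishing at $0$) produces the stated bound.

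The main obstacle I anticipate is accommodating the regime $\delta_t > 1/2$ within the same framework as $\delta_t \leq 1/2$, since Lemma~\ref{lem:completion-cost-non-product} provides only the weaker bound $2\hc(1-\delta_t)$ in that case; the clean fix is to re-express everything in terms of $\delta_t^*$ and exploit $z \leq 1$, as indicated above. A minor secondary point is to verify that the event $A$, the quantity $\kappa$, and the expectation defining the completion cost all refer to the same (real) probability law induced by $(X,Y) \sim \omu$ together with the internal randomness of $\pi$; once this is fixed, the Jensen calculation goes through verbatim.
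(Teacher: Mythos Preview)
There is a genuine gap: the dependence on $z=\nu(1,1)$. After Jensen, your bound reads
\[
\kappa\,\hc\bigl(\sqrt{\eps/(\kappa z)}\bigr) + (1-\kappa)\,\hc\bigl(\eps/((1-\kappa)z)\bigr),
\]
and you propose to absorb the $1/z$ into the hidden constant via $\hc(x/z)\le \hc(x)/z$. But the lemma requires the constant to be uniform once $x,y,p,q,C$ are bounded away from $0$ and $1$; notice that $z$ is deliberately \emph{not} on this list. In the application (the proof of Theorem~\ref{thm:LB-D}) one takes a decomposition with $p,q,x,y$ bounded while letting $\omu(1,1)\to 0$, which drives $z\to 0$. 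So the $1/z$ cannot be absorbed, and your bound degenerates exactly where it is needed.

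The missing idea is to split not by whether $\delta_t\lessgtr 1/2$ but by the protocol's \emph{actual output} $o\in\{0,1\}$. When $o=1$, the completion bound $2\hc(1-\delta)$ from Lemma~\ref{lem:completion-cost-non-product} is already $z$-free, and the distributional error gives $\Ex[(1-\delta)\boldsymbol{1}_{o=1}]\le\eps$, so Jensen yields $O(\hc(\eps))$. When $o=0$, the key is to use the point-wise hypothesis on the single input $(1,1)$: $\Pr[o=0\mid XY=(1,1)]\le\eps$ combined with $\omu(1,1)=O(z)$ gives
\[
\Ex[\delta\,\boldsymbol{1}_{o=0}] = \Pr[XY=(1,1),\ o=0] \le \eps\,\omu(1,1) = O(\eps z),
\]
and this extra factor of $z$ precisely cancels the $1/z$ inside $\hc(\delta/z)$ and $\hc(\sqrt{\delta/z})$ after Jensen. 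Your $\delta^*$ device exploits the point-wise bound only through its distributional shadow $\Ex[\delta^*]\le\eps$, which is too coarse to extract that factor of $z$.
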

\begin{proof}
	First, note that 
	\[
	\omu(1,1)
	= \frac{zpq}{\langle \nu, \mu \rangle}
	\leq \frac{z pq}{x(1-p)(1-q)}
	= O(z).
	\]
	Let $\boldsymbol{\psi}$ be the random variable denoting the completion cost  as a function of $\Pi$. Let $\boldsymbol{1}_{o=b}$ be the indicator of whether $\pi$ outputs $b$ given the transcript $\Pi$, for $b=0,1$. The total completion cost is
	\[
	\Ex[\boldsymbol{\psi}]
	= \sum_{b=0,1} \Ex[\boldsymbol{\psi} \boldsymbol{1}_{o=b}].
	\]
	
	We start by bounding $\Ex[\boldsymbol{\psi} \boldsymbol{1}_{o=1}]$.
	Let $\boldsymbol{\delta}$ be the random variable which equals $\omu_{\Pi}(1,1)$.
	\[
	\Ex[(1-\boldsymbol{\delta}) \boldsymbol{1}_{o=1}]
	= \Pr[(X,Y)\neq (1,1), \pi \text{ outputs 1}]
	\leq \epsilon.
	\]
	From Lemma~\ref{lem:completion-cost-non-product},
	the completion cost $\boldsymbol{\psi}$ is at most $2 \hc(1 - \boldsymbol{\delta})$.
	From the concavity of $\hc$,
	\[
	\Ex[\boldsymbol{\psi} \boldsymbol{1}_{o=1}]
	= \Ex O(\hc(1 - \boldsymbol{\delta})) \boldsymbol{1}_{o=1}
	= \Ex O(\hc((1-\boldsymbol{\delta})\boldsymbol{1}_{o=1}))
	\leq O(\hc(\Ex[(1-\boldsymbol{\delta})\boldsymbol{1}_{o=1}]))
	\leq O(\hc(\epsilon)).
	\]
	This can be bounded as desired since in both cases of $\kappa > 1/2$ and $\kappa \leq 1/2$, we have
	\[  \hc(\epsilon) = O\left( \kappa \hc(\sqrt{\epsilon/\kappa}) + (1-\kappa) \hc(\tfrac{\epsilon}{1-\kappa}) \right). \]
	
	Next we bound $\Ex[\boldsymbol{\psi} \boldsymbol{1}_{o=0}]$.
	\[
	\Ex[\boldsymbol{\delta} \boldsymbol{1}_{o=0}]
	= \Pr[(X,Y)= (1,1), \pi \text{ outputs 0}]
	\leq \epsilon \omu(1,1)
	\leq \epsilon O(z).
	\]
	Let $S$ be the event that $\max \{ \leafp, \leafq \} \leq C$. Then,
	\[
	\Ex[\boldsymbol{\delta} \boldsymbol{1}_{o=0} | S]
	\leq \epsilon O(z) / \Pr[S]
	= \epsilon O(z) / \kappa.
	\]
	\[
	\Ex[\boldsymbol{\delta} \boldsymbol{1}_{o=0} | \overline{S}]
	\leq \epsilon O(z) / (1 - \kappa).
	\]
	
	From Lemma~\ref{lem:completion-cost-non-product}, the completion cost is of order of $\hc \left( \sqrt{\boldsymbol{\delta}/z} \right)$ when $S$ happens, and $\hc(\boldsymbol{\delta}/z)$ otherwise.
	
	\begin{align}
	\Ex[\boldsymbol{\psi} \boldsymbol{1}_{o=0}]
	&= \Pr[S] \Ex[\boldsymbol{\psi} \boldsymbol{1}_{o=0} | S] + \Pr[\overline{S}] \Ex[\boldsymbol{\psi} \boldsymbol{1}_{o=0} | \overline{S}] \notag \\
	&= O\left( \kappa \Ex\left[\hc \left(  \sqrt{\boldsymbol{\delta} \boldsymbol{1}_{o=0}/z} \right) | S\right]
	+ (1-\kappa) \Ex[\hc(\boldsymbol{\delta} \boldsymbol{1}_{o=0}/z) | \overline{S}] \right) \notag \\
	&\leq O\left( \kappa \hc \left( \sqrt{\Ex[\boldsymbol{\delta} \boldsymbol{1}_{o=0} | S] / z} \right)
	+ (1-\kappa) \hc(\Ex[\boldsymbol{\delta} \boldsymbol{1}_{o=0} | \overline{S}]/z) \right) \label{eq:compl-2-concavity} \\
	&\leq O\left( \kappa \hc \left( \sqrt{O(\epsilon)/\kappa} \right)
	+ (1-\kappa) \hc(O(\epsilon)/(1-\kappa)) \right) \notag \\
	&\leq O \left( \kappa \hc\left( \sqrt{\epsilon / \kappa}\right) + (1-\kappa) \hc \left(\frac {\epsilon} {1-\kappa} \right) \right), \label{eq:compl-2-entropy}
	\end{align}
	where \eqref{eq:compl-2-concavity} follows from the concavity of $\hc(\cdot/z)$ and $\hc(\sqrt{\cdot/z})$, and \eqref{eq:compl-2-entropy} follows from~\eqref{eq:conv-factor}.
\end{proof}

Consider an almost optimal protocol $\pi_0$ so that $\IC_{\omu}(\pi_0) - \IC_\omu(\AND,0)$ is small. Our stability result, Theorem~\ref{thm:stability}, translates this to a bound on the potential function introduced in Definition~\ref{def:potential}. The next lemma uses this to show that for such a protocol $\pi_0$, one can obtain a strong bound on  the value of $\kappa$ in  Lemma~\ref{lem:completion-cost-2}.

\begin{lemma} \label{lem:completion-distribution}
	Let $\omu$ be full-support distribution and let $\omu = \nu \odot \mu$ be its decomposition, where $\nu$ is a symmetric reference distribution, and $\mu$ is the pretend distribution.
	Let $c=\max\left\{ \Pr_\mu[X=1], \Pr_\mu[Y=1] \right\}$. Let $\pi$ be an arbitrary protocol, and $\pi_0$ be the completion of $\pi$ to a protocol performing $[\AND,0]$. Then
	\[
	\Pr[\max\{ \leafp,\leafq \} \leq \frac{c}{4}] = O_{c,\mu,\nu}(\IC_{\omu}(\pi_0) - \IC_\omu(\AND,0)),
	\]
	The hidden constant can be fixed if $p,q,\mu(0,0),\mu(0,1),\mu(1,0)$ are all bounded away from both $0$ and $1$, where $\mu=(p,q)$.
\end{lemma}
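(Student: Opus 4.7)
The plan is to deduce the lemma as a direct consequence of the stability result (Theorem~\ref{thm:stability}) combined with a Markov-type inequality applied to the potential function $\Phi_{\cdot,\mu}$. Since $\pi_0$ performs $[\AND,0]$ by construction, it is legitimate to invoke the stability result on $\pi_0$, and the conclusion will come from a careful choice of the threshold parameter inside $\Phi$.

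More concretely, writing $c = \max\{p,q\}$ with $\mu = (p,q)$, I will apply Theorem~\ref{thm:stability} to $\pi_0$ with stability parameter $c/2$. This is allowed because the hypothesis ``$c/2 \le \max\{p,q\}$'' is trivially satisfied, and the uniformity clause of the stability theorem holds under the same regularity assumptions on $\nu,p,q$ listed in the current lemma (the assumptions on $\mu(0,0),\mu(0,1),\mu(1,0)$ being bounded away from $0$ and $1$ are exactly equivalent, for a product $\mu$, to $p,q$ being bounded away from $0$ and $1$). The output of the stability theorem is then the inequality
\[
\Ex\!\left[(c/2 - \leafl)_+^{\,2}\right] \;=\; \Phi_{c/2,\mu}(\pi_0) \;=\; O\bigl(\IC_\omu(\pi_0) - \IC_\omu(\AND,0)\bigr),
\]
with $\leafl = \max\{\leafp,\leafq\}$ referring to the leaf distribution of $\pi_0$.

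To close the argument I will restrict the expectation to the event $\{\leafl \le c/4\}$. On that event one has $(c/2 - \leafl)_+ \ge c/4$, hence $(c/2 - \leafl)_+^{\,2} \ge c^2/16$, and therefore
\[
\frac{c^{2}}{16}\,\Pr\!\bigl[\max\{\leafp,\leafq\} \le c/4\bigr] \;\le\; \Phi_{c/2,\mu}(\pi_0) \;=\; O\bigl(\IC_\omu(\pi_0) - \IC_\omu(\AND,0)\bigr).
\]
Dividing by $c^{2}/16$ (which is absorbed into the constant once $c$ is bounded away from $0$, i.e.\ under the stated regularity assumptions on $\mu$) yields exactly the claimed bound.

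There is essentially no deep obstacle: the work in this lemma is entirely front-loaded into Theorem~\ref{thm:stability}, and once that is in hand the argument is a two-line Markov estimate with a correctly tuned threshold (the choice $c/2$ rather than $c$ is what guarantees a constant-size gap $(c/2-\leafl)_+ \ge c/4$ on the event of interest). The only minor thing to be careful about is the bookkeeping of the constants: the constant implicit in the $O(\cdot)$ returned by the stability theorem depends on $\nu$ and on the distance of $p,q$ from $\{0,1\}$, and the extra factor $16/c^{2}$ introduced by the Markov step is controlled by the same constants, so the uniformity claim is preserved.
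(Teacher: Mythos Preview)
There is a genuine gap. In the lemma (and, crucially, in its use inside Theorem~\ref{thm:LB-D}), the random variables $(\leafp,\leafq)$ denote the leaf distribution of the original protocol $\pi$, not of the completed protocol $\pi_0$; this is the paper's standing convention, and it is exactly the quantity $\kappa=\Pr[\max\{\leafp,\leafq\}\le C]$ that feeds into Lemma~\ref{lem:completion-cost-2}. Your Markov argument applied to $\Phi_{c/2,\mu}(\pi_0)$ correctly bounds $\Pr[\leafl_{\pi_0}\le c/4]$, where $\leafl_{\pi_0}$ is the leaf of $\pi_0$, but this is not the same event: $\pi_0$ continues past the leaves of $\pi$, and the completion step can (and typically does) push $\max\{\leafp,\leafq\}$ upward. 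Thus $\{\max\{\leafp,\leafq\}\le c/4\}$ does not imply $\{\leafl_{\pi_0}\le c/4\}$, and the event you bound can have much smaller probability than the event the lemma asks about.

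The paper's proof fills precisely this gap with an extra ingredient you are missing: it takes the completion to be the buzzer protocol from each leaf of $\pi$, and first computes that the buzzer protocol started at any $(p',q')$ satisfies $\Pr[\leafl\le 2\max\{p',q'\}]\ge 3/4$. This yields
\[
\Pr\bigl[\leafl_{\pi_0}\le c/2\bigr]\ \ge\ \tfrac{3}{4}\,\Pr\bigl[\max\{\leafp,\leafq\}\le c/4\bigr],
\]
relating the $\pi$-event you need to a $\pi_0$-event on which the stability theorem plus Markov can be applied (with parameter $c$, giving the gap $(c-\leafl_{\pi_0})_+\ge c/2$). Without this bridging step, your argument proves a correct but irrelevant inequality.
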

\begin{proof}
	Let $\leafl_{p,q}$ be the distribution of $\leafl$ that corresponds to the buzzer protocol when it is invoked from a pretend distribution parametrized by $(p,q)$.
	
	We start by showing that for any $0<p,q<1$,
	\[
	\Pr[\leafl_{p,q} \le 2 \max\{ p, q \}] \ge \frac 3 4.
	\]
	Assume without loss of generality that $p \ge q$. Using the leaf distribution from Section~\ref{sec:AND-stability},
	\[
	\Pr[p \le \leafl \le 2p]
	= 2\int_p^{2p} \frac{pq}{\ell^3} \, \rmd\ell + \left(1 - \frac{q}{p}\right) > \frac{3}{4}.
	\]
	
	This implies
	\begin{align*}
	\Pr[\leafl_{\pi_0} \leq \frac{c}{2}] 
	&= \Pr\left[\leafl_{\pi_0} \leq 2 \frac{c}{4}\right] \\
	&\ge \Pr\left[\max \{ \leafp,\leafq \} \leq \frac{c}{4}\right] \Pr\left[\ell_{\leafp,\leafq} \le 2 \max\{ \leafp,\leafq \}\right]  \\
	&\ge \frac{3}{4} \Pr\left[\max \{ \leafp,\leafq \} \leq \frac{c}{4}\right].
	\end{align*}
	Markov's inequality and Theorem~\ref{thm:stability} imply
	\[
	\Pr[\leafl_{\pi_0} \leq \frac{c}{2}] 
	= \Pr[(c - \leafl_{\pi_0})_+^2 \geq \frac{c^2}{4}]
	\leq \frac {\Ex[(c - \leafl_{\pi_0})_+^2]}{c^2/4}
	= \frac {\Phi_{c,\mu}(\pi_0)}{c^2/4}
	= O(\IC_\omu(\pi_0) - \IC(\AND,0)). \qedhere
	\]
\end{proof}

Now we are ready to prove Theorem~\ref{thm:LB-D}, and thus complete the proof of Theorem~\ref{thm:AND-gap}.

\begin{proof}[Proof of Theorem~\ref{thm:LB-D}]
	We first prove the theorem for the full-support distributions. Consider such a distribution $\omu$.
	Let $\pi$ be a protocol performing $[\AND, \epsilon]$.
	We can assume that $\IC_\omu(\pi) \leq \IC_\omu(\AND,0)$, and let $C = \max \{ \Pr_\mu[X=1], \Pr_\mu[Y=1] \} / 4$, $\kappa = \Pr[\max \{ \leafp, \leafq \} \leq C ]$. Lemma~\ref{lem:completion-cost-2} constructs a zero-error protocol $\pi_0$ whose wastage $w$ is at most
	\[
	w = O\left(\kappa \hc\left(\sqrt{\frac{\epsilon}{\kappa}}\right) + (1-\kappa) \hc\left(\frac{\epsilon}{1-\kappa}\right) \right).
	\]
	Lemma~\ref{lem:completion-distribution} states that $\kappa = O(w)$, and so
	\[
	\kappa = O\left(\kappa \hc\left(\sqrt{\frac{\epsilon}{\kappa}}\right) + (1-\kappa) \hc\left(\frac{\epsilon}{1-\kappa}\right)\right).
	\]
	If $\frac{\epsilon}{1-\kappa} \leq 1/2$, then~\eqref{eq:conv-factor} shows that
	\begin{equation} \label{eq:completion-1}
	\kappa = O\left(\kappa \hc\left(\sqrt{\frac{\epsilon}{\kappa}}\right) + \hc(\epsilon)\right).
	\end{equation}
	Otherwise, $\kappa \geq 1-2\epsilon \geq 1/2$ (assuming $\epsilon \leq 1/4$), and so
	\[
	\kappa = O(h(\sqrt{\epsilon}) + (1-\kappa)) = O(h(\sqrt{\epsilon}) + \epsilon),
	\]
	which contradicts $\kappa \geq 1/2$ for small enough $\epsilon$.
	
	Denoting the hidden constant in~\eqref{eq:completion-1} by $M$, we get
	\[
	\left(1 - M h\left(\sqrt{\frac{\epsilon}{\kappa}}\right) \right) \kappa \leq M h(\epsilon).
	\]
	We will show that for small $\epsilon$, this forces $\kappa \leq 2M h(\epsilon)$. Indeed, suppose that $\kappa > 2M h(\epsilon)$, which implies that $\kappa > 2M \epsilon \log (1/\epsilon)$. Then
	\[
	\frac{\epsilon}{\kappa} < \frac{1}{2M \log (1/\epsilon)},
	\]
	and so for small enough $\epsilon$, $Mh(\sqrt{\epsilon/\kappa}) < 1/2$. This shows that
	\[
	\left(1 - M h\left(\sqrt{\frac{\epsilon}{\kappa}}\right) \right) \kappa > \frac{\kappa}{2} > M h(\epsilon),
	\]
	contradicting the inequality above. We conclude that for small $\epsilon$ we have $\kappa = O(h(\epsilon))$.
	
	Applying Lemma~\ref{lem:completion-cost-2} again, we see that
	\[
	\IC_\omu(\pi_0) - \IC_\omu(\pi) \leq
	\kappa O\left( \hc\left(\sqrt{\frac{\epsilon}{\kappa}}\right) \right) + O(\hc(\epsilon)) \leq
	O(\kappa) + O(\hc(\epsilon)) = O(\hc(\epsilon)).
	\]
	Since $\IC_\omu(\pi_0) \geq \IC_\omu(\AND, 0)$, we conclude that $\IC_\omu(\pi) \geq \IC_\omu - O(\hc(\epsilon))$.
	
	Next consider a distribution $\omu$ with $\omu(1,1)=0$, that assigns a strictly positive probability for every other input.
	There is a series of full support distributions, $\omu_1,\omu_2,\dots$ that converge to $\omu$, and assume without loss of generality that for every input $a \in \{0,1\}^2$ and for every $n \in \mathbb{N}$, $\omu_n(a) \ge \omu(a)/2$. 
	From the continuity of information complexity with respect to the tasks $[\AND,0]$ and $[\AND,\epsilon]$,
	\[
	\lim_{n\rightarrow \infty} \IC_{\omu_n}(\AND,0) = \IC_{\omu}(\AND,0),
	\]
	and
	\[
	\lim_{n\rightarrow \infty} \IC_{\omu_n}(\AND,0) = \IC_{\omu}(\AND,0).
	\]
	Assume that $\omu(0,0),\omu(0,1),\omu(1,0)$ are bounded from below.
	It is possible to decompose $\omu$ into $\nu \odot (p,q)$, 
	where $\nu$ is symmetric and $p,q,\nu(0,0),\nu(0,1)$ and $\nu(1,0)$ are bounded.
	This is done by considering a decomposition where $p=1/2$ and $q$ is chosen such that $\nu$ is symmetric.
	Therefore, there is a constant $C > 0$ such that
	\[
	\IC_{\mu_n}(\AND,\epsilon) \ge \IC_{\mu_n}(\AND,\epsilon) - C \hc(\epsilon).
	\]
	Thus,
	\[
	\IC_\mu(\AND,\epsilon) \ge \IC_\mu(\AND,\epsilon) - C \hc(\epsilon).
	\]
\end{proof}

\section{The set disjointness function with error}   \label{sec:DISJ-proof}
In this section we present the proofs of the results concerning  the set disjointness function. It will be  convenient to switch the roles of $0$ and $1$ in the range of the function, and redefine  $\DISJ_n$ as $\DISJ_n(X,Y)= \vee_{i=1}^n (X_i \wedge Y_i)$, i.e. $\DISJ_n(X,Y)=0$ if the inputs are disjoint and it is equal to $1$ otherwise. Obviously, this will not affect the correctness of our results. 

\subsection{Proof of Theorem~\ref{thm:set_disj_cc}}\label{sec:proof:set_disj_cc}

\restate{Theorem~\ref{thm:set_disj_cc}}
{
For the set disjointness function $\DISJ_n$ on inputs of length $n$, we have
\[
 R_\epsilon(\DISJ_n) = n[\IC^0(\AND,0) - \Theta(h(\epsilon))].
\]
}

As discussed in Section \ref{sec:DISJ-result}, we only need to prove the upper bound. In fact, we will prove the following lemma, from which Theorem~\ref{thm:set_disj_cc} follows using  Corollary~\ref{cor:AND-gap}.

\begin{lemma}   \label{lem:amortize-DISJ-upperbd}
For every $\epsilon > 0$ and sufficiently large $n$,
\[
\frac{R_\epsilon(\DISJ_n)}{n}  \le \IC^0(\AND,\epsilon,1\to 0) + o_{n \to \infty}(1).
\]
\end{lemma}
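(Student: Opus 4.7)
The plan is to combine three ingredients: (a) an almost-optimal single-copy AND protocol on a hard distribution $\mu$ with $\mu(1,1)=0$, (b) the natural coordinate-wise $n$-fold reduction of $\DISJ_n$ to $n$ independent copies of $\AND$, and (c) a product-distribution compression step that converts aggregate information cost into communication with only $o(n)$ additive overhead.

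First I would fix $\delta>0$ and pick, by the very definition $\IC^0(\AND,\epsilon,1\to 0)=\sup_{\mu:\mu(1,1)=0}\IC_\mu(\AND,\epsilon,1\to 0)$, a distribution $\mu$ and a protocol $\tau$ performing $[\AND,\epsilon,1\to 0]$ with $\IC_\mu(\tau)\le\IC^0(\AND,\epsilon,1\to 0)+\delta$. Then I would define a protocol $\Pi$ for $\DISJ_n$ that runs $n$ independent copies of $\tau$, one on each pair $(X_i,Y_i)$, producing outputs $z_1,\dots,z_n$, and outputs $\bigvee_i z_i$. Correctness at error $\epsilon$ is immediate from the one-sidedness of $\tau$: on disjoint inputs every $\tau$ must return $0$; on an input with $k\ge 1$ intersecting coordinates, $\Pi$ errs only if all $k$ of the corresponding runs of $\tau$ output $0$, which by independence of the private randomness of the $n$ copies happens with probability at most $\epsilon^k\le\epsilon$.

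To bound the worst-case communication of $\Pi$ by $n\,\IC_\mu(\tau)+o(n)$, I would use the standard input-embedding trick underlying the upper-bound direction of Braverman's amortization identity $\IC(f,\epsilon)=\lim_n R^n_\epsilon(f^n)/n$ from \cite{MR2961528}. Concretely, Alice and Bob use shared randomness to sample $m=o(n)$ i.i.d.\ dummy pairs from $\mu$ and a uniformly random permutation of the enlarged coordinate set $[n+m]$, place the true inputs at $n$ of the $n+m$ positions, fill the rest with the dummies, and invoke $\Pi$ on this permuted ensemble. Since $\mu(1,1)=0$, dummy coordinates never flip the OR, so the output remains correct. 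Symmetrization bounds the aggregate information cost on any adversarial input by $(n+m)\IC_\mu(\tau)+O(k\log n)$, where $k$ is the number of intersecting coordinates; a product-distribution-compatible information-to-communication compression (of Braverman--Rao type, as in \cite{MR2743255}) then yields communication at most $(n+m)\IC_\mu(\tau)+o(n)$. Dividing by $n$ and sending first $n\to\infty$ and then $\delta\to 0$ gives the lemma.

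The main obstacle will be executing the embedding and compression cleanly while maintaining both the worst-case error guarantee and the aggregate information bound: the adversary can plant up to $n$ coordinates with $(X_i,Y_i)=(1,1)$, each of which is off-support for $\mu$, so one must verify that after symmetrization the per-coordinate marginal is within $O(k/n)$ total variation of $\mu$ (hence the per-copy information cost inflates by only $O(\log n)\cdot k/n$ on average), and that the standard compression of $n$ parallel copies under a near-product prior preserves the one-sidedness of the error. These steps are technical but follow the blueprint of the $\epsilon\to 0$ construction in \cite{MR3210776} and adapt to our fixed-$\epsilon$ setting.
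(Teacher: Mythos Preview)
Your proposal has a genuine gap in the embedding/compression step, and the approach as outlined cannot be completed.

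The central problem is the claim that after adding $m=o(n)$ dummy coordinates drawn from $\mu$ and symmetrizing, ``the per-coordinate marginal is within $O(k/n)$ total variation of $\mu$''. This is false: with $n$ real coordinates and only $m=o(n)$ dummies, a uniformly random coordinate is a real one with probability $n/(n+m)=1-o(1)$, so the per-coordinate marginal is essentially the \emph{empirical distribution of the adversary's input}, not $\mu$. If the adversary sets, say, all coordinates to $(0,0)$, the marginal is concentrated on $(0,0)$ and is $\Omega(1)$-far from $\mu$ in total variation; your asserted bound $(n+m)\IC_\mu(\tau)$ on the aggregate information cost then has no justification. You already flag the complementary failure mode: if the adversary plants $k$ intersecting coordinates, your own additive term $O(k\log n)$ is $\Omega(n\log n)$ when $k=\Theta(n)$, which is not $o(n)$. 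Taking $m=\omega(n)$ would fix the marginal but blow up the number of coordinates to $\omega(n)$, so the information cost would no longer be $n\,\IC^0(\AND,\epsilon,1\to0)+o(n)$ either.

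There is also a second, independent gap: Braverman--Rao compression converts information cost \emph{with respect to a specific distribution} into \emph{distributional} communication, not worst-case communication. To obtain a single protocol with worst-case error $\epsilon$ you need an additional minimax step, and your proposal does not supply one.

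The paper avoids both issues by a two-level self-reducibility argument. First (Lemma~\ref{lem:DISJ-ICn}) it bounds the \emph{prior-free} information cost $\IC(\DISJ_n,\epsilon,1\to0)$ by $n\,\IC^0(\AND,\epsilon,1\to0)+o(n)$, using a protocol that \emph{first} publicly samples $n^{2/3}$ random coordinates and checks them exactly; this is precisely the device that tames distributions placing large mass on many intersections, and it is the ingredient your plan is missing. Second (Lemma~\ref{lem:DISJ-CCnN}) it uses that $\DISJ_{nN}$ decomposes into $N$ instances of $\DISJ_n$, applies Braverman--Rao compression to $N$ parallel copies, and recovers the worst-case error guarantee via the minimax argument of \cite[\S6.2]{MR2961528}. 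The final bound on $R_\epsilon(\DISJ_M)/M$ is obtained by writing $M=nN$ with $n\to\infty$ slowly.
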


Intuitively, an upper bound like Lemma \ref{lem:amortize-DISJ-upperbd} is essentially a compression result. Besides, as $\DISJ_n$ has a self-reducible structure (see~\cite{SelfRed}), one can make use of this fact together with the Braverman--Rao~\cite{MR3265014} compression. A difficulty is that what we want to solve is $[\DISJ_n, \epsilon]$, that is, the error allowed is non-distributional, while the error unavoidably introduced in the compression phase is distributional. Fortunately, this can be salvaged by a minimax argument introduced in Section 6.2 of~\cite{MR2961528}.

In order to use self-reducibility and compression, one first needs to have a control on the information cost of solving $[\DISJ_n, \epsilon]$.


\begin{lemma}  \label{lem:DISJ-ICn}
For every  $\epsilon > 0$ and sufficiently large $n$,
\[
\IC(\DISJ_n, \epsilon, 1 \to 0) \le n \IC^0(\AND,\epsilon,1\to 0) + o(n),
\]
where $\IC(\DISJ_n, \epsilon, 1 \to 0) \defeq \max_\mu \IC_\mu(\DISJ_n, \epsilon, 1 \to 0)$.
\end{lemma}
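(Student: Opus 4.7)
The plan is a direct-sum argument: construct a $\DISJ_n$ protocol by running an $\AND$ protocol on each coordinate, then bound its information cost coordinate by coordinate. First I would invoke the minimax argument of \cite{MR2961528} Section~6.2 (analogous to the proof of Lemma~\ref{lem:sup-over-nu}) to obtain, for every $\delta>0$, a single AND protocol $\pi^*_\delta$ performing $[\AND,\epsilon,1\to 0]$ with $\IC_\nu(\pi^*_\delta)\le \IC^\delta(\AND,\epsilon,1\to 0)$ for all $\nu$ satisfying $\nu(1,1)\le\delta$; by Corollary~\ref{cor:ANDzero-gap}(iii), this is at most $\IC^0(\AND,0) - C_1\hc(\epsilon) + C_2\hc(\delta)$. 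The minimax step is valid because $\IC_\nu(\pi)$ is linear in $\nu$ for fixed $\pi$ and concave in the public-coin mixture of $\pi$, after restricting to protocols of bounded communication for compactness.

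Given an input distribution $\mu$ on $\{0,1\}^n \times \{0,1\}^n$, I define $\pi_n$ to run $\pi^*_\delta$ independently on each coordinate $(X_i,Y_i)$ and output the OR of the $n$ outputs. This performs $[\DISJ_n,\epsilon,1\to 0]$: by one-sidedness, $\pi^*_\delta$ never outputs~$1$ when the true AND is $0$, so if $\DISJ_n=0$ the output of $\pi_n$ is correctly $0$; and if $\DISJ_n=1$, the protocol fails only when every intersecting coordinate's AND call errs, which by independence of the coordinate-wise randomness has probability at most $\epsilon^{|S|}\le \epsilon$ (the converse of the embedding used in the proof sketch of Corollary~\ref{cor:disjointness_IC_ND}). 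The chain rule then decomposes
\[
\IC_\mu(\pi_n)=\sum_{i=1}^n \bigl[I(X_i;\Pi_i\mid Y\Pi_{<i})+I(Y_i;\Pi_i\mid X\Pi_{<i})\bigr],
\]
and each summand equals the expected information cost of $\pi^*_\delta$ under the conditional distribution of $(X_i,Y_i)$ given the observed history, which by the minimax bound is at most $\IC^0(\AND,0) - C_1\hc(\epsilon) + C_2\hc(\delta_i)$, where $\delta_i$ denotes the averaged conditional mass of $(X_i,Y_i)$ at $(1,1)$.

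The main obstacle is to show that the slack $\sum_i \hc(\delta_i)$ is $o(n)$ on worst-case $\mu$, since without modification one can have $\sum_i \delta_i = \Theta(n)$ (e.g.\ when $\mu$ is concentrated on highly intersecting inputs). I would address this by adding an early-halt rule: as soon as some coordinate's AND call outputs $1$, the players halt (since by one-sidedness they are certain that $\DISJ_n=1$). After this modification, the sequence of conditional distributions encountered during execution effectively satisfies $\nu(1,1)=0$ on their support except on rare failure events, so $\sum_i \delta_i$ is controlled by the expected number of observed intersections together with a small error term, and the sub-additivity of $\hc$ then produces the desired $o(n)$ slack. Combining the pieces gives $\IC(\DISJ_n,\epsilon,1\to 0) = \max_\mu \IC_\mu(\pi_n) \le n\IC^0(\AND,\epsilon,1\to 0)+o(n)$.
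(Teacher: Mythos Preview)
Your argument has a genuine gap at the step where you bound each coordinate's contribution. You obtain $\pi^*_\delta$ by minimax under the constraint $\nu(1,1)\le\delta$, so the guarantee is $\IC_\nu(\pi^*_\delta)\le \IC^\delta(\AND,\epsilon,1\to0)\le \IC^0(\AND,0)-C_1\hc(\epsilon)+C_2\hc(\delta)$ for those $\nu$ only, with the \emph{fixed} $\delta$ on the right-hand side. You then assert the bound $\IC^0(\AND,0)-C_1\hc(\epsilon)+C_2\hc(\delta_i)$ with the \emph{coordinate-dependent} $\delta_i$; this does not follow. The protocol $\pi^*_\delta$ is fixed once and for all (it must be, since you are proving a prior-free statement and the players do not know $\mu$, hence cannot compute the conditional $\nu_i$), so the only per-coordinate bound you actually have is the constant $\IC^0(\AND,0)-C_1\hc(\epsilon)+C_2\hc(\delta)$ on coordinates with $\delta_i\le\delta$, and no useful bound on the others. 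Your early-halt paragraph does not close this gap: it argues about $\sum_i\delta_i$, but what you would need is a bound on the number of coordinates processed with $\delta_i>\delta$, uniformly over all $\mu$, and you have not established that.

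The paper's proof supplies exactly the missing device: before running the per-coordinate AND protocol, Alice and Bob publicly exchange $n^{2/3}$ randomly chosen coordinates. If an intersection is found they output $1$; otherwise, conditioned on seeing no intersection among the samples, the number of intersecting coordinates is $o(n)$ with high probability, and the contribution of those coordinates (plus the sampling step itself) is absorbed into the $o(n)$ term. The remainder of the argument then follows the analysis of \cite[Lemma~8.5]{MR3210776} verbatim. Your early-halt idea is close in spirit to the protocol used for the \emph{distributional} bound in Theorem~\ref{thm:setDisj_distrib}, but there the players know $\mu$ and can choose $\pi^\sigma_i$ to be near-optimal for the actual $\nu_i$; that adaptivity is unavailable here.
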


The proof is a direct adaptation of the proof for Lemma 8.5 in \cite{MR3210776}.

\begin{proof}
Let $\Omega_0$ denote the set of all measures $\mu$ on $\{0,1\}^2$ with $\mu(1,1)=0$.  Let $\pi$ be a protocol that computes $[\AND, \epsilon, 1 \to 0]$ and satisfies  $\max_{\mu \in \Omega_0} \IC_\mu(\pi) \le \IC^0(\AND,\epsilon,1\to 0) + \delta$ for some small $\delta > 0$. Consider the following protocol $\tau$ that computes $\DISJ_n$ with error.

\begin{framed}
\begin{itemize}
\item Alice and Bob exchange (with replacement using public randomness) $n^{2/3}$ random coordinates. Denote this set of random coordinates by $J$. If for some $j \in J$, $x_j = 1$ and $y_j = 1$, then they output $1$ and terminate.
\item For each coordinate outside $J$, Alice and Bob run the protocol $\pi$ and output $1$ if $\pi$ outputs $1$ on some coordinate. Otherwise they output $0$.
\end{itemize}
\end{framed}

As $\pi$ has one-sided $1 \to 0$ error, obviously $\tau$ has only one-sided $1 \to 0$ error too, and this error happens with probability at most $\epsilon^d \le \epsilon$, where $d$ is the number of coordinates outside $J$ which satisfy $x_j = y_j = 1$ (if $x_j = y_j = 1$ for some coordinate in $J$, there is no error). In particular, $\tau$ computes $[\DISJ_n, \epsilon, 1 \to 0]$.

A direct inspection shows that the remaining proof of Lemma 8.5 in \cite{MR3210776}  depends only on the protocol but not on the specific problem, hence the proof works for our problem too, and the lemma can be proved similarly.
\end{proof}

Next we prove an amortized upper bound for $\DISJ_n$.

\begin{lemma}  \label{lem:DISJ-CCnN}
For every $\epsilon, \delta > 0$, there exists a constant $C > 0$ that depends on $n, \epsilon, \delta$, such that as long as $N \ge C(n, \epsilon, \delta)$, we have
\[
\frac{R_{\epsilon}(\DISJ_{n \times N})}{N} \le (1 + \delta) \IC(\DISJ_n, \epsilon, 1 \to 0).
\]
\end{lemma}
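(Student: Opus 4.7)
The plan is to combine three standard ingredients: parallel execution of a near-optimal protocol on the $N$ copies, a Braverman--Rao-style compression to convert information cost into communication cost, and the minimax argument of \cite[Section~6.2]{MR2961528} to pass from a distributional error guarantee to a pointwise one. This is the usual route to amortized upper bounds, specialized here to the one-sided task $[\DISJ_n, \epsilon, 1 \to 0]$.

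First, fix an auxiliary parameter $\eta = \eta(n, \epsilon, \delta) > 0$ to be chosen small. By the minimax identity $\IC(f,\epsilon) = \inf_\pi \max_\mu \IC_\mu(\pi)$ recalled at the end of Section~\ref{sec:IC-definition} (whose justification uses Lemma~\ref{lem:sup-over-nu}), there exists a single protocol $\pi$ performing $[\DISJ_n, \epsilon, 1 \to 0]$ with $\IC_\mu(\pi) \le \IC(\DISJ_n, \epsilon, 1 \to 0) + \eta$ for \emph{every} distribution $\mu$. Executing $\pi$ independently on each of the $N$ coordinates of $\DISJ_{n \times N}$ produces a protocol $\pi^{\otimes N}$ that still has per-coordinate pointwise error at most $\epsilon$ with the one-sided $1 \to 0$ structure. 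Moreover, for every product distribution $\mu^{\otimes N}$, the chain rule for conditional mutual information yields the tensorization bound
\[
\IC_{\mu^{\otimes N}}(\pi^{\otimes N}) \le N \IC_\mu(\pi) \le N \bigl(\IC(\DISJ_n, \epsilon, 1\to 0) + \eta\bigr).
\]

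Second, I apply a Braverman--Rao compression theorem to $\pi^{\otimes N}$ under $\mu^{\otimes N}$ with a tiny distributional-error budget (of order $\eta$). In the amortized regime where the information cost grows linearly in $N$, compression produces a protocol $\tau_\mu$ with communication cost at most $(1 + \delta/2) N \IC_{\mu^{\otimes N}}(\pi^{\otimes N}) + o_N(N)$ whose output agrees with that of $\pi^{\otimes N}$ except on an event of $\mu^{\otimes N}$-probability at most $\eta$. Taking $N \ge C(n, \epsilon, \delta)$ and $\eta$ sufficiently small, the communication cost is bounded by $(1 + \delta) N \IC(\DISJ_n, \epsilon, 1 \to 0)$ and the per-coordinate distributional error of $\tau_\mu$ under $\mu^{\otimes N}$ remains at most $\epsilon$.

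The main obstacle is the final step, in which the distribution-dependent $\tau_\mu$, which only carries a distributional guarantee, must be replaced by a single protocol with pointwise error. Following \cite[Section~6.2]{MR2961528}, I would set up the two-player zero-sum game in which the protocol designer picks a randomized protocol of communication cost at most $(1 + \delta) N \IC(\DISJ_n, \epsilon, 1 \to 0)$, the adversary picks an input $(x,y)$, and the payoff is the maximum over the $N$ coordinates of the error probability on that coordinate. Yao's minimax theorem equates the value of this game to $\sup_\rho \inf_\tau \Pr_{(x,y) \sim \rho}[\text{error}]$, so it suffices to exhibit, for every joint input distribution $\rho$, a protocol of the stated communication cost whose per-coordinate distributional error under $\rho$ is at most $\epsilon$. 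Using the coordinate-permutation symmetry of $\DISJ_{n \times N}$ together with standard convexity reductions (as in \cite{MR2961528}), one may restrict attention to product $\rho = \mu^{\otimes N}$, at which point $\tau_\mu$ from the previous step provides the required protocol. This yields $R_\epsilon(\DISJ_{n \times N}) \le (1+\delta) N \IC(\DISJ_n, \epsilon, 1 \to 0)$ for all $N \ge C(n, \epsilon, \delta)$.
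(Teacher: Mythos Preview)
Your outline has the right three ingredients (a single near-optimal protocol via minimax, parallel repetition, Braverman--Rao compression followed by a Yao-type minimax), and this is indeed the route the paper takes, following \cite[Section~6.2]{MR2961528}. But two of your steps have genuine gaps.

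\textbf{No error slack.} You take $\pi$ performing $[\DISJ_n,\epsilon,1\to 0]$ and then compress, which introduces an extra $\eta$ of statistical distance; the compressed protocol therefore has error at most $\epsilon+\eta$, not $\epsilon$. Your sentence ``the per-coordinate distributional error of $\tau_\mu$ \ldots\ remains at most $\epsilon$'' is simply not justified. The paper handles this by first invoking continuity (Lemma~\ref{lem:continuity}) to choose $\xi>0$ with $\IC(\DISJ_n,\epsilon-\xi,1\to0)\le(1+\delta/6)I$, running the whole argument at error level $\epsilon-\xi$, and choosing the compression error small enough that the total comes back to $\epsilon$.

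\textbf{The reduction to product measures does not work.} Your minimax in Step~3 would require, for \emph{every} distribution $\rho$ on $(\{0,1\}^n)^{2N}$, a protocol of the stated cost with small $\rho$-error; you only produce one for product $\rho=\mu^{\otimes N}$ and then assert that ``coordinate-permutation symmetry together with standard convexity reductions'' let you restrict to products. That is not a standard argument: symmetrizing over permutations yields an exchangeable distribution, not a product one, and information cost is \emph{concave} in the measure, so convexity goes the wrong way. The paper avoids this entirely by observing that for \emph{arbitrary} $\mu$ on the $M$-fold product (not just product $\mu$), the parallel protocol satisfies $\IC_\mu(\pi^M)\le\sum_i \IC_{\mu_i}(\pi)\le(1+\delta/3)MI$, where $\mu_i$ are the block marginals. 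This is the key subadditivity fact you are missing; with it there is no need to restrict to products at all.

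Finally, the paper's argument is a bit more structured than yours: it sets $M=N^{1/3}$, compresses $\pi^M$ via the analog of \cite[Claim~6.10]{MR2961528} (which already contains a minimax producing a single $\tau$ with small pointwise statistical distance to $\pi^M$ and bounded \emph{expected} communication), and then runs $M^2$ copies of $\tau$ with truncation (Claim~6.11) to turn the expected-communication bound into a worst-case one. Your single-level description elides this expected-to-worst-case conversion.
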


\begin{proof}
We sketch the proof below. More details can be found in Section 6.2 of~\cite{MR2961528}.

\begin{itemize}
\item Step 1. Choose a good protocol for $[\DISJ_n, \epsilon - \xi, 1 \to 0]$ for an appropriate $\xi > 0$.

Denote $I \defeq \IC(\DISJ_n, \epsilon, 1 \to 0)$. By continuity of information complexity (Lemma~\ref{lem:continuity}, which holds for one-sided error with the same proof), there exists $\xi>0$ such that
\[
 \IC(\DISJ_n, \epsilon-\xi, 1 \to 0) \leq \left(1 + \frac{\delta}{6}\right) I.
\]
A minimax argument along the lines of Theorem~3.5 and Theorem~3.6 of~\cite{MR2961528} (but simpler) shows that there exists a protocol $\pi$ that computes $[\DISJ_n, \epsilon - \xi, 1 \to 0]$, and for every distribution $\mu$, its information cost satisfies
\[
\IC_\mu(\pi) \le \left(1 + \frac{\delta}{3}\right) I.
\]
Denote by $r$ the number of rounds in $\pi$.

\item Step 2. Parallel computing.

Let $M = \sqrt[3]{N}$. For an arbitrary distribution $\mu$ on $\{0,1\}^{n \times M} \times \{0,1\}^{n \times M}$, let $\mu_1, \ldots, \mu_M$ be the marginals of $\mu$ restricted to each block of size $n$. Consider $\pi^M$, that is, the execution of $M$ copies of $\pi$ in parallel.
The protocol $\pi^M$ has information cost
\[
\IC_\mu(\pi^M) \le \sum_{i=1}^M \IC_{\mu_i} (\pi) \le \left(1 + \frac{\delta}{3}\right) M \cdot I.
\]
Clearly, $\pi^M$ is still an $r$-round protocol (this is required in order to apply Braverman--Rao compression).

\item Step 3. Compression (with the aid of a minimax argument), and truncation.

By Braverman--Rao compression~\cite{MR3265014} one can find another protocol with communication cost roughly equal to $M \cdot I$, and with an extra small error. However, this extra error is distributional according to the distribution $\mu$. What we want is to solve $[\DISJ_{n\times M}, \epsilon]$, that is, the protocol is only allowed to err with probability at most $\epsilon$ on \emph{every} input.

Fortunately, one can fix this by applying a minimax argument, presented as Claim~6.10 in~\cite{MR2961528}, followed by an extra parallel computation step, presented as Claim~6.11 in~\cite{MR2961528}.

The analog of Claim~6.10 comes up with a protocol $\tau$ with the following properties:
\begin{itemize}
\item For every input in $\{0,1\}^{n \times M} \times \{0,1\}^{n \times M}$, the statistical distance between the output of $\tau$ and the output of $\pi^M$ is $O(1/M^3)$.
\item The expected communication cost of $\tau$ is at most $\left(1 + \frac{\delta}{2}\right) M \cdot I$.
\item The worst-case communication cost of $\tau$ is at most $O(Mn/\delta_1)$.
\end{itemize}
(The statement of Claim~6.10 has $1/M^2$ instead of $1/M^3$, but the proof of Claim~6.10 works for any constant exponent; this can be traced to the fact that the dependence on the error in Braverman--Rao compression is logarithmic.)

The idea now is to run $M^2$ copies of $\tau$ in parallel, truncating the result, as in Claim~6.11 of~\cite{MR2961528}. For large enough $M$ (depending on $n,\epsilon,\delta$), the resulting protocol $\tau'$ satisfies the following properties:
\begin{itemize}
\item For every input in $\{0,1\}^{n \times M \times M^2} \times \{0,1\}^{n \times M \times M^2}$, the statistical distance between the output of $\tau'$ and the output of $\tau^{M^2}$ is at most $\eta$, where $\eta$ tends to zero as $M\to\infty$.
\item The worst-case communication complexity of $\tau'$ is at most $(1+\delta) M^3 \cdot I$.
\end{itemize}

In particular, the statistical distance between $\tau'$ and $\pi^{M^3} = \pi^N$ is at most $\eta + O(1/M)$ on every input, which tends to zero as $M\to\infty$. Choose $M$ large enough to guarantee that the statistical distance between the output of $\tau'$ and the output of $\pi^N$ is at most $\xi$. The protocol $\tau'$ can be used to compute $[\DISJ_{n \times N},\epsilon]$, as in the proof of Lemma~\ref{lem:DISJ-ICn}.
 This completes the proof. \qedhere
\end{itemize}
\end{proof}

Now we prove the upper bound.

\begin{proof}[Proof of Lemma \ref{lem:amortize-DISJ-upperbd}]
Fix $\epsilon > 0$. By Lemma \ref{lem:DISJ-ICn}, there exists $T(\epsilon)$ depending on $\epsilon$ such that
\[
\IC(\DISJ_n, \epsilon, 1 \to 0) \le n \IC^0(\AND,\epsilon,1\to 0) + o(n)
\]
whenever $n \ge T(\epsilon)$. For every such sufficiently large $n$, choose $\delta = \frac{1}{n}$. Lemma \ref{lem:DISJ-CCnN} states that
\[
\frac{R_{\epsilon}(\DISJ_{n \times N})}{N} \le \left(1 + \frac{1}{n}\right) \IC(\DISJ_n, \epsilon, 1 \to 0)
\]
whenever $N \ge C(n, \epsilon)$ for some constant $C(n, \epsilon)$. Since $\IC(\DISJ_n,\epsilon,1\to0) \le n$,
\[
\frac{R_{\epsilon}(\DISJ_{n \times N})}{n \times N}
\le
\IC^0(\AND, \epsilon, 1 \to 0) +\frac{1}{n}+o(1)
\]
for $N \geq C(n, \epsilon)$.
It follows that
\[
\frac{R_{\epsilon}(\DISJ_{M})}{M} \le \IC^0(\AND, \epsilon, 1 \to 0) + o(1)
\]
where $o(1) \to 0$ as $M \to \infty$, completing the proof.
\end{proof}

\subsection{A protocol for Set-Disjointness}\label{sec:proof:setDisj_distrib}

\restate{Theorem~\ref{thm:setDisj_distrib}}{
For the set-disjointness function $\DISJ_n$ on inputs of length $n$, we have
\[
\ICD(\DISJ_n,\epsilon) =n[\IC^0(\AND,0) - \Theta(\sqrt{h(\epsilon)})] + O(\log n).
\]}
\begin{proof}
We already established the lower bound  in \eqref{eq:disj_distrib_lowerbnd}, it remains to prove the upper bound.

Let $\mu$ be an input distribution for $\DISJ_n$, and let $p = \Pr_{\mu}[\DISJ_n(X,Y) = 1]$. We can assume that $p \geq \epsilon$ as otherwise $\IC_\mu(\DISJ_n,\mu,\eps)=0$, and the upper bound trivially holds. Below we introduce a protocol $\pi$ in Figure \ref{fig:disj_protocol} that solves $[\DISJ_n, \mu, \eps]$ and has the desired information cost. In fact, our protocol is stronger in the sense that it has only one-sided error: the protocol $\pi$ always outputs $0$ correctly if the correct output is $0$, and on the other hand, if there are $t \geq 1$ coordinates satisfying $X_i = Y_i = 1$, then  $\pi$ will erroneously output $0$ with probability at most $(\epsilon/2p)^t \le \epsilon/2p$. Thus the distributional error of $\pi$ is at most $p \cdot \frac{\epsilon}{2p} < \epsilon$, and $\pi$ indeed solves $[\DISJ_n,\mu,\eps, 1 \to 0]$.


\begin{figure}[ht]
\begin{framed}
On input $(X,Y)$:
\begin{itemize}
\item Alice and Bob, using public randomness, jointly sample a permutation $\sigma$ on the set $\{1, 2, \ldots, n\}$ uniformly at random; and they run the following sub-protocol $\pi^\sigma$:
\item For $i=1,2, \ldots, n$ repeat: 
      \begin{itemize}
      \item Alice and Bob run a protocol $\pi^\sigma_i$ that is (almost) optimal for $\IC_{\nu_i}(\AND,\eps/2p,1 \to 0)$ on input $(X_{\sigma(i)}, Y_{\sigma(i)})$, where $\nu_i$ is the distribution of $(X_{\sigma(i)}, Y_{\sigma(i)})$ conditioned on the event that the protocol has not yet terminated;
      \item if the protocol $\pi^\sigma_i$ outputs $1$, then terminate and output $1$;
      \end{itemize}
\item  If the ``for-loop'' ends without outputting $1$,  output $0$ and terminate.
\end{itemize}
\end{framed}
\caption{The protocol $\pi$ that solves $[\DISJ_n,\mu,\eps, 1 \to 0]$. \label{fig:disj_protocol}}
\end{figure}


We now analyze the information cost. We start by analyzing the information cost of the sub-protocol $\pi^\sigma$. Let $\Pi^\sigma$ be the transcript of $\pi^\sigma$, and write $\Pi^\sigma=\Pi^\sigma_1 \ldots \Pi^\sigma_n$ where $\Pi^\sigma_i$ denotes the transcript of the protocol $\pi^\sigma_i$ for $i=1,\ldots,n$. As usual let $\Pi^\sigma_{<i} = \Pi^\sigma_1 \ldots \Pi^\sigma_{i-1}$ be the partial transcript. Let $\mu_i$ denote the distribution of $X_{\sigma(i)}Y_{\sigma(i)}$, and $\nu_i$ denote the  distribution of $X_{\sigma(i)}Y_{\sigma(i)}$ conditioned on $\Pi^\sigma_{<i}$. Corollary~\ref{cor:ANDzero-gap}~(iii) gives a bound on the information exchanged in each round: there exist constants $C_1, C_2 > 0$ such that for any distribution $\nu$, 
\[ 
\IC_{\nu}(\AND,\eps/2p,1 \to 0) \le  \IC^0(\AND,0) + C_1 \hc(\nu(1,1)) - C_2 \hc(\epsilon/p). 
\]
Note that $(\Pi^\sigma_i | XY\Pi^\sigma_{<i})$ has the same distribution as $(\Pi^\sigma_i | X_{\sigma(i)}Y_{\sigma(i)}\Pi^\sigma_{<i})$, and thus
\begin{align*}
I(Y ; \Pi^\sigma | X)
&= \sum_{i=1}^n I(Y ;\Pi^\sigma_i | X, \Pi^\sigma_{<i}) = \sum_{i=1}^n [H(\Pi^\sigma_i | X, \Pi^\sigma_{<i}) - H(\Pi^\sigma_i | XY, \Pi^\sigma_{<i})] \\
&\leq \sum_{i=1}^n [H(\Pi^\sigma_i | X_{\sigma(i)},  \Pi^\sigma_{<i}) - H(\Pi^\sigma_i | X_{\sigma(i)} Y_{\sigma(i)}, \Pi^\sigma_{<i})] \\
&= \sum_{i=1}^n I(Y_{\sigma(i)} ;\Pi^\sigma_i | X_{\sigma(i)}, \Pi^\sigma_{<i}).
\end{align*}
Thus, denoting by $T^\sigma$ the number of AND protocols executed before the termination of $\pi^\sigma$, the above inequality implies (note that $\nu_i$ is a random variable, and $\pi^\sigma_i$ depends on $\nu_i$)
\begin{align*}
\IC_\mu(\pi^\sigma)
&\leq \sum_{i=1}^n \Ex  \IC_{\nu_i}(\pi^\sigma_i) \leq \sum_{i=1}^n \Pr[T^\sigma \ge i]\Ex \left[\IC_{\nu_i}(\pi^\sigma_i) \ | \ T^\sigma \ge i\right]\\
&\leq \sum_{i=1}^n \Pr[T^\sigma \geq i] \Ex \left[\IC^0(\AND,0) + C_1 \hc(\nu_i(1,1)) - C_2 \hc(\epsilon/p) \ | \ T^\sigma \ge i\right] \\
&\leq \left(\IC^0(\AND,0) - C_2 \hc(\epsilon/p) \right) \Ex[T^\sigma]
		+C_1 \sum_{i=1}^n \Pr[T^\sigma \geq i] \Ex \left[ \hc(\nu_i(1,1)) | T^\sigma \geq i \right].
\end{align*}
We want to bound the second term. Note since $p \geq \epsilon$,
\[
	\Pr[T^\sigma=i | T^\sigma \ge i,X_{\sigma(i)}=Y_{\sigma(i)}=1] 
	= \Pr[\pi^\sigma_i(X_{\sigma(i)}Y_{\sigma(i)}) = 1 | T^\sigma \ge i,X_{\sigma(i)}=Y_{\sigma(i)}=1]
	\geq 1 - \frac{\epsilon}{2p}
	\geq 1/2.
\] 
Hence, applying~\eqref{eq:conv-factor} twice and using the concavity of $\hc$, we get
\begin{align*}
\Pr[T^\sigma \geq i] \Ex \left[ \hc(\nu_i(1,1)) | T^\sigma \geq i \right]
&\le \Pr[T^\sigma \geq i] \hc \left( \Ex\left[ \nu_i(1,1) | T^\sigma \geq i \right] \right)\\
&= \Pr[T^\sigma \geq i] \hc(\Pr[X_{\sigma(i)} = Y_{\sigma(i)} = 1|T^\sigma \geq i])\\
&\le \hc(\Pr[X_{\sigma(i)} = Y_{\sigma(i)} = 1|T^\sigma \geq i] \Pr[T^\sigma \geq i])\\
&= \hc(\Pr[T^\sigma \ge i,X_{\sigma(i)}=Y_{\sigma(i)}=1]) \\ 
&\le 2\Pr[T^\sigma=i | T^\sigma\geq i,X_{\sigma(i)}=Y_{\sigma(i)}=1] 
		\hc(\Pr[T^\sigma \ge i,X_{\sigma(i)}=Y_{\sigma(i)}=1]) \\
&\le 2  \hc(\Pr[T^\sigma=i, X_{\sigma(i)}=Y_{\sigma(i)}=1]) \\
&\le 2  \hc(\Pr[T^\sigma=i, \pi(X,Y)=1]).
\end{align*}
Using concavity of $\hc$ again,
\[
\frac{1}{n} \sum_{i=1}^n \hc(\Pr[T^\sigma=i,\pi(X,Y)=1]) \leq \hc(\Pr[(X,Y)=1]/n) = \hc(p/n).
\]
Therefore 
\[ \sum_{i=1}^n \Pr[T^\sigma \geq i] \Ex \left[ \hc(\nu_i(1,1)) | T^\sigma \geq i \right] \leq 2n\hc(p/n). \]
That is, we have shown
\begin{equation}   \label{eq:pi-sigma-bound}
\IC_\mu(\pi^\sigma)
\le
\left(\IC^0(\AND,0) - C_2 \hc(\epsilon/p) \right) \Ex[T^\sigma] + 2C_1 n\hc(p/n).
\end{equation}
Taking the expectation with respect to $\sigma$, we obtain 
\begin{equation}   \label{eq:pi-bound}
\IC_\mu(\pi) 
= \Ex_\sigma \IC_\mu(\pi^\sigma) = \left(\IC^0(\AND,0) - C_2 \hc(\epsilon/p) \right) \Ex_{\sigma,XY} [T^\sigma] + 2C_1 n\hc(p/n).
\end{equation}
Hence it remains to bound $\Ex[T^\sigma]$ where the expectation is over $\sigma$ and the input $XY$. 


Let $x,y$ be such that $\DISJ(x,y)=1$, and let $j$ be an index such that $\AND(x_j,y_j)=1$. Then
\begin{align*}
	&\Ex_{\sigma,XY} [T^\sigma|XY=xy] =\sum_{i=1}^n \Pr[\sigma(i)=j]\Ex[T^\sigma|XY=xy,\sigma(i)=j] =\frac{1}{n} \sum_{i=1}^n \Ex[T^\sigma|XY=xy,\sigma(i)=j] \\
	&=\frac{1}{n} \sum_{i=1}^n \sum_{b=0,1} \Ex[T^\sigma|XY=xy,\sigma(i)=j, \pi^\sigma_i(X,Y) = b] 
			\Pr[\pi^\sigma_i(X,Y) = b | XY=xy,\sigma(i)=j] \\
	&\le \frac{1}{n} \sum_{i=1}^n \Big( i \Pr[\pi^\sigma_i(X,Y) = 1 | XY=xy,\sigma(i)=j] 
			+ n \Pr[\pi^\sigma_i(X,Y) = 0 | XY=xy,\sigma(i)=j]  \Big) \\
	&\le \frac{1}{n} \sum_{i=1}^n  \left( i (1-\frac{\epsilon}{2p})  + n \frac{\epsilon}{2p} \right) 
	  = (1-\frac{\eps}{2p}) \frac{n+1}{2} + \frac{\eps}{2p} n 
	  \le \frac{n+1}{2} + \frac{\eps}{4p} n.	
\end{align*}

This allows us the next bound:
\begin{align}
	\Ex_{\sigma, XY}[T^\sigma]
	&= \Pr[\DISJ(X,Y)=1] \Ex[T|\DISJ(X,Y)=1] + \Pr[\DISJ(X,Y)=0] \Ex[T|\DISJ(X,Y)=0]  \nonumber \\
	&\leq p \left(  \frac{n+1}{2} + \frac{\epsilon}{4p}n \right) + (1-p) n  
	   \leq \frac{2p}{3}n + \frac{\epsilon}{4}n + (1-p) n = (1-p/3 + \eps/4) n.  \label{eq:expect-T}
\end{align}
Combine \eqref{eq:pi-bound} and \eqref{eq:expect-T} we get
\begin{align*}
	\IC_\mu(\pi)
	&\leq  n(1-p/3 + \eps/4) \left( \IC^0(\AND,0) - C_2 \hc(\epsilon/p) \right)
			+ C_1 2 n \hc(p/n)\\					
	&= n (\IC_0(\AND,0) - \Omega(\hc(\epsilon/p) + p)) + O(n \hc(p/n)).
\end{align*}

It remains to optimize over $p$. We start by minimizing $p + \hc(\epsilon/p)$. Up to a constant multiple, the minimum is attained at the point satisfying $p = \hc(\epsilon/p)$. A simple calculation shows that $p \approx \sqrt{h(\epsilon)}$, and so $p + \hc(\epsilon/p) = \Omega(\sqrt{h(\epsilon)})$. Thus
\[
	\IC_\mu(\pi) \leq n[\IC^0(\AND,0) - \Omega(\sqrt{h(\epsilon)})] + O(n\hc(p/n)).
\]
The value of the error term $O(n\hc(p/n))$ is at most $O(n\hc(1/n)) = O(n\frac{\log n}{n})=O(\log n)$, and the theorem follows.
\end{proof}

\section{Open problems and concluding remarks}

\begin{itemize}
 \item In Conjecture~\ref{conj:SetDisjointExact} we speculated that the exact asymptotics of $R_\epsilon(\DISJ_n)$ is given by the information complexity of the AND function when only one-sided error is allowed:
 ֿ\[ R_\epsilon(\DISJ_n) =  n \IC^0(\AND,\epsilon,1\to 0) \pm  o(n). \]

 The set disjointness function has a ``self-reducible'' structure in the sense that it is possible to solve an instance of the corresponding communication problem by dividing the input into blocks and solving the  same problem on each block separately. This structure allows us to relate  the communication complexity of the problem to its  amortized communication complexity, and thus to its information complexity via the fundamental result of Braverman and Rao~\cite{MR3265014}.  Applying such ideas we showed (the lower bound is obvious)
\[ \IC(\DISJ_n,\eps)   \le R_\epsilon(\DISJ_n) \le  m \IC(\DISJ_{\frac{n}{m}},\eps,1 \to 0) + o(n), \]
for an appropriate choice of $m=m(n)$ that tends to infinity as $n \to \infty$.
 In Theorem~\ref{thm:set_disj_cc} we combined this with our analysis of the information complexity of the set disjointness to prove $R_\epsilon(\DISJ_n) = n[\IC^0(\AND,0) - \Theta(h(\epsilon))]$.  More precisely we showed
\[
 n \IC^0(\AND,\epsilon) \le \IC(\DISJ_n,\eps) \le \IC(\DISJ_n,\eps,1 \to 0)  \le n \IC^0(\AND,\epsilon,1\to 0) + o(n),
\]
and combined it with our results regarding the information complexity of the $\AND$ function. We believe that the upper bound is the truth; that is
\[ \IC(\DISJ_n,\eps) \ge n \IC^0(\AND,\epsilon,1\to 0) - o(n), \]
which would imply Conjecture~\ref{conj:SetDisjointExact}.

 \item The example of the AND function shows that the $\Omega(h(\eps))$ gain in the information cost, appearing in our upper bounds in Theorems~\ref{thm:upper-bd-IC-mu-eps},~\ref{thm:bd-IC-mu--distributional},~\ref{thm:non-distributional-ub} and~\ref{thm:ICD-eps-bound} is tight. However we do not know whether the $O(h(\sqrt{\eps}))$ gain appearing in the lower bounds in Theorems~\ref{thm:non-distri-error-lowerbd}~and~\ref{thm:bd-IC-mu--distributional}, Corollary~\ref{cor:non-distri-error-lowerbd-Prior-Free} and Theorem~\ref{thm:ICD-eps-bound} is sharp. In fact we are not aware of any example that exhibits a gain  that is not $\Theta(h(\eps))$.   Is it true that for every function $f\colon \cX \times \cY \to \cZ$, and  measure $\mu$ on $\cX \times \cY$ with $\IC_\mu(f,0)>0$, we have $\IC_\mu(f,\eps)=\IC_\mu(f,0)-\Theta(h(\eps))$?  One can ask a similar question for  $\IC_\mu(f,\mu,\eps)$, $\IC(f,\eps)$, and $\ICD(f,\eps)$.

 \item Recall that the \emph{inner product function} $\mathrm{IP}_n\colon \{0,1\}^n \times \{0,1\}^n \to \{0,1\}$ is defined as
 \[ \mathrm{IP}_n\colon (x,y) \mapsto \sum_{i=1}^n x_i y_i \mod 2. \]
 Let $\nu$ denote the uniform probability measure on $\{0,1\}^n \times \{0,1\}^n$. It is easy to see that $\IC_\nu(\mathrm{IP}_n, \nu, \eps) \le (1 - 2 \eps)n$. In \cite[Theorem~1.3]{SelfRed}, Braverman et al.\ exploited the self-reducibility properties of the inner product function to showed that for every $\delta>0$, there exists  an $\eps>0$ and $n_0>0$ such that for every $n>n_0$, $\IC(\mathrm{IP}_n,\eps)>(1-\delta)n$.

 In \cite[Problem~1.4]{SelfRed} they ask whether the dependency of $\delta$ on $\eps$ is linear. In other words, is there  a constant $\alpha>0$ such that for every sufficiently small  $\eps>0$  and sufficiently large $n$,   $\IC_\nu(\mathrm{IP}_n, \nu,\eps) \ge (1 - \alpha \eps)n$?  If yes, then can we take $\alpha \approx  2$, or more precisely,  is it true that  $\IC_\nu(\mathrm{IP}_n, \nu, \eps) = (1 - 2 \eps - o(\eps))n$? Note that the bound $\IC_\nu(f,\nu,\eps)<\IC_\nu(f,\nu,0)-\Omega(h(\eps))$ of Theorem~\ref{thm:bd-IC-mu--distributional} does not refute these possibilities as in these questions $\eps$ is fixed, and asymptotics are as $n \to \infty$.   

 \item The  focus of this paper has been on the internal information complexity, and except for few results such as Proposition~\ref{prop:ext_XOR}, we have not studied the external information complexity analogues. However considering that external information complexity is typically simpler than internal information complexity, we believe that the analogues of many of our results, specially those about the AND function, can be proven for this case as well. We defer this to future research.

\end{itemize}


\newpage

\bibliographystyle{amsalpha}
\bibliography{InformationError}
\end{document}